\documentclass[11pt]{article}
\usepackage{amsmath,amssymb,amsfonts,latexsym,graphicx,amsthm}
\usepackage{fullpage,color}
\usepackage{url,hyperref}
\usepackage{comment}
\usepackage[linesnumbered,boxed,ruled,vlined]{algorithm2e}
\usepackage{framed}
\usepackage[shortlabels]{enumitem}
\usepackage{cleveref}
\usepackage{todonotes}
\usepackage[normalem]{ulem}
\usepackage{subcaption}
\usepackage[margin=1in]{geometry}
\usepackage{ifthen}
\usepackage{thm-restate}
\usepackage{xcolor}
\usepackage{stmaryrd}

\newtheorem{theorem}{Theorem}[section]
\newtheorem{lemma}{Lemma}[section]

\newtheorem{corollary}{Corollary}[section]
\newtheorem{claim}{Claim}[lemma]

\newtheorem{observation}{Observation}[section]

        {\medskip}

\newcommand{\eps}{\epsilon}
\newcommand{\hide}[1]{}

\newcommand{\opt}{\mathsf{opt}}
\newcommand{\OPT}{\mathsf{OPT}}

\newcommand{\mst}{\mathsf{MST}}

\newcommand{\wts}{\mathbf{w}}

\newcommand{\iter}{\mathsf{iter}}
\definecolor{BrickRed}{rgb}{0.8, 0.25, 0.33}
\def\EMPH#1{\emph{\textcolor{BrickRed}{#1}}}

\newcommand{\proj}{\mathrm{proj}}
\newcommand{\slope}{\mathrm{slope}}
\newcommand{\slack}{\mathrm{slack}}

\begin{document}

	\title{Approximating Euclidean Shallow-Light Trees}
	\author{
		Hung Le\thanks{University of Massachusetts Amherst, \texttt{hungle@cs.umass.edu}}\and
		Shay Solomon\thanks{Tel Aviv University, \texttt{solo.shay@gmail.com}}\and	
		Cuong Than\thanks{University of Massachusetts Amherst, \texttt{cthan@umass.edu}}\and
		Csaba D. T\'oth\thanks{California State University Northridge and Tufts University, \texttt{csaba.toth@csun.edu}}\and
		Tianyi Zhang\thanks{Nanjing University, \texttt{tianyiz25@nju.edu.cn}}
	}	
	\date{}
	\maketitle

\thispagestyle{empty}
For a weighted  graph 
$G = (V, E, w)$ and a designated source vertex $s \in V$, a spanning tree that simultaneously approximates a {\em shortest-path tree} w.r.t.\ source $s$ and a {\em minimum spanning tree} is called a {\em shallow-light tree (SLT)}. Specifically, an $(\alpha, \beta)$-SLT of $G$ w.r.t.\ $s \in V$ is a spanning tree of $G$ with {\em root-stretch} $\alpha$ (preserving all distances between $s$ and the other vertices up to a factor of $\alpha$)
and {\em lightness} $\beta$ (its {\em weight}  is at most $\beta$ times the weight of a minimum spanning tree of $G$).

It was shown in the early 90s that (1) for any graph and any 
$\epsilon > 0$, there is a $(1 + \eps, O(1/\epsilon))$-SLT w.r.t.\ any source, and (2) there {\em exist} graphs for which $\beta = \Omega(1/\eps)$
for any $(1+\eps,\beta)$-SLT.

The focus of this work is on SLTs in \EMPH{low-dimensional Euclidean spaces}, which are of special interest for some applications of SLTs, 
in geometric network optimization problems
such as VLSI circuit design.
The aforementioned existential lower bound applies to Euclidean plane, as well. It was shown more than a decade ago that (1) by using  {\em Steiner points}, one can reduce the lightness bound from $O(1/\eps)$ to $O(\sqrt{1/\eps})$, and (2) there exist point sets in Euclidean plane 
for which $\beta = \Omega(\sqrt{1/\eps})$ for
any Steiner $(1+\eps,\beta)$-SLT.

These \EMPH{tight existential bounds} for the Euclidean case 
yield \EMPH{approximation factors} of $O(1/\eps)$ and $O(\sqrt{1/\eps})$ on the minimum weight of any non-Steiner and Steiner tree with root-stretch $1+\eps$, respectively.
Despite the large body of work on SLTs, 
the basic question of whether a better approximation algorithm 
exists was left untouched to date, and this holds in any  graph family. 
This paper makes a first nontrivial step towards this question by presenting two 
\EMPH{bicriteria approximation}  algorithms.
For any $\eps>0$, a  set $P$ of $n$ points in constant-dimensional Euclidean space and a source $s\in P$, our first (respectively, second) algorithm returns, in
$O(n \log n \cdot {\rm polylog}(1/\eps))$ time,  
a non-Steiner (resp., Steiner) tree with root-stretch $1+O(\eps\log \eps^{-1})$ and weight at most $O(\opt_{\eps}\cdot \log^2 \eps^{-1})$ (resp., 
$O(\opt_{\eps}\cdot \log \eps^{-1})$), where  $\opt_{\eps}$ denotes the minimum weight of a non-Steiner (resp., Steiner) tree with root-stretch $1+\eps$.

\clearpage

\thispagestyle{empty}   

\thispagestyle{empty}
\setcounter{tocdepth}{3}
\tableofcontents
\clearpage

\clearpage
\setcounter{page}{1}
\section{Introduction}
\label{sec:intro}

A \emph{shortest-path tree (SPT)} of an undirected edge-weighted $n$-vertex graph $G = (V,E,\wts)$ with respect to a designated {\em source} or {\em root} vertex $s \in V$, denoted by $\mathsf{SPT}(G,s)$ 
is a spanning tree $T$ rooted at $s$ that 
preserves all distances from $s$, i.e., for every vertex $v \in V$,
the distance $d_T(s,v)$ between $s$ and $v$ in $T$ equals their distance $d_G(s,v)$ in $G$.  
For a parameter $\alpha \ge 1$, an {\em $\alpha$-shallow tree ($\alpha$-ST)} is a spanning tree $T$ of $G$ of \emph{root-stretch} at most $\alpha$, i.e., for every $v \in V$, $d_T(s,v) \le \alpha \cdot d_G(s,v)$. A \emph{minimum spanning tree (MST)} of $G$, denoted by $\mst(G)$, 
is a spanning tree $T$ of $G$ of minimum weight. 
For a parameter $\beta \ge 1$, a {\em $\beta$-light tree ($\beta$-LT)} is a spanning tree $T$ of $G$ of \emph{lightness} $\beta$, i.e., $\wts(T) \le \beta \cdot \wts(\mst(G))$.
The SPT and the MST, including their approximate versions, are among the most fundamental graph constructs and have been extensively studied over decades.

A single tree that simultaneously approximates the SPT and the MST is called a \emph{shallow-light tree (SLT)}. For a pair of parameters $\alpha, \beta \ge 1$, an $(\alpha,\beta)$-SLT of graph $G$ w.r.t.\ a designated source $s \in V$
is a spanning tree of $G$ that is both an $\alpha$-ST and a $\beta$-LT.
The notion of SLTs was introduced in the pioneering works of 
Awerbuch et al.~\cite{ABP90,ABP91} and Khuller et al.~\cite{KRY95} (see also \cite{CKRSW92}). They showed that for every $\eps > 0$, a $(1+\eps,O(\frac{1}{\eps}))$-SLT can be constructed in linear time for every graph $G$ if an SPT and an MST are given. Khuller et al.~\cite{KRY95}  
also showed that this tradeoff is tight, by presenting a \emph{planar} graph for which $\beta = \Omega(\frac{1}{\eps})$ for any $(1+\eps,\beta)$-SLT.

The balance between the useful properties of an MST, which provides a light-weight network, and of an SPT, which provides short paths from a designated source to all other vertices, 
has led to a wide variety of applications across diverse domains. 
This includes applications in routing \cite{AHHKK95,ChenY20,JDHLG01,KPP02,LiQCM021,SS97,WCT02} and in network and VLSI-circuit design \cite{CKRSW91,CKRSW92,HeldR13,SCRS97},
for data gathering and dissemination tasks in overlay networks
\cite{KhazraeiH20,KV01,VWFME03}, in the message-passing model of distributed computing \cite{ABP90,ABP91},
and in wireless and sensor networks \cite{RW04,BDS04,CBVW06,LLLD06,LL07,SS10}.
In addition to their direct applications, SLTs are used as building blocks in other related graph structures,
such as \emph{light approximate routing trees} \cite{WCT02},  \emph{shallow-low-light trees} \cite{DES09tech,ES11},
\emph{light spanners} \cite{ABP91,Peleg00}, and others \cite{SCRS97,LLLD06,LL07}. 
In particular, in real-world applications, such as VLSI-design and wireless communication networks, the vertices are embedded in Euclidean space, and the edge weights correspond to the metric distances between the nodes.

\smallskip\noindent\textbf{Low-dimensional Euclidean spaces.~}
Khuller et al.\ \cite{KRY95} asked whether a better construction of SLTs can be achieved in Euclidean plane, \EMPH{which is the focus of this work}.  
Euclidean space $\mathbb{R}^d$, $d \ge 1$, can be modeled as a complete edge-weighted graph $G = (V,E,\wts)$ induced by a finite set $P$ of points in $\mathbb{R}^d$ with $V = P$, $E = \binom{P}{2}$, and $\wts = \|\cdot\|_2$.
Elkin and Solomon~\cite{ElkinS15} showed that the upper bound of 
$(1+\eps,O(\frac{1}{\eps}))$-SLTs in general graphs \cite{ABP90,ABP91,KRY95} is asymptotically tight even in Euclidean plane:
For a set $C$ of $\lceil1/\eps\rceil$ evenly spaced points on a circle, any $(1+\eps,\beta)$-SLT for $C$ for any source $s\in C$ must have $\beta = \Omega(\frac{1}{\eps})$.
Solomon~\cite{Solomon15} showed that allowing \emph{Steiner points} lead to substantial improvement in Euclidean plane: For every set $P\subset \mathbb{R}^2$ and source $s\in P$, one can construct a \emph{Steiner} $(1+\eps,O(\sqrt{1/\eps}))$-SLT in linear time. Moreover, this bound is asymptotically tight: 
For the same set $C$ of $\lceil 1/\eps\rceil$ evenly spaced points on a circle (in fact, here $\lceil \sqrt{1/\eps}\rceil$ evenly spaced points suffice), any Steiner $(1+\eps,\beta)$-SLT for $C$ for any source $s\in C$ must have $\beta = \Omega(\sqrt{1/\eps})$ \cite{ElkinS15,Solomon15}. 

\smallskip\noindent\textbf{Approximation algorithms and hardness.} 
The aforementioned results provide \EMPH{tight existential bounds} on the tradeoff between root-stretch and lightness of SLTs in general graphs as well as in planar graphs and in Euclidean plane; moreover, as mentioned above, tight bounds were established also for \emph{Steiner} SLTs in Euclidean plane. 
However, these algorithms do not necessarily provide instance-optimal SLTs: In \Cref{sec:LB}, we present point sets $P\subset \mathbb{R}^2$ for which 
any previous SLT algorithm in~\cite{ABP90,ABP91,KRY95} returns a $(1+\eps)$-ST of weight $O(\frac{1}{\eps})\cdot \wts(\mst(P))$, Solomon~\cite{Solomon15} constructs a Steiner $(1+\eps)$-ST of weight $O(\sqrt{1/\eps})\cdot \wts(\mst(P))$, but the minimum weight of a $(1+\eps)$-ST is only $O(1)\cdot \wts(\mst(P))$.
Despite the large body of work on SLTs, very little is known about SLTs from the perspective of optimization and approximation algorithms. 

In the \EMPH{$(1+\eps)$-SLT problem}, we are given a parameter $\eps \ge 0$ 
and an edge-weighted graph $G$, and the goal is to find an $(1+\eps)$-ST for $G$ of minimum weight. 
Khuller et al.~\cite{KRY95} showed that for any $\eps > 0$, the $(1+\eps)$-SLT problem is NP-hard (via a reduction from 3SAT), while the case $\eps = 0$ can be solved in near-linear time. Cheong and Lee~\cite{CheongL13} showed that it is NP-hard in Euclidean plane, as well (via a reduction from Knapsack). A $\kappa$-approximation algorithm for the problem should return a $(1+\eps)$-ST for $G$ whose weight is at most $\kappa$ times that of a minimum weight $(1+\eps)$-ST. One can also consider a \EMPH{bicriteria approximation}: a $(\kappa_1,\kappa_2)$-approximation for the problem should return a $(1+\kappa_1 \cdot \eps)$-ST for $G$ whose weight is at most $\kappa_2$ times that of a  minimum-weight $(1+\eps)$-ST.

The \EMPH{tight existential bounds}, mentioned above, yield \EMPH{approximation factors} of $O(1/\eps)$ and $O(\sqrt{1/\eps})$, respectively, for the $(1+\eps)$-SLT problem
on general edge-weighted graphs and in Euclidean plane, respectively. 
To the best of our knowledge, no other approximation algorithm or hardness result is known for this problem, even for basic graph families such as the complete graph with Euclidean edge weights. (In \Cref{sec:related} we discuss a related problem, for which both approximation algorithms and hardness results are known.)

In Euclidean spaces, one can define the \EMPH{Steiner $(1+\eps)$-SLT problem}: For a parameter $\eps \ge 0$ and an input point set $P \subset \mathbb{R}^d$, $d\in \mathbb{N}$, 
the goal is to find a Steiner $(1+\eps)$-ST for $P$ of minimum weight. 
We note that a minimum weight Steiner $(1+\eps)$-ST may be significantly lighter than a minimum weight non-Steiner $(1+\eps)$-ST. For example, for a set $C$ of $\lceil 1/\eps\rceil$ evenly spaced points on a circle, the ratio between the weights of minimum weight non-Steiner and Steiner $(1+\eps)$-STs is $\Theta(\sqrt{1/\eps})$. 

\subsection{Our Contribution}
We provide a bicriteria approximation for the $(1+\eps)$-SLT problem, where $0 < \eps \ll 1$ is an arbitrary parameter, in any constant-dimensional Euclidean space. 
(We shall assume throughout that $\eps$ is a sub-constant parameter. If $\eps>0$ is a constant, the existential upper bound of $(1+\eps,O(\frac{1}{\eps}))$-SLTs already provides a constant approximation in linear time.) 

\begin{theorem}\label{thm:Steiner}
There is an $O(n \log n \cdot \mathrm{polylog}(\eps^{-1}))$-time algorithm that, given $\eps>0$, a finite set $P$ of $n$ points in Euclidean plane, including a source $s\in P$, returns a Steiner $(1+O(\eps\log \eps^{-1}))$-ST of weight at most $O(\opt_{\eps}\cdot \log \eps^{-1})$, where $\opt_{\eps}$ denotes the minimum weight of a Steiner $(1+\eps)$-ST.
The result extends without any loss in parameters to Euclidean space $\mathbb{R}^d$, for any constant $d\geq 3$.
\end{theorem}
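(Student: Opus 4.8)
\medskip
\noindent\textbf{Overall approach.}
The plan is to reduce the Euclidean Steiner SLT problem to a family of almost independent single-scale subproblems, solve each one within an $O(\log\eps^{-1})$ factor of optimal by a recursive ``branching-spine'' construction, and implement everything in near-linear time using standard low-dimensional data structures. First I would build a compressed quadtree (equivalently, a hierarchical grid) so that, in $O(n\log n)$ time, the points split into $O(\log\Delta)$ distance scales around $s$ --- $P_j=\{p: \|s-p\|\in[2^j,2^{j+1})\}$ --- and so that approximate MST and SPT of $P$ are available (computable in $O(n\log n)$ time in any fixed dimension via a well-separated pair decomposition). The key geometric observation is that a root-path of length at most $(1+\eps)\|s-p\|$ not only stays inside $B(s,(1+\eps)2^{j+1})$ but is confined, by the triangle inequality, to the cone of half-angle $O(\sqrt{\eps})$ around the direction $p-s$. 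Hence within one scale it suffices to serve the points cone by cone, and the only real question is how to connect the points of a single scale-and-cone bucket to $s$ by near-geodesic paths as cheaply as possible.

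\medskip
\noindent\textbf{The per-bucket construction and the source of the logarithms.}
The core subroutine is a recursion on the cone half-angle: to serve a set $Q$ inside a cone $C$ of half-angle $\phi$ out to radius $R$, if $\phi\le\eps$ we hang all of $Q$ off one radial segment of length $O(R)$; otherwise we split $C$ into a constant number of sub-cones of half-angle $\phi/2$ and, at a branch radius $\rho R$ chosen so that the detour incurred by a point inside a sub-cone is at most $O(\eps)\cdot R$, we recurse in each sub-cone. The recursion bottoms out after $O(\log\eps^{-1})$ levels (from $\phi=\Theta(1)$ down to $\phi\le\eps$), so the detours accumulated along any root-path total $O(\eps\log\eps^{-1})\cdot\|s-p\|$, giving the stated root-stretch $1+O(\eps\log\eps^{-1})$. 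For the weight I would charge the forest produced at each single recursion level (summed over all buckets) against $\opt_\eps$: since every root-path of a $(1+\eps)$-ST is near-geodesic and hence confined to a thin cone, a packing argument over the cones active at a given level shows that $\opt_\eps$ must spend an $\Omega(1/\log\eps^{-1})$ fraction of that level's weight. Summing over the $O(\log\eps^{-1})$ levels yields a Steiner tree of weight $O(\opt_\eps\log\eps^{-1})$, which is \Cref{thm:Steiner}; in the non-Steiner setting the branch points are constrained to be input points, and routing through the cheapest available one costs a further $O(\log\eps^{-1})$ factor per level, accounting for the $O(\log^2\eps^{-1})$ bound there. Within a single level, to pick the branch radii and the assignment of points to sub-cones optimally I would set up an auxiliary instance on a bounded-treewidth graph --- essentially an interval/line structure recording candidate branch radii together with their slacks --- and solve it by dynamic programming; since a well-separated pair decomposition first sparsifies each bucket down to $\mathrm{polylog}(\eps^{-1})$ relevant configurations, the DP runs in $\mathrm{polylog}(\eps^{-1})$ time per bucket, keeping the total at $O(n\log n\cdot\mathrm{polylog}(\eps^{-1}))$.

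\medskip
\noindent\textbf{Main obstacle, and the extension to $\mathbb{R}^d$.}
The delicate point is the charging \emph{across} scales: a root-path of a near-optimal tree $T^*$ serving a point at scale $j$ threads through $\Theta(j)$ of the smaller scales, so naively charging each scale's construction to the part of $\opt_\eps$ living at that scale would over-count $\opt_\eps$ by a $\Theta(\log\Delta)$ factor rather than the desired $\Theta(\log\eps^{-1})$. Overcoming this is the crux: I expect to argue that only the $O(\log\eps^{-1})$ scales immediately above each point need to be charged --- by the time a near-geodesic path reaches scale $j-\Theta(\log\eps^{-1})$ it is already inside an $\eps$-cone of its endpoint and contributes nothing new --- and to couple this with a global potential that telescopes the (approximate) MST weights of consecutive scales against $\opt_\eps=\Omega(\wts(\mst(P)))$. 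Everything else is dimension-oblivious: the cone cover uses $O(\eps^{-(d-1)/2})$ cones of half-angle $\Theta(\sqrt\eps)$, but the recursion depth is still $O(\log\eps^{-1})$ since each split halves the solid angle and $\log(\eps^{-(d-1)/2})=O_d(\log\eps^{-1})$; the branch-radius calculus, the per-level charging, and the data structures (compressed quadtree, well-separated pair decomposition, approximate MST/SPT) all go through verbatim, so the theorem extends to every constant $d\ge 3$ with no loss in parameters.
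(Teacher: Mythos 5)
Your proposal differs substantially from the paper's construction, and the deviation is not cosmetic: two of the three load-bearing steps are either missing or, as stated, would lose a logarithmic factor.

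\textbf{The decomposition is not the one the paper uses, and you have not replaced its key lemma.} You build a compressed quadtree, bucket $P$ into annular shells $P_j$, cover each shell by cones, and then worry (correctly) that a near-geodesic root-path threads through $\Theta(j)$ shells, so that charging each shell's construction to $\opt_\eps$ could over-count by $\Theta(\log\Delta)$. You say you ``expect to argue'' that only $O(\log\eps^{-1})$ shells above each point need charging. That conjecture is precisely what the paper makes unnecessary: it tiles the plane by a spider-web of \emph{trapezoids} (\Cref{ssec:SteinerReduction}), defines $R^-(\tau)$ as the part of the union of ellipses $\mathcal{E}_{ps}$ for $p\in\tau$ lying beyond the bisector $L_\tau$, and proves (\Cref{lem:tiles}) that $R^-(\tau)$ meets only $O(1)$ tiles. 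From that, \Cref{lem:nonSteiner-converse} and \Cref{thm:nonSteiner-reduction} give $\sum_i\opt_{\eps,i}=O(\opt_\eps)$ outright, with no cross-scale telescoping. Without either that lemma or a proof of your cross-scale claim, the reduction step of your argument is incomplete.

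\textbf{Your per-level weight charging is unproved and, read literally, yields $O(\log^2\eps^{-1})$, not $O(\log\eps^{-1})$.} You assert that ``a packing argument over the cones active at a given level shows that $\opt_\eps$ must spend an $\Omega(1/\log\eps^{-1})$ fraction of that level's weight.'' If each level's constructed weight $W_i$ satisfies only $\opt_\eps\ge\Omega(W_i/\log\eps^{-1})$, then summing over the $O(\log\eps^{-1})$ levels gives $\sum_i W_i\le O(\log^2\eps^{-1})\cdot\opt_\eps$, which is the non-Steiner bound, not the Steiner one. The paper's argument is structurally different: within one trapezoid it introduces parallel hyperplanes $L_0(p),\dots,L_{k-1}(p)$ at exponentially growing distances, places a \emph{minimum hitting set} on each line for the intervals $L\cap\mathcal{E}_p$, and proves (\Cref{lem:weight} via \Cref{obs:bracket}, \Cref{lem:boundingbox}, \Cref{lem:graph}) that the edges between consecutive levels weigh $O(\wts(\OPT_\eps\cap\llbracket L_{i-1},L_i\rrbracket))$, i.e.\ no $\log$ is lost per level. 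The single $\log\eps^{-1}$ comes from the fact that each point of the plane lies in $O(\log\eps^{-1})$ of the strips $\llbracket L_a,L_b\rrbracket$ (\Cref{cor:interior}). Nothing in your proposal plays the role of the hitting set lemma; the ``DP on a bounded-treewidth auxiliary graph'' is both unspecified and unnecessary here, since the subproblem the paper solves is 1D interval piercing, which is polynomial-time exact via a greedy sweep.

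\textbf{What is correct.} Your root-stretch heuristic --- $O(\log\eps^{-1})$ levels, each contributing $O(\eps)\cdot\|s-p\|$ additive slack --- does match the paper's computation (\Cref{lem:rootstretch} via \Cref{cor:crosssection} and \Cref{lem:slack}), and your identification of the thin-cone confinement of a $(1{+}\eps)$-path as the organizing geometric fact is the same one the paper exploits. But those are the easy parts; the reduction to tiles with $O(1)$ overlap, the reduction to a centered $\eps$-net (\Cref{lem:net}) that caps each tile's net at $O(\eps^{-2})$ points and controls the running time, and the hitting-set charging against $\OPT_\eps$ are all missing or replaced by unproven placeholders, so the proposal as written does not establish the theorem.
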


Interestingly, our bicriteria approximation algorithm of \Cref{thm:Steiner} 
incurs the same $O(\log \eps^{-1})$ ratio for both the stretch approximation (to the additive $\eps$ term) and the weight approximation.
With some additional effort and another $\log\eps^{-1}$ factor in the weight approximation ratio, our result generalizes to the setting without Steiner points in the plane. 

\begin{theorem}\label{thm:main}
There is an $O(n \log n \cdot {\rm polylog}(\eps^{-1}))$-time algorithm that, given $\eps>0$, a finite set $P$ of $n$ points in Euclidean plane, including a source $s\in P$, returns an $(1+O(\eps \log \eps^{-1}))$-ST of weight at most $O(\opt_{\eps}\cdot \log^2\eps^{-1})$, where $\opt_{\eps}$ denotes the minimum weight of an $(1+\eps)$-ST.
The result extends to Euclidean space $\mathbb{R}^d$, for any constant $d\geq 3$, with 
approximation ratio increasing by a factor of $O(\log \eps^{-1})$  
and the running time increasing by a factor of $\mathrm{poly}(\eps^{-1})$.
\end{theorem}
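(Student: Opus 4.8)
\medskip\noindent\textbf{Proof plan.} The plan is to re-engineer the construction behind \Cref{thm:Steiner} so that every tree edge joins two \emph{input} points of $P$, at the price of one extra $O(\log\eps^{-1})$ factor in the weight (and of a further $O(\log\eps^{-1})$ factor when $d\ge 3$). I would first observe that one cannot simply post-process the Steiner tree of \Cref{thm:Steiner}: on the instance of $\lceil 1/\eps\rceil$ equally spaced points on a circle, the minimum Steiner $(1+\eps)$-ST has weight $\Theta(\sqrt{1/\eps})\cdot\wts(\mst(P))$ while \emph{every} non-Steiner $(1+\eps)$-ST weighs $\Omega(1/\eps)\cdot\wts(\mst(P))$, so de-Steinerization cannot in general cost only a polylogarithmic factor --- the non-Steiner algorithm must be compared directly against the non-Steiner optimum $\opt_\eps$. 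I would therefore keep the geometric skeleton of the Steiner algorithm (the decomposition of $\mathbb{R}^2$ around $s$ into distance scales $A_i=\{x:2^i\le\|x-s\|<2^{i+1}\}$ and, within a scale, the angular sweep selecting which points are routed ``radially'' toward $s$), and replace the \emph{shared Steiner detour gadget} that \cite{Solomon15} places inside a scale --- whose whole point was to let many radial connections share a single near-$s$ structure using Steiner points --- by a substitute supported on $P$ alone.

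The heart of the argument is this non-Steiner substitute and its dynamic program. The Steiner algorithm attains its $O(\log\eps^{-1})$ approximation through a recursion of depth $O(\log\eps^{-1})$ (over distance scales / angular ranges) in which each level is charged $O(1)$ times the cost an optimum spends in the corresponding region. Forbidding Steiner points forces, inside each such piece, a \emph{second} logarithmic recursion: over the input points of the piece I would build a balanced hierarchy of angular cells of depth $O(\log\eps^{-1})$ (a net tree, or a recursive angular bisection) and run a DP --- the analogue of the $\interval,\paths,\DPtree$ machinery, now with edge set restricted to $P\times P$ and with state recording, per active cell, which input point currently serves as the ``portal'' connecting that cell toward $s$. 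The key lemma to prove is a per-level charging: at each level of the inner hierarchy the cost the restricted DP is forced to pay is $O(1)$ times the cost $\opt_\eps$ puts inside the current piece, so the inner recursion multiplies the approximation by $O(\log\eps^{-1})$ and the overall weight becomes $O(\log^2\eps^{-1})\cdot\opt_\eps$. Proving this charging --- in particular, that the portals used by $\opt_\eps$ can be rounded to portals available to the DP without more than a constant-factor loss, via a packing bound on how many subtrees can cross a single cell boundary --- is the step I expect to be the main obstacle.

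The remaining ingredients are comparatively routine. Root-stretch: rerouting a point $v\in A_i$ toward $s$ through a rounded portal rather than through the Steiner breakpoint of \Cref{thm:Steiner} multiplies the length of the affected $s$--$v$ subpath by $(1+O(\eps))$ at each of the $O(\log\eps^{-1})$ cell boundaries it crosses, and since $\eps\log\eps^{-1}=o(1)$ this compounds to $1+O(\eps\log\eps^{-1})$, i.e.\ the same root-stretch as \Cref{thm:Steiner} up to a constant in the logarithmic term. Running time: the distance-scale decomposition and the angular-cell hierarchies are built in $O(n\log n)$ time (compressed quadtree / WSPD), and the restricted DP tables are only a $\mathrm{polylog}(\eps^{-1})$ factor larger than those of \Cref{thm:Steiner}, preserving the $O(n\log n\cdot\mathrm{polylog}(\eps^{-1}))$ bound. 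Higher dimensions: for constant $d\ge 3$ an ``angular cell'' becomes a patch of a $(d-1)$-sphere, which we cover by a grid of $O(\eps^{-(d-1)})$ cells; this enlarges the DP tables and the running time by $\poly(\eps^{-1})$, and rounding portals to grid representatives rather than to net points costs one further $O(\log\eps^{-1})$ factor in weight, giving the claimed $O(\log^3\eps^{-1})$ approximation in $\mathbb{R}^d$.
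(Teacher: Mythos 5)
Your proposal does not follow the paper's route, and the critical step is left unresolved, so there is a genuine gap. Your plan is to keep the Steiner skeleton (distance scales plus angular sweep) and, inside each cell, run a dynamic program over a depth-$O(\log\eps^{-1})$ hierarchy of angular sub-cells whose states track input-point ``portals''; the key per-level charging lemma (that $\opt_\eps$'s portals can be rounded to DP-available portals at constant-factor cost via a packing argument) is flagged by you as ``the main obstacle'' --- i.e.\ it is asserted, not proved. That lemma is exactly where the non-Steiner case is hard, and it is not clear your mechanism can prove it: when $P$ is sparse in a region --- precisely the situation in which de-Steinerization is expensive, as your own circle example illustrates --- there may be no input point near the portal that $\opt_\eps$ actually uses, so ``rounding'' has nowhere to round to, and a packing bound on subtrees crossing a cell boundary does not by itself control the detour cost. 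Your proposal contains no idea that handles the case of an empty cell.

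The paper resolves this with a different and more concrete device. It keeps the Steiner algorithm's reduction to a centered $\eps$-net in a trapezoid tile and the same $k=O(\log\eps^{-1})$ families of parallel lines, but it replaces Steiner points on line $L_i(p)$ by a minimum hitting set drawn from $P$ for the axis-parallel bounding rectangles $B_i(p)$ of $\mathcal{E}_p$ clipped to the strip $\llbracket L_i(p),L_{i+1}(p)\rrbracket$, \emph{restricted to the rectangles that actually contain input points}. Candidate $ps$-paths simply skip the empty rectangles. The crux of the weight analysis --- the role your missing lemma was supposed to play --- is the observation that if $B_i(p)\cap P=\emptyset$, then the optimum $(1+\eps)$-ST, being supported on $P$, also has no vertex in $B_i(p)$, so its $ps$-path contains a single edge traversing that strip; the algorithm's long ``crossing'' edges are then charged to these traversal edges. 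A second, separate pruning step (deleting one of two consecutive path vertices that sit in adjacent strips) ensures consecutive points are far apart for the stretch bound, and costs another $O(\log\eps^{-1})$ factor in weight (\Cref{lem:skipping}). That, plus the $O(\log\eps^{-1})$ from the $k$ levels, is where the $O(\log^2\eps^{-1})$ comes from --- not from a nested DP recursion. (For $d\ge 3$ the extra $O(\log\eps^{-1})$ is the price of using an approximate hitting set for balls in $\mathbb{R}^{d-1}$, not of rounding portals to a spherical grid.) You would need to supply either a proof of your charging lemma or, more plausibly, adopt the empty-rectangle observation and the hitting-set-plus-pruning framework; as written, the argument is incomplete at its central step.
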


To complement our results, we show in \Cref{sec:LB} that the approximation ratio of our algorithms (with or without using Steiner points) is significantly better than the state-of-the-art algorithms at the instance level. Specifically, we design point sets in Euclidean plane for which any previous algorithm returns a $(1+\eps)$-ST of approximation ratio at least $\Omega(\sqrt{1/\eps})$ with Steiner points and $\Omega\left(\frac{1}{\eps}\right)$ without Steiner points. 

\begin{theorem}\label{thm:easy-LB}
    For every $\eps>0$, there exists a set $P\subseteq \mathbb{R}^2$ and a source $s\in P$ such that the minimum weight of a $(1+\eps)$-ST (resp., Steiner $(1+\eps$)-ST) is 
    $O(1)\cdot \wts(\mst(P)),$
    but any previous algorithm in~\cite{ABP90,ABP91,KRY95}
    returns an $(1+\eps)$-ST of weight $\Omega(\frac{1}{\eps})\cdot \wts(\mst(P))$,
    and the algorithm in~\cite{Solomon15}
    returns a Steiner $(1+\eps)$-ST of weight $\Omega(\sqrt{1/\eps})\cdot \wts(\mst(P))$.
\end{theorem}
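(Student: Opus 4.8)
The plan is to exhibit, for each $\eps>0$, a single planar point set $P$ on which the previous algorithms are forced into expensive \emph{fixed} repair rules while an instance-tailored tree stays within a constant factor of $\wts(\mst(P))$. Take a straight \emph{trunk} $s=p_0,p_1,\dots,p_m$ with $p_j=(j,0)$ and $m=\lceil 1/\eps\rceil$; being collinear, the trunk has root-stretch exactly $1$ and MST-weight $\Theta(1/\eps)$. To each $p_j$ with $j$ up to a small constant fraction of $m$, attach a tiny \emph{gadget} $G_j$: a configuration of $\Theta(1/\eps)$ points lying within distance $\tfrac14\eps j$ of $p_j$ whose \emph{unique} minimum spanning tree is a long ``switchback'' path rooted at $p_j$. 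A gapped circle of radius $\tfrac14\eps j$ does this: the unique MST is the Hamiltonian path that omits the one oversized (gap) edge, so it winds almost all the way around the circle before reaching a vertex that is geometrically very close to $p_j$. Since $p_j$ is already at distance $\approx j$ from $s$, the stretch slack there is $\approx\eps j$, which is (i) more than enough for a cheap local reconnection of $G_j$ but (ii) exceeded by the switchback detour, so the canonical MST of $P$ is not a valid $(1+\eps)$-ST and every previous algorithm must repair each $G_j$.

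The analysis would then have four parts. (a) $\wts(\mst(P))=\Theta(1/\eps)$: the trunk contributes $m$, and the gadget MSTs contribute $\sum_j\Theta(\eps j)=\Theta(\eps m^2)=\Theta(m)$. (b) A valid $(1+\eps)$-ST of weight $\Theta(1/\eps)$: keep the trunk and, inside each $G_j$, replace the switchback by a ``two-armed'' reconnection at $p_j$ — equivalently, add the single short chord from $p_j$ that closes the switchback into a cycle and delete a far MST edge — which costs only $\Theta(\eps j)$ per gadget and $\Theta(m)$ in total; I would then verify that every gadget vertex is reached within distance $\approx j+\Theta(\eps j)\le(1+\eps)\,\dist(s,\cdot)$, the one genuine computation, which fixes the constant in the radius. (c) Non-Steiner bound: the algorithms of \cite{ABP90,ABP91,KRY95}, run on the complete Euclidean graph (whose SPT is the star at $s$), repair a stretched vertex only by inserting its SPT edge, i.e.\ a direct edge to $s$; $G_j$ contains such a vertex at distance $\Theta(j)$, so the output contains an edge of weight $\Theta(j)$ for each $j$, totaling $\sum_j\Theta(j)=\Theta(1/\eps^2)=\Omega(1/\eps)\cdot\wts(\mst(P))$. (d) Steiner bound: Solomon's algorithm \cite{Solomon15} repairs a circle-like stretched cluster of $k$ points of radius $\rho$ by its two-level Steiner hub structure, of weight $\Theta(\sqrt{k}\,\rho)$; applied per gadget ($k=\Theta(1/\eps)$, $\rho=\Theta(\eps j)$) this costs $\Theta(\sqrt{\eps}\,j)$, so $\sum_j\Theta(\sqrt{\eps}\,j)=\Theta(\sqrt{\eps}\,m^2)=\Theta(1/\eps^{1.5})=\Omega(\sqrt{1/\eps})\cdot\wts(\mst(P))$. (The argument is robust to whether Solomon connects hubs toward $p_j$ or all the way to $s$; the latter only makes the output heavier.) This is the content of \Cref{sec:LB}.

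The main obstacle is steps (c) and (d): I must argue that these \emph{fixed} procedures really do insert a full-length SPT edge (resp.\ build the full $\sqrt{k}$-weight hub structure) at every gadget and never stumble onto the cheap two-armed reconnection used by $\opt_{\eps}$ — which needs a careful reading of \cite{ABP90,ABP91,KRY95,Solomon15} and, at the gadget-design level, guaranteeing that the MST is genuinely unique and ``one-directional'' so that no tie-break can hand the algorithm a good tree for free. A secondary technical point is calibrating the gadget radius so that the two-armed reconnection has root-stretch at most $1+\eps$ while the switchback strictly exceeds it; the gapped-circle gadget with radius a small constant times $\eps j$ is designed to meet both constraints, and this constant-balancing is the part that would require the most care in the full proof.
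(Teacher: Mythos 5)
Your construction takes a genuinely different route from the paper's. The paper (see \Cref{obs:LB}) places $s=(2,0)$ and all other points inside a unit box $[0,1]^2$, with the MST consisting of a horizontal trunk plus $k$ parallel vertical unit segments; because every non-source point is at distance $\Theta(1)$ from $s$, the slack budget is uniformly $\Theta(\eps)$, every repair edge added by \cite{ABP90,ABP91,KRY95,Solomon15} has weight $\Theta(1)$, and each vertical segment forces the same number of repairs. The constant-weight optimal tree then follows cleanly from the centered-net reduction (\Cref{lem:net}): an $\frac{\eps}{2}$-net of size $O(\eps^{-2})$ has MST weight $O(\eps^{-1})$, so applying the existential SLT bound to the net and taking $k=\Omega(\eps^{-2})$ gives weight $O(\wts(\mst(P)))$. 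Your construction instead uses a trunk of length $m=\Theta(\eps^{-1})$ with gapped-circle gadgets at $p_j$ whose radii scale as $\Theta(\eps j)$, so the slack budget, gadget size, and repair cost all vary with $j$. That's a legitimate alternative design, and your totals $\Theta(\eps^{-2})$ and $\Theta(\eps^{-3/2})$ do match what is required, but the uniformity the paper builds into its instance is precisely what makes its analysis short; yours forgoes it.

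The serious gap is in steps (c) and (d), which you flag as the ``main obstacle'' but which are not merely a careful-reading issue --- the per-gadget accounting for \cite{Solomon15} as written does not describe that algorithm. Solomon's algorithm greedily partitions a Hamiltonian path (from a doubled MST) into subpaths $H_i$ of weight $\sqrt{\eps}\cdot\min_{p\in H_i}d(p,s)$ and adds a single Steiner tree of weight $O(\min_{p\in H_i}d(p,s))$ per subpath; it does not ``repair a circle-like cluster of $k$ points of radius $\rho$ by a two-level hub structure of weight $\Theta(\sqrt{k}\,\rho)$.'' In your instance, near $p_j$ the subpath threshold is $\sqrt{\eps}\,j$ while a gadget has MST weight only $\Theta(\eps j)$, so a single subpath swallows many gadgets (and for $j\lesssim 1/\sqrt{\eps}$ the threshold is below even one trunk edge, so the greedy splitting interleaves with the trunk in a nontrivial way). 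The lower bound therefore has to be established by a global accounting over subpaths, not per gadget; the exponent may well come out as $\Theta(\eps^{-3/2})$, but the argument as given does not establish it. A secondary concern is the radius constant in (b): with radius $\frac14\eps j$ and two arms each spanning angle up to $\pi$, the arm detour at angle $\theta$ is about $\frac14\eps j\theta$, which for $\theta$ near $\pi$ is close to (or above) the slack $\approx\eps j$ once you subtract the $\frac14\eps j$ the circle may bring a vertex closer to $s$ --- so the feasibility window is tight and not clearly on the right side; the paper sidesteps this entirely.
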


To prove \Cref{thm:Steiner} and \Cref{thm:main}, we reduced the problem to a set of centered $\eps$-net in a region in a cone with aperture $\sqrt{\eps}$, within $\Theta(1)$ distance from the root. The classical lower-bound construction for this problem consists of a set $P$ of uniformly distributed points along a circle of unit radius centered at the root $s$. However, if $P$ is the subset of points in a cone of angle $\sqrt{\eps}$, then there exists a Steiner $(1+\eps)$-ST of weight $O(1)\cdot \wts(\mst(P))$~\cite{Solomon15}. This raises the question: What is the maximum lightness of a Steiner $(1+\eps)$-ST for points in a cone of aperture $\sqrt{\eps}$? We give a lower bound on the maximum lightness of a minimum-weight Steiner $(1+\eps)$-ST for points in a cone of aperture $\sqrt{\eps}$ in \Cref{sec:LB-sector}.

\subsection{Related Work} \label{sec:related}

Elkin and Solomon~\cite{ElkinS15} studied the power of Steiner points for SLTs in general metric spaces. The Steiner points in the construction of \cite{ElkinS15} are not part of the input metric, and the only restriction on the Steiner points is to {\em dominate} the metric distances: for any pair of non-Steiner points,
their tree distance should be at least as large as their metric distance. Using such ``out of nowhere'' Steiner points, \cite{ElkinS15} constructed Steiner $(1+\eps,O(\log \eps^{-1}))$-SLTs (and also an SPT with lightness $O(\log n)$), and they also showed that this tradeoff is tight, by presenting a metric space for which $\beta = \Omega(\log \eps^{-1})$ for any Steiner $(1+\eps,\beta)$-SLT, for any $\eps = \Omega(1/n)$.

There is also a large body of work on light $(1+\eps)$-spanners in low-dimensional Euclidean and doubling metrics \cite{BorradaileLW19,bhore2022euclidean,BuchinRS25,le2022truly}, including approximation algorithms for the minimum weight $(1+\eps)$-spanner~\cite{althofer1993sparse,DKR15,DZ16,KP94,le2024towards,LSTTZ26}. 
On the one hand, an $(\alpha,\beta)$-SLT of a $(1+\eps)$-spanner is a $(\alpha(1+\eps),\beta(1+\eps))$-SLT of the original graph, and a sparse spanner may help
reduce computational overhead. On the other hand, a $(1+\eps)$-spanner provides $1+\eps$ stretch between all pairs of vertices: Every spanner algorithm critically exploits this property, by constructing spanners in increasing scales, where larger scales can inductively rely on smaller scales. In contrast, SLTs do not have this property, and so recent advances on spanner algorithms cannot be adapted to SLTs. 

Gudmundsson et al.~\cite{GudmundssonMU21} considered \EMPH{bounded-degree SLTs}.
For every finite metric space of doubling dimension $k$ and every integer $b\geq 2$, they constructed an $(O(1),\max\{O(k)/\log b, O(1)\})$-SLT of maximum degree $b$. 
However, the maximum degree of Euclidean $(1+\eps)$-STs and Steiner $(1+\eps)$-STs, is unbounded as $\eps$ decreases. For a set $C$ of $\lceil 1/\eps\rceil$ evenly spaced points on a circle, mentioned above, the maximum degree of every $(1+\eps)$-ST is $\Omega(\frac{1}{\eps})$, and the maximum degree of every Steiner $(1+\eps)$-ST is $\Omega(\sqrt{1/\eps})$. 

Kortsarz and Peleg~\cite{kortsarz1997approximating} introduced the \EMPH{$D$-MST} for a set a set $T\subset V$ of terminals in a graph $G=(V,E)$ as Steiner tree for $T$ of weight $O(\mst(G))$ and \emph{diameter} at most $D$. Later in the literature, the $D$-MST is often called \emph{shallow-light tree}, where the term \emph{shallow} refers to the diameter bound. The optimization problem of minimizing the weight of a $D$-MST was studied extensively~\cite{ChimaniS15,hajiaghayi2009approximating,NaorS97}. 
This alternate version of the shallow-light tree problem is clearly NP-hard since it generalizes the classical {\em minimum Steiner tree problem}. In fact, it was shown that for any $D>0$, there is no $(\ln n-\epsilon)$-approximation of the minimum weight $D$-MST in graphs with unit edge costs \cite{bar2001generalized} unless $\mathrm{NP}\subseteq \mathrm{DTime}(n^{\log\log n})$. On the algorithmic side, there is an exact algorithm with runtime $O(3^{|S|nD})$ where $S$ denotes the terminal set \cite{guo2012parameterized}. If we allow approximations, a polynomial time algorithm was shown in \cite{kortsarz1997approximating} with $\min\{D\log n, n^\epsilon\}$-approximation of the minimum weight tree. Allowing a bicriteria approximation, which relaxes the requirement on the radius, it was shown that one can compute in polynomial time an $O(\log^3n)$-approximation of the minimum weight $D$-MST by allowing an $O(D\log n)$ radius bound \cite{hajiaghayi2009approximating,KhaniS16}; 
when the edge costs and weights are linearly related, one can obtain $O(1)$-approximation with $O(D)$ diameter bound \cite{guo2014approximating}.

\subsection{Technical Overview}

Given a set $P$ of $n$ points in the plane, including a source $s\in P$, and a parameter $\eps>0$, we describe $O(n\log n\cdot \mathrm{polylog}(\eps^{-1}))$-time algorithms to construct a (Steiner) $(1+\eps\cdot \log \eps^{-1})$-ST rooted at $s$, and then analyze its weight compared to the minimum weight $(1+\eps)$-ST rooted at $s$. We note that, since $(1+\eps)$-STs do not have a recursive substructure, the stretch between two arbitrarily points in $P$ may be unbounded. Yet, we can apply a divide-and-conquer strategy by clustering nearby points together, based on their position w.r.t.\ the source $s$. 

In \Cref{ssec:nonSteinerReduction}, 
we partition the plane into trapezoid tiles, and show that the union of bicriteria approximate SLTs for the point sets the tiles is a bicriteria approximation for the entire point set for both the Steiner and non-Steiner settings (\Cref{thm:nonSteiner-reduction}). We construct a tiling based on geometric considerations. 
The diameter of each tile $\tau$ is proportional to the distance ${\rm dist}(s,\tau)$, and the shape of $\tau$ is roughly $A\times (A\cdot \sqrt{\eps})$ for $A={\rm dist}(s,\tau)$; see \Cref{fig:overview}. That is, we choose the aspect ratio of every tile to be roughly $\sqrt{\eps}$ for the following reason: The triangle inequality implies that every $ps$-path of weight at most $(1+\eps)\cdot d(p,s)$ lies in an ellipse $\mathcal{E}_{ps}$ with foci $p$ and $s$, and aspect ratio roughly $\sqrt{\eps}$; see~\Cref{sec:pre}. Therefore, the union of all ellipses $\mathcal{E}_{ps}$, for all points $p \in P\cap \tau$, will be similar to the tile $\tau$ in the sense that the aspect ratio of its bounding box is roughly $\sqrt{\eps}$. The shape of the tiles is crucial for the proof of the reduction (\Cref{thm:nonSteiner-reduction}).

For all points in a tile $P\cap \tau$, the distance $d(p,s)$ to $s$ is the same up to constant factors. We can further partition the set of points in each tile into cluster by approximating $d(p,s)$ up to an $(1+\eps)$-factor. Recall that a classical \emph{$\eps$-net} in a metric space $(X,d)$ is a set $N\subset X$ such that the points in $N$ are at least $\eps$ distance apart, and the $\eps$-neighborhood of every point $x\in X$ contains a net point in $N$. In \Cref{ssec:net}, we define a \EMPH{centered $\eps$-net} $N$, where points $a,b\in N$ are at least $\eps \cdot \max\{d(a,s),d(b,s)\}$ apart, and the $(\eps\cdot d(p,s))$-neighborhood of every point $p\in P$ contains a net point in $N$.  We show that a bicriteria approximate STs for a centered $\eps$-net can be extended to bicriteria ST for the entire point set, using $O(1)$-spanners in the neighborhoods of the net points (\Cref{lem:net}). Interestingly, we reduce the $(1+\eps)$-SLT problem for $P$ to a variant of the Steiner $(1+\eps)$-SLT problem for the net $N\subset P$, where all Steiner points must be in the original set $P$. We note that although the reduction steps in \Cref{sec:reduction} are new and essential to our approach, they are based mainly on standard techniques.

\begin{figure}[htbp]
    \centering
    \includegraphics[width=.9\textwidth]{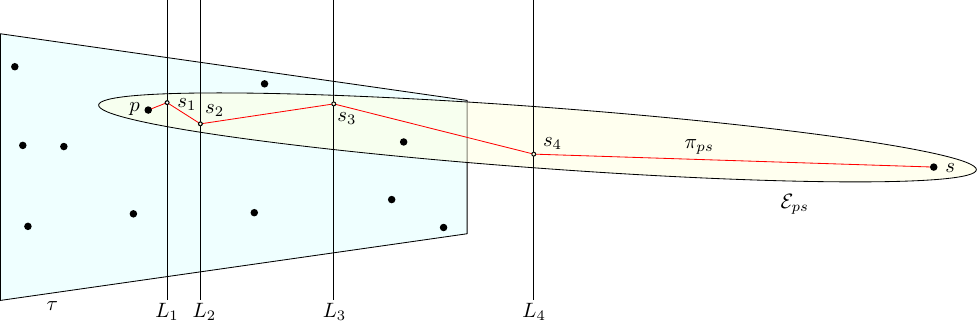}
    \caption{Points in a trapezoid tile $\tau$, and Steiner points $s_1,\ldots ,s_4\in \mathcal{E}_{ps}$ on parallel lines.}
    \label{fig:overview}
\end{figure}

The core technical contributions of our work appear in \Cref{sec:sector,sec:general-sector}, where we construct a Steiner ST for a centered $\eps$-net in \Cref{sec:sector} and then extend the construction to the non-Steiner setting in \Cref{sec:general-sector}. The Steiner setting is easier to work with because we can control the location of Steiner points.
We begin with a brief overview of the Steiner construction; refer to \Cref{fig:overview}. From the perspective of a single point $p\in P\cap \tau$, the construction is similar to the Steiner SLT construction by \cite{Solomon15}, which gave an existentially tight bound: We choose Steiner points $s_1, s_2,\ldots ,s_{k-1}$ in the ellipse $\mathcal{E}_{ps}$ on parallel lines $L_1,\ldots , L_{k-1}$ at distance $4^i \eps\cdot d(p,s)$ from $p$, where  $k=O(\log \eps^{-1})$. Solomon~\cite{Solomon15} shows that one can carefully choose Steiner points so that the stretch of the path $\pi_{ps}=(p,s_1,\ldots , s_{k-1},s)$ is at most $1+\eps$. However, geometric calculations show that even if we choose arbitrary points $s_i\in L_i\cap \mathcal{E}_{ps}$ for $i=1,\ldots , k-1$, then each edge $s_{i-1}s_i$ still contributes only $O(\eps)$ to the stretch (more precisely, $\wts(s_{i-1}s_i)$ exceeds the length of its orthogonal projection to the line $ps$ by $O(\eps)\cdot d(p,s)$). In other words, arbitrary Steiner points $s_i\in L_i\cap \mathcal{E}_{ps}$, for $i=1,\ldots , k-1$, guarantee a root-stretch $1+O(\eps\cdot \log \eps^{-1})$. 

For a point set $P_\tau = P\cap \tau$ in a tile $\tau$, we follow the above strategy, but we synchronize the lines $L_1,\ldots ,L_k$ chosen for different points in $P_\tau$. Then each line $L_i$ corresponds to many points $p\in P_\tau$, and intersects their ellipses $\mathcal{E}_{ps}$. We use a \EMPH{minimum hitting set} for the intervals $L_i\cap \mathcal{E}_{ps}$, to choose the minimum number of Steiner points that serve all associated ellipses. The weight analysis uses the fact 
that each ellipse $\mathcal{E}_{ps}$ contains a $ps$-path that crosses all lines $L_1,\ldots , L_k$.

In \Cref{sec:general-sector}, we adapt the Steiner ST algorithm to the non-Steiner setting. However, both the algorithm design and its analysis are more challenging. Instead of creating Steiner points $s_i$ in a line $L_i$ of our choice, now all points $s_i$ must be in $P$. 
We use the lines $L_1,\ldots , L_k$ to cover the ellipse $\mathcal{E}_{ps}$ with axis-aligned rectangles whose corners are on two consecutive lines $L_i$ and $L_{i+1}$; and then 
choose \emph{minimum hitting sets} from $P$ for the \emph{nonempty} rectangles. Some of the rectangles $R_i$ might be \emph{empty} (i.e., $R_i \cap P=\emptyset$). This means that we cannot choose a point $s_i\in P$ in $R_i$ for some $ps$-path (our algorithm simply skips $R_i$), but it also means that an optimal ST $\OPT$ does not have any vertices in $R_i$, to the $ps$-path in $\OPT$ contains an edge that traverses $R_i$ whose weight is proportional to the width of $R_i$. This is a crucial observation for the weight analysis. The root-stretch analysis also requires more work in the non-Steiner setting: In the Steiner case, the Steiner points $s_{i-1}$ and $s_i$ are on the lines $L_{i-1}$ and $L_i$, so we can control the distance between $s_{i-1}$ and $s_i$. However, when are limited to points $s_{i-1},s_i\in P$ in rectangles $R_{i-1}$ and $R_{i}$, it is possible that $s_{i-1}$ and $s_i$ are too close to each other, and their contribution to the root-stretch is too large. In such cases, we modify the $ps$-paths by skipping $s_{i-1}$ or $s_i$. This ensures that the distances between consecutive points of the $ps$-paths are sufficiently large, and we prove that the weight increases by at most a constant factor (\Cref{lem:skipping}).  In \Cref{sec:dspace}, we show that our algorithms (for both the Steiner and non-Steiner settings) extend naturally to higher dimensions using cone partitioning and approximate high-dimensional hitting sets. For the Steiner version, the approximation guarantee remains unchanged. However, the non-Steiner algorithm incurs an additional $\log(\varepsilon^{-1})$ factor in its approximation ratio. Its running time also increases by an additional $\mathrm{poly}(\varepsilon^{-1})$ factor.

\section{Preliminaries}
\label{sec:pre}

Let $p,s\in \mathbb{R}^2$, and $\eps>0$. If $\pi_{ps}$ is a $ps$-path of weight at most $(1+\eps)\cdot d(p,s)$, then every point $q\in \pi_{ps}$, we have $d(p,q)+d(q,s)\le\wts(\pi_{ps})\leq (1+\eps)\cdot d(p,s)$. This implies that $\pi_{ps}$ is contained in the ellipse $\mathcal{E}_{ps}$ with foci $p$ and $s$, and major axis $(1+\eps)d(p,s)$; see \Cref{fig:ellipse1}. The ellipse $\mathcal{E}_{ps}$ is contained in a rectangle spanned by its major and minor axes. The length of its minor axis is $\sqrt{(1+\eps)^2-1}\cdot d(p,s) = \sqrt{2\eps+\eps^2}\cdot d(p,s)< 2\sqrt{\eps}\cdot d(p,s)$ if $\eps<2$.

\begin{figure}[htbp]
    \centering
    \includegraphics[width=.65\columnwidth]{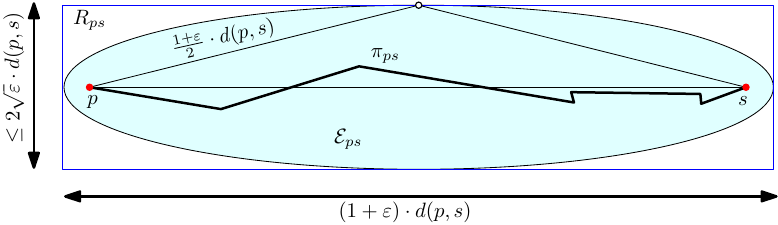}
    \caption{The ellipse $\mathcal{E}_{ps}$ with foci $p$ and $s$ and major axis of length $(1+\eps)\cdot d(p,s)$.}
    \label{fig:ellipse1}
\end{figure}

For analyzing the weight of an SLT, we consider a $ps$-path $\pi_{ps}$ as a polyline (i.e., a subset of the plane). In particular, for any region $R\subset \mathbb{R}^2$, the intersection $\pi\cap R$ is the a part of the polyline contained in $R$. 
For a line $L$, we denote by $L^-$ and $L^+$ the two open halfplanes bounded by $L$.
We make use of the following easy observation:

\begin{observation}\label{lem:charging}
    Let $L$ be a line that separates $p$ and $s$; see \Cref{fig:ellipse2} such that $p\in L^-$ and $s\in L^+$. For every $ps$-path $\pi_{ps}$, we have $\wts(\pi_{ps}\cap L^-)\geq d(p,L)$.
\end{observation}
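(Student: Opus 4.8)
The plan is to prove \Cref{lem:charging} directly from the triangle inequality, using the fact that a $ps$-path crossing from $L^-$ to $L^+$ must contain a sub-polyline from $p$ to some point on $L$, and that the length of any polyline is at least the Euclidean distance between its endpoints.

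First I would set up notation: parametrize the polyline $\pi_{ps}$ from $p$ to $s$, and let $q$ be the \emph{first} point along $\pi_{ps}$ (in this parametrization) that lies on $L$; such a point exists because $\pi_{ps}$ is a connected curve with one endpoint $p\in L^-$ and the other endpoint $s\in L^+$, so by continuity it must intersect $L$. (If $\pi_{ps}$ is a polygonal path, $q$ is the first crossing point on the first edge that reaches $L$.) Then the portion of $\pi_{ps}$ from $p$ up to $q$ lies entirely in $L^-\cup L$, and in particular it is a sub-polyline contained in $\overline{L^-}$; hence $\wts(\pi_{ps}\cap L^-)\ge \wts(\pi_{ps}\cap \overline{L^-})\ge \wts(\pi_{ps}[p,q])$ minus at most the boundary, which contributes zero length, so $\wts(\pi_{ps}\cap L^-)$ is at least the length of the subpath from $p$ to $q$.

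Next I would bound the length of that subpath from below: the length of any polyline (or rectifiable curve) between two points is at least the straight-line distance between them, so $\wts(\pi_{ps}[p,q])\ge d(p,q)$. Finally, since $q\in L$, we have $d(p,q)\ge \dist(p,L)$, because $\dist(p,L)$ is by definition the minimum distance from $p$ to any point of $L$. Chaining these inequalities yields $\wts(\pi_{ps}\cap L^-)\ge d(p,q)\ge d(p,L)$, which is the claim. The statement writes $d(p,L)$ for $\dist(p,L)$; I would just note this.

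I do not expect any genuine obstacle here, as this is an ``easy observation'' — the only mild care needed is the bookkeeping about whether the boundary line $L$ (a measure-zero set of zero length) is counted in $\pi_{ps}\cap L^-$ versus $\pi_{ps}\cap \overline{L^-}$, but since $L$ contributes no length this does not affect the inequality. One could also phrase the whole argument without choosing a ``first'' crossing point, by simply observing that $\pi_{ps}\cap \overline{L^-}$ contains a connected subpath joining $p$ to $L$; either phrasing works, and I would pick whichever is cleanest given how $\pi_{ps}$ is formally modeled elsewhere in the paper.
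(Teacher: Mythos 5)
Your proposal is essentially identical to the paper's proof: both pick the first crossing point $q$ of $\pi_{ps}$ with $L$, note the subpath from $p$ to $q$ lies in $\overline{L^-}$, and lower-bound its length by $d(p,q)\ge d(p,L)$. The only difference is that you spell out the minor measure-zero bookkeeping about $L^-$ versus $\overline{L^-}$, which the paper leaves implicit.
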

\begin{proof}
Since $p\in L^-$ and $s\in L^+$, the path $\pi_{ps}$ must cross the line $L$. 
Let $q$ be the first point along $\pi_{ps}$ that lies in $L$. Then the subpath of $\pi_{ps}$ from $p$ to $q$ lies in $L^-$ and its length is at least $d(p,L)$, as claimed. 
\end{proof}

\begin{figure}[htbp]
    \centering
    \includegraphics[width=.65\columnwidth]{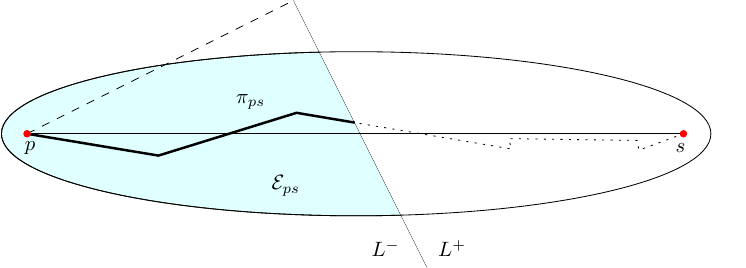}
    \caption{A part of the path $\pi_{ps}$ in the halfplane $L^-$.}
    \label{fig:ellipse2}
\end{figure}

\paragraph{Slopes, slack, and stretch.}
The \emph{slope} of a line segment $ab$ is $\slope(ab)=\frac{y(b)-y(a)}{x(b)-x(a)}$ if $x(a)\neq x(b)$, and $\slope(ab)=\infty$ if $x(a)=x(b)$.
For a line segment $ab$ in the plane, we denote by $\proj(ab)$ the orthogonal projection of $ab$ to the $x$-axis. Note that $|\proj(ab)|=|x(b)-x(a)|$. 
We define the \emph{slack} of $ab$ as $\slack(ab)=d(a,b)-|\proj(ab)|$. 
The Taylor series gives a relation between slopes and Euclidean distance. 

\begin{lemma}\label{lem:slack}
For any line segment $ab$ with $\slope(ab)\neq \infty$, we have 
\begin{align}
|\proj(ab)|\cdot \left(1+\frac13\cdot |\slope(ab)|^2\right) ~\leq~ 
& d(a,b)
~\leq~  |\proj(ab)|\cdot \left(1+\frac12\cdot |\slope(ab)|^2\right) \label{eq:1}\\
  \frac13\cdot |\slope(ab)|^2 ~\leq~ 
 & \frac{\slack(ab)}{|\proj(ab)|} 
   ~\leq~ \frac12\cdot |\slope(ab)|^2 \label{eq:2} .
\end{align}
\end{lemma}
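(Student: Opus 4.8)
The plan is to reduce both inequalities to a single one‑variable estimate and then read them off from the concavity of $t\mapsto\sqrt{1+t}$. First I would note that the hypothesis $\slope(ab)\neq\infty$ means $x(a)\neq x(b)$, so $|\proj(ab)|=|x(b)-x(a)|>0$ and every division below is legitimate. Writing $m=\slope(ab)$, we have $y(b)-y(a)=m\,(x(b)-x(a))$, so the Pythagorean theorem gives $d(a,b)^2=|\proj(ab)|^2\,(1+m^2)$, that is, $d(a,b)=|\proj(ab)|\cdot\sqrt{1+m^2}$. Hence, setting $t=m^2\ge 0$, the two‑sided bound \eqref{eq:1} is exactly $|\proj(ab)|$ times the scalar inequality $1+\tfrac13 t\le\sqrt{1+t}\le 1+\tfrac12 t$.

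I would then establish this scalar inequality using that $g(t)=\sqrt{1+t}$ is concave on $[0,\infty)$ (indeed $g''(t)=-\tfrac14(1+t)^{-3/2}<0$). The upper bound is the tangent‑line inequality at $t=0$: $g(t)\le g(0)+g'(0)\,t=1+\tfrac12 t$. For the lower bound, the chord of $g$ through $(0,g(0))=(0,1)$ and $(3,g(3))=(3,2)$ is precisely the line $1+\tfrac13 t$, and a concave function lies above its chords, so $g(t)\ge 1+\tfrac13 t$ on $[0,3]$; equivalently, one may just square, since $(1+\tfrac13 t)^2=1+\tfrac23 t+\tfrac19 t^2\le 1+t\iff t\le 3$. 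Multiplying the scalar bound through by $|\proj(ab)|$ gives \eqref{eq:1}, and then \eqref{eq:2} is immediate: since $\slack(ab)=d(a,b)-|\proj(ab)|$, dividing \eqref{eq:1} by $|\proj(ab)|$ and subtracting $1$ yields the claimed bounds on $\slack(ab)/|\proj(ab)|$.

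There is no real obstacle here; the computation is routine, and the only point worth flagging is the range of validity of the lower half of \eqref{eq:1}, which needs $|\slope(ab)|\le\sqrt3$ (equivalently $t\le 3$) — for larger slopes the line $1+\tfrac13 t$ overshoots $\sqrt{1+t}$. This restriction is harmless in our setting, since every slope we ever bound is $O(\sqrt\eps)$ and hence well below $1$; one should simply read \eqref{eq:1}--\eqref{eq:2} as holding whenever $|\slope(ab)|\le\sqrt3$.
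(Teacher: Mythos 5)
Your proof is correct and follows essentially the same route as the paper: apply Pythagoras to write $d(a,b)=|\proj(ab)|\sqrt{1+|\slope(ab)|^2}$ and then reduce to the scalar bound $1+\tfrac13 t\le\sqrt{1+t}\le 1+\tfrac12 t$. Your concavity/chord derivation of that scalar inequality is a bit cleaner than the paper's appeal to a Taylor estimate, and your explicit restriction $|\slope(ab)|\le\sqrt3$ correctly flags an implicit hypothesis that the paper's proof also assumes (it requires $|\slope(ab)|\le 1$) but that the lemma statement omits.
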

\begin{proof}
By Pythagoras' theorem $d(a,b)=|\proj(ab)|\cdot \sqrt{1+|\slope(ab)|^2}$. The Taylor estimates of the function $f(x)=\sqrt{1+x^2}$ is $f(x)=1+\frac{x^2}{2}-\frac{x^4}{8}+O(x^5)$ near $x=0$. Therefore, $1+\frac{x^2}{3}\leq f(x) <1+\frac{x^2}{2}$ for $|x|\leq 1$. Substituting $x=|\slope(ab)|$ and $\slack(ab)=d(a,b)-|\proj(ab)|$ completes the proof.
\end{proof}

It is well known that an $x$-monotone path {(i.e., a path in which the $x$-coordinates of the points along the path are monotone increasing)} with edges of bounded slopes have small stretch. We include a proof for completeness. 

\begin{lemma}\label{lem:slope0}
Let $\pi=(v_0,v_1,\ldots, v_k)$ be an $x$-monotone polygonal path in $\mathbb{R}^2$ such that 
$|\slope(v_{i-1}v_i)|\leq \varrho$ for all $i=1,\ldots , k$.
Then $|\slope(v_0v_k)|\leq \varrho$; and
\[
    \frac{\wts(\pi)}{d(v_0,v_k)} \leq \sqrt{1+\varrho^2} < 1+\frac{\varrho^2}{2} .
\]
\end{lemma}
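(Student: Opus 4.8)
\textbf{Proof plan for \Cref{lem:slope0}.} The statement bundles two claims: first, the slope of the long chord $v_0v_k$ is bounded by $\varrho$; second, the total length $\wts(\pi)$ exceeds $d(v_0,v_k)$ by at most the factor $\sqrt{1+\varrho^2}$. The plan is to reduce the length comparison to a one-dimensional statement about projections onto the $x$-axis, exploiting $x$-monotonicity, and to handle the slope bound by a direct averaging argument.

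First I would establish the slope bound. Write $\Delta x_i = x(v_i)-x(v_{i-1})$ and $\Delta y_i = y(v_i)-y(v_{i-1})$. By $x$-monotonicity all $\Delta x_i$ have the same sign (say nonnegative; if some are zero the slopes would be $\infty$, contradicting the hypothesis, so all $\Delta x_i>0$), hence $\sum_i \Delta x_i = x(v_k)-x(v_0) > 0$ and likewise this sum dominates each $|\Delta x_i|$. The hypothesis gives $|\Delta y_i|\le \varrho\,\Delta x_i$ for every $i$. Summing and applying the triangle inequality, $|y(v_k)-y(v_0)| = \big|\sum_i \Delta y_i\big| \le \sum_i |\Delta y_i| \le \varrho\sum_i \Delta x_i = \varrho\,(x(v_k)-x(v_0))$, which is exactly $|\slope(v_0v_k)|\le \varrho$.

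Next, for the length inequality I would use the upper bound of \Cref{lem:slack}, equation~\eqref{eq:1}: since $|\slope(v_{i-1}v_i)|\le\varrho\le 1$, each edge satisfies $d(v_{i-1},v_i)\le |\proj(v_{i-1}v_i)|\cdot(1+\tfrac12\varrho^2) = \Delta x_i\cdot(1+\tfrac12\varrho^2)$. Summing over $i$ and using $x$-monotonicity (so the projections tile the projection of the whole path without overlap), $\wts(\pi) = \sum_i d(v_{i-1},v_i) \le (1+\tfrac12\varrho^2)\sum_i \Delta x_i = (1+\tfrac12\varrho^2)\cdot(x(v_k)-x(v_0)) = (1+\tfrac12\varrho^2)\cdot|\proj(v_0v_k)|$. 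On the other hand $d(v_0,v_k)\ge |\proj(v_0v_k)|$ trivially (projection never increases distance). Dividing yields $\wts(\pi)/d(v_0,v_k) \le 1+\tfrac12\varrho^2$. To get the slightly stronger stated form $\wts(\pi)/d(v_0,v_k)\le\sqrt{1+\varrho^2}$, I would instead not throw away the vertical extent of the chord: combine the edge bound $d(v_{i-1},v_i)=\Delta x_i\sqrt{1+\slope(v_{i-1}v_i)^2}\le \Delta x_i\sqrt{1+\varrho^2}$ directly, so $\wts(\pi)\le \sqrt{1+\varrho^2}\cdot(x(v_k)-x(v_0)) \le \sqrt{1+\varrho^2}\cdot d(v_0,v_k)$, and then the final bound $\sqrt{1+\varrho^2}<1+\varrho^2/2$ follows from the Taylor estimate already recorded in the proof of \Cref{lem:slack}.

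I do not anticipate a genuine obstacle here; the only point requiring a little care is confirming that $x$-monotonicity forces all $\Delta x_i$ to share a sign so that $\sum_i |\Delta x_i| = |x(v_k)-x(v_0)|$ — without this the projections could overlap and the summation step would fail — and handling the degenerate edge case where consecutive vertices coincide or a $\Delta x_i$ vanishes (ruled out by the finite-slope hypothesis, or trivially absorbed). Everything else is a one-line application of \Cref{lem:slack} together with the elementary fact that orthogonal projection is $1$-Lipschitz.
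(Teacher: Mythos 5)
Your argument is correct and is essentially the same as the paper's: the slope bound follows from telescoping the $y$-increments against $x$-monotone $x$-increments, and the length bound comes from writing each edge length as $|\proj(v_{i-1}v_i)|\sqrt{1+\slope^2}\le\sqrt{1+\varrho^2}\,|\proj(v_{i-1}v_i)|$, summing the projections (which telescope by $x$-monotonicity), and dominating $|\proj(v_0v_k)|$ by $d(v_0,v_k)$. Your initial detour through Lemma~\ref{lem:slack} is unnecessary (and would implicitly assume $\varrho\le1$, which the statement does not); your second, direct Pythagorean route is exactly what the paper does.
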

\begin{proof}
Then every edge $v_{i-1}v_i$ of $\pi$ satisfies $|\slope(v_{i-1}v_i)|\leq \varrho$. This implies $|y(v_i)-y(v_{i-1})|\leq \varrho \cdot |x(v_i)-x(v_{i-1})|$. Since the path $\pi$ is $x$-monotone, then we have 
\begin{align*}
    |y(v_k)-y(v_0)|
    & = \left| \sum_{i=1}^k y(v_{i})-y(v_{i-1}) \right|\\ 
    &\leq  \sum_{i=1}^k |y(v_i)-y(v_{i-1})| \\
    &\leq \sum_{i=1}^k \varrho \cdot |x(v_i)-x(v_{i-1})| \\
    & = \varrho\cdot |x(v_k)-x(v_0)|.
\end{align*}
This implies $|\slope(v_kv_0|\leq \varrho$. For the second claim, note that 
\begin{align*}
\wts(\pi) 
    &=\sum_{i=1}^k d(v_{i-1},v_i) 
    \leq \sqrt{1+\varrho^2} \cdot \sum_{i=1}^k |\proj(v_{i-1}v_i))|\\
    &= \sqrt{1+\varrho^2}\cdot |\proj(v_0v_k)| 
    \leq \sqrt{1+\varrho^2}\cdot d(v_0,v_k) 
    < \left(1+\frac{\varrho^2}{2}\right)\cdot d(v_0,v_k) .  \qedhere
\end{align*}
\end{proof}

\paragraph{Divide-and-Conquer for Minimum Spanning Trees.}
Let $P$ be a finite set in a metric space $(X,d)$. A ball of radius $r$ centered at $c\in X$ is denoted by $B(c,r)=\{x\in X: d(c,x)<r\}$. We use an easy observation that, under mild assumptions, the weight of the minimum spanning tree of $P$, denoted $\mst(P)$, is bounded below by the sum of weights of MSTs of subsets of $P$. 

A \emph{shallow cover} of $P$ is a collection of metric balls $\mathcal{C}=\{B(a_i,r_i): i\in I\}$ such that
\begin{itemize}
\item (\emph{cover}) $P\subset \bigcup_{i\in I} B(a_i,r_i)$; and 
\item (\emph{shallow}) for every $i\in I$, the ball $B(a_i,3r_i)$ intersects $O(1)$ balls in $\mathcal{C}$ .
\end{itemize}
With this notation, we prove the following. 

\begin{lemma}\label{lem:cover}
Let $P$ be a set of $n$ points in a metric space $(X,d)$, and let $\mathcal{C}=\{B_i: i\in I\}$ be a shallow cover of $P$. Then 
\[ 
    \sum_{i\in I} \wts(\mst(P\cap B_i))  
    \leq O(\wts(\mst(P))) .
\]
\end{lemma}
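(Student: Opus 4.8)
The plan is to charge the local minimum spanning trees to a single global minimum spanning tree $T:=\mst(P)$, after first splitting the balls of $\mathcal{C}$ into $O(1)$ ``well-separated'' groups via a coloring argument.

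\smallskip\noindent\emph{Step 1 (coloring).}
I would use only the \emph{shallow} property to $O(1)$-color the index set $I$ so that within each class the balls are strongly separated. Form the digraph on $I$ with an arc $i\to j$ whenever $B(a_i,3r_i)\cap B_j\neq\emptyset$; the shallow property says every vertex has out-degree $O(1)$. A digraph of bounded out-degree has an underlying undirected graph in which every subgraph has a vertex of bounded degree, i.e.\ it is $O(1)$-degenerate, hence $O(1)$-colorable greedily. In a resulting color class $J$, any two distinct $j,j'\in J$ satisfy $B(a_j,3r_j)\cap B_{j'}=\emptyset$ \emph{and} $B(a_{j'},3r_{j'})\cap B_j=\emptyset$; in particular the balls $\{B_j:j\in J\}$ are pairwise disjoint, so the sets $P_j:=P\cap B_j$ ($j\in J$) are pairwise disjoint. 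Since there are $O(1)$ classes, it suffices to prove $\sum_{j\in J}\wts(\mst(P_j))=O(\wts(\mst(P)))$ for one class $J$.

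\smallskip\noindent\emph{Step 2 (a local minimum spanning tree).}
Fix a class $J$, let $P':=\bigcup_{j\in J}P_j\subseteq P$ (a disjoint union), and $M:=\mst(P')$. By the standard doubling/shortcutting argument on $\mst(P)$ we have $\wts(M)\le 2\wts(\mst(P))$. Restricting $M$ to each cluster, $M[P_j]$ is a forest with, say, $c_j$ components; adding $c_j-1$ edges, each of length $<2r_j$ (any two points of $P_j\subseteq B(a_j,r_j)$ are closer than $2r_j$), turns it into a spanning tree of $P_j$, so $\wts(\mst(P_j))\le\wts(M[P_j])+2r_j(c_j-1)$. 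Summing over $j\in J$: the forests $M[P_j]$ are edge-disjoint subgraphs of $M$, so $\sum_j\wts(M[P_j])\le\wts(M)$, and it remains to bound $\sum_{j\in J}r_j(c_j-1)$.

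\smallskip\noindent\emph{Step 3 (the crucial length estimate).}
Let $\partial_j$ be the number of edges of $M$ with exactly one endpoint in $P_j$. Removing these $\partial_j$ edges from the tree $M$ separates the $c_j$ pieces of $M[P_j]$ from each other (and from the rest of $P'$), so $c_j-1\le\partial_j$. Each such boundary edge $e=(a,b)$, say $a\in P_j$ and $b\in P_{j'}$ with $j'\neq j$, is a boundary edge of exactly two clusters, and it is \emph{long}: within the color class $B(a_j,3r_j)\cap B_{j'}=\emptyset$ and $b\in B_{j'}$, hence $d(a_j,b)\ge 3r_j$; since $a\in B_j$ gives $d(a_j,a)<r_j$, the triangle inequality yields $d(a,b)>2r_j$, and symmetrically $d(a,b)>2r_{j'}$, so $d(a,b)>2\max(r_j,r_{j'})\ge r_j+r_{j'}$. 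Therefore
\[
  \sum_{j\in J} r_j(c_j-1) \;\le\; \sum_{j\in J} r_j\,\partial_j \;=\; \sum_{e\ \text{inter-cluster}} \bigl(r_{j(a)}+r_{j(b)}\bigr) \;<\; \sum_{e\ \text{inter-cluster}} d(a,b) \;\le\; \wts(M),
\]
where $e=(a,b)$ and $j(a),j(b)$ denote the clusters containing $a,b$. Combining the two bounds, $\sum_{j\in J}\wts(\mst(P_j))\le 3\wts(M)\le 6\wts(\mst(P))$, and summing over the $O(1)$ color classes finishes the proof.

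\smallskip\noindent\emph{Main obstacle.}
The delicate step is Step 1: the shallow hypothesis controls intersections of $B(a_i,3r_i)$ only with the \emph{original} balls $B_j$, not with the blown-up balls $B(a_j,3r_j)$, so in a non-doubling metric one cannot directly get a packing/degeneracy bound on $\{B(a_i,3r_i)\}_i$. The resolution is to record separation via the \emph{asymmetric} arc relation above and pass to its symmetric closure, using that bounded out-degree already implies bounded degeneracy. It is then essential that the blow-up factor be at least $\approx 2$ (the factor $3$ in the hypothesis is comfortably enough): it is exactly what makes every inter-cluster edge in Step 3 longer than the sum of the two cluster radii, which is what lets the weight charging close — mere pairwise disjointness of the $B_j$ would not suffice.
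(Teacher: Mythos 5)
Your proof is correct and shares the paper's architecture (split the cover into $O(1)$ color classes with well-separated clusters, then bound per class), but both halves diverge in ways worth flagging. For the coloring, the paper asserts that ``the intersection graph of the balls of triple radii has bounded degree''; the shallow hypothesis only bounds how many \emph{original} balls $B_j$ each tripled ball $B(a_i,3r_i)$ meets, not how many other tripled balls it meets, so that assertion does not literally follow (one can arrange a shallow cover in the plane where one tripled ball meets arbitrarily many other tripled balls while meeting no original ball but its own). Your detour through the digraph with arc $i\to j$ iff $B(a_i,3r_i)\cap B_j\neq\emptyset$, bounded out-degree, $O(1)$-degeneracy, and greedy coloring closes this gap, and the (weaker) separation it yields within a class --- $B(a_j,3r_j)\cap B_{j'}=\emptyset$ for $j\neq j'$ --- is exactly what the downstream argument needs in both proofs. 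For the per-class weight bound, the paper invokes the cut/Kruskal property: intra-cluster edges ($<2r_j$) are strictly shorter than cut edges ($>2r_j$), so Kruskal on $P_J$ builds each $\mst(P_j)$ before crossing any cluster boundary and $\sum_j\wts(\mst(P_j))\le\wts(\mst(P_J))$. (It then writes $\wts(\mst(P_J))\le\wts(\mst(P))$; this monotonicity is false in general metrics --- e.g.\ a star metric --- and the correct factor-$2$ shortcutting bound is what you use.) Your alternative restricts $M=\mst(P')$ to each cluster, completes the resulting forests with $c_j-1$ cheap ($<2r_j$) edges, and charges those completions to $M$'s long inter-cluster edges via $c_j-1\le\partial_j$ and $d(a,b)>r_{j(a)}+r_{j(b)}$. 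This is more hands-on and loses a small constant factor, but it avoids invoking the cut property and is transparent about where the charging closes. Both routes hinge on the same geometric consequence of the coloring: every inter-cluster edge within a class is longer than both cluster diameters.
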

\begin{proof}
    Let $B_i=B(a_i,r_i)$ for all $i\in I$. We may assume w.l.o.g.\ that $P\cap B_i\neq \emptyset$ for all $i\in I$. 
    Consider the intersection graph $\mathcal{G}$ of the balls of triple radii, $\{B(a_i,3r_i): i\in I\}$: The nodes correspond to balls $B(a_i,3r_i)$, and the edges correspond to intersecting pairs of balls. Since  $\mathcal{C}$ is a shallow cover, then $\mathcal{G}$ has bounded degree, hence it admits a proper coloring with $O(1)$ colors. 
    
    Let $\{B(a_i,3r_i): i\in J\}$ be a color class of $\mathcal{G}$ (where $J\subset I$), and note that the balls of triple radii $\{B(a_i,3r_i): i\in J\}$ are pairwise disjoint. Let $P_J=\bigcup_{i\in J} P\cap B_i$. Clearly, we have $P_J\subseteq P$, and so $\wts(\mst(P_J))\leq \wts(\mst(P))$. For each $i\in J$, the maximum distance between any two points in $P\cap B_i$ is at most ${\rm diam}(P\cap B_i)\leq {\rm diam}(B_i)=2r_i$. Since the balls in $\{B(a_i,3r_i):i\in J\}$ are disjoint, the distance between any point in $P\cap B_i$ and a point in $P_J\setminus B_i$ is more than $2r_i$. Consequently, Kruskal's algorithm on $P_J$ constructs $\mst(P_i)$ before adding any edge between $P\cap B_i$ and $P_J\setminus B_i$. In particular, $\mst(P_J)$ contains $\mst(P\cap B_i)$ for all $i\in J$. This immediately implies 
  \[ 
  \sum_{i\in J} \wts(\mst(P\cap B_i))  
\leq O(\wts(\mst(P))) .
\]
   Summation over $O(1)$ color classes yields 
     \[ 
  \sum_{i\in I} \wts(\mst(P\cap B_i))  
\leq O(\wts(\mst(P))) ,
\]
as required.
\end{proof}

\section{Reduction to Net Points in a Trapezoid}
\label{sec:reduction}

In this section, we reduce the problem of constructing a bicriteria approximation for the minimum weight (Steiner) $(1+\eps)$-ST to the special case where all points, except the source $s$, lie in a trapezoid, and the point set is sparse (i.e., form a centered $\eps$-net, defined below). 

\subsection{Localization for a Tile}
\label{ssec:SteinerReduction}

Given a source $s\in \mathbb{R}^2$ and $\eps>0$, define a {\em tiling} of the plane into a set $\mathcal{T}$ of trapezoids as follows; refer to \Cref{fig:spiderweb}. Let $C$ be a circle of unit radius centered at $s$. 
Let $O_1$ be a regular $k$-polygon with inscribed circle $C$, where $k$ is the minimum integer such that the side length of $O_1$ is less than $\sqrt{\eps}$. For all integers $i\in \mathbb{Z}$, let $O_i=2^i\cdot O_1$, that is, a scaled copy of $O_1$, centered at $s$.
Finally, add $k$ rays emanating from $s$ that pass through the vertices of the polygons $O_i$, $i\in \mathbb{Z}$. Polygons $O_i$, $i\in \mathbb{Z}$, and the $k$ rays subdivide the plane into a set $\mathcal{T}$ of trapezoids: Each trapezoid $\tau\in \mathcal{T}$ lies between two consecutive polygons $O_i$ and $O_{i+1}$, and two consecutive rays. 

\begin{figure}[htbp]
    \centering
    \includegraphics[width=.5\columnwidth]{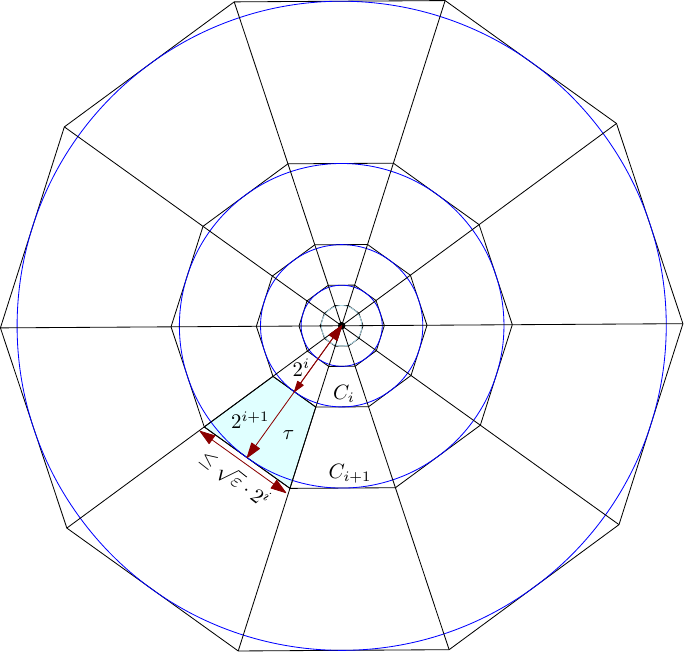}
    \caption{A section of the trapezoid tiling of the plane.}
    \label{fig:spiderweb}
\end{figure}

For each tile $\tau\in \mathcal{T}$, we define a region $R(\tau)\subset \mathbb{R}$ as follows. Recall that for any $p\in \mathbb{R}^2$, $\mathcal{E}_{ps}$ denotes the ellipse with foci $p$ and $s$ and major axis $(1+\eps)\cdot d(p,s)$. Let $R(\tau)=\bigcup_{p\in \tau} \mathcal{E}_{ps}$; see \Cref{fig:region}.   
Note that $s\in R(\tau)$ for all $\tau\in \mathcal{T}$. We also define a smaller region that includes $\tau$ but excludes a neighborhood of $s$: Let $L_\tau$ be the orthogonal bisector of the line segment between $s$ and the point in $\tau$ that is closest to $s$; and let $L_\tau^-$ be the halfplane bounded by $L_\tau$ such that $\tau\subset L_\tau^-$. 
Now let $R^-(\tau)=R(\tau)\cap L_\tau^-$. 

\begin{figure}[htbp]
    \centering
    \includegraphics[width=.9\columnwidth]{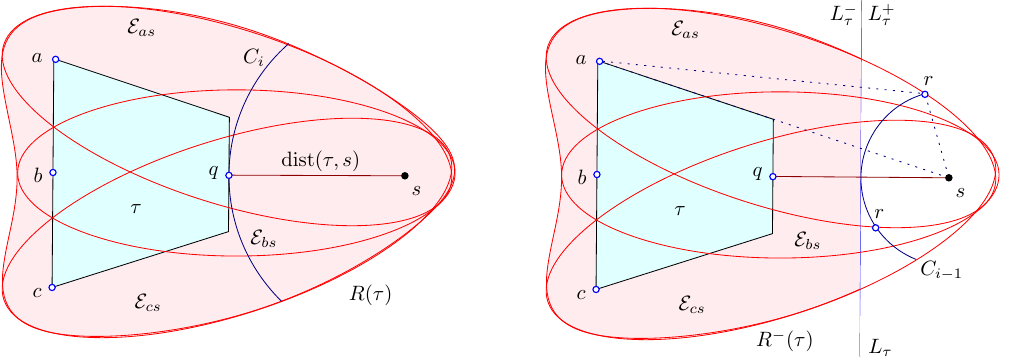}
    \caption{Left: region $R(\tau)$ is the union of ellipses $\mathcal{E}_{ps}$ for all $p\in \tau$ (including $p\in \{a,b,c\}$). Right: Region $R^-(\tau)$ and the line $L_\tau$.}
    \label{fig:region}
\end{figure}

\begin{lemma}\label{lem:tiles}
    For every tile $\tau\in \mathcal{T}$, the region $R^-(\tau)$ intersects $O(1)$ tiles in $\mathcal{T}$.
\end{lemma}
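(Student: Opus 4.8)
The plan is to show that the region $R^-(\tau)$ is contained in a ball of radius $O(\mathrm{diam}(\tau))$ around $\tau$, and then use the bounded-aspect-ratio structure of the tiling to conclude that only $O(1)$ tiles can meet such a ball. First I would fix a tile $\tau$ lying between the polygons $O_i$ and $O_{i+1}$, so that every point $p\in\tau$ satisfies $2^i\le d(p,s)< c\cdot 2^i$ for an absolute constant $c$ (since consecutive polygons differ by a factor of $2$ and the apothem-to-circumradius ratio of a regular $k$-gon is bounded), and $\mathrm{diam}(\tau)=\Theta(\sqrt\eps\cdot 2^i)$ by the choice of $k$. For any $p\in\tau$, the ellipse $\mathcal{E}_{ps}$ has major axis $(1+\eps)d(p,s)$ and minor axis less than $2\sqrt\eps\, d(p,s)$ (as computed in \Cref{sec:pre}); crucially, its points $q$ satisfy $d(p,q)+d(q,s)\le(1+\eps)d(p,s)$, hence $d(p,q)\le(1+\eps)d(p,s)$ and $d(q,s)\le(1+\eps)d(p,s)$. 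So every point of $R(\tau)$ is within distance $(1+\eps)\cdot c\cdot 2^i = O(2^i)$ of $s$, and also within distance $O(\sqrt\eps\cdot 2^i)$ of the segment $ps$ in the transverse direction. The issue is that $R(\tau)$ extends all the way to $s$, which is why we intersect with $L_\tau^-$: the bisector $L_\tau$ of $s$ and the closest point of $\tau$ to $s$ cuts off a neighborhood of $s$ of radius $\Theta(2^i)$, leaving $R^-(\tau)$ sandwiched in an annular sector at distance $\Theta(2^i)$ from $s$ whose angular width is $O(\sqrt\eps)$ plus the angular width of $\tau$, which is also $O(\sqrt\eps)$ by the choice of $k$.

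Next I would make the containment quantitative. I claim $R^-(\tau)$ is contained in a box (or ball) $B$ of diameter $O(\sqrt\eps\cdot 2^i)$: in the radial direction its extent is controlled by the fact that for $q\in\mathcal{E}_{ps}$ both $d(q,s)$ and $d(p,q)$ are at most $(1+\eps)d(p,s)\le(1+\eps)c\,2^i$, while the lower bound $d(q,s)\ge d(s,L_\tau)=\Omega(2^i)$ from being in $L_\tau^-$ pins the radial coordinate to an interval of length $O(\eps\cdot 2^i + \mathrm{diam}(\tau))=O(\sqrt\eps\cdot 2^i)$ — here one uses that the semi-minor-axis displacement contributes at most $O(\eps\cdot 2^i)$ to the radial distance from $s$ by \Cref{lem:slack}-type estimates (a transverse displacement of $\delta$ changes the distance from $s$ by $O(\delta^2/2^i)$, and $\delta=O(\sqrt\eps\,2^i)$ gives $O(\eps\,2^i)$). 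In the transverse direction the extent is directly $O(\sqrt\eps\cdot 2^i)$ from the minor-axis bound and the width of $\tau$. Hence $\mathrm{diam}(R^-(\tau))=O(\sqrt\eps\cdot 2^i)=O(\mathrm{diam}(\tau))$, and $R^-(\tau)$ lies within distance $O(\mathrm{diam}(\tau))$ of $\tau$.

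Finally I would count tiles. Any tile $\tau'$ intersecting $R^-(\tau)$ must contain a point at distance $\Theta(2^i)$ from $s$ (since $R^-(\tau)$ is confined to the annulus $d(\cdot,s)=\Theta(2^i)$), so $\tau'$ lies between $O_{i-O(1)}$ and $O_{i+O(1)}$ — only $O(1)$ ``radial levels'' are possible. Within each such level, the tiles are congruent trapezoids of diameter $\Theta(\sqrt\eps\cdot 2^i)$ partitioning an annulus; since $R^-(\tau)$ has diameter $O(\sqrt\eps\cdot 2^i)$, it can meet only $O(1)$ of them by a volume/packing argument (the tiles at a fixed level are pairwise interior-disjoint and each has diameter comparable to that of $R^-(\tau)$). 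Multiplying the $O(1)$ levels by the $O(1)$ tiles per level gives the claim. The main obstacle I anticipate is the radial-confinement estimate: one must argue carefully that points of $\mathcal{E}_{ps}$ lying near the minor axis are not pushed radially inward past $L_\tau$ and that the combination ``bounded from below by $L_\tau$, bounded from above by the major axis, bounded transversely by the minor axis'' really does trap $R^-(\tau)$ in a region of diameter $O(\mathrm{diam}(\tau))$ rather than, say, $O(2^i)$; this is exactly where the $\sqrt\eps$ aspect ratio of the tiles (forced by the choice of $k$) is used, and it is the reason the lemma would fail for tiles of constant aspect ratio.
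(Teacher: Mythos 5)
There is a genuine error in your proof, and it is precisely the kind of pitfall your own last paragraph warns about. You assert $\mathrm{diam}(\tau)=\Theta(\sqrt\eps\cdot 2^i)$ and, from there, that $R^-(\tau)$ is trapped in a ball of diameter $O(\sqrt\eps\cdot 2^i)$. Both statements are false. A tile between $O_i$ and $O_{i+1}$ extends \emph{radially} over the full dyadic scale $[2^i, \Theta(2^{i+1})]$, so its radial extent alone is already $\Theta(2^i)$; the shape is roughly $2^i\times(\sqrt\eps\cdot 2^i)$, giving $\mathrm{diam}(\tau)=\Theta(2^i)$, not $\Theta(\sqrt\eps\cdot 2^i)$. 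Since each ellipse $\mathcal{E}_{ps}$ contains its focus $p$, we have $\tau\subset R(\tau)$ and hence $\tau\subset R^-(\tau)$, so $\mathrm{diam}(R^-(\tau))\geq\Theta(2^i)$. The radial-confinement computation does not save this: the lower bound $d(q,s)\geq d(s,L_\tau)\approx 2^{i-1}$ and the upper bound $d(q,s)\lesssim 2^{i+1}$ give a radial window of length $\Theta(2^i)$, and plugging the correct $\mathrm{diam}(\tau)=\Theta(2^i)$ into your own formula $O(\eps\cdot 2^i+\mathrm{diam}(\tau))$ yields $O(2^i)$.

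This breaks the final step as written. A ball of diameter $\Theta(2^i)$ has area $\Theta(2^{2i})$, while each tile at that level has area $\Theta(\sqrt\eps\cdot 2^{2i})$, so a volume/packing argument only bounds the number of intersected tiles by $O(\eps^{-1/2})$, not $O(1)$. The correct conclusion requires the \emph{angular} confinement, not a diameter bound: $R^-(\tau)$ lies in $O(1)$ radial levels and in a cone of aperture $O(\sqrt\eps)$ about $s$, and since tiles at each level partition the annulus into wedges of angular width $\Theta(\sqrt\eps)$, only $O(1)$ tiles per level can be met. You actually set up exactly this in your first paragraph ("annular sector... whose angular width is $O(\sqrt\eps)$") but then abandon it for the flawed diameter argument. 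The paper's proof follows the angular route and establishes the $O(\sqrt\eps)$ aperture of the cone via a law-of-cosines estimate on the angle $\angle asr$ for points $r$ on an ellipse boundary at radius $2^{i-1}$; your informal minor-axis reasoning can be made to yield the same aperture bound, but the packing step must be phrased in terms of the cone $\cap$ annulus structure, not the diameter of $R^-(\tau)$.
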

\begin{proof}
The tile $\tau$ lies between the polygons $O_i$ and $O_{i+1}$ for some $i\in \mathbb{Z}$.
The point $q\in \tau$ closest to $s$ is on the circle $C_i$ of radius $2^i$ centered at $s$; see \Cref{fig:region}. Then $L_\tau$ is the orthogonal bisector of $qs$, and so $L_\tau$ is tangent to the circle $C_{i-1}$ of radius $2^{i-1}$ centered at $s$. This implies that $R^-(\tau)$ lies in the exterior of $C_{i-1}$.

For every point $a\in \tau$, we have $d(a,s)\leq 2^i \sqrt{4+\eps/4} < (1+\eps)2^{i+1}$. This gives $d(p,s)\leq (1+\eps)d(a,s)< (1+\eps)^2 2^{i+1}< (1+3\eps)2^{i+1}$ for every point $p \in R^-(\tau)$. Therefore, the region $R^-(\tau)$ lies in the annulus $A$ between two concentric circles of radius $2^{i-1}$ and $(1+3\eps)2^{i+1}$, centered at $s$. 

Next, we show that $R^-(\tau)$ is contained in a cone with apex $s$ and aperture $O(\sqrt{\eps})$. For an arbitrary point $a\in \tau$, consider the ellipse $\mathcal{E}_{as}$, and let $r\in \mathcal{E}_{ab}\cap C_{i-1}$. Then $d(a,r)+d(r,s)=(1+\eps)\cdot d(a,s)$. In the triangle $\Delta(ars)$, the law of cosines yields
\begin{align*}
    \cos (\angle asr)
    &= \frac{(d(a,s))^2+(d(r,s))^2-(d(a,r))^2}{2\cdot d(a,s)\cdot d(r,s)}\\
    &= \frac{(d(a,s))^2+(d(r,s))^2-((1+\eps)\cdot d(a,s)-d(r,s))^2}{2\cdot d(a,s)\cdot d(r,s)}\\
    &=\frac{2(1+\eps)\cdot d(a,s)\cdot d(r,s)- (2\eps+\eps^2)\cdot d(a,s)}{2\cdot d(a,s)\cdot d(r,s)}\\
    &= 1+\eps - \frac{\eps(2+\eps)\cdot d(a,s)}{2\cdot d(r,s)}\\
    & \geq 1+\eps\left(1 -\frac{(2+\eps)(1+\eps)2^{i+1}}{2\cdot 2^{i-1}}\right)\\
    &\geq 1-O(\eps).
\end{align*}
The Taylor estimate $\cos x\geq 1-\frac{x^2}{2}$ implies that
$(\angle asr)^2/2\leq O(\eps)$, and so $\angle asr \leq O(\sqrt{\eps})$.

By the construction of the tile $\tau$, we have $\angle asq\leq O(\sqrt{\eps})$ for every point $a\in \tau$. Consequently, $\angle qsr \leq \angle qsa+ \angle ars \leq O(\sqrt{\eps})+O(\sqrt{\eps})=O(\sqrt{\eps})$. 
We conclude that the region $R^-(\tau)$ is contained in a cone $W$ with apex $s$ and aperture $O(\sqrt{\eps})$.

We have shown that $R^-(\tau)\subset A$ and $R^-(\tau)\subset W$, for the annulus $A$ and the cone $W$ defined above, and so $R^-(\tau)\subset A\cap W$. It is clear from the definition of the tiling $\mathcal{T}$ that $A\cap W$ intersects $O(1)$ tiles in $\mathcal{T}$, therefore $R^-(\tau)$ also intersects $O(1)$ tiles in $\mathcal{T}$.
\end{proof}

\subsection{Reduction to Points in a Trapezoid}
\label{ssec:nonSteinerReduction}

Let $P$ be a set of $n$ points in the plane, and let $\tau_1,\ldots , \tau_m\in \mathcal{T}$ be the tiles that contain at least one point in $P$, and let $P_i:=P\cap \tau_i$. 
For a tile $\tau_i\in \mathcal{T}$, we define a \EMPH{tile-restricted $(1+\eps)$-ST} (for short, \EMPH{$(1+\eps)$-tST}) as a tree $T=(V,E)$ such that
\begin{itemize}\itemsep 0pt
   \item $P_i\cup \{s\}\subseteq V(T)\subseteq P$, and 
    \item $T$ contains a $ps$-path of weight at most $(1+\eps)\cdot d(p,s)$ for every $p\in P_i$.
\end{itemize}
In other words, a tile-restricted \emph{$(1+\eps)$-ST} is a Steiner $(1+\eps)$-ST for $P_i\cup \{s\}$ rooted at $s$, where all Steiner points are in $P$.

In the remainder of this section, we compare the weights of a $(1+\eps)$-ST for $P$, rooted at $s$, and the total weights of tile-restricted $(1+\eps)$-STs for $P_i\cup \{s\}$ for $i=1,\ldots , m$. The arguments in this section directly extend to Steiner $(1+\eps)$-STs, the only difference is that a Steiner $(1+\eps)$-ST can use any Steiner point in the plane, while a tile-restricted $(1+\eps)$-ST can only use Steiner points from the input set $P$. 

\begin{lemma}\label{lem:nonSteiner-converse}
Let $H$ be a $(1+\eps)$-ST (resp., Steiner $(1+\eps)$-ST) for a point set $P\cup \{s\}$ in the plane. Then for every $i=1,\ldots , m$, there is a polynomial-time construction of a tile-restricted $(1+\eps)$-ST $H_i$ (resp., Steiner $(1+\eps)$-ST) for the point set $P_i\cup \{s\}$, where $P_i=P\cap \tau_i$, such that 
\begin{equation}\label{eq:weightH+}
    \sum_{i=1}^m \wts(H_i)\leq O(\wts(H)) .
\end{equation}
\end{lemma}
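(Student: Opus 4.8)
\textbf{Proof plan for \Cref{lem:nonSteiner-converse}.}
The plan is to extract, from the global tree $H$, a subtree tailored to each tile $\tau_i$, and then to argue that these subtrees are nearly disjoint so that their total weight is $O(\wts(H))$. For a fixed index $i$, the natural candidate for $H_i$ is the union of the $ps$-paths $\pi_{ps}^H$ in $H$, taken over all $p \in P_i$. I first observe that each such path $\pi_{ps}^H$ has weight at most $(1+\eps)\cdot d(p,s)$, hence (by the discussion in \Cref{sec:pre}) lies entirely inside the ellipse $\mathcal{E}_{ps}$, and therefore inside $R(\tau_i) = \bigcup_{p\in\tau_i}\mathcal{E}_{ps}$. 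So the subtree $F_i := \bigcup_{p\in P_i}\pi_{ps}^H$ lives inside $R(\tau_i)$, it contains $P_i\cup\{s\}$, it is connected (all paths share the endpoint $s$), and its vertices lie in $P$ (in the Steiner case, they may be arbitrary Steiner points, which is allowed). Contracting cycles if $F_i$ is not already a tree, we obtain a tile-restricted $(1+\eps)$-ST $H_i$ with $\wts(H_i)\le \wts(F_i)$.

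The heart of the argument is the weight bound $\sum_i \wts(H_i)\le O(\wts(H))$. Here the obvious obstacle is that the ellipse regions $R(\tau_i)$ overlap heavily near $s$: every $R(\tau_i)$ contains $s$, and a short edge of $H$ incident to $s$ could be reused by every tile, so a naive "each edge of $H$ is counted $O(1)$ times" claim is false globally. This is exactly why the excerpt introduced the truncated region $R^-(\tau_i) = R(\tau_i)\cap L_{\tau_i}^-$, which excludes a neighborhood of $s$ of radius proportional to $\dist(s,\tau_i)$. The plan is to split the weight of $H_i$ into the part inside $R^-(\tau_i)$ and the part outside. For the part outside $R^-(\tau_i)$ (i.e.\ on the $s$-side of the bisector $L_{\tau_i}$): each $ps$-path with $p\in P_i$ enters $L_{\tau_i}^+$ and continues to $s$, a region of diameter $O(\dist(s,\tau_i))$; I bound the weight of $H_i$ restricted to $L_{\tau_i}^+$ by charging it to a single short path to $s$ of length $O(\dist(s,\tau_i))$, which in turn is $O(\wts(\mst(P_i\cup\{s\})))$ or, more directly, can be absorbed by noting there are geometrically many scales and the $s$-neighborhoods telescope. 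For the part inside $R^-(\tau_i)$: by \Cref{lem:tiles}, $R^-(\tau_i)$ meets only $O(1)$ tiles, so via a bounded-degree-coloring argument (identical in spirit to \Cref{lem:cover}) the regions $\{R^-(\tau_i)\}$ decompose into $O(1)$ families of pairwise-``non-interfering'' regions; within each family the portions of $H$ charged are essentially disjoint, giving $\sum_i \wts(H_i\cap R^-(\tau_i)) \le O(\wts(H))$.

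More concretely, for the inside-$R^-(\tau_i)$ part I will use the following charging scheme: each edge $e$ of $H$ is assigned to the tiles $\tau_i$ such that $e$ (or a sub-segment of $e$) lies in $R^-(\tau_i)$; but I must be careful that an edge of $H_i$ need not lie in $R^-(\tau_i)$ even if it lies in $R(\tau_i)$. The fix is to observe that the portion of any $ps$-path outside $R^-(\tau_i)$ lies in $L_{\tau_i}^+$, i.e.\ within distance $O(\dist(s,\tau_i))$ of $s$; across the $O(\log(\text{aspect ratio}))$ relevant scales these ``near-$s$'' contributions form a geometric series dominated by the single path from the boundary of $\tau_i$ to $s$, whose length is $\le d(p,s)\le O(\dist(s,\tau_i))$, and summing $\dist(s,\tau_i)$ over tiles at a common scale and then over scales is $O(\wts(H))$ because $H$ already connects every $p$ to $s$ and hence $\wts(H) = \Omega(\sum_{p}\text{(contribution at $p$'s scale)})$ up to the shallow-overlap factor. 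The main obstacle, and the step I expect to require the most care, is precisely this bookkeeping near the source: making rigorous that the truncation at $L_{\tau_i}$ removes only an $O(\wts(H))$ amount of charge in total, rather than handling it tile-by-tile where it would blow up. Once the truncation is controlled, \Cref{lem:tiles} plus the coloring argument of \Cref{lem:cover} finish the bound on the truncated parts cleanly.
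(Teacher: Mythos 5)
Your overall blueprint is right: define $H_i$ as the union of the $ps$-paths of $H$ for $p\in P_i$, observe that each such path lies in $\mathcal{E}_{ps}\subset R(\tau_i)$, handle the near-$s$ overlap via the truncated regions $R^-(\tau_i)$, and finish with \Cref{lem:tiles}. (The ``contract cycles'' step is unnecessary: since $H$ is a tree and all the paths end at $s$, their union is automatically a subtree of $H$.) However, the step where you propose to control the portion of $H_i$ in $L_{\tau_i}^+$ is the crux, and your plan for it does not work as stated.

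You propose to bound $\wts(H_i\cap L_{\tau_i}^+)$ by ``a single short path to $s$ of length $O(\dist(s,\tau_i))$'', and then absorb these across scales by a geometric series. Neither half survives scrutiny. First, $H_i\cap L_{\tau_i}^+$ is generally a tree with many branches (each $ps$-path enters $L_{\tau_i}^+$ at its own crossing point), so its weight is not $O(\dist(s,\tau_i))$; it can grow linearly in the number of paths that reach $L_{\tau_i}^+$ at distinct places. Second, even granting the per-tile bound $O(\dist(s,\tau_i))$, summing $\dist(s,\tau_i)$ over the (possibly $\Theta(\eps^{-1/2})$ per scale) occupied tiles has no reason to be $O(\wts(H))$, because the $ps$-paths of $H$ to far-apart points may share a long common trunk near $s$, which your scheme would overcharge. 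The ``telescoping over scales'' intuition bounds the cost of one $ps$-path across its own nested scales, which is a different sum from the one you need.

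The paper sidesteps this entirely with a decomposition you are missing. Order the points of $P_i$ arbitrarily, and let $\rho_{p_js}$ be the prefix of $\pi_{p_js}$ up to its first contact with $\bigcup_{\ell<j}\pi_{p_\ell s}$. These pieces are interior-disjoint, so $\wts(H_i)=\sum_j\wts(\rho_{p_js})$ exactly. Now the near-$s$ issue evaporates: after normalizing so that $\tau_i\subset[0,1]\times[-\sqrt\eps,\sqrt\eps]$ and $s=(2,0)$ (so $L_{\tau_i}$ is $x=\tfrac32$), either $\rho_{p_js}$ stays entirely in $L_{\tau_i}^-$, or its last vertex is in $L_{\tau_i}^+$, in which case $\wts(\rho_{p_js}\cap L_{\tau_i}^-)\geq\tfrac12$ while $\wts(\rho_{p_js})\leq(1+\eps)d(p_j,s)\leq 4$. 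Either way $\wts(\rho_{p_js})=O(\wts(\rho_{p_js}\cap L_{\tau_i}^-))$, hence $\wts(H_i)=O(\wts(H\cap R^-(\tau_i)))$, and \Cref{lem:tiles} finishes the proof. The lemma you need and do not have is this interior-disjointness plus the per-$\rho$ geometric fact; without it the $L_{\tau_i}^+$ part cannot be separately bounded and the argument does not close.
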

\begin{proof}
Let $H$ be a $(1+\eps)$-ST (resp., Steiner $(1+\eps)$-ST) for $P\cup \{s\}$ rooted at $s$. For every $i=1,\ldots ,m$, we first construct a graph $H_i$, then show that $H_i$ is a tile-restricted $(1+\eps)$-ST for $P_i\cup \{s\}$, and finally establish \Cref{eq:weightH+}.

Consider a tile $\tau_i\in \mathcal{T}$; we write $\tau=\tau_i$ for simplicity. Similarly to \Cref{ssec:SteinerReduction}, let $L_\tau$ be the orthogonal bisector of the line segment between $s$ and the point in $\tau$ that is closest to $s$; and let $L_\tau^-$ and $L_\tau^+$ be the two halfplanes determined by $L_\tau$ such that $\tau\subset L_\tau^-$ and $s\in L_\tau^+$; see \Cref{fig:paths}.

  For every $i$ and $p\in P_i$, the (Steiner) $(1+\eps)$-ST $H$ contains a $ps$-path $\pi_{ps}$ of weight at most $(1+\eps)\cdot d(p,s)$. Note that $\pi_{ps}\subset \mathcal{E}_{ps}\subset R(\tau_i)$, and so all vertices of $\pi_{ps}$ are in $P$. 
  Let $H_i=\bigcup_{p\in P_i} \pi_{ps}$. Since $H$ is a tree and all paths $\pi_{ps}$, $p\in P_i$, end at $s$, then $H_i$ is a tree rooted at $s$.

\begin{figure}[htbp]
    \centering
    \includegraphics[width=.85\columnwidth]{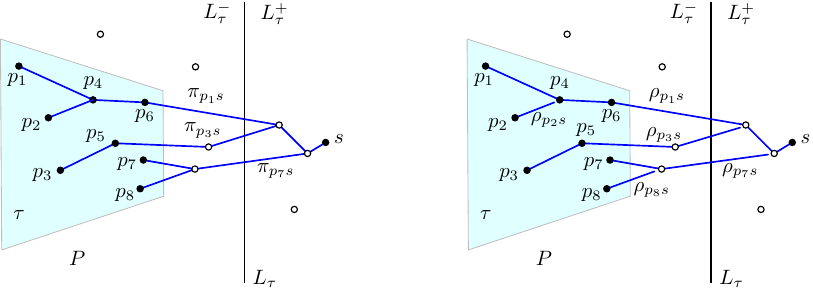}
    \caption{Left: Shortest paths $\pi_{ps}$ in $H$ from points $p\in P\cap \tau$ to $s$. 
    Right: The decomposition of the tree $H_i=\bigcup_{p\in P_i}\pi_{ps}$ into interior-disjoint paths $\rho_{ps}$, $p\in P_i$. }
    \label{fig:paths}
\end{figure}

\paragraph{Root stretch analysis.} For every $p\in P_i$, $\wts(\pi_{ps})\leq (1+\eps)\cdot d(p,s)$ since $H$ is a (Steiner) $(1+\eps)$-ST for $P\cup \{s\}$ rooted at $s$. 
Since $\pi_{ps}\subset H_i$ and all its vertices are in $P$, then $H_i$ is a tile-restricted $(1+\eps)$-ST (resp., $(1+\eps)$-ST) for $P_i\cup \{s\}$ rooted at $s$.

\paragraph{Weight analysis.}
Assume w.l.o.g.\ that the symmetry axis of the tile $\tau_i$ is the $x$-axis, 
$0\leq x(p)\leq 1$ for all $p\in P_i$ and $s=(2,0)$. 
Then $L_\tau$ is the vertical line $L_\tau:x=\frac32$. For two points $p,p'\in P_i$, the shortest paths $\pi_{ps}$ and $\pi_{p's}$ may overlap, but their intersection $\pi_{ps}\cap \pi_{ps}$ is connected and incident to $s$. 
We define interior-disjoint paths as follows; refer to \Cref{fig:paths}(right). 
Order the points in $P_i$ arbitrarily as $p_1,\ldots , p_{n_i}$. For each $j=1,\ldots ,n_i$, let $\rho_{p_js}$ be the portion of the path $\pi_{p_js}$ from $p_j$ to the first intersection with any previous path $\pi_{p_\ell s}$, $\ell<j$. Observe that $\bigcup_{j=1}^{n_i} \pi_{p_j s}=\bigcup_{j=1}^{n_i} \rho_{p_j s}$, which gives 
\begin{equation}\label{eq:decompose}
   \wts(H_i) 
    = \sum_{j=1}^{n_i}  \wts(\rho_{p_j s}).
\end{equation}

Consider a path $\rho_{ps}$ for $p\in P_i$. 
\begin{itemize}
\item If the last vertex of $\rho_{ps}$ is in the halfplane $L_\tau^+$ (possibly $s$).
Then $x(p)\leq 1$ and $L_\tau:x=\frac32$ yield $\wts(\rho_{ps}\cap L_\tau^-)\geq \frac12$.
On the other hand $\rho_{ps}\subset \pi_{ps}$ implies that 
$\wts(\rho_{ps})\leq (1+\eps)\cdot \wts(p,s)\leq 4$.
Therefore, $\wts(\rho_{ps}\cap L_\tau^-)\geq \Omega(\wts(\rho_{ps}))$. 
\item If all vertices of $\rho_{ps}$ are in the halfplane $L_\tau^-$,
then we have $\wts(\rho_{ps}\cap L_\tau^-)=\wts(\rho_{ps})$.
\end{itemize}
In both cases, we have $\wts(\rho_{ps}\cap L_\tau^-)\geq \Omega(\wts(\rho_{ps}))$. 
Note also that for every $p\in P_i$, $\wts(\pi_{ps})\leq (1+\eps)\cdot d(p,s)$ implies that 
$\rho_{ps}\subset \pi_{ps}\subset \mathcal{E}_{ps}\subset R(\tau_i)$, and so $\rho_{ps}\cap L_{\tau_i}^-\subset \pi_{ps}\cap L_{\tau_i}^-\subset R^-(\tau_i)$. We combine these observations:
\begin{equation}\label{eq:shortcut2}
  O\left( \sum_{j=1}^{n_i} \wts(\rho_{p_j s})\right)
    \leq O\left( \sum_{j=1}^{n_i}  \wts(\rho_{p_j s}\cap L_{\tau_i}^-)\right) 
    = O\left( \sum_{j=1}^{n_i} \wts(\rho_{p_j s}\cap R^-_{\tau_i})\right)
    \leq O(\wts(H\cap R_{\tau_i}^-)).
\end{equation}
By \Cref{lem:tiles}, each point in $H$ is contained in $H\cap R_\tau^-$ for $O(1)$ tiles $\tau\in \mathcal{T}$. This implies that 
\begin{equation}\label{eq:shortcut3}
    \sum_{\tau\in \mathcal{T}} \wts(H\cap R^-(\tau))\leq O(\wts(H)) .
\end{equation}
The combination of \Cref{eq:decompose,eq:shortcut2,eq:shortcut3} yields $\sum_{i=1}^m \wts(H_i) \leq O(\wts(H))$, as required. 
\end{proof}

We can now state our reduction to a single tile. The same argument works for both the Steiner and the non-Steiner settings. 

\begin{theorem}\label{thm:nonSteiner-reduction}
	We are given a set $P\subset \mathbb{R}^2$ of $n$ points, a source $s\in \mathbb{R}^2$, a parameter $\eps>0$, and two real functions $f(.)$ and $g(.)$.
    Let $\tau_1,\ldots ,\tau_m$ be the set of tiles in $\mathcal{T}$, where $P_i:=P\cap \tau_i\neq \emptyset$. Let $G_i$ be a tile-restricted $(1+f(\eps)\cdot \eps)$-ST 
    (resp., Steiner $(1+f(\eps)\cdot \eps)$-ST) for $P_i\cup \{s\}$ of weight $\wts(G_i)\leq O(g(\eps)\cdot \opt_{\eps,i})$, where $\opt_{\eps,i}$ is the minimum weight of a $(1+\eps)$-ST for $P_i\cup \{s\}$, for all $i=1,\ldots ,m$.
    
    Then a shortest-path tree of the graph $G:=\bigcup_{i=1}^m G_i$, rooted at $s$, is a $(1+f(\eps)\cdot \eps)$-ST (resp., Steiner $(1+f(\eps)\cdot \eps)$-ST) for $P\cup \{s\}$ and  $\wts(G)\leq O(g(\eps)\cdot \opt_{\eps})$, 
    where $\opt_{\eps}$ is the minimum weight of a $(1+\eps)$-ST for $P\cup \{s\}$.
\end{theorem}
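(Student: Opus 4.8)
The plan is to prove the two assertions of \Cref{thm:nonSteiner-reduction} --- that the shortest-path tree $T$ of $G=\bigcup_i G_i$ rooted at $s$ is a $(1+f(\eps)\cdot\eps)$-ST for $P\cup\{s\}$, and that $\wts(G)\le O(g(\eps)\cdot\opt_\eps)$ --- essentially independently, since the first is an easy ``union of trees'' observation and the second is where the real content of \Cref{lem:nonSteiner-converse} and \Cref{lem:tiles} gets used.

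For the stretch bound, first I would note that every point $p\in P$ lies in some tile $\tau_i$, so $p\in P_i$, and $G_i$ contains a $ps$-path of weight at most $(1+f(\eps)\cdot\eps)\cdot d(p,s)$. Since $G_i\subseteq G$, the graph $G$ contains such a path, hence $d_G(s,p)\le(1+f(\eps)\cdot\eps)\cdot d(p,s)$. Taking a shortest-path tree $T$ of $G$ rooted at $s$ only preserves these distances, so $d_T(s,p)=d_G(s,p)\le(1+f(\eps)\cdot\eps)\cdot d(p,s)$ for all $p\in P$, giving the claimed root-stretch; also $\wts(T)\le\wts(G)$ so the weight bound for $T$ follows from that of $G$. (In the Steiner setting the same argument applies verbatim, treating the Steiner vertices of each $G_i$ as internal vertices of $G$.)

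For the weight bound, the key step is to relate $\opt_{\eps,i}$, the optimum for a single tile, back to $\opt_\eps$, the global optimum. Let $H$ be a minimum-weight $(1+\eps)$-ST for $P\cup\{s\}$, so $\wts(H)=\opt_\eps$. Applying \Cref{lem:nonSteiner-converse} to $H$ yields, for each $i$, a tile-restricted $(1+\eps)$-ST $H_i$ for $P_i\cup\{s\}$ with $\sum_{i=1}^m\wts(H_i)\le O(\wts(H))=O(\opt_\eps)$. Since $H_i$ is a feasible tile-restricted $(1+\eps)$-ST for $P_i\cup\{s\}$, we get $\opt_{\eps,i}\le\wts(H_i)$. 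Now chain the inequalities:
\begin{equation*}
\wts(G)\le\sum_{i=1}^m\wts(G_i)\le\sum_{i=1}^m O(g(\eps)\cdot\opt_{\eps,i})\le O(g(\eps))\cdot\sum_{i=1}^m\wts(H_i)\le O(g(\eps)\cdot\opt_\eps),
\end{equation*}
where the first inequality is just that $G$ is a union (so its weight is at most the sum of the weights, counting shared edges once), the second is the hypothesis on $G_i$, and the last uses \Cref{lem:nonSteiner-converse}. This completes the weight bound, and combined with $\wts(T)\le\wts(G)$ finishes the proof.

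I expect the main obstacle --- though it is really imported wholesale from \Cref{lem:nonSteiner-converse} and \Cref{lem:tiles} rather than arising here --- to be the implicit subtlety that ``union of trees rooted at $s$'' need not be a tree and need not have bounded weight a priori: two different $G_i$'s can share edges or create cycles, so $\wts(G)\le\sum_i\wts(G_i)$ is the best one can say directly, and it is precisely \Cref{lem:nonSteiner-converse} (via the $R^-(\tau)$ overlap argument of \Cref{lem:tiles}) that controls the reverse direction $\sum_i\opt_{\eps,i}\le O(\opt_\eps)$. One small point to handle cleanly is that $\opt_{\eps,i}$ is defined as the optimum of a \emph{tile-restricted} $(1+\eps)$-ST (Steiner points only from $P$) whereas $\opt_\eps$ allows arbitrary (or no) Steiner points --- but this is exactly why \Cref{lem:nonSteiner-converse} produces tile-restricted trees $H_i$ from a global tree $H$, so the comparison $\opt_{\eps,i}\le\wts(H_i)$ goes through in both settings. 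Finally I would remark that taking the shortest-path tree of $G$ (rather than $G$ itself) is needed only to obtain an actual spanning \emph{tree} satisfying the stretch bound with no weight increase.
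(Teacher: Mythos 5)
Your proposal is correct and follows essentially the same two-step argument as the paper: the root-stretch bound is the easy observation that each $ps$-path in $G_i$ survives in $G$ and in its shortest-path tree, and the weight bound chains $\wts(G)\le\sum_i\wts(G_i)\le O(g(\eps))\sum_i\opt_{\eps,i}$ and then applies \Cref{lem:nonSteiner-converse} to $H$ to get $\sum_i\opt_{\eps,i}\le O(\opt_\eps)$. Your explicit remark that $\opt_{\eps,i}\le\wts(H_i)$ relies on $H_i$ being a feasible \emph{tile-restricted} tree (so that $\opt_{\eps,i}$ must be read as the tile-restricted optimum) is a useful clarification of a point the paper's proof leaves implicit.
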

\begin{proof}
    The root stretch analysis of $G=\bigcup_{i=1}^m G_i$ is immediate: For every point $p\in P$, we have $p\in P_i$ for some $i\in \{1,\ldots , m\}$. As $G_i$ is a  
    tile-restricted $(1+f(\eps)\cdot \eps)$-ST for $P_i\cup \{s\}$ (resp., Steiner $(1+f(\eps)\cdot \eps)$-ST), it contains a $ps$-path of weight at most $(1+f(\eps)\cdot \eps)\cdot d(p,s)$, where all vertices of the path are in $P$ (resp., Steiner points), and so $G=\bigcup_{i=1}^m G_i$ also contains this path.

    For the weight analysis, we have $\wts(G)\leq \sum_{i=1}^m \wts(G_i)\leq O(g(\eps))\sum_{i=1}^m \opt_{\eps,i}$. Let $H$ be a minimum-weight $(1+\eps)$-ST (resp., Steiner $(1+\eps)$-ST) for $P\cup \{s\}$, that is,  $\wts(H)=\opt_{\eps}$. Then \Cref{lem:nonSteiner-converse} implies
    \begin{equation}\label{eq:reduction}
    \sum_{i=1}^m \opt_{\eps,i}\leq O(\opt_{\eps}),
    \end{equation}
    which completes the proof.  
\end{proof}

\subsection{Reduction to a Centered Net}
\label{ssec:net}

For a set $P$ in a metric space $(X,d)$ and a parameter $\eps>0$, an \EMPH{$\eps$-net} is a subset $N\subset P$ such that for all $a,b\in N$, $a\neq b$, we have $d(a,b)>\eps$, and for every $p\in P$, there exists a point $a\in N$ such that $d(p,a)\le\eps$. 

For the purpose of a $(1+\eps)$-ST, with a source (or center) $s\in P$, we define a variation of $\eps$-nets where the density of the net depends on the distance from  $s$: A \EMPH{centered $\eps$-net} (for short, \EMPH{$\eps$-cnet}) with center $s$ is a subset $N\subset P$ such that 
\begin{itemize}
\item for all $a,b\in N$, $a\neq b$, we have $d(a,b)> \eps\cdot \min\{d(a,s),d(b,s)\}$ and 
\item for every $p \in P$, there exists a point $a\in N$ such that $d(p,a)\leq \eps \cdot d(p,s)$.
\end{itemize}

Given a point set $P\subset \mathbb{R}^2$, including $s\in P$, a parameter $\eps>0$ and a subset $P$, we also define a \EMPH{net-restricted $(1+\eps)$-ST for $N\subset P$} as a tree $T=(V,E)$ such that
\begin{itemize}\itemsep 0pt
   \item $N \cup \{s\}\subseteq V(T)\subseteq P$, and 
    \item $T$ contains a $ps$-path of weight at most $(1+\eps)\cdot d(p,s)$ for every $p\in N$.
\end{itemize}
This means that $T$ is a Steiner $(1+\eps)$-ST rooted at $s$ for the $\eps$-cnet $N$, where all Steiner points are in $P$. 
In this subsection, we show that it suffices to find a net-restricted (resp., Steiner) $(1+\eps)$-ST for an $\eps$-cnet with center $s$. We state the result (\Cref{lem:net}) for both Steiner and non-Steiner settings. 

\begin{lemma}\label{lem:net}
    Let $P$ be a set of $n$ points in the plane, $s\in P$, $\eps\in (0,\frac19)$, and $N\subset P$ an $\eps$-cnet for $P$. Let $f(.)$ and $g(.)$ be real functions such that $f(x)\geq x$ and $g(x)\geq x$ for all $x\geq 0$. 
    Assume that in $O(|N|\log |N|)$ time, we can compute a net-restricted (resp., Steiner) $(1+f(\eps)\cdot\eps)$-ST $T_N$ for $N$ of weight $O(g(\eps))\cdot \opt_{\eps}(N)$, where $\opt_{\eps}(N)$ is the minimum weight of a net-restricted (resp., Steiner) $(1+\eps)$-ST for $N$. 

    Then in $O(n\log n)$ time, we can compute a (Steiner) $(1+O(f(\eps)\cdot \eps))$-ST $T_P$ for $P$ of weight $O(g(\eps))\cdot \opt_{\eps}(P)$, where $\opt_{\eps}(P)$ is the minimum weight of a (Steiner) $(1+\eps)$-ST for $P$. 
\end{lemma}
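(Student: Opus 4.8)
The plan is to build $T_P$ from $T_N$ by attaching, for each net point $a\in N$, all the points of $P$ that $a$ "serves" via a short local spanner, and then to argue that this both preserves approximate shortest paths and keeps the weight within the promised factor. First I would define, for each $a\in N$, the cluster $P_a=\{p\in P: d(p,a)\le \eps\cdot d(p,s)\}$ (breaking ties so the clusters partition $P\setminus\{s\}$; note $s$ itself is in $N$). By the cnet covering property every $p\in P$ lies in some $P_a$, and for $p\in P_a$ we have $d(a,s)\in[(1-\eps)d(p,s),(1+\eps)d(p,s)]$ by the triangle inequality, so all points of a cluster are at roughly the same distance from $s$, and $\mathrm{diam}(P_a)=O(\eps\cdot d(a,s))$. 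For each nonempty cluster $P_a$ I would compute an $O(1)$-spanner $S_a$ of $P_a\cup\{a\}$ in $O(|P_a|\log|P_a|)$ time (e.g.\ a Yao/$\Theta$-graph or a well-separated-pair spanner in the plane), so that $\wts(S_a)=O(\wts(\mst(P_a\cup\{a\})))$ and $d_{S_a}(p,a)=O(d(p,a))=O(\eps\cdot d(p,s))$ for every $p\in P_a$. Finally set $T_P$ to be a shortest-path tree (rooted at $s$) of the graph $T_N\cup\bigcup_{a\in N}S_a$; the running time is $O(n\log n)$ since the clusters partition $P$ and building $T_N$ costs $O(|N|\log|N|)$.

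For the \emph{root-stretch}, fix $p\in P$, say $p\in P_a$. Concatenate a shortest $p$–$a$ path in $S_a$ with the $a$–$s$ path in $T_N$ of weight at most $(1+f(\eps)\eps)\cdot d(a,s)$. Using $d(a,s)\le(1+\eps)d(p,s)$ and $d_{S_a}(p,a)=O(\eps\cdot d(p,s))$, the total weight is at most $(1+f(\eps)\eps)(1+\eps)d(p,s)+O(\eps)d(p,s)=\bigl(1+O(f(\eps)\eps)\bigr)d(p,s)$, using $f(x)\ge x$. Since $T_P$ is a shortest-path tree of a graph containing this walk, $d_{T_P}(p,s)\le(1+O(f(\eps)\eps))\,d(p,s)$, as required. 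In the Steiner setting the argument is identical; $T_N$ may use Steiner points outside $P$, and the spanners $S_a$ only add points of $P$, so $T_P$ is a valid Steiner $(1+O(f(\eps)\eps))$-ST.

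For the \emph{weight}, I split $\wts(T_P)\le\wts(T_N)+\sum_a\wts(S_a)$. The first term is $O(g(\eps))\cdot\opt_\eps(N)$ by hypothesis, and $\opt_\eps(N)\le\opt_\eps(P)$ because restricting any (Steiner) $(1+\eps)$-ST of $P$ to the $p$–$s$ paths of the net points $p\in N$ yields a net-restricted $(1+\eps)$-ST for $N$ of no greater weight. For the second term, $\sum_a\wts(S_a)=O(1)\sum_a\wts(\mst(P_a\cup\{a\}))$; the balls $B(a,\,c\eps\cdot d(a,s))$ containing the clusters form a \emph{shallow cover} of $P$ — by the separation property of the cnet, points $a,b\in N$ at comparable scales are $\Omega(\eps\cdot\max\{d(a,s),d(b,s)\})$ apart, so after a standard packing/scale-bucketing argument each tripled ball meets only $O(1)$ others — hence Lemma~\ref{lem:cover} gives $\sum_a\wts(\mst(P_a\cup\{a\}))=O(\wts(\mst(P)))$. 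Finally $\wts(\mst(P))=O(\opt_\eps(P))$ because any $(1+\eps)$-ST is a spanning tree (in the Steiner case, contract Steiner points to get a spanning connected subgraph of $P$ of no larger weight). Combining, $\wts(T_P)=O(g(\eps))\cdot\opt_\eps(P)$, using $g(x)\ge x$ to absorb the additive $O(\opt_\eps(P))$ term.

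The main obstacle is verifying the \emph{shallow cover} claim for the cluster balls: the cnet separation condition is stated with $\min\{d(a,s),d(b,s)\}$, which is weak when $a$ and $b$ are at very different scales, so one must bucket the net points by $\lfloor\log_2 d(\cdot,s)\rfloor$ and argue that (i) within a bucket the balls $B(a,c\eps\cdot d(a,s))$ are nearly disjoint by a standard volume/packing bound in $\mathbb{R}^2$, and (ii) across buckets a tripled ball of radius $\Theta(\eps\cdot d(a,s))$ at scale $i$ can only reach net points at scales within $O(\log\eps^{-1})$ — but since each such scale contributes only $O(1)$ balls near any fixed location, and crucially a ball at scale $i$ has radius $\Theta(\eps 2^i)\ll 2^i$, it actually only interacts with $O(1)$ scales, keeping the overlap $O(1)$. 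Pinning down this packing argument cleanly (and the analogous one in $\mathbb{R}^d$, where the constants depend on $d$) is the one place requiring care; everything else is a routine concatenation-and-triangle-inequality computation.
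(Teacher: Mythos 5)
Your proof is correct and takes essentially the same approach as the paper: cluster $P$ around the net points, attach $O(1)$-spanners to each cluster, take the shortest-path tree of the union with $T_N$, and bound the spanner weight via the shallow-cover lemma (\Cref{lem:cover}), after observing $\opt_\eps(N)\le\opt_\eps(P)$ and $\wts(\mst(P))=O(\opt_\eps(P))$. The one place you flag as requiring care --- that the cnet's $\min\{d(a,s),d(b,s)\}$ separation is weak across scales --- is handled in the paper without any explicit scale-bucketing: if two cluster balls $B(a,O(\eps)d(a,s))$ and $B(b,O(\eps)d(b,s))$ intersect at a point $x$, the triangle inequality (with $\eps<\tfrac19$) already forces $d(a,s)$, $d(b,s)$, and $d(x,s)$ to agree within constant factors, so a single volume/packing argument at that common scale gives the bounded overlap directly; your bucketing plan would also work but is heavier than needed.
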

\begin{proof}
An $\eps$-cnet is associated with a partition (\emph{clustering}) of $P$, where each cluster corresponds to points in a neighborhood of a net point. Specifically, 
note that $P\subset \bigcup_{a\in N} B(a,\eps(1+2\eps)\cdot d(a,s))$. Indeed, for every $p\in P$, we have $d(p,a)<\eps \cdot d(p,s)$ for some point $a\in N$. The triangle inequality gives $d(p,s)\leq d(p,a)+d(a,s)< \eps\cdot d(p,s)+d(a,s)$. Thus, we have $d(p,s) <d(a,s)/(1-\eps) < (1+2\eps)\cdot d(a,s)$ for all $0<\eps<\frac12$. This yields $d(p,a) < \eps \cdot d(p,s) < \eps (1+2\eps)\cdot d(a,s)$, and so $p\in B(a,\eps(1+2\eps)\cdot d(a,s))$. 

We can define a \EMPH{cluster} of $a\in N$ such that $C_a=P\cap B(a,\eps(1+2\eps)\cdot d(a,s))$. Specifically, order the points in $N$ arbitrarily. For every $a\in N$ in this order, let $C_a$ be the set of points in $P\cap B(a,\eps(1+2\eps)\cdot d(a,s))$ that have not been included in previous clusters. Since $1+2\eps<\frac{11}{9}<\frac43$ for $\eps\in (0,\frac19)$, we use the covering $P\subset \bigcup_{a\in N} B(a,\frac{4\eps}{3}\cdot d(a,s))$ in the sequel.

For each cluster $C_a$, $a\in N$, let $G_a$ be a 
$2$-spanner of weight $O(\wts(\mst(C_a))$, which can be computed in $O(|C_a|\log |C_a|)$ time~\cite{KanjPX10}. 
Let $G$ be the union of the (Steiner) $(1+f(\eps) \cdot \eps)$-ST $T_N$ and the 2-spanners $G_a$ for all $a\in N$. Clearly, $G$ is a connected (Steiner) graph on $P$:  Let $T_P$ be the shortest-path tree $\mathsf{SPT}(G)$ of $G$ rooted at $s$. 

\paragraph{Root stretch analysis.} For a point $p\in P$, assume that $p\in C_a$ for some $a\in N$. Then $G$ contains a $ps$-path $\pi_{ps}$ as a concatenation of a shortest path $\pi_{pa}$ in the 2-spanner $G_a$ and an $as$-path $\pi_{as}$ in $T_N$. 
Then we have 
\begin{align*}
    \wts(\pi_{ps}) 
    &=\wts(\pi_{pa})+\wts(\pi_{as})\\
    &\leq 2\cdot d(p,a)+(1+f(\eps)\cdot \eps)\cdot d(a,s)\\
    &\leq 2\cdot d(p,a)+(1+f(\eps)\cdot\eps)\cdot \big(d(a,p)+d(p,s)\big)\\
    &= (3+f(\eps) \cdot \eps)\cdot d(p,a)+(1+f(\eps)\cdot\eps)\cdot d(p,s)\\
    &< \eps (3+f(\eps)\cdot\eps)\cdot d(p,s)+(1+f(\eps)\cdot\eps)\cdot d(p,s)\\
    &=(1+3\eps+f(\eps)\cdot \eps+f(\eps)\cdot \eps^2)\cdot d(p,s)\\
    &<(1+5f(\eps)\cdot \eps)\cdot d(p,s).
\end{align*}
Consequently, the shortest $ps$-path in $T_P$ has weight at most $(1+O(f(\eps)\cdot \eps))\cdot d(p,s)$.

\paragraph{Weight analysis.} Since $N\subset P$, we have $\wts(\opt_{\eps}(N))\leq \wts(\opt_\eps(P))$, where $\opt_{\eps}(P)$ is the minimum weight of a (Steiner) $(1+\eps)$-ST for $P$. We can bound the total weight of the 2-spanners $G_a$, $a\in N$, using \Cref{lem:cover}. It is enough to show that the collection of balls 
\[
    \mathcal{C}=\left\{B\left(a, \frac{4\eps}{3}\cdot d(a,s)\right): a\in N\right\}
\]
forms a shallow cover of $P$. This can be done with a standard volume argument. Since $N$ is an $\eps$-cnet, none of the balls in $\mathcal{C}'=\{B(a, \eps\cdot d(a,s)): a\in N\}$ contains the center of any other ball, This implies that balls of half-radii in $\mathcal{C}''=\{B(a, \frac{\eps}{2}\cdot d(a,s)): a\in N\}$ are pairwise disjoint. 

Assume that $a,b\in N$, and the balls $B(a,4\eps\cdot d(a,s))$ and $B(b,4\eps\cdot d(b,s))$ intersect in some point $x\in \mathbb{R}^2$. The triangle inequality yields $d(x,s)\geq d(a,s)-d(s,x)\geq (1-4\eps)d(a,s)>\frac12\cdot d(a,s)$ 
for $\eps<\frac19$, hence $d(a,x)< 4\eps \cdot d(a,s)<8\eps\cdot d(x,s)$. Similarly, $d(b,x)< 8\eps\cdot d(x,s)$, and hence $d(a,b)\leq d(a,x)+d(x,b)< 16\eps\cdot d(x,s)$.
The triangle inequality also gives $d(x,s)\leq d(x,a)+d(a,s)\leq (4\eps+1)d(a,s)<  \frac32\cdot d(a,s)$ for $\eps<\frac19$;
and symmetrically $d(x,s)< \frac32\cdot d(b,s)$.
Therefore, we obtain $d(a,b)\leq 16\eps\cdot d(x,s)< 24\eps\cdot d(a,s)$, and 
$d(b,s)\leq d(b,a)+d(a,s)\leq (1+24\eps)\cdot d(a,s)< 4\cdot d(a,s)$ for $\eps\in (0,\frac19)$.

Recall that the balls in $\mathcal{C}''=\{B(a, \frac{\eps}{2}\cdot d(a,s)): a\in N\}$ are pairwise disjoint. From the previous paragraph, we can conclude that if $a,b\in N$ and the balls $B(a,4\eps\cdot d(a,s))$ and $B(b,4\eps\cdot d(b,s))$ intersect, then
\begin{align*}
B\left(b, \frac{\eps}{2} \cdot d(b,s)\right) & \subset B(b,2\eps \cdot d(a,s)),\\
B\left(b, \frac{\eps}{2} \cdot d(b,s)\right)&\subset  B\left(a,d(a,b)+\frac{\eps}{2} \cdot d(b,s)\right) \subset B(a,26\eps\cdot d(x,s)).
\end{align*}
By volume argument, the ball of radius $26\eps\cdot d(a,s)$ centered at $a$ contains at most $(2\cdot 26)^2=O(1)$ disjoint balls of radius $\frac{\eps}{2}\cdot d(a,s)$, which are pairwise disjoint. Consequently, for every $a\in N$, there are $O(1)$ points $b\in N$ such that $B(a,6\eps\cdot d(a,s))$ intersects $B(a,6\eps\cdot d(b,s))$.

\paragraph{Running time.} 
The $n$ points can greedily be partitioned into clusters in $O(n\log n)$ time using the ice cream scoop algorithm supported by a point location data structure~\cite{EdelsbrunnerGS86, SarnakT86} (see also \cite{IaconoM12}).
For each cluster $C_a$, $a\in N$, a $2$-spanner $G_a$ of weight $O(\wts(\mst(C_a))$, which can be computed in $O(|C_a|\log |C_a|)$ time~\cite{KanjPX10}. Hence all 2-spanners $G_a$, $a\in N$, can be computed in $\sum_{a\in N} O(|C_a| \log |C_a|)\leq  \left(\sum_{a\in N} |C_a|\right)\cdot O(\log n)=O(n\log n)$ time. Consequently, we can compute $G$ in $O(n\log n)$ time. Note that $G$ has $O(n)$ vertices and edges, so $\mathsf{SPT}(G)$ can be computed in $O(n\log n)$ time. The overall running time is $O(n\log n)$.
\end{proof}

\subsection{Reduction to Ellipses with Parallel Major Axes}
\label{ssec:ellipses}

For two points $a,b\in \mathbb{R}^2$ and a parameter $\eps>0$, let $\mathcal{E}_{ab,\eps}$ denote the ellipse with foci $a$ and $b$ and major axis $(1+\eps)\cdot d(a,b)$, that is,
\[
    \mathcal{E}_{ab,\eps}=\{p\in \mathbb{R}^2: d(a,p)+d(p,b)\leq (1+\eps)\cdot d(a,b)\} .
\]
We show that if $ps$ is a segment with $|\slope(ps)|\leq \sqrt{\eps}$, then the ellipse $\mathcal{E}_{ps,\eps}$ can be sandwiched between two ellipses of similar size that have a focus at $p$ and a horizontal major axis:
\begin{lemma}\label{lem:approxellipse}
    If $p,s\in \mathbb{R}^2$ and $|\slope(ps)|\leq \sqrt{\eps}$, then there exist points $a,b\in \mathbb{R}^2$ such that $pa$ and $pb$ are horizontal line segments, $d(p,b)\leq 2\cdot d(p,s)\leq 4\cdot d(p,a)$, and 
    \[
    \mathcal{E}_{pa,\eps/2} \subset \mathcal{E}_{ps,\eps} \subset \mathcal{E}_{pb,2\eps} .
    \]
Furthermore, for every vertical line $L$ that intersects the segment $pa$ left of the midpoint of $pa$, we have 
\begin{equation}\label{eq:widths}
    \wts(L\cap \mathcal{E}_{pa,\eps/2})
        \leq \wts(L\cap \mathcal{E}_{pb,2\eps}) 
        \leq 32\cdot \wts(L\cap \mathcal{E}_{pa,\eps/2}).
\end{equation}
\end{lemma}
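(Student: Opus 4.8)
The plan is to reduce the statement to a one-dimensional comparison of slacks. Set up coordinates so that $p$ is the origin and the $x$-axis is horizontal; write $s=(x_s,y_s)$ with $|y_s/x_s|\le\sqrt{\eps}$ (WLOG $x_s>0$). Define $a=(d(p,a),0)$ and $b=(d(p,b),0)$ with $d(p,a)$ slightly smaller than $d(p,s)$ and $d(p,b)=2d(p,s)$; the exact value of $d(p,a)$ will be chosen as a suitable constant fraction of $d(p,s)$ (something like $d(p,s)/\sqrt{1+\eps}$, so that $d(p,a)\ge d(p,s)/2$ and hence $d(p,b)\le 2d(p,s)\le 4d(p,a)$) — this pins down the distance bounds in the statement immediately.

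For the containment $\mathcal{E}_{pa,\eps/2}\subset\mathcal{E}_{ps,\eps}\subset\mathcal{E}_{pb,2\eps}$, the key observation is that each ellipse $\mathcal{E}_{uv,\delta}$ is exactly the sublevel set $\{q: d(u,q)+d(q,v)\le d(u,v)+\delta\cdot d(u,v)\}$, i.e.\ it is determined by an additive "budget" $\delta\cdot d(u,v)$ on the detour through $q$. So it suffices to compare these budgets and the geometry of the foci. First I would show $\mathcal{E}_{ps,\eps}\subset\mathcal{E}_{pb,2\eps}$: for any $q\in\mathcal{E}_{ps,\eps}$, bound $d(q,b)$ via the triangle inequality $d(q,b)\le d(q,s)+d(s,b)$, and use that $d(s,b)=d(p,b)-d(p,s)$ up to a $\slack(ps)$ correction — here $\slack(ps)\le \frac12|\slope(ps)|^2\,|\proj(ps)|\le \frac{\eps}{2}d(p,s)$ by \Cref{lem:slack}. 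Combining, $d(p,q)+d(q,b)\le (1+\eps)d(p,s)+d(p,b)-d(p,s)+O(\eps)d(p,s)\le d(p,b)+2\eps\cdot d(p,b)$ after absorbing the $O(\eps d(p,s))$ slack into the $2\eps\cdot d(p,b)$ budget (which is at least $2\eps\cdot d(p,s)$). The reverse containment $\mathcal{E}_{pa,\eps/2}\subset\mathcal{E}_{ps,\eps}$ is symmetric: $d(q,s)\le d(q,a)+d(a,s)$ and $d(a,s)\le d(p,s)-d(p,a)+\slack(ps)$, and one checks $(1+\frac\eps2)d(p,a)+(d(p,s)-d(p,a))+\frac\eps2 d(p,s)\le (1+\eps)d(p,s)$, which holds once $d(p,a)\le d(p,s)$ and the two $\frac\eps2 d(p,s)$ terms sum to $\eps\,d(p,s)$. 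The constants in the choice of $d(p,a)$ are tuned so both inequalities close with room to spare.

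For the sandwiched-widths inequality~\eqref{eq:widths}: a vertical line $L:x=t$ with $0<t<\frac12 d(p,a)$ meets the ellipse $\mathcal{E}_{pc,\delta}$ (for $c$ on the positive $x$-axis, $c=(\gamma,0)$) in a chord of length $2h$ where $h$ is given by the standard ellipse formula $h^2=\frac{(\delta\gamma)(2+\delta)\gamma}{4}\cdot\big(1-(2t/\gamma-1)^2\big)/\cdots$; more usefully, the left endpoint of the major axis is at the origin, the semi-minor axis is $\frac12\sqrt{\delta(2+\delta)}\,\gamma\approx\frac{\sqrt{\delta}}{\sqrt2}\gamma$, and near the left vertex the half-width $h(t)$ behaves like $\sqrt{\delta}\cdot\sqrt{\gamma t}$ up to constant factors as long as $t\le\gamma/2$. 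Applying this to $\mathcal{E}_{pa,\eps/2}$ ($\gamma=d(p,a)$, $\delta=\eps/2$) and to $\mathcal{E}_{pb,2\eps}$ ($\gamma=d(p,b)=2d(p,s)\le 4d(p,a)$, $\delta=2\eps$), the ratio of half-widths at the same $t$ is $\Theta\!\big(\sqrt{(2\eps)\cdot d(p,b)}\big/\sqrt{(\eps/2)\cdot d(p,a)}\big)=\Theta(\sqrt{4\cdot d(p,b)/d(p,a)})\le\Theta(\sqrt{16})=\Theta(4)$, and tracking the absolute constants gives the stated factor $32$; the lower bound (that $\mathcal{E}_{pa,\eps/2}$'s width does not exceed that of $\mathcal{E}_{pb,2\eps}$) follows already from the containment. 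I would be careful to restrict to $t<\frac12 d(p,a)$ so that the "near the left vertex" asymptotics apply to both ellipses simultaneously — that is exactly why the hypothesis localizes $L$ to the left half of $pa$.

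The main obstacle I anticipate is the bookkeeping of constants in the width comparison: the exact ellipse half-width as a function of $t$ is a square root of a quadratic in $t$, and to get a clean constant like $32$ one must either (i) pick the constant in $d(p,a)$ carefully and then just bound the quadratic ratio crudely on $[0,\frac12 d(p,a)]$, or (ii) prove a clean lemma that for $c=(\gamma,0)$ and $0\le t\le\gamma/2$ one has $c_1\sqrt{\delta\gamma t}\le h(t)\le c_2\sqrt{\delta\gamma t}$ with explicit $c_1,c_2$, and then the ratio is a product of explicit ratios of $\sqrt\delta$, $\sqrt\gamma$, and $c_2/c_1$. Option (ii) is cleaner and isolates the only real inequality; everything else (the three ellipse containments, the distance bounds $d(p,b)\le 2d(p,s)\le 4d(p,a)$) is routine once \Cref{lem:slack} is invoked to control $\slack(ps)\le\frac\eps2 d(p,s)$.
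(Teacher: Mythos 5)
Your reduction to one-dimensional slacks and your near-vertex width estimate are both reasonable, but the choice of $a$ breaks the containment $\mathcal{E}_{pa,\eps/2}\subset\mathcal{E}_{ps,\eps}$. You place $a$ with $d(p,a)\approx d(p,s)$ (e.g.\ $d(p,s)/\sqrt{1+\eps}$) and bound $d(a,s)$ by the collinear estimate $d(p,s)-d(p,a)$ plus a $\slack(ps)$ correction. That correction is of the wrong magnitude: the quantity controlling $d(a,s)-|\proj(as)|$ is $\slack(as)$, not $\slack(ps)$, and when $a$ is placed so that $x(a)\approx x(s)$ the segment $as$ is nearly vertical, so $\slack(as)$ is not $O(\eps\cdot d(p,s))$. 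Concretely, take $p=(0,0)$, $d(p,s)=1$, $|\slope(ps)|=\sqrt{\eps}$, so that $s=\bigl(1/\sqrt{1+\eps},\,\sqrt{\eps}/\sqrt{1+\eps}\bigr)$ and $\slack(ps)\approx\eps/2$. Your $a=(1/\sqrt{1+\eps},0)$ then has $x(a)=x(s)$, giving $d(a,s)=|y(s)|\approx\sqrt{\eps}$, which exceeds your claimed bound $d(p,s)-d(p,a)+\slack(ps)\approx\eps$ by a factor $1/\sqrt{\eps}$. The containment really does fail: the bottom of $\mathcal{E}_{pa,\eps/2}$ is $q\approx\bigl(\tfrac12,-\tfrac{\sqrt{\eps}}{2}\bigr)$, and there $d(p,q)+d(q,s)\approx 1+2\eps>1+\eps$, so $q\notin\mathcal{E}_{ps,\eps}$.

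The paper avoids this by choosing $a$ on the horizontal through $p$ with $d(p,a)=d(a,s)$ (isosceles triangle $\Delta(aps)$), which forces $d(p,a)=d(p,s)/(2\cos\angle aps)\approx d(p,s)/2$ — not $\approx d(p,s)$. Then no triangle-inequality bound on $d(a,s)$ is needed: $(1+\tfrac{\eps}{2})d(p,a)+d(a,s)=(2+\tfrac{\eps}{2})d(p,a)=\tfrac{2+\eps/2}{2\cos\angle aps}\,d(p,s)\le(1+\tfrac{\eps}{4})\sqrt{1+\eps}\,d(p,s)<(1+\eps)\,d(p,s)$. Symmetrically, $b$ is chosen with $d(s,b)=d(p,s)$, so $d(p,b)=2|\proj(ps)|\le 2d(p,s)\le 4d(p,a)$, giving the stated distance bounds. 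Once those foci are fixed, the width bound $\wts(L\cap\mathcal{E}_{pb,2\eps})\le 32\cdot\wts(L\cap\mathcal{E}_{pa,\eps/2})$ can be obtained either by your near-vertex asymptotic $h(t)=\Theta(\sqrt{\delta\gamma t})$ or, as the paper does, by comparing $\slack(pq)\ge\tfrac{\eps}{4}d(p,a)\ge\tfrac{\eps}{16}d(p,b)$ (using that $L$ is left of the center of $\mathcal{E}_{pa,\eps/2}$) with $\slack(pr)\le 2\eps\,d(p,b)$ and recovering the chord length from $\wts(L\cap\mathcal{E})=2\sqrt{\slack^2+2\,\slack\cdot d(p,c)}$; the two routes are essentially the same inequality, so the only fix you need is the isosceles choice of $a$.
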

\begin{figure}[htbp]
    \centering
    \includegraphics[width=.75\columnwidth]{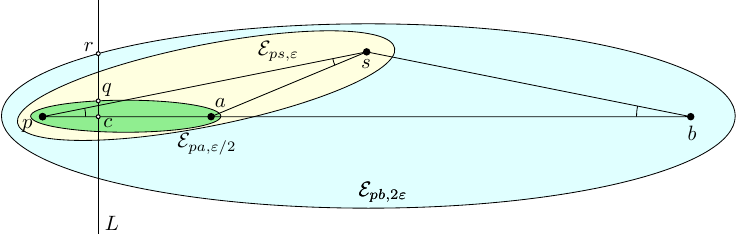}
    \caption{Ellipses $\mathcal{E}_{pa,\eps/2}$, $\mathcal{E}_{ps,\eps}$ and $\mathcal{E}_{pb,2\eps}$. A vertical line $L$ that intersects segment $pa$ left of its center, and crosses the boundary of $\mathcal{E}_{pa,\eps/2}$ and $\mathcal{E}_{pb,2\eps}$ at points $q$ and $r$, respectively.}
    \label{fig:ellipses}
\end{figure}
\begin{proof}
We choose points $a$ and $b$ on the horizontal line passing through $p$ that satisfy the equations: $d(p,a)=d(a,s)$ and $d(p,s)=d(s,b)$; see \Cref{fig:ellipses}. Note that the triangles $\Delta(aps)$ and $\Delta(bsp)$ are similar and isosceles, where $\angle aps =\angle bps = \tan^{-1} (\slope(ps))$.

By the triangle inequality, every point $q\in \mathcal{E}_{pa,\eps/2}$
satisfies 
\begin{align*}    d(p,q)+d(q,s) 
    & \leq d(p,q)+ d(q,a)+ d(a,s) \\
    &\leq  \left(1+\frac{\eps}{2}\right)\cdot d(p,a) + d(p,a) \\
    &= \left(2+\frac{\eps}{2}\right)\cdot d(p,a) \\
    &= \left(2+\frac{\eps}{2}\right)\cdot \frac{d(p,s)}{2\cos\angle bps}\\
    &\leq \frac{2+\eps/2}{2}\cdot \sqrt{1+|\slope(ps)|^2}\cdot d(p,s)\\
    &\leq  \left(1+\frac{\eps}{4}\right)\sqrt{1+\eps} \cdot d(p,s)\\
    &\leq  \left(1+\frac{\eps}{4}\right)\left(1+\frac{\eps}{2}\right) \cdot d(p,s)\\
    & <(1+\eps) \cdot d(p,s),
\end{align*}
where the last inequality holds for $\eps < 2$, which proves $\mathcal{E}_{pa,\eps/2} \subset \mathcal{E}_{ps,\eps}$.

Similarly, every point $q\in \mathcal{E}_{ps,\eps}$ satisfies 
\begin{align*}    d(p,q)+d(q,b) 
    & \leq d(p,q)+ d(q,s)+ d(s,b) \\
    &\leq  (1+\eps)\cdot d(p,s) + d(s,b) \\
    &=(2+\eps)\cdot d(p,s) \\
    &= (2+\eps)\cdot d(p,b)/ (2\cos\angle bps) \\ 
    &\leq \frac{2+\eps}{2}\cdot \sqrt{1+|\slope(ps)|^2}\cdot d(p,b)\\
    &\leq  \left(1+\frac{\eps}{2}\right)\sqrt{1+\eps} \cdot d(p,b)\\
    &\leq  \left(1+\frac{\eps}{2}\right)^2 \cdot d(p,b)\\
    & <(1+2\eps) \cdot d(p,b) ,
\end{align*}
which proves $\mathcal{E}_{ps,\eps} \subset \mathcal{E}_{pb,2\eps}$.

We now prove the chain of inequalities \eqref{eq:widths}. It is clear that   $\mathcal{E}_{pa,\eps/2} \subset \mathcal{E}_{pb,2\eps}$ implies $\wts(L\cap \mathcal{E}_{pa,\eps/2})\leq  \wts(L\cap \mathcal{E}_{pb,2\eps})$ for every line $L$. 
It remains to show that $\wts(L\cap \mathcal{E}_{pb,2\eps})\leq 32\cdot \wts(L\cap \mathcal{E}_{pb,2\eps})$ for a vertical line $L$ that intersects the line segment $pa$ left of the midpoint of $pa$. Let $c=L\cap pa$, and let $q,r\in L$ be the points in the upper halfplane such that $q$ is on the boundary of $\mathcal{E}_{pa,\eps/2}$ and $r$ is on the boundary of $\mathcal{E}_{pb,2\eps}$; see \Cref{fig:ellipses}.
Since $q\in \partial \mathcal{E}_{pa,\eps/2}$, then $\slack(pq)+\slack(qa)=\frac{\eps}{2}\cdot d(p,a)$. Since $L$ is left of the center of the ellipse $\mathcal{E}_{pa,\eps/2}$,
then $\slack(pq)\geq \slack(qa)$. Combined with $d(p,b)\leq 4\cdot d(p,a)$, we obtain  
\begin{equation}\label{eq:apx-ellipse}
        \slack(pq)\geq \frac{\slack(pq)+\slack(qa)}{2}
                \geq \frac{\eps}{4}\cdot d(p,a)
                \geq \frac{\eps}{16}\cdot d(p,b).
\end{equation}
Since $r\in \partial \mathcal{E}_{pb,2\eps}$, then  $\slack(pr)+\slack(pb)=2\eps\cdot d(p,b)$, and in particular $\slack(pr)\leq 2\eps\cdot d(p,b)$. We conclude that 
\[
    \slack(pr)\leq 32\cdot \slack(pq).
\]
Note that the $x$-projection of both segments $pq$ and $pq$ is the same segment $pc$. We can recover the length of the vertical segments $cq$ and $cr$ using Pythagoras theorem in the right triangles $\Delta(pcq)$ and $\Delta(pcr)$, respectively:
\begin{align*}
    \wts(L\cap \mathcal{E}_{pb,2\eps})
    &=2\cdot d(c,r) \\
    &=2\cdot \sqrt{\big((d(p,c))^2+(d(c,r))^2\big) -(d(p,c))^2 } \\
    &=2\cdot \sqrt{(d(p,r))^2 -(d(p,c))^2 }\\
    &=2\cdot \sqrt{(\slack(pr)+d(p,c))^2 -(d(p,c))^2 }\\
    &=2\cdot \sqrt{(\slack(pr))^2+2\cdot \slack(pr)\cdot d(p,c))^2}\\
    &\leq 2\cdot \sqrt{(32\cdot \slack(pq))^2+2\cdot 32\cdot \slack(pq)\cdot d(p,c))^2}\\
    &\leq 64\cdot \sqrt{\slack(pq))^2+2\cdot \slack(pq)\cdot d(p,c))^2}\\ 
    &=64\cdot \sqrt{(\slack(pq)+d(p,c))^2 -(d(p,c))^2 }\\
    &=64\cdot \sqrt{(d(p,q))^2 -(d(p,c))^2 }\\
    &=64\cdot \sqrt{\big((d(p,c))^2+(d(c,q))^2\big) -(d(p,c))^2 } \\
    &=64\cdot d(c,q) 
    =32\cdot \wts(L\cap \mathcal{E}_{pa,\eps/2})  .\qedhere
\end{align*}
\end{proof}

\section{Steiner Shallow Trees for Points in a Tile}
\label{sec:sector}

In this subsection, we prove \Cref{thm:Steiner}. 
By \Cref{thm:nonSteiner-reduction} and \Cref{lem:net},
we may assume that $P$ is a set of points in a trapezoid $\tau\in \mathcal{T}$ (defined in \Cref{ssec:SteinerReduction}), and $P$ is a centered $\eps$-net w.r.t.\ center $s\in P$. We may further assume w.l.o.g.\ that $\tau\subset [0,1]\times [-\sqrt{\eps},\sqrt{\eps}]$ and $s=(2,0)$; see \Cref{fig:sector}.
We first present an algorithm that constructs a Steiner tree $T$ 
for $P\cup \{s\}$, rooted at $s$. Then we show that $T$ is a $(1+O(\eps\log \eps^{-1}))$-ST 
and its weight is $O(\opt_\eps\cdot \log \eps^{-1})$, where $\opt_\eps$ denotes the minimum weight of a Steiner $(1+\eps)$-ST for $P$ rooted at $s$; and let $\OPT_\eps$ be one such Steiner $(1+\eps)$-ST.

\begin{figure}[htbp]
    \centering
    \includegraphics[width=.5\textwidth]{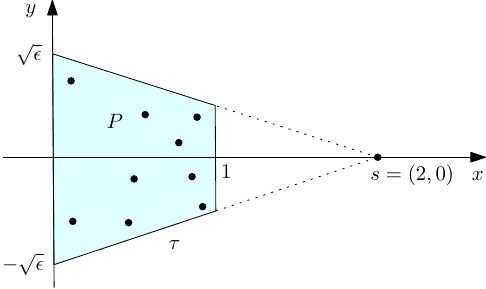}
    \caption{A point set $P$ in a trapezoid $\tau$ and a source $s=(2,0)$.}
    \label{fig:sector}
\end{figure}

\paragraph{Steiner shallow tree construction: Overview.}
We start with a brief overview of the ST construction and then present the algorithm in detail. 
We construct a Steiner graph $G$ on $P\cup \{s\}$ (which is not necessarily a tree), and then let $T$ be the shortest path tree of $G$ rooted at $s$ (that is, the single-source shortest path tree of $G$ with $s$ as the source). We can analyze the root-stretch in the graph $G$, and the weight of $T$ is upper-bounded by the weight of $G$.

For each point $p\in P$, we know that $\OPT_\eps$ contains a $ps$-path of length at most $(1+\eps)\cdot d(p,s)$ in the ellipse $\mathcal{E}_{ps}$ with major axis $(1+\eps)\cdot d(p,s)$. We associate each point $p\in P$ with vertical lines $L_0(p),\ldots , L_{k-1}(p)$, where $k=O(\log \eps^{-1})$. For $i=0,1,\ldots, k-1$, we choose a Steiner point $s_i$ in the line segments $\mathcal{E}_{ps}\cap L_i(p)$, and add the path $\pi_{ps}=(p,s_0,\ldots, s_{k-1},s)$ to the graph $G$.
We show (\Cref{lem:rootstretch}) that if the distances between the vertical lines $L_1,\ldots ,L_k$ increase exponentially, then $\wts(\pi_{ps})  = (1+O(\eps\log \eps^{-1}))\cdot d(p,s)$. 

Importantly, each line $L_i$ is associated with multiple points in $P$, and each Steiner point $s_i\in L_i$ also serves multiple points in $P$. For each line $L_i$, we choose the Steiner points in $L_i$ as a minimum hitting set for the line segments $\mathcal{E}_{ps}\cap L_i$ over all points $p\in P$ associated with line $L_i$. 
Since the intersection $\OPT_\eps\cap L_i$ is a hitting set for these intervals, we can charge the weight of $G$ to the weight of $\OPT_\eps$. The approximation factor $O(\log \eps^{-1})$, in the stretch and the weight, is the result of using $O(\log\eps^{-1})$ Steiner points, one in each line $L_i$, for $i=1,\ldots , k$.

\paragraph{Steiner shallow tree construction: Details.}
Assume w.l.o.g.\ that $\eps=4^{-(k+1)}$ for some $k\in \mathbb{N}$.
By \Cref{lem:approxellipse} each ellipse $\mathcal{E}_{ps}$ is contained in an ellipse $\mathcal{E}_{pb(p),2\eps}$, where the segment $pb(p)$ is horizontal and $d(p,b(p))\leq 2\cdot d(p,s)$.  As a shorthand notation, we use $\mathcal{E}_p:=\mathcal{E}_{p b(p),2\eps}$.

We define families of vertical lines. For every integer $i\geq 0$, let 
\begin{equation}\label{eq:lines}
    \mathcal{L}_i=\{ x= j  4^i\cdot \eps: j\in \mathbb{Z}\},
\end{equation}
that is, the distance between two consecutive vertical lines in $\mathcal{L}_i$ is $4^i\cdot \eps$. 
For every point $p\in P$, we recursively choose $L_i(p)\in \mathcal{L}_i$ for $i=0,1,\ldots , k-1$, as follows. 
Let $L_0(p)$ be the second line in $\mathcal{L}_0$ to the right of $p$. For $i=1,\ldots , k-1$, if $L_{i-1}(p)$ has already been chosen, let $L_i(p)$ be the second line in $\mathcal{L}_i$ to the right of $L_{i-1}(p)$. (We choose the second line in $\mathcal{L}_i$, rather than the first one, to ensure that the gaps between $L_{i-1}(p)$ and $L_i(p)$ grow exponentially in $i$; cf.~\Cref{obs:spacing}.)

Now consider a line $L\in \mathcal{L}_i$, and let $P(L)$
be the set of all points in $P$ associated with $L$. Formally, we put $P(L)=\{p\in P: L_i(p)=L\}$. Consider the set of intervals 
\[
    I(L)=\{L\cap \mathcal{E}_p: p\in P(L)\}.
\]
Let $H(L)$ be a \EMPH{minimum hitting set} (a.k.a., piercing set, stabbing set, or transversal) for $I(L)$: It is a minimum subset of $L$ that contains at least one  point in each interval in $I(L)$; it can be computed in $O(|I(L)|\log |H(L)|)$ time~\cite{DanzerG82,HochbaumM85,Nielsen00}.

Finally, we construct the Steiner graph $G$ as follows.
The vertex set of $G$ comprises $P$, $s$, and the points $H(L)$ for all $L\in \mathcal{L}_i$, $i=0,\ldots , k-1$.
The edges are defined as follows. 
For each point $p\in P$, consider the lines $L_0(p),\ldots , L_{k-1}(p)$ associated with $p$. For $i=0,1,\ldots ,k-1$ let $s_i(p)$ be an arbitrary point in $H(L_i(p)) \cap \mathcal{E}_p$. Add the edges of the path $\pi_{ps}=(p,s_0(p),\ldots, s_{k-1}(p),s)$ to $G$. 

As noted above, we let $T$ be the shortest path tree of $G$ rooted at $s$. 
This completes the construction of $T$.

\begin{observation}\label{obs:spacing}
 For $i=1,\ldots , k-1$, the distance between $L_{i-1}(p)$ and $L_i(p)$ is in the range $[4^i\, \eps, 2\cdot 4^i\, \eps]$.
\end{observation}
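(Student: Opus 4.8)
The plan is to unwind the recursive definition of the lines $L_i(p)$ together with the grid structure of the families $\mathcal{L}_i$. Fix $i\in\{1,\dots,k-1\}$. The family $\mathcal{L}_i$ is the set of vertical lines $\{x=j\cdot 4^i\eps:j\in\mathbb{Z}\}$, so any two consecutive lines of $\mathcal{L}_i$ are exactly $4^i\eps$ apart. By definition, $L_i(p)$ is the \emph{second} line of $\mathcal{L}_i$ strictly to the right of $L_{i-1}(p)$. If $L_{i-1}(p)$ itself happened to lie on a line of $\mathcal{L}_i$ (i.e. its $x$-coordinate is an integer multiple of $4^i\eps$), then the second $\mathcal{L}_i$-line to its right is at horizontal distance exactly $2\cdot 4^i\eps$; in general $L_{i-1}(p)$ falls strictly between two consecutive $\mathcal{L}_i$-lines, the first $\mathcal{L}_i$-line to its right is at distance strictly between $0$ and $4^i\eps$, and the second is one more grid step farther, hence at distance strictly between $4^i\eps$ and $2\cdot 4^i\eps$. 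Combining the two cases, the horizontal distance between $L_{i-1}(p)$ and $L_i(p)$ lies in $[4^i\eps,\,2\cdot 4^i\eps]$. Since both lines are vertical, this horizontal distance equals the Euclidean distance between them, which is the quantity in the statement.

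The one point that needs a little care is \emph{why $L_{i-1}(p)$ is always strictly to the left of $L_i(p)$ and why ``second line to the right'' is well-defined} — this is immediate, since $\mathcal{L}_i$ is a bi-infinite discrete grid, so for any vertical line $L_{i-1}(p)$ there is a well-defined first, second, etc.\ line of $\mathcal{L}_i$ lying strictly to its right, and the second such line is at horizontal distance in $(4^i\eps, 2\cdot 4^i\eps]$ (or exactly $2\cdot 4^i\eps$ in the degenerate aligned case). I would phrase the bound as follows: if $x_{i-1}$ denotes the $x$-coordinate of $L_{i-1}(p)$ and we write $x_{i-1}=m\cdot 4^i\eps+r$ with $m\in\mathbb{Z}$ and $r\in[0,4^i\eps)$, then $L_i(p)$ has $x$-coordinate $(m+2)\cdot 4^i\eps$, so the gap is $(m+2)4^i\eps-(m4^i\eps+r)=2\cdot 4^i\eps-r\in(4^i\eps,\,2\cdot 4^i\eps]$, which is contained in $[4^i\eps,\,2\cdot 4^i\eps]$ as claimed.

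There is no real obstacle here — the statement is a direct bookkeeping consequence of the construction, and the only thing worth double-checking is that the recursion starts and proceeds consistently (each $L_{i-1}(p)$ is chosen before $L_i(p)$, which the construction guarantees), and that ``second line to the right'' is the operative phrase rather than ``first line to the right'' (that is precisely the design choice flagged in the parenthetical remark of the construction, ensuring the lower bound $4^i\eps$ rather than a possibly tiny gap). I would keep the write-up to the two-line computation above.
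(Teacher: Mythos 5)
Your argument is correct and is exactly the intended (and only reasonable) one: the paper states this as an Observation without a separate proof, treating it as an immediate consequence of the recursive choice ``second line of $\mathcal{L}_i$ to the right of $L_{i-1}(p)$'' and the fact that consecutive lines of $\mathcal{L}_i$ are $4^i\eps$ apart. Your write-up correctly identifies both cases (aligned vs.\ strictly between two grid lines) and the role of the ``second'' rather than ``first'' line in guaranteeing the lower bound.

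One small refinement you could note but do not need: since $L_{i-1}(p)\in\mathcal{L}_{i-1}$ and $\mathcal{L}_i\subset\mathcal{L}_{i-1}$, the remainder $r$ in your decomposition $x_{i-1}=m\cdot 4^i\eps+r$ is not an arbitrary real in $[0,4^i\eps)$ but lies in $\{0,4^{i-1}\eps,2\cdot4^{i-1}\eps,3\cdot4^{i-1}\eps\}$; this actually gives the slightly tighter lower bound $\tfrac54\cdot4^i\eps$ in the nonaligned case, but the stated range $[4^i\eps,2\cdot4^i\eps]$ is all that is used downstream.
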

\begin{corollary}\label{cor:spacing}
     For $i=1,\ldots , k-1$, the distance between $p$ and $L_i(p)$ is in the range $[4^i\, \eps, \frac{32}{3}\cdot 4^i\, \eps]$.
\end{corollary}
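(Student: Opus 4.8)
The plan is to write $d(p,L_i(p))$ as a telescoping sum of the consecutive gaps $d(L_{j-1}(p),L_j(p))$ and then apply \Cref{obs:spacing} to each gap, treating the base term $d(p,L_0(p))$ separately.

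First I would record that, by construction, the chosen lines appear in strictly increasing order of $x$-coordinate: $L_0(p)$ lies to the right of $p$, and for each $j\in\{1,\dots,i\}$ the line $L_j(p)$ lies to the right of $L_{j-1}(p)$. Since every $L_j(p)$ is a vertical line, the (perpendicular) distance $d(p,L_j(p))$ equals $x(L_j(p))-x(p)$, and because the lines are nested to the right of $p$ the gaps add up without cancellation:
\[
d(p,L_i(p)) \;=\; d(p,L_0(p)) \;+\; \sum_{j=1}^{i} d\big(L_{j-1}(p),L_j(p)\big).
\]
For the lower bound I would simply discard all terms except the last and invoke \Cref{obs:spacing}, which gives $d(p,L_i(p))\ge d(L_{i-1}(p),L_i(p))\ge 4^i\,\eps$.

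For the upper bound I would bound each term from above. The base term needs a separate argument, since \Cref{obs:spacing} only covers indices $j\ge 1$: as $L_0(p)$ is the \emph{second} line of $\mathcal{L}_0$ to the right of $p$ and consecutive lines of $\mathcal{L}_0$ are at distance $4^0\eps=\eps$, we get $d(p,L_0(p))\le 2\eps$. For $j\ge1$, \Cref{obs:spacing} gives $d(L_{j-1}(p),L_j(p))\le 2\cdot 4^j\eps$. Summing the geometric series,
\[
d(p,L_i(p)) \;\le\; \sum_{j=0}^{i} 2\cdot 4^j\,\eps \;=\; 2\eps\cdot\frac{4^{i+1}-1}{3} \;<\; \frac{8}{3}\cdot 4^i\,\eps \;\le\; \frac{32}{3}\cdot 4^i\,\eps,
\]
which lies in the claimed range. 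The argument is essentially bookkeeping; the only points requiring a moment's care are the base case $i=0$ of the gap bound (not covered by \Cref{obs:spacing}) and the observation that the lines $p<L_0(p)<\dots<L_i(p)$ are genuinely ordered so that the perpendicular distances telescope.
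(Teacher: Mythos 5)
Your proof is correct and follows essentially the same route as the paper: decompose $d(p,L_i(p))$ into the base gap $d(p,L_0(p))$ plus the telescoping gaps $d(L_{j-1}(p),L_j(p))$, bound the base gap directly from the construction, and invoke Observation~\ref{obs:spacing} for the rest. The only (harmless) differences are that you discard all but the last gap for the lower bound, whereas the paper sums the full series, and your arithmetic yields the tighter upper bound $\tfrac{8}{3}\cdot 4^i\eps$ before loosening it to the stated $\tfrac{32}{3}\cdot 4^i\eps$.
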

\begin{proof}
By construction, the distance between $p$ and $L_0(p)$ is in the range $[\eps,2\eps]$. Summation over the lower and upper bounds in \Cref{obs:spacing} yields 
\begin{align*}
{\rm dist}(p,L_i(p)) &
    \geq \sum_{j=0}^i 4^j\cdot \eps
    = \frac{4^{i+1}-1}{4-1}\cdot \eps 
    \geq \frac{4^{i+1}-4^{i}}{3}\cdot \eps
    \geq 4^i\cdot \eps,\\
{\rm dist}(p,L_i(p)) &
    \leq  \eps+\sum_{j=0}^i 2\cdot 4^i\, \eps 
    = 2\left(1+4\cdot \frac{4^{i+1}-1}{4-1}\right)\eps 
    < \frac{32\cdot 4^i}{3}\cdot \eps.
    \qedhere
\end{align*}
\end{proof}

\paragraph{Root stretch analysis.} It is enough to analyze the root stretch in the graph $G$. Recall that in the construction of $G$, we have already built a path $\pi_{ps}=(p,s_0(p),\ldots,  s_{k-1}(p),s)$. We show that $\wts(\pi_{ps})\leq (1+O(\eps\log \eps^{-1}))\cdot d(p,s)$ (\Cref{lem:rootstretch}). We first need to estimate the slopes of the edges of $\pi_{ps}$. We start with a lemma about the lengths of the line segments $L_i(p)\cap \mathcal{E}_p$ for $i=0,\ldots , k-1$.

\begin{lemma}\label{lem:crosssection}
    For every point $p\in P$ if a vertical line $L$ is to the right of $p$ at distance $x\in [0,\frac23]$ from $p$, then 
    $\wts(L\cap \mathcal{E}_p)=\Theta(\sqrt{\eps\, x})$. 
    In particular, for every $i=0,\ldots , k-1$, we have $\wts(L_i(p)\cap \mathcal{E}_p)= \Theta(2^i \cdot \eps)$. 
\end{lemma}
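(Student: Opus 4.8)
The plan is to directly apply \Cref{lem:approxellipse} and the slack/slope machinery of \Cref{lem:slack} to the horizontal ellipse $\mathcal{E}_p = \mathcal{E}_{pb(p),2\eps}$. First I would fix $p$ and a vertical line $L$ at horizontal distance $x\in[0,\tfrac23]$ to the right of $p$; write $c = L\cap pb(p)$, which is a point on the major axis of $\mathcal{E}_p$, and let $q$ be the intersection of $L$ with the upper boundary of $\mathcal{E}_p$, so that $\wts(L\cap \mathcal{E}_p) = 2\,d(c,q)$. The defining equation of $\mathcal{E}_p$ gives $\slack(pq) + \slack(qb(p)) = 2\eps\cdot d(p,b(p))$. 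Since $L$ intersects $pb(p)$ near the $p$-endpoint (note $x\le \tfrac23$ while $d(p,b(p))\ge d(p,s)/2 \ge \tfrac34$ because $d(p,s)\ge 1$ by the normalization $\tau\subset[0,1]\times[-\sqrt\eps,\sqrt\eps]$, $s=(2,0)$), the point $c$ is left of the midpoint of the major axis, so $\slack(pq)\ge \slack(qb(p))$ and hence $\slack(pq) = \Theta(\eps\cdot d(p,b(p))) = \Theta(\eps)$, using $d(p,b(p))\le 2\,d(p,s)\le 2\sqrt{4+\eps} = \Theta(1)$ as well as $d(p,b(p)) = \Theta(1)$ from below.

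Next I would recover $d(c,q)$ from $\slack(pq)$ by Pythagoras in the right triangle $\triangle(pcq)$, exactly as in the displayed computation at the end of the proof of \Cref{lem:approxellipse}:
\[
    d(c,q) = \sqrt{d(p,q)^2 - d(p,c)^2} = \sqrt{(\slack(pq)+d(p,c))^2 - d(p,c)^2} = \sqrt{\slack(pq)^2 + 2\,\slack(pq)\cdot d(p,c)}.
\]
Here $d(p,c) = x$. Since $\slack(pq) = \Theta(\eps)$ and $x\le \tfrac23 = O(1)$, in the regime $x = \Omega(\eps)$ the term $2\,\slack(pq)\cdot x$ dominates, giving $d(c,q) = \Theta(\sqrt{\eps\,x})$; and when $x = O(\eps)$ one checks $\slack(pq)^2 = \Theta(\eps^2)$ is comparable to $\slack(pq)\cdot x$ up to constants only if $x = \Theta(\eps)$ — so to get a clean bound I would actually note that $\slack(pq) = \Theta(\eps)$ forces $x$ not too small is \emph{not} needed: for the stated claim $x$ ranges over $[0,\tfrac23]$ and at $x$ close to $0$ we get $d(c,q)\to \slack(pq) = \Theta(\eps)$, which is not $\Theta(\sqrt{\eps x})$. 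Hence the clean statement really needs $x = \Omega(\eps)$; in the application $x = \dist(p,L_i(p)) \ge 4^i\eps \ge \eps$ by \Cref{cor:spacing}, so this is harmless, and I would phrase the first sentence with the implicit assumption $x\ge \eps$ (or note $\wts(L\cap\mathcal E_p) = \Theta(\sqrt{\eps\max\{x,\eps\}})$).

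For the ``in particular'' clause, I would plug in $x = \dist(p,L_i(p))$. By \Cref{cor:spacing} this distance lies in $[4^i\eps,\ \tfrac{32}{3}4^i\eps] = \Theta(4^i\eps)$, and since $4^i\eps \le 4^{k-1}\eps < 4^k\eps = \tfrac14 < \tfrac23$ (using $\eps = 4^{-(k+1)}$), the hypothesis $x\in[0,\tfrac23]$ is satisfied. Substituting $x = \Theta(4^i\eps)$ into $\Theta(\sqrt{\eps x})$ yields $\Theta(\sqrt{\eps\cdot 4^i\eps}) = \Theta(\sqrt{4^i}\cdot\eps) = \Theta(2^i\eps)$, as claimed.

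The main obstacle is bookkeeping the constants so that the ellipse $\mathcal{E}_p$ genuinely has both its major-axis half-length and its ``width at $p$'' of order $\Theta(1)$ and $\Theta(\eps)$ respectively, i.e.\ confirming $d(p,b(p)) = \Theta(1)$ on \emph{both} sides — the upper bound $d(p,b(p))\le 2\,d(p,s)$ is given by \Cref{lem:approxellipse}, and the lower bound follows since $d(p,s)\ge 1$ for $p\in\tau$ — and then ensuring that the point $c = L\cap pb(p)$ really lies left of the midpoint of the major axis so that the inequality $\slack(pq)\ge\slack(qb(p))$ is valid. Both are immediate from $x\le\tfrac23 < \tfrac12 d(p,b(p))$. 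Everything else is the same two-term Pythagorean estimate already carried out in \Cref{lem:approxellipse}.
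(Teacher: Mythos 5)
Your proof is correct, but it takes a genuinely different route from the paper's. The paper's proof of \Cref{lem:crosssection} maps $\mathcal{E}_p$ to the unit disk $D$ by the affine scaling $f(x,y)=(\tfrac{M}{2}x,\tfrac{m}{2}y)$ (with $M=\Theta(1)$, $m=\Theta(\sqrt{\eps})$), computes $\wts(D\cap L)=\Theta(\sqrt{\varrho})$ for a line at distance $\varrho$ from the leftmost point of the disk, and pulls back; the $\Theta(\sqrt{\eps x})$ then falls out as $m\cdot\Theta(\sqrt{\varrho})$ with $\varrho=\Theta(4^i\eps)$. You instead apply the Pythagorean identity $d(c,q)^2=\slack(pq)^2+2\,\slack(pq)\cdot d(p,c)$ from the end of the proof of \Cref{lem:approxellipse}, together with the bound $\slack(pq)=\Theta(\eps)$ obtained from $\slack(pq)+\slack(qb(p))=2\eps\, d(p,b(p))$ and the left-of-midpoint inequality $\slack(pq)\ge\slack(qb(p))$. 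Both approaches are sound and give the same conclusion for the ``in particular'' clause; yours reuses machinery already developed in the paper and exposes the two-term structure $\slack^2+2\,\slack\cdot x$ that makes the dominant regime visible, whereas the paper's affine change of variables is arguably shorter but introduces a separate circle computation.

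A point in your favor: you correctly notice that the first sentence of the lemma, as literally stated, fails when $x=o(\eps)$ (at $x=0$ the width is the latus rectum $\Theta(\eps)$, not $0$), and you correctly observe that the application only uses $x=\dist(p,L_i(p))\ge 4^i\eps\ge\eps$ so the regime $x=\Omega(\eps)$ suffices. The paper's proof sidesteps the same issue by measuring distance from the leftmost point of $\mathcal{E}_p$ (effectively replacing $x$ by $x+\Theta(\eps)$), so in substance the two proofs agree on what is true. One small slip: you write $d(p,b(p))\ge d(p,s)/2\ge\tfrac34$, but $d(p,s)\ge 1$ only gives $d(p,s)/2\ge\tfrac12$; the correct and stronger fact is $d(p,b(p))\ge d(p,a(p))\ge d(p,s)/2\ge\tfrac12$ (indeed $d(p,b(p))=2d(p,s)\cos\angle bps\ge d(p,s)\ge 1$), which still comfortably gives $d(p,b(p))=\Theta(1)$ and $x\le\tfrac23<\tfrac12 d(p,b(p))$, so the argument is unaffected.
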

\begin{proof}
Consider the unit disk $D=\{(x,y)\in \mathbb{R}^2: x^2+y^2=1\}$ and the line $L: x=-1+\varrho$ for $0\leq \varrho\leq 1$; see \Cref{fig:section}. Then for $\varrho\in [0,1]$, we obtain 
\[
    \wts(D\cap L)
    =2\cdot \sqrt{1^2-(1-\varrho)^2}
    =  2\cdot \sqrt{2\varrho-\varrho^2}
    = 2^{3/2}\cdot \sqrt{\varrho - \frac{\varrho^2}{2}} 
    =\Theta(\sqrt{\varrho}) .
\]
The ellipse $\mathcal{E}_p=\mathcal{E}_{pb,2\eps}$ is an affine image of the unit disk $D$: 
The affine transformation $f(x,y)=(\frac{M}{2}x,\frac{m}{2}y)$ takes $D$ to $\mathcal{E}_p$, where $M$ and $m$ are the major and minor axes of $\mathcal{E}_p$.
We estimate $M$ and $m$ up to constant factors. Since $P\subset [0,1]\times [-\sqrt{\eps},\sqrt{\eps}]$ and $s=(2,0)$, then we have $1\leq |x(p)-x(s)|\leq 2$. 
By \Cref{lem:approxellipse}, we have $1\leq d(p,b(p))\leq 4$. The major axis of $\mathcal{E}_p$ is $M=(1+2\eps)\cdot d(p,b(p))$ so $1< M\leq 4+8\eps$; and its minor axis is $m=2\cdot \sqrt{(1+2\eps)^2-1}\cdot d(p,b(p))=4\cdot \sqrt{\eps+\eps^2}\cdot d(p,b(p))$, and so $4\cdot \sqrt{\eps}\leq m\leq 16\cdot \sqrt{\eps + \eps^2} \le 16\sqrt{2}\cdot \sqrt{\eps}$. 

\begin{figure}[!htb]
    \centering
    \includegraphics[width=.6\textwidth]{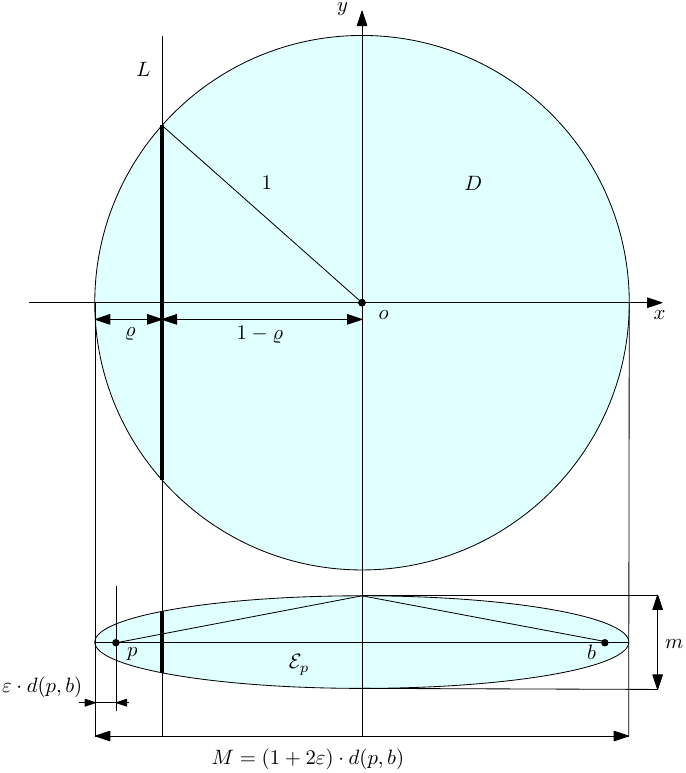}
    \caption{A unit disk $D$, a cross section $L\cap D$, and an affine transformation to the ellipse $\mathcal{E}_p$.}
    \label{fig:section}
\end{figure}

The focus $p$ is at distance $\eps\cdot d(p,b(p))$ from the leftmost point of $\mathcal{E}_p$,
where $\eps \le \eps\cdot d(p,b(p))\leq 4\eps$. By \Cref{cor:spacing},
the distance between $L_i(p)$ and the leftmost point of $\mathcal{E}_p$ is at least $\eps+4^i\eps=\Omega(4^i\eps\cdot M)$ and 
at most $4\eps+\frac{32}{3}\cdot 4^i\eps=O(4^i\eps\cdot M)$. Overall, this distance is $\Theta(4^i\eps\cdot M)$. 

The inverse transformation $f^{-1}$ takes $\mathcal{E}_p$ to the unit disk $D$, and the line $L_i(p)$ to a vertical line $L:x=-1+\varrho$, where $\varrho=\Theta(4^i\eps)$. As noted above, 
we have $\wts(L\cap D) =\Theta(\sqrt{\varrho})=\Theta(2^i\cdot \sqrt{\eps})$, consequently $\wts(L_i(p)\cap \mathcal{E}_p) =
\Theta(m\cdot \sqrt{\varrho})=\Theta(\sqrt{\eps}\cdot 2^i\cdot \sqrt{\eps}) = \Theta(2^i \cdot \eps)$.
\end{proof}
\begin{corollary}\label{cor:crosssection}
For every $p\in P$, consider the path $\pi_{ps}=(p,s_0,\ldots,  s_{k-1},s)$. For all $i=1,\ldots , k-1$, we have 
$|\slope(s_{i-1}s_i)|\leq O(2^{-i})$.
\end{corollary}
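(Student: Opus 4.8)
\textbf{Proof plan for Corollary~\ref{cor:crosssection}.}
The plan is to bound the slope of each edge $s_{i-1}s_i$ by controlling separately its vertical extent (the difference in $y$-coordinates) and its horizontal extent (the difference in $x$-coordinates), and then take the ratio. First I would recall that both $s_{i-1}$ and $s_i$ lie inside the ellipse $\mathcal{E}_p$: indeed $s_{i-1}=s_{i-1}(p)\in H(L_{i-1}(p))\cap \mathcal{E}_p$ and $s_i=s_i(p)\in H(L_i(p))\cap \mathcal{E}_p$ by construction. Moreover, $s_{i-1}$ lies on the vertical line $L_{i-1}(p)$ and $s_i$ lies on $L_i(p)$, and by \Cref{obs:spacing} these lines are at horizontal distance at least $4^i\eps$ apart (and at most $2\cdot 4^i\eps$). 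Hence $|x(s_i)-x(s_{i-1})|\geq 4^i\eps$, which is the denominator bound we need.

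For the numerator, I would use \Cref{lem:crosssection} to control how far $s_{i-1}$ and $s_i$ can be from the major axis $pb(p)$ (the $x$-axis through $p$, which we may take to be horizontal since $pb(p)$ is horizontal). Since $s_{i-1}\in L_{i-1}(p)\cap \mathcal{E}_p$ and by \Cref{cor:spacing} the line $L_{i-1}(p)$ is at distance $\Theta(4^{i-1}\eps)$ from $p$ — which lies in the interval $[0,\tfrac23]$ for $i-1\le k-1$ since $4^{k}\eps = 4^{-1}<\tfrac23$ — \Cref{lem:crosssection} gives $\wts(L_{i-1}(p)\cap \mathcal{E}_p)=\Theta(2^{i-1}\eps)$, so $|y(s_{i-1})|\le \tfrac12\wts(L_{i-1}(p)\cap \mathcal{E}_p)=O(2^{i-1}\eps)=O(2^i\eps)$; similarly $|y(s_i)|=O(2^i\eps)$. (Here I am using that $\mathcal{E}_p$ is symmetric about the horizontal line through $p$, so the vertical chord $L\cap \mathcal{E}_p$ extends at most half its length above the axis.) By the triangle inequality $|y(s_i)-y(s_{i-1})|\le |y(s_i)|+|y(s_{i-1})| = O(2^i\eps)$.

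Combining the two bounds,
\[
    |\slope(s_{i-1}s_i)|
    = \frac{|y(s_i)-y(s_{i-1})|}{|x(s_i)-x(s_{i-1})|}
    \leq \frac{O(2^i\eps)}{4^i\eps}
    = O(2^{-i}),
\]
which is the claim. I do not expect any serious obstacle here: the only points requiring care are (i) verifying that the relevant distances $\mathrm{dist}(p,L_j(p))$ stay within the range $[0,\tfrac23]$ for which \Cref{lem:crosssection} applies — this follows from $\eps=4^{-(k+1)}$, so $\tfrac{32}{3}\cdot 4^{k-1}\eps=\tfrac{32}{3}\cdot 4^{-2}<\tfrac23$, so \Cref{cor:spacing} keeps us in range — and (ii) correctly charging the vertical displacement between two points that sit on two \emph{different} chords of $\mathcal{E}_p$ rather than on one chord, which is handled by the triangle inequality as above. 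One small point worth stating explicitly is that $s_{i-1}$ and $s_i$ both have $x$-coordinate to the right of $p$ with $x(s_i)>x(s_{i-1})$, so the path is locally $x$-monotone and the horizontal gap $|x(s_i)-x(s_{i-1})|$ is exactly the inter-line distance from \Cref{obs:spacing}; this is also what \Cref{lem:slope0} will need when the root-stretch is assembled in \Cref{lem:rootstretch}.
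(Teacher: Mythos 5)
Your proof is correct and takes essentially the same route as the paper: lower-bound the horizontal gap $|x(s_i)-x(s_{i-1})|\ge 4^i\eps$ via \Cref{obs:spacing}, upper-bound the vertical gap $|y(s_i)-y(s_{i-1})|=O(2^i\eps)$ via \Cref{lem:crosssection} (plus the symmetry of $\mathcal{E}_p$ about its horizontal major axis and the triangle inequality), and divide. The paper's proof is a two-line version of exactly this; you have merely spelled out the supporting details, such as the range check $\mathrm{dist}(p,L_j(p))\in[0,\tfrac23]$ and the axis-symmetry argument, which the paper leaves implicit.
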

\begin{proof}
    \Cref{obs:spacing} gives a lower bound $|x(s_{i-1})-x(s_i)|\geq 4^i\cdot \eps$, and \Cref{lem:crosssection} gives an upper bound $|y(s_{i-1})-y(s_i)|\leq O(2^i\cdot \eps)$. The combination of these bounds yields
    \[
    |\slope(s_{i-1}s_i)| =  
    \frac{|y(s_{i-1})-y(s_i)|}{|x(s_{i-1})-x(s_i)|}
    \le \frac{O(2^i\cdot \eps)}{4^i\cdot \eps} 
    \leq O(2^{-i}). \qedhere
    \]
    
\end{proof}

\begin{lemma}\label{lem:rootstretch}
    For every $p\in P$, we have 
    $\wts(\pi_{ps})\leq (1+O(\eps\log \eps^{-1}))\cdot d(p,s)$.
\end{lemma}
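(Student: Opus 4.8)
The plan is to show first that $\pi_{ps}$ is $x$-monotone, and then to bound $\wts(\pi_{ps})$ by $|\proj(ps)|$ plus the sum of the slacks of its $k+1$ edges, proving that each slack is only $O(\eps)$ while there are only $O(\log\eps^{-1})$ edges.

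For $x$-monotonicity, note that by construction $x(p)<x(s_0)$ (since $L_0(p)$ lies to the right of $p$) and $x(s_{i-1})<x(s_i)$ for $1\le i\le k-1$ (since $L_i(p)$ lies to the right of $L_{i-1}(p)$); and $x(s_{k-1})<x(s)$ because \Cref{cor:spacing} together with $\eps=4^{-(k+1)}$ gives $\mathrm{dist}(p,L_{k-1}(p))\le \tfrac{32}{3}\cdot 4^{k-1}\eps=\tfrac23$, so $x(s_{k-1})\le x(p)+\tfrac23\le\tfrac53<2=x(s)$. (The same bound also shows each $L_i(p)$ lies strictly inside the $x$-range of $\mathcal{E}_p$, so $L_i(p)\cap\mathcal{E}_p$ is a genuine vertical chord.) Since $\pi_{ps}$ is $x$-monotone, the $x$-projections of its edges tile $\proj(ps)$, hence
\begin{equation*}
\wts(\pi_{ps})=\sum_{e}\Big(|\proj(e)|+\slack(e)\Big)=|\proj(ps)|+\sum_e\slack(e)\le d(p,s)+\sum_e\slack(e),
\end{equation*}
where the sums range over the edges $e$ of $\pi_{ps}$, so it remains to bound $\sum_e\slack(e)$.

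The heart of the argument is that every edge contributes slack $O(\eps)$. I use the elementary estimates $\slack(ab)\le|y(b)-y(a)|$ and, from Pythagoras as in the proof of \Cref{lem:slack}, $\slack(ab)=|\proj(ab)|\big(\sqrt{1+|\slope(ab)|^2}-1\big)\le\tfrac12|\slope(ab)|^2\cdot|\proj(ab)|$ (valid for all slopes, since $\sqrt{1+t}-1\le t/2$ for $t\ge0$). For the first edge, $s_0\in L_0(p)\cap\mathcal{E}_p$ and the chord $L_0(p)\cap\mathcal{E}_p$ is symmetric about the horizontal major axis $\{y=y(p)\}$ of $\mathcal{E}_p$, so $\slack(ps_0)\le|y(s_0)-y(p)|\le\tfrac12\wts(L_0(p)\cap\mathcal{E}_p)=O(\eps)$ by \Cref{lem:crosssection}. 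For a middle edge $s_{i-1}s_i$, $1\le i\le k-1$, \Cref{obs:spacing} gives $|\proj(s_{i-1}s_i)|=\mathrm{dist}(L_{i-1}(p),L_i(p))\le 2\cdot4^i\eps$ and \Cref{cor:crosssection} gives $|\slope(s_{i-1}s_i)|=O(2^{-i})$, whence $\slack(s_{i-1}s_i)\le\tfrac12\cdot O(4^{-i})\cdot2\cdot4^i\eps=O(\eps)$. For the last edge, $s_{k-1}\in\mathcal{E}_p$ forces $|y(s_{k-1})|=O(\sqrt\eps)$ (the minor axis of $\mathcal{E}_p$ is $O(\sqrt\eps)$, cf.\ the proof of \Cref{lem:crosssection}, and its major axis sits at height $|y(p)|\le\sqrt\eps$), while $|\proj(s_{k-1}s)|=x(s)-x(s_{k-1})\ge\tfrac13$ and $|\proj(s_{k-1}s)|\le 2$; hence $|\slope(s_{k-1}s)|=O(\sqrt\eps)$ and $\slack(s_{k-1}s)=O(\eps)$. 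Summing over the $k+1=O(\log\eps^{-1})$ edges gives $\sum_e\slack(e)=O(\eps\log\eps^{-1})$, and since $d(p,s)\ge1$ we conclude $\wts(\pi_{ps})\le d(p,s)+O(\eps\log\eps^{-1})\le\big(1+O(\eps\log\eps^{-1})\big)\cdot d(p,s)$.

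The one subtle point is the middle edges: bounding $\slack(s_{i-1}s_i)$ by the vertical separation $|y(s_i)-y(s_{i-1})|=O(2^i\eps)$ alone would sum to $O(2^k\eps)=O(\sqrt\eps)$, far too weak. The point is that consecutive Steiner points lie on lines spaced $\approx4^i\eps$ apart but are only $\approx2^i\eps$ apart vertically, so the slope decays geometrically like $2^{-i}$, and since slack is quadratic in slope it collapses to a flat $O(\eps)$ per edge, the $O(\log\eps^{-1})$ loss coming solely from the edge count. This is precisely why the line spacings in \Cref{obs:spacing} are chosen to grow like $4^i\eps$; it is also this choice (via \Cref{cor:spacing}) that keeps $L_{k-1}(p)$ within distance $\tfrac23$ of $p$, which we needed for $x$-monotonicity.
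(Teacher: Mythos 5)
Your proof is correct and takes essentially the same route as the paper: both rely on \Cref{lem:crosssection}, \Cref{obs:spacing}, \Cref{cor:crosssection}, and the slope-to-slack estimate, with the key computation $\slack(s_{i-1}s_i) \le \tfrac12 |\slope|^2 |\proj| = O(4^{-i}) \cdot O(4^i\eps) = O(\eps)$ for the middle edges. Your bookkeeping is a bit more uniform than the paper's (you make $x$-monotonicity explicit and charge every edge, including the first and last, a flat $O(\eps)$ slack against $|\proj(ps)|$, whereas the paper splits $\pi_{ps}$ into three parts and handles the first edge by bounding its full weight rather than only its slack), but the substance is identical.
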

\begin{proof}
By construction, we have $\pi_{ps}=(p,s_0,s_1,\ldots s_{k-1},s)$. We partition $\pi_{ps}$ into three parts: 
$ps_0$, $(s_0,s_1,\ldots s_{k-1})$ and $s_{k-1}s$, and bound the weight of each part separately. 

First we estimate the weight of the first edge $ps_0$. 
Recall that $p\in [0,1]\times [-\sqrt{\eps},\sqrt{\eps}]$ and $s=(2,0)$. By construction, we have $|x(p)-x(s_0)|\leq \eps$, and \Cref{lem:crosssection} gives $|y(p)-y(s_0)|\leq O(\eps)$. Therefore, we have $\wts(ps_0)=O(\eps)\leq O(\eps)\cdot d(p,s)$.

We can now bound the weight of the subpath $(s_0,s_1,\ldots , s_{k-1})$ of $\pi_{ps}$. We use \Cref{lem:slack}, \Cref{obs:spacing}, and \Cref{cor:crosssection} for each edge:
\begin{align*}
\wts((s_0,s_1,\ldots , s_{k-1})) 
    &=\sum_{j=1}^k \wts(s_{j-1} s_j)
     \leq \sum_{j=1}^{k-1} 
     \left(1+\frac{(\slope(s_{j-1}s_j))^2}{2}\right) \cdot |\proj(s_{j-1}s_j)| \\
    & = \sum_{j=1}^{k-1} |\proj(s_{j-1}s_j)|  + \sum_{j=1}^{k-1} O(4^{-j}) \cdot |\proj(s_{j-1}s_j)| \\
    & \leq |\proj(s_0 s_{k-1})| + \sum_{j=1}^{k-1} O(4^{-j} ) \cdot 2\cdot 4^j\cdot \eps
    =|\proj(s_0 s_{k-1})| + O(k\eps) \\ 
    &=|\proj(s_0 s_{k-1})| + O\left(\eps\log \eps^{-1}\right) 
    =|\proj(s_0 s_{k-1})|  \left(1+ O\left(\eps\log \eps^{-1}\right) \right) .
\end{align*}

Finally, we estimate the weight of the last edge, $s_{k-1}s$, of $\pi_{ps}$. Combining \Cref{cor:spacing} with $\eps=4^{-(k+1)}$, we have 
\begin{equation}\label{eq:dist}
    {\rm dist}(p,L_{k-1}(p)) < \frac{32}{3}\cdot 4^{k-1}\cdot \eps = \frac{2}{3\eps}\cdot \eps = \frac{2}{3} .
\end{equation}
Since $x(p)\in [0,1]$ and $s=(2,0)$, then $|x(s_{k-1})-x(s)|\geq \frac13$. \Cref{lem:crosssection} gives $|y(s_{k-1})-y(s)|=|y(s_{k-1})|\leq O(2^{k-1} \cdot \eps) = O\left(\frac{1}{\sqrt{\eps}} \cdot \eps\right) = O(\sqrt{\eps})$. Consequently, $\slope(s_{k-1}s)=O(\sqrt{\eps})$, and \Cref{lem:slack} yields 
$\wts(s_{k-1}s)\leq (1+O(\eps))\cdot |\proj(s_{k-1}s)|$.

The sum of the weights of the three parts is 
\begin{align*}
\wts(\pi_{ps}) & =
    \wts(ps)+ \wts((s_0,s_1,\ldots , s_{k-1}))  +\wts(s_{k-1}s) \\
    &\leq O(\eps)\cdot d(p,s) +
   \left(1+ O\left(\eps\log \eps^{-1}\right) \right) |\proj(s_0 s_{k-1})|   
    + (1+O(\eps))\cdot |\proj(s_{k-1}s)|\\
    &\leq O(\eps)\cdot d(p,s) +
   \left(1+ O\left(\eps\log \eps^{-1}\right) \right) |\proj(s_0 s)| \\
    &\leq    \left(1+ O\left(\eps\log \eps^{-1}\right) \right) d(p,s) . \qedhere
\end{align*}
\end{proof}

\paragraph{Weight analysis.}
Recall that we constructed a Steiner graph $G=\bigcup _{p\in P} \pi_{ps}$, and the final ST is a shortest-path tree rooted at $s$. We give an upper bound for $\wts(G)$. Similarly to the root stretch analysis, we decompose $\pi_{ps}$ into three parts, $ps_0$, $(s_0,s_1,\ldots ,s_{k-1})$, and $s_{k-1}s$, and then bound the weight of the union of each.

We first analyze the total weight of the union of edges $s_{i-1}(p)\, s_i(p)$ of the paths $\pi_{ps}$ over all $p\in P$. For lines $L_a\in \mathcal{L}_{i-1}$ and $L_b\in \mathcal{L}_i$, let $G(L_a,L_b)$ denote the set of all edges $uv\in E(G)$ such that $u\in I(L_a)$ and $v\in I(L_b)$. Recall that $\OPT_\eps$ is a minimum-weight Steiner $(1+\eps)$-ST for $P\cup \{s\}$ rooted at $s$ (i.e., $\wts(\OPT_\eps)=\opt_\eps$). We use the notation $\OPT_\eps\cap \llbracket L_a,L_b\rrbracket$ for the part of $\OPT_\eps$ clipped in the vertical strip $\llbracket L_a,L_b\rrbracket$. 

Our main lemma for the weight analysis is as follows.
\begin{lemma}\label{lem:weight}
For every $p\in P$ and $i\in \{1,\ldots , k-1\}$,
 we have 
 \[
    \wts(G(L_{i-1}(p),L_i(p)))\leq O(\wts(\OPT_\eps\cap \llbracket L_{i-1}(p),L_i(p)\rrbracket)).
\]
\end{lemma}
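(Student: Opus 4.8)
The plan is to bound the weight of all edges of $G$ that go from the line $L_{i-1}(p)$ to the line $L_i(p)$ (collected in $G(L_{i-1}(p),L_i(p))$) by relating them to the portion of $\OPT_\eps$ clipped to the vertical strip $\llbracket L_{i-1}(p),L_i(p)\rrbracket$. First I would recall the key structural facts: the endpoints of every edge in $G(L_{i-1}(p),L_i(p))$ lie in the hitting sets $H(L_{i-1}(p))$ and $H(L_i(p))$, and each edge connects a point of $H(L_{i-1}(p))\cap \mathcal{E}_q$ to a point of $H(L_i(p))\cap \mathcal{E}_q$ for some point $q\in P$ whose path $\pi_{qs}$ uses these two lines; by construction the lines $L_{i-1},L_i$ are at $x$-distance $\Theta(4^i\eps)$ apart (\Cref{obs:spacing}), and by \Cref{lem:crosssection} every such edge has vertical extent $O(2^i\eps)$, hence its weight is $\Theta(4^i\eps)$ up to constants, i.e., \emph{all edges in $G(L_{i-1}(p),L_i(p))$ have the same weight up to constant factors}. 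So the task reduces to bounding the \emph{number} of such edges by $O(1)$ times $\wts(\OPT_\eps\cap \llbracket L_{i-1}(p),L_i(p)\rrbracket)/(4^i\eps)$.

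Next I would exploit the minimality of the hitting sets. By definition, $|H(L_{i-1}(p))|$ is the minimum number of points needed to pierce all intervals in $I(L_{i-1}(p))=\{L_{i-1}(p)\cap \mathcal{E}_q : q\in P(L_{i-1}(p))\}$, and similarly for $L_i(p)$. The crucial observation is that $\OPT_\eps$ itself pierces all these intervals: for each $q\in P(L_{i-1}(p))$, the $qs$-path in $\OPT_\eps$ lies in $\mathcal{E}_{qs}\subseteq \mathcal{E}_q$ and must cross the vertical line $L_{i-1}(p)$ (since $q$ is to its left and $s$ to its right), so $\OPT_\eps\cap L_{i-1}(p)$ is a hitting set for $I(L_{i-1}(p))$; by minimality $|H(L_{i-1}(p))|\leq |\OPT_\eps\cap L_{i-1}(p)|$ — but since $\OPT_\eps$ may cross $L_{i-1}(p)$ in a continuum of points, the right statement is that the number of connected components of $\OPT_\eps\cap \llbracket L_{i-1}(p),L_i(p)\rrbracket$ that reach both boundary lines is at least the number of ``distinct'' piercing points needed, and this controls $|H(L_{i-1}(p))|$ up to $O(1)$. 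More precisely, I would argue that any $qs$-path in $\OPT_\eps$ that crosses the strip from $L_{i-1}(p)$ to $L_i(p)$ contributes weight at least $|\proj(\cdot)|\geq 4^i\eps$ (by \Cref{lem:charging} applied twice, or directly since the strip has width $\Theta(4^i\eps)$), so $\wts(\OPT_\eps\cap \llbracket L_{i-1}(p),L_i(p)\rrbracket) \geq \Omega(4^i\eps) \cdot (\text{number of strip-crossing components of } \OPT_\eps)$, and the latter count dominates both $|H(L_{i-1}(p))|$ and $|H(L_i(p))|$.

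The final step is a counting/charging argument: the edges in $G(L_{i-1}(p),L_i(p))$ form a bipartite graph between $H(L_{i-1}(p))\cap(\bigcup_q \mathcal{E}_q)$ and $H(L_i(p))\cap(\bigcup_q \mathcal{E}_q)$, and each point $q\in P$ contributes exactly one such edge to $\pi_{qs}$. But many points $q$ share the same pair $(s_{i-1},s_i)$ of endpoints, so the number of \emph{distinct} edges is at most $|H(L_{i-1}(p))|\cdot(\text{max degree})$, and I would show each hitting-set point has $O(1)$ incident edges in $G(L_{i-1}(p),L_i(p))$ — because the intervals $L_i\cap \mathcal{E}_q$ that a fixed point $s_{i-1}\in L_{i-1}$ can ``see'' (i.e., for which $s_{i-1}\in \mathcal{E}_q$ and $L_i(q)=L_i$, sharing line $L_{i-1}$) all have comparable length $\Theta(2^i\eps)$ and overlap in a bounded region, so a minimum hitting set restricted to them uses $O(1)$ points. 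Putting it together: $|G(L_{i-1}(p),L_i(p))| \leq O(1)\cdot |H(L_{i-1}(p))| \leq O(1)\cdot \wts(\OPT_\eps\cap\llbracket L_{i-1}(p),L_i(p)\rrbracket)/(4^i\eps)$, and multiplying by the uniform edge weight $\Theta(4^i\eps)$ gives the claim.

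The main obstacle I anticipate is the bounded-degree claim in the last step — controlling how many distinct edges of $G$ can be incident to a single hitting-set point $s_{i-1}\in H(L_{i-1}(p))$, and dually relating $|H(L_{i-1}(p))|$ cleanly to the geometric ``width'' of $\OPT_\eps$ in the strip rather than to a pointwise intersection count. This requires carefully using the geometry of the ellipses $\mathcal{E}_q$ (their near-parallel horizontal major axes, guaranteed by \Cref{lem:approxellipse} and the trapezoid normalization $\tau\subset[0,1]\times[-\sqrt\eps,\sqrt\eps]$) to show that intervals associated with a common pair of consecutive lines cluster into $O(1)$ groups, each piercable by one point. The rest — the uniform edge weight and the $\OPT_\eps$-crosses-the-strip lower bound — should be routine consequences of \Cref{obs:spacing}, \Cref{lem:crosssection}, and \Cref{lem:charging}.
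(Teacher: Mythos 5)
Your proof has the same three-part skeleton as the paper's: (a) every edge in $G(L_{i-1}(p),L_i(p))$ has weight $\Theta({\rm dist}(L_{i-1}(p),L_i(p)))=\Theta(4^i\eps)$ via the slope bound from \Cref{lem:crosssection} and \Cref{obs:spacing}; (b) each hitting-set point has $O(1)$ degree in this bipartite graph, so the number of edges is $O(|H(\cdot)|)$; and (c) $\wts(\OPT_\eps\cap\llbracket L_{i-1}(p),L_i(p)\rrbracket)\geq \Omega(|H(\cdot)|)\cdot 4^i\eps$. Steps (a) and (b) are essentially the paper's \Cref{lem:graph}; your bounded-degree sketch (intervals of comparable length confined to a window of comparable length, so a minimum hitting set puts only $O(1)$ points there) is the paper's argument transposed to the other partite side and is sound.

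The gap is in step (c), which you yourself flag as the main obstacle. You propose to lower-bound $\wts(\OPT_\eps\cap\llbracket L_{i-1}(p),L_i(p)\rrbracket)$ by $\Omega(4^i\eps)$ times the number of connected components of $\OPT_\eps$ clipped to the strip that reach both boundary lines, and then assert this component count dominates $|H(L_{i-1}(p))|$. That assertion is unsupported and in fact false in general: $\OPT_\eps$ is a Steiner tree, and a \emph{single} connected component of $\OPT_\eps\cap\llbracket L_{i-1}(p),L_i(p)\rrbracket$ can cross $L_{i-1}(p)$ arbitrarily many times and pierce many pairwise-disjoint intervals of $I(L_{i-1}(p))$ — picture a near-vertical subtree running along the strip and emitting many short branches across it — so the component count can be $1$ even when $|H(L_{i-1}(p))|$ is large. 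Such a wiggling component does carry extra weight, but a per-component charge of $\Omega(4^i\eps)$ does not capture it (the wiggle contributes only on the order of $|H|\cdot 2^{i-1}\eps$ in the vertical direction, not $|H|\cdot 4^i\eps$).

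The paper bridges this with a disjointness argument rather than a topological one (\Cref{lem:boundingbox} and \Cref{cor:boundingbox}): take a maximum independent set $Q_0$ of the intervals $\{L_i(p)\cap\mathcal{E}_q\}$, whose size equals $|H(L_i(p))|$ by interval-graph duality, then keep every $\lceil c\rceil$-th interval to obtain $Q$ with $|Q|=\Omega(|H(L_i(p))|)$ such that the axis-aligned bounding boxes $B_q$ of $\mathcal{E}_q\cap\llbracket L_{i-1}(p),L_i(p)\rrbracket$ are \emph{pairwise disjoint} (the box height is the interval length at $L_i(p)$, and the $Q$-intervals are separated by a constant multiple of that length). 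For each $q\in Q$, the $qs$-path of $\OPT_\eps$ lies inside $\mathcal{E}_q$, crosses the strip, and hence contributes weight at least ${\rm dist}(L_{i-1}(p),L_i(p))$ inside $B_q$; since the $B_q$ are disjoint, these contributions sum. This is the missing ingredient: you must extract $\Omega(|H|)$ disjoint regions inside the strip, each forcing its own portion of $\OPT_\eps$, rather than count connected components of $\OPT_\eps$ in the strip.
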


Before the proof of \Cref{lem:weight}, we show that it implies the bound on the total weight of the interior edges of the paths $\pi_{ps}$ over all $p\in P$.

\begin{corollary} \label{cor:interior}
We have
$\wts\left(\bigcup_{p\in P} (s_0(p),\ldots,  s_{k-1}(p)) \right)   
    \leq  O(\log \eps^{-1}) \cdot \wts(\OPT) = O(\log \eps^{-1}\cdot \opt_\eps)$.
\end{corollary}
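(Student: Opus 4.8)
\textbf{Proof plan for \Cref{cor:interior}.}
The plan is to sum the bound of \Cref{lem:weight} over all layers $i\in\{1,\ldots,k-1\}$, being careful about how many times an edge of $\OPT_\eps$ gets charged. First I would observe that every interior edge of a path $\pi_{ps}=(p,s_0(p),\ldots,s_{k-1}(p),s)$ is of the form $s_{i-1}(p)\,s_i(p)$ for some $i$, and its endpoints lie on the consecutive lines $L_{i-1}(p)\in\mathcal{L}_{i-1}$ and $L_i(p)\in\mathcal{L}_i$; hence it belongs to $G(L_{i-1}(p),L_i(p))$. Therefore
\begin{equation*}
\wts\!\left(\bigcup_{p\in P}(s_0(p),\ldots,s_{k-1}(p))\right)
\ \le\ \sum_{i=1}^{k-1}\ \sum_{\substack{L_a\in\mathcal{L}_{i-1},\, L_b\in\mathcal{L}_i}} \wts\big(G(L_a,L_b)\big).
\end{equation*}

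Next I would fix a layer $i$ and bound the inner double sum $\sum_{L_a,L_b}\wts(G(L_a,L_b))$ by $O(\wts(\OPT_\eps))$. By \Cref{lem:weight}, each term $\wts(G(L_a,L_b))$ with $L_a=L_{i-1}(p)$, $L_b=L_i(p)$ for some associated $p$ is at most $O(\wts(\OPT_\eps\cap\llbracket L_a,L_b\rrbracket))$, the weight of the part of $\OPT_\eps$ clipped to the vertical strip between $L_a$ and $L_b$. The key point is that these strips, over all pairs $(L_a,L_b)$ arising in layer $i$, have bounded overlap. Indeed, by \Cref{obs:spacing} the gap between $L_{i-1}(p)$ and $L_i(p)$ is in $[4^i\eps,\,2\cdot 4^i\eps]$, while the spacing of consecutive lines in $\mathcal{L}_{i-1}$ is $4^{i-1}\eps$; so each strip has width $\Theta(4^i\eps)$ and its left edge is one of the equally spaced lines of $\mathcal{L}_{i-1}$, hence a given point of the plane lies in only $O(1)$ such strips. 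Consequently $\sum_{L_a,L_b}\wts(\OPT_\eps\cap\llbracket L_a,L_b\rrbracket)\le O(\wts(\OPT_\eps))$ for each fixed $i$.

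Summing over the $k-1=O(\log\eps^{-1})$ layers then yields
\begin{equation*}
\wts\!\left(\bigcup_{p\in P}(s_0(p),\ldots,s_{k-1}(p))\right)\ \le\ \sum_{i=1}^{k-1} O(\wts(\OPT_\eps))\ =\ O(\log\eps^{-1})\cdot\wts(\OPT_\eps)\ =\ O(\log\eps^{-1}\cdot\opt_\eps),
\end{equation*}
which is the claim. The main obstacle I anticipate is the bounded-overlap argument for the strips: one must check that within a single layer the left boundaries $L_{i-1}(p)$ really do come from the fixed grid $\mathcal{L}_{i-1}$ (true by construction) and that the right boundary $L_i(p)$ is determined up to $O(1)$ choices by $L_{i-1}(p)$ (true since $L_i(p)$ is the second line of $\mathcal{L}_i$ right of $L_{i-1}(p)$), so that any vertical line meets only $O(1)$ of the strips $\llbracket L_{i-1}(p),L_i(p)\rrbracket$ occurring at level $i$; everything else is a routine summation. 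One should also note that the factor $O(\log\eps^{-1})$ here is exactly the price of using one Steiner point per level across $k=O(\log\eps^{-1})$ levels, matching the overview.
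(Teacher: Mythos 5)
Your proof is correct and matches the paper's argument: decompose the union of interior edges into the sets $G(L_{i-1}(p),L_i(p))$, apply \Cref{lem:weight} on each strip $\llbracket L_{i-1}(p),L_i(p)\rrbracket$, and use the bounded overlap of strips within a fixed level $i$ (width $\Theta(4^i\eps)$, grid spacing $4^{i-1}\eps$, and $L_i(p)$ determined by $L_{i-1}(p)$) to get an $O(\wts(\OPT_\eps))$ bound per level, then multiply by $k-1=O(\log\eps^{-1})$ levels. The paper phrases the bounded-overlap count by fixing the right boundary $L_b\in\mathcal{L}_i$ and counting the $O(1)$ admissible left boundaries $L_a\in\mathcal{L}_{i-1}$, whereas you fix the left boundary and note the right is determined; both give the same $O(1)$ multiplicity and the argument is otherwise identical.
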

\begin{proof}
For $i=1,\ldots, k-1$, let 
\[
    \mathcal{X}_i=\{ (L_{i-1}(p),L_i(p)) : p\in P\}
\]
and let $\mathcal{X}=\bigcup_{i=1}^{k-1}\mathcal{X}_i$. 
Note that for every $i\in \{1,\ldots ,k-1\}$ and 
every line $L_b\in \mathcal{L}_i$, there are only $O(1)$ lines $L_a\in \mathcal{L}_{i-1}$ such that $(L_a,L_b)\in \mathcal{X}_i$. 
Indeed, if $L_a=L_{i-1}(p)$ and $L_b=L_i(p)$ , then the distance between $L_a$ and $L_b$ is at most $2\cdot 4^i\cdot \eps$ by \Cref{obs:spacing}; and the distance between two consecutive lines in $\mathcal{L}_{i-1}$ is $4^{i-1}\cdot \eps$. 

Since $L_b$ is the right boundary of the strip $\llbracket L_a,L_b\rrbracket$, it follows that any point in the plane is contained in $O(1)$ strips $\llbracket L_a,L_b\rrbracket$ where $(L_a,L_b)\in \mathcal{X}_i$; and any point in the plane is contained in $O(k)=O(\log\eps^{-1})$
strips $\llbracket L_a,L_b\rrbracket$, where  $(L_a,L_b)\in \mathcal{X}$. 

Now we can apply \Cref{lem:weight}.
\begin{align*}
\wts\left(\bigcup_{p\in P} (s_0(p),\ldots,  s_{k-1}(p)) \right)  
    & = \wts\left( \bigcup_{i=1}^{k-1} \bigcup_{p\in P} s_{i-1}(p)s_i(p)\right)
    \leq \sum_{i=1}^{k-1}  \wts\left(  \bigcup_{p\in P} s_{i-1}(p)s_i(p)\right)\\
    &= \sum_{i=1}^{k-1} \sum_{(L_a,L_b)\in \mathcal{X}_i}\wts\left( G(L_a,L_b)\right)\\
    &\leq \sum_{i=1}^{k-1} \sum_{(L_a,L_b)\in \mathcal{X}_i} O(\wts(\OPT_\eps\cap \llbracket L_a,L_b\rrbracket))\\
    &\leq O(\log \eps^{-1}) \cdot \wts(\OPT_\eps). 
    \qedhere
\end{align*}
\end{proof}

It remains to prove \Cref{lem:weight}. We do this in a sequence of lemmas, using geometric properties of ellipses and interval graphs. We start with an easy observation.
\begin{observation}\label{obs:bracket}
    For every $p\in P$ 
    and every $i\in \{1,\ldots , k-1\}$, we have 
\[
\wts(\OPT_\eps\cap  \llbracket L_{i-1}(p),L_i(p)\rrbracket\cap \mathcal{E}_p) \geq {\rm dist}(L_{i-1}(p),L_i(p)). 
\]
\end{observation}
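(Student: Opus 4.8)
\textbf{Proof proposal for Observation~\ref{obs:bracket}.}
The plan is to apply the ``charging'' principle of \Cref{lem:charging} twice, once for each of the two vertical lines $L_{i-1}(p)$ and $L_i(p)$ that bound the strip, restricted to the relevant part of the optimal tree. First I would recall the setup: $\OPT_\eps$ contains a $ps$-path $\pi_{ps}$ of weight at most $(1+\eps)\cdot d(p,s)$, and since every point $q$ on this path satisfies $d(p,q)+d(q,s)\le (1+\eps)\cdot d(p,s)$, we have $\pi_{ps}\subseteq \mathcal{E}_{ps}\subseteq \mathcal{E}_p$ (using $\mathcal{E}_{ps}\subseteq\mathcal{E}_{pb(p),2\eps}=\mathcal{E}_p$ from \Cref{lem:approxellipse}). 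So it suffices to lower-bound $\wts(\pi_{ps}\cap \llbracket L_{i-1}(p),L_i(p)\rrbracket)$, since that quantity is at most $\wts(\OPT_\eps\cap \llbracket L_{i-1}(p),L_i(p)\rrbracket\cap \mathcal{E}_p)$.

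Next I would argue that $\pi_{ps}$ actually crosses the strip from its left boundary to its right boundary. Because $p$ lies strictly to the left of $L_{i-1}(p)$ (by construction, $L_{i-1}(p)$ is the second line of $\mathcal{L}_{i-1}$ to the right of $L_{i-2}(p)$, hence to the right of $p$), and $s=(2,0)$ lies to the right of $L_i(p)$ (by \eqref{eq:dist}, the distance from $p$ to $L_{k-1}(p)$ is at most $\tfrac23<1$, so $x(L_i(p))\le x(L_{k-1}(p))< x(p)+1\le 2=x(s)$ for all $i\le k-1$), the $x$-monotone-in-spirit path $\pi_{ps}$, which joins a point in the halfplane $x<x(L_{i-1}(p))$ to a point in the halfplane $x>x(L_i(p))$, must have a connected sub-arc whose endpoints lie on the two bounding lines $L_{i-1}(p)$ and $L_i(p)$. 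Concretely: let $q_1$ be the last point of $\pi_{ps}$ on $L_{i-1}(p)$ before the path first reaches $L_i(p)$, and $q_2$ the first point of $\pi_{ps}$ on $L_i(p)$ after $q_1$; the sub-arc of $\pi_{ps}$ from $q_1$ to $q_2$ lies entirely in the closed strip $\llbracket L_{i-1}(p),L_i(p)\rrbracket$.

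Then the length of that sub-arc is at least the Euclidean distance between its endpoints, which is at least the distance between the two parallel vertical lines, i.e. $|x(L_i(p))-x(L_{i-1}(p))| = {\rm dist}(L_{i-1}(p),L_i(p))$. Since this sub-arc is part of $\pi_{ps}\subseteq \mathcal{E}_p$ and lies in the strip, we get $\wts(\pi_{ps}\cap \llbracket L_{i-1}(p),L_i(p)\rrbracket\cap\mathcal{E}_p)\ge {\rm dist}(L_{i-1}(p),L_i(p))$, and the claim follows from $\pi_{ps}\subseteq \OPT_\eps$.

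I do not expect a serious obstacle here; this is a routine application of the triangle inequality / the fact that a connected arc joining two parallel lines is at least as long as the gap between them. The only point requiring minor care is verifying that both $p$ and $s$ genuinely lie on the correct sides of the two bounding lines — that $L_{i-1}(p)$ and $L_i(p)$ are sandwiched strictly between $p$ and $s$ — which follows from the explicit placement of the lines in $\mathcal{L}_{i-1},\mathcal{L}_i$ together with the bound ${\rm dist}(p,L_{k-1}(p))<\tfrac23$ established in \eqref{eq:dist}. Once that is in hand, the rest is immediate.
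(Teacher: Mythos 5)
Your proof is correct and follows essentially the same approach as the paper: use that $\OPT_\eps$ contains a $ps$-path $\pi_{ps}\subseteq\mathcal{E}_{ps}\subseteq\mathcal{E}_p$, note that this path crosses both $L_{i-1}(p)$ and $L_i(p)$ (since $p$ and $s$ lie on opposite sides of the strip), and lower-bound the weight of the connecting sub-arc lying inside the strip by the width of the strip. Your proposal is more explicit than the paper's one-line proof in pinning down exactly which sub-arc you take and in verifying that $p$ and $s$ are on the correct sides of the two lines, which is a reasonable extra bit of care; the one stylistic caveat is that the phrase ``$x$-monotone-in-spirit'' is not accurate (the path need not be $x$-monotone) and also unnecessary, since the connectedness/intermediate-value argument you actually use does not rely on monotonicity.
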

\begin{proof}
  Since $\OPT_\eps$ is a $(1+\eps)$-ST for $P$ rooted at $s$, it contains a $ps$-path of length at most $(1+\eps)\cdot d(p,s)$. Any such path lies in the ellipse $\mathcal{E}_{ps,\eps}\subset \mathcal{E}_{pb(p),2\eps}=\mathcal{E}_p$.
 Every $ps$-path crosses both $L_{i-1}(p)$ and $L_i(p)$. 
 Its subpath between the two closest intersection points with these two lines is contained in the strip $\llbracket L_{i-1}(p),L_i(p)\rrbracket$, and
 the weight of this subpath is at least ${\rm dist}(L_{i-1}(p),L_i(p))$.
\end{proof}

Recall that on every line $L\in \mathcal{L}$, 
$I(L)=\{L\cap \mathcal{E}_p: p\in P(L)\}$, and $H(L)$ is a minimum hitting set for the intervals $I(L)$. 

\begin{lemma}\label{lem:boundingbox}
For every $i\in \{1,\ldots , k-1\}$ and every $p\in P$, there exists a set $Q\subset P(L_{i-1}(p))$ of size $|Q|\geq \Omega(|H(L_{i}(p))|)$ 
such that the regions in $\{\mathcal{E}_q\cap \llbracket L_{i-1}(p),L_i(p)\rrbracket: q\in Q\}$ are disjoint. 
\end{lemma}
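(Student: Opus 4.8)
The plan is to exploit the optimality of the minimum hitting set $H(L_i(p))$ via LP duality / the classical fact that for interval systems the minimum hitting set equals the maximum packing of pairwise-disjoint intervals. Concretely, since $I(L_i(p))=\{L_i(p)\cap \mathcal{E}_q : q\in P(L_i(p))\}$ is a family of intervals on the line $L_i(p)$, a minimum hitting set for it has size equal to the maximum number of pairwise-disjoint intervals in $I(L_i(p))$ (Helly on the line / Gallai's theorem). So there is a set $Q_0\subseteq P(L_i(p))$ with $|Q_0|=|H(L_i(p))|$ such that the intervals $\{L_i(p)\cap \mathcal{E}_q : q\in Q_0\}$ are pairwise disjoint. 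Wait — the lemma asks for $Q\subset P(L_{i-1}(p))$, not $P(L_i(p))$, so the first step must be to pass from points whose $i$-th line is $L_i(p)$ to points whose $(i{-}1)$-th line is $L_{i-1}(p)$. By construction $L_i(q)$ is determined by $L_{i-1}(q)$, so all $q$ with $L_{i-1}(q)=L_{i-1}(p)$ share the same $L_i(q)=L_i(p)$; hence $P(L_{i-1}(p))\subseteq P(L_i(p))$, but not conversely. The right move is therefore to run the packing argument on the line $L_i(p)$ but then intersect with the $(i{-}1)$-level fibers: among the $|H(L_i(p))|$ disjoint intervals, those coming from points with a common $L_{i-1}$-line form one group, and by \Cref{obs:spacing} and \eqref{eq:lines} the spacing relations mean $O(1)$ consecutive lines in $\mathcal{L}_{i-1}$ feed into a single line in $\mathcal{L}_i$, so one of the $O(1)$ fibers contains $\Omega(|H(L_i(p))|)$ of the disjoint intervals — take $Q$ to be that fiber's contribution, giving $Q\subset P(L_{i-1}(p))$ (after possibly relabeling $p$ within its fiber) with $|Q|\ge \Omega(|H(L_i(p))|)$ and the intervals $L_i(p)\cap\mathcal{E}_q$, $q\in Q$, still pairwise disjoint.

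The second and more geometric step is to upgrade disjointness of the \emph{intervals on $L_i(p)$} to disjointness of the \emph{two-dimensional regions} $\mathcal{E}_q\cap \llbracket L_{i-1}(p),L_i(p)\rrbracket$. This is where the shape of the ellipses matters. Each $\mathcal{E}_q=\mathcal{E}_{qb(q),2\eps}$ has a horizontal major axis, focus $q$, and $d(q,b(q))$ within a constant factor of $1$; by \Cref{lem:crosssection} its vertical cross-section at horizontal distance $x$ from $q$ has width $\Theta(\sqrt{\eps x})$, which is monotone increasing in $x$ on the relevant range. All $q\in Q$ lie on the same vertical line $L_{i-1}(p)$ (so they share the same $x$-coordinate), and the strip $\llbracket L_{i-1}(p),L_i(p)\rrbracket$ has width $\Theta(4^i\eps)$ by \Cref{obs:spacing}. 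I will argue that within this strip the width of each $\mathcal{E}_q$ changes by only a constant factor (since $x$ ranges over an interval whose endpoints are within a constant factor of each other, by \Cref{cor:spacing}), so $\mathcal{E}_q\cap \llbracket L_{i-1}(p),L_i(p)\rrbracket$ is sandwiched between two axis-aligned rectangles of comparable height, and its vertical extent is $\Theta(\wts(L_i(p)\cap\mathcal{E}_q))$ offset by the vertical coordinate $y(q)$ of $q$ plus a $O(2^i\eps)$ slope term. Since the intervals $L_i(p)\cap \mathcal{E}_q$ are disjoint on $L_i(p)$, their vertical positions (governed by $y(q)$) are separated by at least the sum of half-widths; I then need a small calculation showing this separation persists across the whole strip — i.e., two ellipses disjoint on the right edge of the strip stay disjoint throughout — which follows because moving left only shrinks the cross-sections while the centers $y(q)$ stay fixed, so the gap can only grow as we move toward $L_{i-1}(p)$.

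The main obstacle I expect is the passage from $P(L_i(p))$ back to $P(L_{i-1}(p))$ while keeping the $\Omega(|H(L_i(p))|)$ lower bound: I need to be sure that some single $(i{-}1)$-level fiber retains a constant fraction of the packing, which relies on the $O(1)$-to-$1$ structure between $\mathcal{L}_{i-1}$ and $\mathcal{L}_i$ (each line in $\mathcal{L}_i$ is fed by $O(1)$ lines in $\mathcal{L}_{i-1}$, by comparing the spacings $4^{i-1}\eps$ and $4^i\eps$ to the gap bound $2\cdot 4^i\eps$ of \Cref{obs:spacing}) — this is exactly the same counting used in \Cref{cor:interior}, so it should go through cleanly. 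The secondary technical point is verifying that ``disjoint on the right edge $\Rightarrow$ disjoint in the whole strip,'' which I will handle by the monotonicity of vertical cross-section width established via \Cref{lem:crosssection} together with the bound $\wts(L_{i-1}(p)\cap \mathcal{E}_p)=\Theta(2^{i-1}\eps)$, showing the width at the left edge is within a constant factor of the width at the right edge, hence the interval-disjointness margin on $L_i(p)$ dominates the total width variation across the strip.
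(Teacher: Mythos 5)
Your Gallai-duality step and the geometric ``disjoint at $L_i(p)\Rightarrow$ disjoint across the strip'' step match the paper's argument. The paper also takes a maximum antichain $Q_0\subset P(L_i(p))$ of pairwise-disjoint intervals on $L_i(p)$ (of size $|H(L_i(p))|$ by interval duality), and then observes that the bounding box of $\mathcal{E}_q\cap\llbracket L_{i-1}(p),L_i(p)\rrbracket$ has $y$-extent exactly equal to the interval $L_i(p)\cap\mathcal{E}_q$, since the cross-section of $\mathcal{E}_q$ is maximal at the right edge of the strip (the minor axis is to the right of $L_i(p)$), so disjointness of the intervals yields disjointness of the regions. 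Your monotonicity-of-cross-section argument is the same observation phrased differently. One small slip: $q\in P(L_{i-1}(p))$ means $L_{i-1}(q)=L_{i-1}(p)$, not $x(q)=L_{i-1}(p)$; the points in $Q$ are to the left of that line, not on it. This does not affect your monotonicity reasoning, but the phrase ``all $q\in Q$ lie on the same vertical line'' is incorrect.

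The obstacle you flag --- getting $Q$ into $P(L_{i-1}(p))$ rather than $P(L_i(p))$ --- is real, and your proposed fiber-pigeonhole plus ``relabeling $p$'' does not resolve it, exactly for the reason you anticipate: relabeling changes the strip $\llbracket L_{i-1}(p),L_i(p)\rrbracket$, so the regions you produce are no longer in the strip the lemma is about. In fact the lemma as printed appears to be false in general: if $P(L_{i-1}(p))=\{p\}$ but $P(L_i(p))$ is large and spread out, no $Q\subset P(L_{i-1}(p))$ can have size $\Omega(|H(L_i(p))|)$. The paper's own proof does not attempt the passage you do: it takes $Q_0\subset P(L_i(p))$ and ends with $Q\subset Q_0\subset P(L_i(p))$, and the proof of \Cref{cor:boundingbox} explicitly writes ``yields a set $Q\subset L_i(p)\subset P$'' (meaning $P(L_i(p))$). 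So the $P(L_{i-1}(p))$ in the statement is almost certainly a typo for $P(L_i(p))$; with that correction the downstream use in \Cref{cor:boundingbox} still goes through, because every $q\in P(L_i(p))$ lies to the left of $L_i(p)$ at horizontal distance at least $4^i\eps$ from it (\Cref{cor:spacing}), so the $qs$-path in $\OPT_\eps$ contributes weight at least $4^i\eps\geq\tfrac12\,{\rm dist}(L_{i-1}(p),L_i(p))$ inside the strip even when $q$ lies inside the strip rather than left of $L_{i-1}(p)$. So the honest verdict is: your core argument is the paper's argument, you correctly identified the crux, but the ``fix'' you reach for is unnecessary (and does not work); the discrepancy is in the lemma statement, not in the proof strategy. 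Incidentally, the paper's own subsampling of $Q_0$ by ``every $\lceil c\rceil$-th interval'' is redundant for the same reason your monotonicity argument works: once the intervals on $L_i(p)$ are pairwise disjoint, so are the bounding boxes, with no further thinning needed.
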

\begin{proof}
Each interval in $I(L_{i}(p))$ has weight $\Theta(2^{i}\cdot \eps)$ by \Cref{lem:crosssection}. Let $c\geq 1$ be the ratio of the maximum to the minimum weight of an interval in $I(L_{i}(p))$ 

Each interval in $I(L_{i}(p))$ is of the form $L_{i}(p)\cap\mathcal{E}_r$ for some point $r\in P(L_{i}(p))$. For every $r\in P(L_{i}(p))$, let $B_r$ be the axis-aligned bounding box of $\mathcal{E}_r\cap \llbracket L_{i-1}(p),L_i(p)\rrbracket$; see \Cref{fig:packing}. Note that the major axis of $\mathcal{E}_r$ is horizontal, and its minor axis is to the right of $L_i$ by \Cref{eq:dist}. Consequently, for every vertical line $L\subset \llbracket L_{i-1}(p),L_i(p)\rrbracket$, we have $\wts(L\cap \mathcal{E}_r)\leq\wts(L_i(p)\cap \mathcal{E}_r)$, which holds with equality for $L=L_i(p)$. In particular, the height of $B_r$ is $\wts(L_i(p)\cap \mathcal{E}_r)=\Theta(2^i\cdot \eps)$. 

\begin{figure}[htbp]
    \centering
    \includegraphics[width=.85\textwidth]{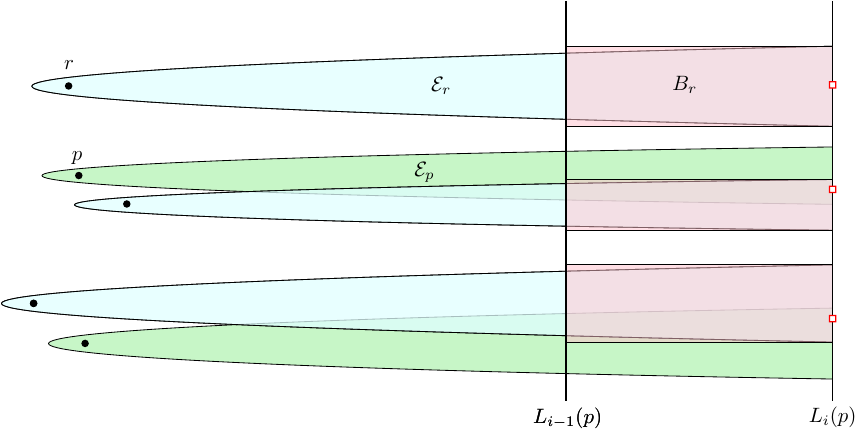}
    \caption{A point $p\in P$ and ellipses $\mathcal{E}_r$ for five points in $P(L_{i-1}(p))$. The boxes $B_r$ corresponding to the three light blue ellipses are disjoint. A minimum hitting set $H(L_{i}(p))$ has size 3.}
    \label{fig:packing}
\end{figure}

Now consider a maximum set $Q_0\subset P(L_{i}(p))$
such that the intervals $\{L_{i}(p)\cap \mathcal{E}_q: q\in Q_0\}$ are disjoint.
Note that $|Q_0|=|H(L_{i}(p))|$, that is, the maximum independent set has the same size as a hitting set for intervals in a line. 

A set of disjoint intervals along the line $L_{i}(p)$ have a well-defined total order, which defines a total order on $Q_0$. Let $Q\subset Q_0$ be the subset that corresponds to the first and every $\lceil c\rceil$-th interval.  Then $|Q_0| = \Theta(|Q|)=\Theta(|H(L_{i}(p))|)$.
Furthermore, the intervals 
$\{L_i(p)\cap \mathcal{E}_q: q\in Q\}$ are disjoint. This, in turn, implies that the boxes $\{B_q: q\in Q\}$ are disjoint. By the definition of $B_q$, this implies that the regions $\{\mathcal{E}_q\cap \llbracket L_{i-1}(p),L_i(p)\rrbracket : q\in Q\}$ are also disjoint. 
\end{proof}

\begin{corollary}\label{cor:boundingbox}
For every $p\in P$ and $i\in \{1,\ldots , k-1\}$,
 we have 
 \[
    \Omega(|H(L_i(p))|)\cdot {\rm dist}(L_{i-1}(p),L_i(p)) 
    \leq \wts \big(\OPT_\eps\cap \llbracket L_{i-1}(p),L_i(p)\rrbracket \big) .
\]
\end{corollary}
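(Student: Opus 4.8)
The plan is to obtain \Cref{cor:boundingbox} by directly combining the packing guarantee of \Cref{lem:boundingbox} with the per-ellipse lower bound of \Cref{obs:bracket}, using that $\OPT_\eps$ is treated as a polyline (a subset of the plane), so that the lengths of its portions lying in pairwise disjoint regions add up.

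Fix $p\in P$ and $i\in\{1,\ldots,k-1\}$. First I would apply \Cref{lem:boundingbox} to $p$ and $i$ to get a set $Q\subseteq P(L_{i-1}(p))$ with $|Q|\geq \Omega(|H(L_i(p))|)$ such that the regions $\mathcal{E}_q\cap \llbracket L_{i-1}(p),L_i(p)\rrbracket$, for $q\in Q$, are pairwise disjoint. The one bookkeeping point worth noting is that every $q\in P(L_{i-1}(p))$ satisfies $L_{i-1}(q)=L_{i-1}(p)$ by the definition of $P(\cdot)$, and since in the construction $L_i(\cdot)$ is determined recursively by $L_{i-1}(\cdot)$, this also forces $L_i(q)=L_i(p)$; hence for every $q\in Q$ the strip $\llbracket L_{i-1}(p),L_i(p)\rrbracket$ coincides with the strip $\llbracket L_{i-1}(q),L_i(q)\rrbracket$ associated with $q$, and ${\rm dist}(L_{i-1}(q),L_i(q))={\rm dist}(L_{i-1}(p),L_i(p))$.

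Next, for each $q\in Q$ I would invoke \Cref{obs:bracket} with $q$ in the role of $p$, which gives
\[
\wts\bigl(\OPT_\eps\cap \llbracket L_{i-1}(p),L_i(p)\rrbracket\cap \mathcal{E}_q\bigr)\;\geq\;{\rm dist}(L_{i-1}(p),L_i(p)).
\]
Since the regions $\mathcal{E}_q\cap \llbracket L_{i-1}(p),L_i(p)\rrbracket$, $q\in Q$, are pairwise disjoint subsets of the strip $\llbracket L_{i-1}(p),L_i(p)\rrbracket$, the corresponding portions of the polyline $\OPT_\eps$ are pairwise disjoint, so their weights sum to at most $\wts(\OPT_\eps\cap \llbracket L_{i-1}(p),L_i(p)\rrbracket)$. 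Summing the displayed inequality over $q\in Q$ therefore yields
\[
\wts\bigl(\OPT_\eps\cap \llbracket L_{i-1}(p),L_i(p)\rrbracket\bigr)\;\geq\;\sum_{q\in Q}\wts\bigl(\OPT_\eps\cap \llbracket L_{i-1}(p),L_i(p)\rrbracket\cap \mathcal{E}_q\bigr)\;\geq\;|Q|\cdot {\rm dist}(L_{i-1}(p),L_i(p)),
\]
and $|Q|\geq \Omega(|H(L_i(p))|)$ completes the proof.

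I do not expect a genuine obstacle: \Cref{cor:boundingbox} is essentially a one-line consequence of \Cref{lem:boundingbox} and \Cref{obs:bracket}. The only points that require a moment's care are that clipping the polyline $\OPT_\eps$ to pairwise disjoint plane regions produces pieces whose lengths are additive (immediate from the polyline set-up in \Cref{sec:pre}), and the identification, for every $q\in Q$, of the strip $\llbracket L_{i-1}(q),L_i(q)\rrbracket$ with $\llbracket L_{i-1}(p),L_i(p)\rrbracket$, which follows from the deterministic recursive choice of the lines $L_j(\cdot)$ and the fact that $Q\subseteq P(L_{i-1}(p))$.
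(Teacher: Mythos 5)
Your proof is correct and follows essentially the same route as the paper: obtain the packing set $Q$ from \Cref{lem:boundingbox}, apply \Cref{obs:bracket} to each $q\in Q$, and sum over the pairwise-disjoint regions inside the strip. Your extra bookkeeping point — that $Q\subset P(L_{i-1}(p))$ forces $L_{i-1}(q)=L_{i-1}(p)$ and, by the deterministic recursion, $L_i(q)=L_i(p)$, so the strip for $q$ in \Cref{obs:bracket} coincides with $\llbracket L_{i-1}(p),L_i(p)\rrbracket$ — is a genuinely useful clarification that the paper's terse proof glosses over.
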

\begin{proof}
    This follows immediately from the combination of \Cref{obs:bracket} and \Cref{lem:boundingbox}. Specifically, \Cref{lem:boundingbox}
    yields a set $Q\subset L_i(p)\subset P$ of point for which the regions 
    $\{\mathcal{E}_q\cap \llbracket L_{i-1}(p),L_i(p)\rrbracket: q\in Q\}$ 
    are disjoint. Each region $\{\mathcal{E}_q\cap \llbracket L_{i-1}(p),L_i(p)\rrbracket$ contains a path in $\OPT_\eps$ of weight 
    at least ${\rm dist}(L_{i-1}(p),L_i(p))$ by \Cref{obs:bracket}.
    Consequently, we obtain 
\begin{align*}
    \wts(\OPT_\eps \cap \llbracket L_{i-1}(p),L_i(p)\rrbracket)
    &\geq \sum_{q\in Q} \wts\left(\big(\OPT_\eps\cap \mathcal{E}_r\big)
    \cap \llbracket L_{i-1}(p),L_i(p)\rrbracket \right)\\
    &\geq |Q| \cdot {\rm dist}(L_{i-1}(p),L_i(p)) \\
    &\geq \Omega(|H(L_i(p)|)\cdot {\rm dist}(L_{i-1}(p),L_i(p)) .
     \qedhere
\end{align*}
\end{proof}

\begin{lemma}\label{lem:graph}
For every $p\in P$ and $i\in \{1,\ldots , k-1\}$,
 we have 
 \[
    \wts(G(L_{i-1}(p),L_i(p)))\leq O(|H(L_{i}(p)|)\cdot {\rm dist}(L_{i-1}(p),L_i(p)) .
\]
\end{lemma}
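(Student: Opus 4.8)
\textbf{Proof proposal for \Cref{lem:graph}.}

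The plan is to bound the total weight of the edge set $G(L_{i-1}(p),L_i(p))$ by separately counting the number of such edges and bounding the weight of each. First I would recall that every edge in $G(L_{i-1}(p),L_i(p))$ is of the form $s_{i-1}(q)s_i(q)$ for some point $q\in P$ with $L_{i-1}(q)=L_{i-1}(p)$ and $L_i(q)=L_i(p)$; in particular $s_{i-1}(q)\in H(L_{i-1}(p))$ and $s_i(q)\in H(L_i(p))$. So the number of distinct edges in $G(L_{i-1}(p),L_i(p))$ is at most $|H(L_{i-1}(p))|\cdot |H(L_i(p))|$, but this crude product is too large; the key observation is that in the path $\pi_{qs}$ the endpoint $s_{i-1}(q)$ is chosen as \emph{an arbitrary point} in $H(L_{i-1}(q))\cap \mathcal{E}_q$, and likewise $s_i(q)\in H(L_i(q))\cap \mathcal{E}_q$, so the edges actually present are constrained. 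The cleaner route is to bound $|E(G(L_{i-1}(p),L_i(p)))|\le O(|H(L_{i-1}(p))| + |H(L_i(p))|)$, using the fact that along a vertical line the hitting set points are spread out (consecutive points of $H(L)$ are $\Omega(2^i\eps)$ apart, by \Cref{lem:crosssection} applied to the intervals being hit), and that each Steiner point in $H(L_{i-1}(p))$ can be connected to only $O(1)$ Steiner points in $H(L_i(p))$ within any single ellipse $\mathcal{E}_q$ because the ellipse has bounded vertical extent $\Theta(2^i\eps)$ in the strip. Since the spacing between the two lines is $\mathrm{dist}(L_{i-1}(p),L_i(p))=\Theta(4^i\eps)$ by \Cref{obs:spacing}, and since both lines have the same order of spacing between consecutive hitting-set points, one also gets $|H(L_{i-1}(p))|=\Theta(|H(L_i(p))|)$ up to constants from the fact that the same family of ellipses (those of points in $P(L_{i-1}(p))=P(L_i(p))$) is being hit on both lines with the same asymptotic interval lengths.

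Next I would bound the weight of each individual edge $s_{i-1}(q)s_i(q)$. By \Cref{obs:spacing}, $|x(s_{i-1}(q))-x(s_i(q))| = |\mathrm{proj}(s_{i-1}(q)s_i(q))| \in [4^i\eps, 2\cdot 4^i\eps] = \Theta(\mathrm{dist}(L_{i-1}(p),L_i(p)))$; and by \Cref{cor:crosssection}, $|\slope(s_{i-1}(q)s_i(q))|\le O(2^{-i}) = O(1)$, so by Pythagoras $d(s_{i-1}(q),s_i(q)) = |\mathrm{proj}|\cdot\sqrt{1+\slope^2} = \Theta(|\mathrm{proj}|) = \Theta(\mathrm{dist}(L_{i-1}(p),L_i(p)))$. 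Therefore each edge weighs $O(\mathrm{dist}(L_{i-1}(p),L_i(p)))$. Combining the edge count $O(|H(L_i(p))|)$ with the per-edge weight $O(\mathrm{dist}(L_{i-1}(p),L_i(p)))$ gives the claimed bound.

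\medskip

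The step I expect to be the main obstacle is the edge-counting argument: controlling $|E(G(L_{i-1}(p),L_i(p)))|$ by $O(|H(L_i(p))|)$ rather than by the naive product $|H(L_{i-1}(p))|\cdot|H(L_i(p))|$. The argument must use both that each point $q$ contributes exactly one edge to this set (so the count is at most the number of relevant $q$'s, which in turn is controlled by the hitting-set structure — e.g., each interval $L_i(q)\cap\mathcal{E}_q$ is stabbed by $H(L_i(p))$, and disjoint such intervals number at most $O(|H(L_i(p))|)$ by the interval-graph duality used in \Cref{lem:boundingbox}) \emph{and} that geometrically the edges starting or ending at a fixed Steiner point, within the narrow strip, cluster into $O(1)$ distinct segments. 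I would be careful to invoke \Cref{lem:crosssection} for the $\Theta(2^i\eps)$ interval lengths, \Cref{obs:spacing}/\Cref{cor:spacing} for the horizontal gaps, and \Cref{cor:crosssection} for the slope bound, so that the ``$O(1)$ neighbors per Steiner point'' claim follows from comparing the $\Theta(2^i\eps)$ vertical extent of $\mathcal{E}_q$ in the strip against the $\Omega(2^i\eps)$ vertical spacing of hitting-set points. Once the edge count $O(|H(L_i(p))|)$ is established, the rest is the routine per-edge Pythagoras estimate above.
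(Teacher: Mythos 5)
Your per-edge weight bound is exactly the paper's: $|\proj(s_{i-1}(q)s_i(q))| = \Theta({\rm dist}(L_{i-1}(p),L_i(p)))$ by \Cref{obs:spacing}, $|\slope|\le O(1)$ by \Cref{cor:crosssection}, so each edge weighs $\Theta({\rm dist}(L_{i-1}(p),L_i(p)))$.

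The edge-count half, however, contains a genuine gap. You bound the degree of each $u\in H(L_{i-1}(p))$ and thus get $|E(G(L_{i-1}(p),L_i(p)))| = O(|H(L_{i-1}(p))|)$, and then try to convert to the stated $O(|H(L_i(p))|)$ via the assertion $|H(L_{i-1}(p))| = \Theta(|H(L_i(p))|)$, which you justify by claiming $P(L_{i-1}(p)) = P(L_i(p))$. That equality is false: the map $q\mapsto L_i(q)$ is determined by $L_{i-1}(q)$ (so $L_{i-1}(q)=L_{i-1}(p)$ forces $L_i(q)=L_i(p)$, giving $P(L_{i-1}(p))\subseteq P(L_i(p))$), but the converse fails because $O(1)$ distinct lines of $\mathcal{L}_{i-1}$ feed into one line of $\mathcal{L}_i$ — this is exactly the counting fact used in \Cref{cor:interior}. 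So $P(L_{i-1}(p))$ is in general a strict subset. Moreover, even if the point sets were equal, the intervals on the two lines are not of ``the same asymptotic interval lengths'': \Cref{lem:crosssection} gives $\Theta(2^{i-1}\eps)$ on $L_{i-1}(p)$ versus $\Theta(2^i\eps)$ on $L_i(p)$, and a minimum hitting set for shorter intervals on a subset of the ellipses is not a priori within a constant factor of a minimum hitting set for longer intervals on a superset. You would need a separate argument, which you do not give.

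The paper avoids the whole issue by bounding the degree from the $H(L_i(p))$ side, which is the side that appears in the statement. Fix $v\in H(L_i(p))$. Any $r$ for which $s_i(r)=v$ satisfies $v\in L_i(p)\cap\mathcal{E}_r$, and by the reflection symmetry of $\mathcal{E}_r$ in its horizontal major axis this forces $|y(v)-y(r)|\le\Theta(2^i\eps)$; consequently the candidate neighbors $s_{i-1}(r)\in L_{i-1}(p)\cap\mathcal{E}_r$ all lie in a fixed interval $I\subset L_{i-1}(p)$ of length $\Theta(2^i\eps)$ centered (in $y$) at $v$. Since $H(L_{i-1}(p))$ is a minimum hitting set for intervals of length $\Omega(2^{i-1}\eps)$, at most $O(1)$ of its points lie in $I$ (the ``no three within a minimum interval length'' spacing argument, as in \Cref{clm:weight_size_bound}). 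So $v$ has $O(1)$ neighbors in $G(L_{i-1}(p),L_i(p))$, giving $|E| = O(|H(L_i(p))|)$ directly, with no need to compare the two hitting-set cardinalities. Your argument should be re-oriented to this side.
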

\begin{proof}
We first show that every edge in $G(L_{i-1}(p),L_i(p))$
has weight $\Theta({\rm dist}(L_{i-1}(p),L_i(p)))$.
Every edge of $G(L_{i-1}(p),L_i(p))$ is the edge $s_{i-1}(r)s_i(r)$ of $\pi_{rs}$ for some point $r\in P(L_{i-1}(p))$. By \Cref{cor:crosssection},
$|\slope(s_{i-1}(r)s_i(r))|\leq O(2^{-i})$, where $i=O(\log_4\eps^{-1})$. Consequently, we have
$|\slope(s_{i-1}(r)s_i(r))|\leq O(1)$ for all $i=1,\ldots, k-1$. Therefore, 
\[
    \wts(s_{i-1}(r) s_i(r))
    \leq O(1)\cdot |\proj(s_{i-1}(r) s_i(r))| 
    \leq O( {\rm dist}(L_{i-1}(p),L_i(p))).
\]

Next, we show that every vertex $v\in H(L_i(p))$ of $G(L_{i-1}(p),L_i(p))$ has $O(1)$ degree. \Cref{lem:crosssection} gives $\wts(L_{i-1}(p)\cap \mathcal{E}_r)= \Theta(2^i)$ and $\wts(L_{i}(p)\cap \mathcal{E}_r)= \Theta(2^{i-1})$ for all $r\in P(L_{i-1}(p))$. Note also that both $L_{i-1}(p)\cap \mathcal{E}_r$ and $L_{i}(p)\cap \mathcal{E}_r$ have a reflection symmetry in the horizontal major axis of $\mathcal{E}_r$.
This implies that $\bigcup \{L_{i}(p)\cap \mathcal{E}_r: r\in P(L_{i}(p)) \mbox{ \rm and } v \in \mathcal{E}_r\}$ is contained in an interval of length $2\cdot \Theta(2^{i}) =\Theta(2^i)$ centered at $v$. Similarly, $\bigcup \{L_{i-1}(p)\cap \mathcal{E}_r: r\in P(L_{i-1}(p)) \mbox{ \rm and } v \in \mathcal{E}_r\}$ is contained in an interval $I\subset L_{i-1}$ of length $2\cdot \Theta(2^{i-1}) =\Theta(2^i)$. Since $\wts(L_{i-1}(p)\cap \mathcal{E}_r)= \Theta(2^{i})$ for all $r\in P(L_{i-1}(p))$, then the hitting set $H(L_{i-1}(p))$ contains $O(1)$ points in the interval $I$. In the graph $G(L_{i-1}(p),L_i(p))$, all neighbors of vertex $v$ are in the interval $I$ and in $H(L_{i-1}(p))$. This proves that $v$ has $O(1)$ neighbors in $G(L_{i-1}(p),L_i(p))$.

Overall, every edge of $G(L_{i-1}(p),L_i(p))$ has weight $\Theta( {\rm dist}(L_{i-1}(p),L_i(p)))$; and every vertex $v\in H(L_{i}(p))$ of $G(L_{i-1}(p),L_i(p))$ has $O(1)$ degree. Since $G(L_{i-1}(p),L_i(p))$ is a bipartite graph with partite sets $H(L_{i-1}(p))$ and $H(L_{i}(p))$, then the total weight of $G(L_{i-1}(p),L_i(p))$ is $O(|H(L_{i}(p)|)\cdot {\rm dist}(L_{i-1}(p),L_i(p))$, as claimed.
\end{proof}

We are now ready to prove the key lemma of the weight analysis. 

\begin{proof}[Proof of \Cref{lem:weight}.]
We combine \Cref{lem:graph} and \Cref{cor:boundingbox}. 
For every $p\in P$ and $i\in \{1,\ldots , k-1\}$,
 we obtain
\begin{align*}
\wts(G(L_{i-1}(p),L_i(p)))
    &\leq O(|H(L_{i-1}(p)|)\cdot {\rm dist}(L_{i-1}(p),L_i(p)) \\
    &\leq O(\wts(\OPT\cap \llbracket L_{i-1}(p),L_i(p)\rrbracket)). \qedhere
\end{align*} 
\end{proof}

Finally, we summarize the weight analysis in the following lemma.

\begin{lemma}
    For an $\eps$-cnet $P\subset [0,1]\times [-\sqrt{\eps},\sqrt{\eps}]$ and $s=(0,2)$,
    the weight of the Steiner graph $G$ is 
    $O(\log \eps^{-1})\cdot \opt_{\eps}$, where 
    $\opt_{\eps}$ denotes the minimum weight of a Steiner $(1+\eps)$-ST for $P$.
\end{lemma}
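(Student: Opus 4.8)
The plan is to decompose the Steiner graph $G=\bigcup_{p\in P}\pi_{ps}$, where $\pi_{ps}=(p,s_0(p),\ldots,s_{k-1}(p),s)$, into three edge sets matching the three-part split already used in the root-stretch analysis: the \emph{first edges} $E_{\mathrm{first}}=\{p\,s_0(p):p\in P\}$, the \emph{interior edges} $E_{\mathrm{int}}=\{s_{i-1}(p)\,s_i(p):p\in P,\ 1\le i\le k-1\}$, and the \emph{last edges} $E_{\mathrm{last}}=\{s_{k-1}(p)\,s:p\in P\}$. By subadditivity $\wts(G)\le\wts(E_{\mathrm{first}})+\wts(E_{\mathrm{int}})+\wts(E_{\mathrm{last}})$. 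The interior part is exactly what \Cref{cor:interior} bounds: $\wts(E_{\mathrm{int}})=O(\log\eps^{-1})\cdot\opt_\eps$. So the whole lemma reduces to showing that the two ``end effects'' $\wts(E_{\mathrm{first}})$ and $\wts(E_{\mathrm{last}})$ are each $O(\opt_\eps)$, after which the claimed bound follows by summing the three contributions.

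For the first edges, \Cref{lem:crosssection} with $i=0$ gives $\wts(L_0(p)\cap\mathcal{E}_p)=\Theta(\eps)$, and since $\mathcal{E}_p$ is symmetric about the horizontal line through $p$ while $x(L_0(p))-x(p)\le 2\eps$, each edge $p\,s_0(p)$ has weight $O(\eps)$; hence $\wts(E_{\mathrm{first}})\le|P|\cdot O(\eps)$. To see this is $O(\opt_\eps)$, I would lower-bound $\opt_\eps$ by the weight of an MST: because $P$ is a centered $\eps$-net and every point of $P\cup\{s\}$ is at distance $\Theta(1)$ from $s$, every pair of distinct points of $P\cup\{s\}$ is $\Omega(\eps)$ apart, so every edge of $\mst(P\cup\{s\})$ has length $\Omega(\eps)$ and $\wts(\mst(P\cup\{s\}))=\Omega(|P|\eps)$. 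The optimal Steiner $(1+\eps)$-ST $\OPT_\eps$ is a connected geometric graph spanning $P\cup\{s\}$, so the standard doubling-and-shortcutting argument gives $\wts(\mst(P\cup\{s\}))\le 2\,\opt_\eps$; combining, $\opt_\eps=\Omega(|P|\eps)=\Omega(\wts(E_{\mathrm{first}}))$.

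For the last edges, the key point is that there are only $O(1)$ distinct lines of the form $L_{k-1}(p)$, $p\in P$: by \Cref{cor:spacing} and \eqref{eq:dist}, ${\rm dist}(p,L_{k-1}(p))\in[4^{k-1}\eps,\tfrac{2}{3}]$ while $x(p)\in[0,1]$, so all these lines lie in the vertical strip with $x\in[4^{k-1}\eps,\tfrac{5}{3}]$, and consecutive lines of $\mathcal{L}_{k-1}$ are $4^{k-1}\eps=\tfrac{1}{16}$ apart. Each last edge $s_{k-1}(p)\,s$ has weight $O(1)$, since $|x(s_{k-1}(p))-2|\le 2$ and, by \Cref{lem:crosssection}, $|y(s_{k-1}(p))|=O(2^{k-1}\eps)=O(\sqrt{\eps})$. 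For each of the $O(1)$ lines $L$ that occur as some $L_{k-1}(p)$, the endpoints $s_{k-1}(p)$ of last edges on $L$ all lie in the hitting set $H(L)$, so it suffices to bound $|H(L)|$; picking a witness $p$ with $L_{k-1}(p)=L$ and applying \Cref{cor:boundingbox} with $i=k-1$ (using ${\rm dist}(L_{k-2}(p),L_{k-1}(p))=\Theta(1)$ from \Cref{obs:spacing}) yields $|H(L)|=O(\opt_\eps)$. Summing over the $O(1)$ such lines gives $\wts(E_{\mathrm{last}})=O(\opt_\eps)$. Here we may assume $\eps$ is small enough that $k\ge 2$ so that \Cref{cor:boundingbox} applies, the case of constant $\eps$ being already handled by the existential bound.

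The main (mild) obstacle is the first-edge bound: unlike the interior edges, which are charged directly to strips of $\OPT_\eps$ through \Cref{lem:weight}, the first edges must be charged only indirectly, via the chain ``$\eps$-net separation $\Rightarrow$ MST lower bound $\Rightarrow$ Steiner-ratio bound $\Rightarrow$ lower bound on $\opt_\eps$''. The last-edge bound is similar in spirit but easier, resting on the geometric fact that the terminal lines $L_{k-1}(p)$ collapse to a constant number of lines near $s$. Putting the three pieces together gives $\wts(G)\le O(\opt_\eps)+O(\log\eps^{-1})\cdot\opt_\eps+O(\opt_\eps)=O(\log\eps^{-1})\cdot\opt_\eps$, as claimed.
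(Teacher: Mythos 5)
Your decomposition into first, interior, and last edges is exactly the one the paper uses, and your treatment of the first edges and interior edges matches the paper's (the paper states the first-edge bound tersely and relies on the same $\eps$-cnet separation plus MST/Steiner-ratio argument you spell out; \Cref{cor:interior} handles the interior edges identically).

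The only place you diverge is the last edges. The paper argues that $|H(L_{k-1}(p))| = O(1)$ outright: by \Cref{lem:crosssection}, every interval $L_{k-1}(p)\cap\mathcal{E}_q$ has length $\Theta(\sqrt{\eps})$, and since all points $q\in P$ satisfy $|y(q)|\le\sqrt{\eps}$, these intervals are all contained in a window of length $O(\sqrt{\eps})$ on $L_{k-1}(p)$, so $O(1)$ piercing points suffice. This gives $\wts(E_{\mathrm{last}}) = O(1)$, and then ${\rm dist}(P,s)\ge 1 \Rightarrow \opt_\eps\ge 1$ finishes. You instead bound $|H(L_{k-1}(p))|$ via \Cref{cor:boundingbox}, getting the weaker estimate $|H(L_{k-1}(p))| = O(\opt_\eps)$, which still closes the argument but goes through the optimal solution rather than pure geometry. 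Both are valid; the paper's is a bit shorter and makes the absolute bound $\wts(E_{\mathrm{last}})=O(1)$ explicit, whereas yours is more uniform with the machinery used for the interior edges. Your caveat about $k\ge 2$ is appropriate and harmless.
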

\begin{proof}
We have shown that for every $p\in P$, the first edge of $\pi_{ps}$ has weight $O(\eps)$. Consequently, the total weight of the union of first segments of $\pi_{ps}$ over all $p\in P$ is $O(\eps\cdot |P|)$. Since $P$ is an $\eps$-cnet, then this is bounded by $O(\opt_\eps)$. 

By \Cref{cor:interior}, the
 total weight of the interior edges of the paths $\pi_{ps}$ over all $p\in P$, namely, $\wts\left(\bigcup_{p\in P} (s_0(p),\ldots,  s_{k-1}(p)) \right)$ 
is bounded by $O(\log \eps^{-1}) \cdot \opt_\eps$.

Finally, we bound the total weight of the last edges of the paths $\pi_{ps}=(p,s_0(p),\ldots , s_{k-1}(p),s)$ for all $p\in P$. By \Cref{obs:spacing},
there are $O(1)$ vertical lines in $\mathcal{L}_{k-1}$ between $o=(0,0)$ and $s=(2,0)$. By \Cref{cor:crosssection}, we have $\wts(\mathcal{E}_p\cap L_{k-1}(p))=\Theta(\sqrt{\eps})$. Consequently, each piercing set $H(L_{k-1}(p))$ has $O(1)$ size. 
Overall, there are $O(1)$ distinct segments in $\{s_{k-1}(p)s: p\in P\}$, and the weight of each segment is $O(1)$. Consequently, the total weight of these segments is also $O(1)$. Since ${\rm dist}(P,s)\geq 1$, then $\wts(OPT_\eps)\geq 1$, and so $\wts( \{s_{k-1}(p)s: p\in P\} ) \leq O(1)\leq O(\wts(\OPT_\eps))$. 
\end{proof}

\paragraph{Running time.} 
Let $P$ be a set of $n$ points in the plane, including a source $s\in P$, and let $\eps>0$.
The $n$ points in $P$ can be assigned to tiles in $O(n)$ time (based on the direction and length of the line segment $ps$). By \Cref{thm:nonSteiner-reduction} and \Cref{lem:net}, we can reduce the problem to $\eps$-cnets in a tile in $O(n\log n)$ time. 
If $P_i$ is a $\eps$-cnet in a tile $\tau_i$, then $n_i=|P_i|=O(\eps^{-2})$ by a straightforward volume argument; but we also have $\sum_{i} n_i \leq \sum_i |P\cap \tau_i|\leq n$. We show that our single-tile algorithm runs in $O(n_i \log n_i\cdot {\rm polylog}(\eps^{-1}))$ time for $P_i$. Summation over all tiles gives $O(n \log n\cdot {\rm polylog}(\eps^{-1}))$ time. By  \Cref{thm:nonSteiner-reduction}, the overall running time of for the entire point set is $O(n \log n\cdot{\rm polylog}(\eps^{-1}))$. 

Consider an $\eps$-cnet $P_i$ of size $n_i$ in a tile $\tau_i$. Our algorithm associates each point $p\in P_i$ to $O(\log \eps^{-1})$ vertical lines $L_0(p),\ldots , L_{k-1}(p)$. These lines can be computed recursively from the $x$-coordinate of $p$, in $O(1)$ time per line, hence in $O(n_i\log \eps^{-1})$ time overall. Each point $p\in P_i$ contributes an interval $\mathcal{E}_p\cap L$ to an instance of the hitting set problem along each associated line $L$. The total number of intervals over all instances is $O(n_i\log \eps^{-1})$, but each instance involves $O(n_i)$ intervals. The hitting set problem on $m$ intervals in a line can be solved exactly in $O(m\log m)$ time; so we can solve all hitting set instances in $O(n_i\log n_i\log \eps^{-1})$ total time. The total size of all hitting sets is $O(n_i\log \eps^{-1})$.
We construct the Steiner graph $G$ by connecting Steiner points in consecutive levels. We have shown (\Cref{lem:graph}) that every Steiner point has bounded degree in $G$, so $G$ can be computed in $O(n_i\log \eps^{-1})$ time. Finally, we return a shortest-path tree $H$ rooted at $s$ in $G$, which can be computed in $O((n_i \log \eps^{-1})\log (n_i \log \eps^{-1})) \leq O(n_i\log n_i\cdot {\rm polylog}(\eps^{-1}))$ time.

\section{Shallow Trees for Points in a Tile}
\label{sec:general-sector}

In this section, we prove \Cref{thm:main}. By \Cref{thm:nonSteiner-reduction} and \Cref{lem:net},
we may reduce to the problem of finding a tile-restricted $(1 + \eps)$-ST for points in a tile $\tau$. 
Let $P$ be a set of $n$ points in the plane, $\tau\in \mathcal{T}$ a tile  (defined in \Cref{ssec:SteinerReduction}), and $N_\tau$ an $\eps$-cnet for the point set $P\cap \tau$.
We first present an algorithm that constructs a net-restricted ST $T$ for $N_\tau\subset P$. Then we show that $T$ is a net-restricted $(1+O(\eps\log \eps^{-1}))$-ST 
and its weight is $O(\opt\cdot \log^2(\eps^{-1}))$, where $\opt$ denotes the minimum weight of a net-restricted $(1+\eps)$-ST for $N_\tau\subset P$; and let $\OPT$ be one such a $(1+\eps)$-ST.

\begin{lemma}
    Let $\opt(\tau)$ be the weight of the lightest net-restricted $(1 + \eps)$-ST for $N_\tau \cup \{s\}$. There is a polynomial time algorithm returns a net-restricted $(1 + \eps \cdot \log(\eps^{-1}))$ ST $T$ for $N_\tau \cup \{s\}$ of weight  $O(\log^2(\eps^{-1})\cdot \opt(\tau))$.   
\end{lemma}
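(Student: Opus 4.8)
The plan is to mimic the Steiner construction of \Cref{sec:sector}, replacing each Steiner point by a point of $P$ inside the corresponding region. By \Cref{thm:nonSteiner-reduction} and \Cref{lem:net} I normalize so that $N_\tau\subseteq[0,1]\times[-\sqrt\eps,\sqrt\eps]$, $s=(2,0)$, and $\eps=4^{-(k+1)}$ with $k=O(\log\eps^{-1})$. Since $\OPT$ is net-restricted, for each $p\in N_\tau$ it contains a $ps$-path $\pi^\ast_p$ of weight $\le(1+\eps)d(p,s)$ with all vertices in $P$, so $\pi^\ast_p\subset\mathcal{E}_p:=\mathcal{E}_{pb(p),2\eps}$. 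I keep the line families $\mathcal{L}_i$ and the per-point lines $L_0(p),\dots,L_{k-1}(p)$ of \Cref{sec:sector}, and for $i\in\{1,\dots,k-1\}$ I let $R_i(p)$ be the axis-parallel bounding box of $\mathcal{E}_p\cap\llbracket L_{i-1}(p),L_i(p)\rrbracket$; by \Cref{obs:spacing}, \Cref{cor:spacing}, \Cref{lem:crosssection} and \eqref{eq:dist} this box spans the full width of the strip and has width $\Theta(4^i\eps)$ and height $\Theta(2^i\eps)$. Call $R_i(p)$ \emph{empty} if it contains no point of $P$.

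Since $L_i(p)$ is determined by $L_{i-1}(p)$ (\Cref{obs:spacing}), the boxes $R_i(p)$ with a common value $L_{i-1}(p)=L$ share a vertical strip $\llbracket L,L'\rrbracket$; projecting the nonempty ones to the $y$-axis I stab them with a small set $H_{i,L}\subseteq P\cap\llbracket L,L'\rrbracket$, computed by the greedy interval point-cover (after thinning to $O(1)$ points per grid row, as in the bounded-degree analysis of \Cref{lem:graph}); set $H_i=\bigcup_L H_{i,L}$. For each $p$ I build a tentative path $p\to q_{j_1}(p)\to\cdots\to s$ through the levels $j_1<j_2<\cdots$ with $R_{j_\ell}(p)$ nonempty, picking $q_{j_\ell}(p)\in H_{j_\ell}\cap R_{j_\ell}(p)$, and simply bridging over empty boxes. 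Then I apply a \emph{skipping step} in the spirit of \Cref{lem:skipping}: whenever two consecutive chosen points are closer in $x$ than a constant fraction of the width of the later one's strip, I delete the earlier point and reconnect; this terminates with a path $\pi_{ps}$ whose consecutive $x$-gaps are $\Omega(4^j\eps)$ relative to the larger of the two endpoint levels $j$, while (by the triangle inequality and a geometric estimate over the bridged runs) the union $G=\bigcup_{p\in N_\tau}\pi_{ps}$ grows in weight by only a constant factor. The output is $T=\mathsf{SPT}(G,s)$, and $N_\tau\cup\{s\}\subseteq V(T)\subseteq P$; all steps are polynomial-time (in fact near-linear up to $\mathrm{polylog}(\eps^{-1})$ factors).

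The root-stretch argument copies \Cref{lem:rootstretch}: after skipping, an edge of $\pi_{ps}$ whose right endpoint lies in a level-$j$ box has $y$-gap $O(2^j\eps)$ (a geometric sum of box heights) and $x$-gap $\Omega(4^j\eps)$, hence slope $O(2^{-j})$ and, by \Cref{lem:slack}, slack $O(\eps)$; the first and last edges are bounded exactly as in \Cref{lem:rootstretch}. Summing over the $\le k$ edges yields $\wts(\pi_{ps})\le(1+O(\eps\log\eps^{-1}))\,d(p,s)$, so $T$ is a net-restricted $(1+O(\eps\log\eps^{-1}))$-ST.

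For the weight I decompose $G$ by level and then by strip, as in \Cref{cor:interior}. Inside a level-$i$ strip $\llbracket L,L'\rrbracket$: as in \Cref{lem:graph}, bounded slopes and bounded degrees give total edge-weight $O(|H_{i,L}|\cdot 4^i\eps)$; and, mirroring \Cref{lem:boundingbox}/\Cref{cor:boundingbox}, the greedy hitting set satisfies $|H_{i,L}|\le O(1)$ times the maximum number of pairwise-disjoint nonempty boxes $R_i(q)$ in that strip --- the odd-indexed triggering intervals of the greedy are pairwise disjoint, and disjoint $y$-intervals yield disjoint boxes since the $x$-strip is common --- each such box being crossed by $\pi^\ast_q\subset\mathcal{E}_q$ with weight $\ge\dist(L,L')=\Omega(4^i\eps)$; hence $|H_{i,L}|\cdot 4^i\eps\le O(\wts(\OPT\cap\llbracket L,L'\rrbracket))$. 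The bridging edges are charged separately via the \emph{empty-rectangle observation}: $\pi^\ast_p$ has no vertex in an empty $R_i(p)$, so it crosses it by a single edge of length $\ge\Omega(4^i\eps)$. Summing the hitting-set charges over the $O(1)$ level-$i$ strips through each point and over the $k=O(\log\eps^{-1})$ levels accounts for one $\log\eps^{-1}$ factor, while the bridging-edge accounting, in which a single edge of $\OPT$ may be charged once for each of the $O(\log\eps^{-1})$ scales it straddles, accounts for the second; altogether $\wts(T)\le\wts(G)\le O(\log^2\eps^{-1})\cdot\opt(\tau)$. The crux --- and the main obstacle --- is exactly this last part: because the vias are forced to lie in $P$, some boxes are unservable, and one must argue simultaneously that bridging them preserves the geometric growth of the $x$-gaps (so the $(1+\eps)$-stretch is not destroyed, via the skipping step) and that every resulting hole is paid for by a spatially separated, scale-localized piece of $\OPT$ --- the scale-by-scale nature of this charging is what costs the extra $\log\eps^{-1}$ relative to the Steiner bound of \Cref{thm:Steiner}.
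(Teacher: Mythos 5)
Your overall approach mirrors the paper's --- per-point vertical lines, strip bounding boxes, hitting sets from $P$ for the nonempty boxes, candidate paths bridging over empty boxes, a skipping pass, and an $\OPT$-charging argument via the empty-rectangle observation, with a stretch argument that also tracks the paper's. The gap is in the weight analysis, at the very step you call ``the crux.'' You assert that the skipping pass increases the weight of $\bigcup_p\pi_{ps}$ by only a constant factor, supported by a one-phrase appeal to the triangle inequality. This is not proven, and it conflicts with the paper's \Cref{lem:skipping}, which shows only $\wts(G')\leq O(\log\eps^{-1})\cdot\wts(G)$ and needs the structural bounds \Cref{clm:bounded-1-deg}, \Cref{cor:all-level-deg}, and \Cref{clm:bdd_max_len} even for that. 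The loss is real: a vertex $w$ can be the right endpoint of shortcut edges created at $\Theta(\log\eps^{-1})$ distinct levels (one per candidate triple $(u,v,w)$ per level, by \Cref{cor:all-level-deg}), each of weight $\Theta(\wts(v,w))$, and the underlying edges $(v,w)$ cannot be discarded from the union because other candidate paths may still traverse them; the shortcuts are thus genuinely extra weight of order $\log\eps^{-1}\cdot\wts(G)$.

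This also unravels your final arithmetic. Your two advertised sources of $\log\eps^{-1}$ --- summing hitting-set charges over levels, and each edge of $\OPT$ being charged by bridging edges at $O(\log\eps^{-1})$ scales --- bound two disjoint subsets of the edges of the pre-skipping graph $G$, so they add, they do not multiply: together they give $\wts(G)\leq O(\log\eps^{-1})\cdot\opt(\tau)$, which is exactly the paper's \Cref{lem:bound_weight}. The multiplication by a second $\log\eps^{-1}$ comes solely from passing from $G$ to the pruned union $G'$, i.e.\ from the skipping pass, which is the one cost you claimed is $O(1)$. If that claim were true you would be proving the stronger bound $O(\log\eps^{-1})\cdot\opt(\tau)$, inconsistent with your own stated conclusion; since it is unproven, you need a \Cref{lem:skipping}-style bound for your pruning rule (which triggers on small $x$-gap and deletes the \emph{earlier} vertex, rather than the paper's adjacent-iteration trigger that removes the \emph{later} one) and must carry the resulting $\log\eps^{-1}$ loss explicitly through the calculation.
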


\paragraph{Shallow tree construction algorithm.}
Similar to \Cref{sec:sector}, we may assume w.l.o.g.\ that $\tau\subset [0,1]\times [-\sqrt{\eps},\sqrt{\eps}]$ and $s=(2,0)$ (as in \Cref{fig:sector}).
For each point $p\in N_\tau$, we use the same set of vertical lines $L_0(p), L_1(p), \ldots$ defined in~\Cref{eq:lines}. For each point $p$ and each integer $i$, let $\mathcal{E}_{p}(i)$ denote the intersection of $\mathcal{E}_{ps}$ with the strip between the two lines $L_{i}(p)$ and $L_{i + 1}(p)$. Let $B_{i}(p)$ be the smallest axis-parallel rectangle that contains $\mathcal{E}_{p}(i)$. The point $p$ is associated with the rectangles $B_0(p), B_1(p), \ldots ,B_k(p)$ with $k = O(\log{\eps^{-1}})$. A rectangle $R$ \emph{lies on} two distinct vertical lines $L_1$ and $L_2$ if two of its vertical sides are contained in $L_1$ and $L_2$. 

We construct minimum hitting sets in $k$ iterations. Starting from $i = 0$, let $\mathcal{R}_i$ be the set of rectangles $\{B_i(p): p \in P_\tau\}$. Consider every pair of lines $L_{i} \in \mathcal{L}_i, L_{i + 1} \in \mathcal{L}_{i + 1}$ such that there are some rectangles in $\mathcal{R}_i$ lying on $L_i$ and $L_{i + 1}$. Let $\mathcal{R}_i(L_i, L_{i + 1})$ be the set of rectangles in $\mathcal{R}_i$ lying on $L_i$ and $L_{i + 1}$ that have nonempty intersection with $P$. We find the minimum hitting set of $\mathcal{R}_i(L_i, L_{i + 1})$ from $P$.
For each point $u$ in the hitting set, let $\iter_p(u) = i$ for every $p \in P(L_i)$ (and define $\iter_p(s) := k + 1$ for all $p\in P_\tau$).

We build a graph $G$ on $P$ as follows: For every point $p\in N_\tau$, let $s_0, s_1, \ldots s_{h}$ be a sequence of hitting points so that for $i=0,1,\ldots h$, if $B_i(p)\cap P\neq \emptyset$, then we choose an arbitrary point from the minimum hitting set of $\mathcal{R}(L_i(p),L_{i+1}(p))$ that hits $B_i(p)$ (otherwise no point is chosen from $B_i(p)$. We add the path $(p, s_0, s_1, \ldots ,s_{h}, s)$ to $G$. The path $(p, s_0, s_1, \ldots ,s_{h}, s)$ is called the \EMPH{candidate path} of $p$ in $G$.

For each point $p \in N_\tau$, let $H(p) = (p, s_0, s_1, \ldots s_{h}, s)$ be the candidate path of $p$  in $G$. We prune the path $H(p)$ to eliminate some of the vertices: Initialize $H'(p):=H(p)$. While there exist two consecutive interior vertices $v$ and $v'$ in $H'(p)$ such that $\iter_p(v') = \iter_p(v) + 1$, then let the first such pair be $v,v'$ and eliminate $v'$ from $H'(p)$. 
Assume that $H'(p) = (p, s'_0, s'_1, \ldots s'_{l}, s)$ is the result of the pruning process. Observe that $H'(p)$ is a path from $p$ to $s$. We call $H'(p)$ the \EMPH{approximate path} of $p$. Let $G'$ be the union of paths $H'(p)$ for all $p\in N_\tau$. We return the shortest-path tree $\mathsf{SPT}(G')$ of $G'$ rooted at $s$.

\paragraph{Stretch Analysis. }
Let $p$ be a point in $N_{\tau}$, and let $\pi = (p, s_0, s_1, \ldots, s_{h - 1}, s)$ be the approximate path of $p$ in $G'$. Then, we have $\iter_p(s_j) + 1 < \iter_p(s_{j+1})$ for every index $j$. We show that this implies $\slope(s_is_{i + 1}) \leq O(2^{-i})$. \begin{claim}
    For any index $j$, let $\iter_p(s_j) = i_1$ and $\iter_p(s_{j + 1})= i_2$ with $i_2 > i_1 + 1$. Then, $\slope(s_j s_{j + 1}) = O(2^{-j})$.
\end{claim}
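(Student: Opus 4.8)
The plan is to bound $\slope(s_j s_{j+1}) = |y(s_{j+1})-y(s_j)|\,/\,|x(s_{j+1})-x(s_j)|$ by estimating the horizontal and vertical gaps between $s_j$ and $s_{j+1}$ separately. The three ingredients I would use are: (i) $s_j$ lies in the rectangle $B_{i_1}(p)$, which spans the vertical strip $\llbracket L_{i_1}(p),L_{i_1+1}(p)\rrbracket$ and has height $O(2^{i_1}\eps)$, and likewise $s_{j+1}\in B_{i_2}(p)$ with $i_2\ge i_1+2$ (the gap of at least $2$ between the iteration indices of consecutive interior vertices is precisely what the pruning step enforces, which is why the hypothesis $i_2>i_1+1$ always holds along the approximate path); (ii) the line spacing from \Cref{obs:spacing} and \Cref{cor:spacing}; and (iii) the cross-section estimate $\wts(L_i(p)\cap\mathcal{E}_{ps})=\Theta(2^i\eps)$, which I would get by sandwiching $\mathcal{E}_{ps}$ between the two horizontal-axis ellipses of \Cref{lem:approxellipse} and reusing the affine argument behind \Cref{lem:crosssection}. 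Since $\iter_p$ strictly increases along the approximate path from $\iter_p(s_0)\ge 0$, we have $i_1=\iter_p(s_j)\ge j$, so it suffices to prove $\slope(s_j s_{j+1})=O(2^{-i_2})$ and then note $2^{-i_2}<2^{-i_1}\le 2^{-j}$.

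For the horizontal gap, I would observe that $x(s_j)$ is at most the $x$-coordinate of the right side of $B_{i_1}(p)$, which lies on $L_{i_1+1}(p)$, while $x(s_{j+1})$ is at least the $x$-coordinate of the left side of $B_{i_2}(p)$, which lies on $L_{i_2}(p)$; since the lines $L_0(p)<L_1(p)<\dots$ are increasing and $i_2\ge i_1+2$, this gives
\[
  |x(s_{j+1})-x(s_j)|\;\ge\;\dist\bigl(L_{i_1+1}(p),L_{i_2}(p)\bigr)\;=\;\sum_{\ell=i_1+2}^{i_2}\dist\bigl(L_{\ell-1}(p),L_\ell(p)\bigr)\;\ge\;\sum_{\ell=i_1+2}^{i_2}4^{\ell}\eps\;\ge\;4^{i_2}\eps,
\]
using \Cref{obs:spacing}. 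The point that needs care here is to telescope the per-line spacings, rather than subtracting the coarse bounds on $\dist(p,L_{i_1+1}(p))$ and $\dist(p,L_{i_2}(p))$ coming from \Cref{cor:spacing}, which would already be too lossy when $i_2=i_1+2$.

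For the vertical gap, I would first show that every box $B_i(p)$ used by the algorithm has height $O(2^i\eps)$. If the right side $L_{i+1}(p)$ of the strip is left of the midpoint of $ps$ (which is the center of $\mathcal{E}_{ps}$), then the vertical chord length of $\mathcal{E}_{ps}$ is nondecreasing across the strip, so the height is $\wts(L_{i+1}(p)\cap\mathcal{E}_{ps})$ plus the vertical drift $O\bigl(|\slope(ps)|\cdot\dist(L_i(p),L_{i+1}(p))\bigr)=O(4^i\eps^{3/2})$ of the nearly horizontal major axis, and both terms are $O(2^i\eps)$ because $2^i\sqrt\eps=O(1)$ for the lines the algorithm actually uses; otherwise $\dist(p,L_{i+1}(p))\ge\tfrac12$ forces $i$ to be large, and the height, being at most the minor axis $O(\sqrt\eps)$, is again $O(2^i\eps)$. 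The same drift estimate shows the midheight of $B_i(p)$ lies within $O(2^i\eps)$ of $y(p)$, hence $|y(s_\ell)-y(p)|=O(2^{\iter_p(s_\ell)}\eps)$ for $\ell\in\{j,j+1\}$, and the triangle inequality gives $|y(s_{j+1})-y(s_j)|=O(2^{i_1}\eps)+O(2^{i_2}\eps)=O(2^{i_2}\eps)$.

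Combining the two gaps yields $\slope(s_j s_{j+1})\le O(2^{i_2}\eps)\,/\,(4^{i_2}\eps)=O(2^{-i_2})=O(2^{-j})$. I expect the vertical-gap estimate to be the main obstacle: one must transfer the cross-section bound from the auxiliary horizontal-axis ellipses of \Cref{lem:approxellipse} to $\mathcal{E}_{ps}$ itself (including checking that the vertical chord of a tilted ellipse is still maximized at the center's $x$-coordinate), and verify that the vertical drift of the tilted major axis across a single strip is absorbed by the $O(2^i\eps)$ chord length; by contrast the horizontal-gap bound and the index inequality $i_1\ge j$ are routine given \Cref{obs:spacing} and the pruning invariant.
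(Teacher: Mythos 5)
Your proposal is correct and follows the same approach as the paper's one-line argument: lower-bound the horizontal gap by telescoping the line spacings from $L_{i_1+1}(p)$ to $L_{i_2}(p)$ (your observation that the coarse per-line bounds from \Cref{cor:spacing} would be too lossy when $i_2=i_1+2$ is a genuine clarification of a point the paper leaves implicit), upper-bound the vertical gap by $O(2^{i_2}\eps)$, and divide. The detour through vertical drift of the tilted major axis is unnecessary: the containment $\mathcal{E}_{ps}\subset\mathcal{E}_p$ from \Cref{lem:approxellipse}, together with the fact that $\mathcal{E}_p$ is symmetric about the horizontal line $y=y(p)$, already places every point of $B_i(p)$ within $O(2^i\eps)$ of $y(p)$ by \Cref{lem:crosssection}, which is all you use.
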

\begin{proof}
    From \Cref{lem:crosssection}, $|x(s_{j + 1}) - x(s_{j})| = \Theta\left(\sum_{i_1 < i < i_2}4^i\eps\right) = \Theta(4^{i_2}\eps)$ and $|y(s_{j + 1}) - y(s_{j})|$ = $\Theta(2^{i_2}\eps - 2^{i_1}\eps) = \Theta(2^{i_2}\eps)$. Thus, $\slope(s_j s_{j + 1}) = O(2^{-i_2}) = O(2^{-j})$ since $i_2 > j$.
\end{proof}

Using the same proof as \Cref{lem:rootstretch}, we obtain the following lemma.
\begin{lemma}
    $\wts(\pi) \leq (1 + \eps \cdot \log{\eps^{-1}})\cdot d(u, s)$.
\end{lemma}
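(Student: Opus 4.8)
The plan is to run the proof of \Cref{lem:rootstretch} almost verbatim, using the slope bound just established (the Claim) in place of \Cref{cor:crosssection}. Write $\pi=(p,s_0,\ldots,s_{h-1},s)$ for the approximate path of $p$ (the point called $u$ in the statement, so $d(u,s)=d(p,s)$), and set $i_j:=\iter_p(s_j)$, so that $s_j\in B_{i_j}(p)$ and $0\le i_0<i_1<\cdots<i_{h-1}\le k$. Since the rectangles $B_0(p),\ldots,B_k(p)$ have pairwise disjoint $x$-ranges that increase with the index, $\pi$ is $x$-monotone; hence $\wts(\pi)=\sum_{e\in\pi}\big(|\proj(e)|+\slack(e)\big)=|\proj(ps)|+\sum_{e\in\pi}\slack(e)$, and since $|\proj(ps)|=x(s)-x(p)\le d(p,s)$ and $d(p,s)\ge 1$, it suffices to show that each of the $h+1=O(\log\eps^{-1})$ edges of $\pi$ has slack $O(\eps)$.

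I would bound the interior edges first. For an edge $s_js_{j+1}$, the pruning step guarantees $i_{j+1}\ge i_j+2$, so the Claim gives $|\slope(s_js_{j+1})|=O(2^{-i_{j+1}})$; moreover $s_j$ and $s_{j+1}$ lie in strips at $x$-distance $O(4^{i_{j+1}}\eps)$ from $p$ by \Cref{cor:spacing}, so $|\proj(s_js_{j+1})|=O(4^{i_{j+1}}\eps)$, and \Cref{lem:slack} yields $\slack(s_js_{j+1})\le\tfrac12|\slope(s_js_{j+1})|^2|\proj(s_js_{j+1})|=O(\eps)$. For the first edge $ps_0$, I would use $\mathcal{E}_{ps}\subseteq\mathcal{E}_{pb(p),2\eps}=\mathcal{E}_p$ (\Cref{lem:approxellipse}): the vertical cross sections of $\mathcal{E}_p$ over $\llbracket L_{i_0}(p),L_{i_0+1}(p)\rrbracket$ are centered on the horizontal line through $p$ and have height $O(2^{i_0}\eps)$ by \Cref{lem:crosssection}, so $|y(s_0)-y(p)|=O(2^{i_0}\eps)$ while $|\proj(ps_0)|=\Theta(4^{i_0}\eps)$; hence $|\slope(ps_0)|=O(2^{-i_0})$ and $\slack(ps_0)=O(\eps)$ (and directly when $i_0=O(1)$, since then $|\proj(ps_0)|=O(\eps)$).

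For the last edge $s_{h-1}s$ I would repeat the argument from \Cref{lem:rootstretch}: from $i_{h-1}\le k$ and $2^{k}\eps=2^{k}4^{-(k+1)}=\Theta(\sqrt\eps)$, together with $|y(p)|\le\sqrt\eps$, the cross-section bound gives $|y(s_{h-1})|\le|y(p)|+O(2^{i_{h-1}}\eps)=O(\sqrt\eps)$; and \Cref{eq:dist}, with $x(p)\in[0,1]$ and $s=(2,0)$, places the strips associated with $p$ at $\Omega(1)$ distance to the left of $s$, so $|x(s_{h-1})-x(s)|=\Omega(1)$; hence $|\slope(s_{h-1}s)|=O(\sqrt\eps)$ and $\slack(s_{h-1}s)=O(\eps)$. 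Summing the $O(\log\eps^{-1})$ slacks gives $\wts(\pi)\le|\proj(ps)|+O(\eps\log\eps^{-1})\le\big(1+O(\eps\log\eps^{-1})\big)d(p,s)$, as claimed.

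The hard part will be the last edge. In \Cref{lem:rootstretch} the last interior vertex lies exactly on the line $L_{k-1}(p)$, which is provably more than $\tfrac13$ to the left of $s$; here $s_{h-1}$ only lies in the rectangle $B_{i_{h-1}}(p)$, whose right side can sit close to (or even right of) $x(s)$, so one must exploit that $B_{i_{h-1}}(p)$ is the bounding box of $\mathcal{E}_{ps}$ \emph{clipped to its strip} — and that $\mathcal{E}_{ps}$ is thin near the focus $s$ — to control $|x(s_{h-1})-x(s)|$ and $|y(s_{h-1})|$ simultaneously. A secondary point to check is that the pruning really terminates with every pair of consecutive interior vertices of $\pi$ at $\iter$-gap at least $2$, which is exactly what makes the Claim applicable to each interior edge.
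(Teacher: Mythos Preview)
Your proposal is correct and follows exactly the paper's approach: the paper's own proof of this lemma is literally the one sentence ``Using the same proof as \Cref{lem:rootstretch}, we obtain the following lemma,'' and you have spelled out precisely that argument, substituting the Claim's slope bound for \Cref{cor:crosssection}. Your identification of the last-edge subtlety and the pruning invariant are reasonable points of care that the paper itself glosses over, but they do not require any new ideas beyond what you have sketched.
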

Since $\mathsf{SPT}(G')$ is the shortest-path tree of $G'$, it also contains a $ps$-path of weight at most $(1+O(\eps\log \eps^{-1}))\cdot d(p,s)$ for all $p\in N_\tau$.

\paragraph{Weight Analysis. } 
We first show that $\wts(G)\leq O(\opt\cdot \log\eps^{-1})$ (\Cref{lem:bound_weight}),
and then prove that $\wts(G')\leq O(\wts(G)\cdot \log\eps^{-1})$ (\Cref{lem:skipping}).
Combined with the trivial bound $\wts(\mathsf{SPT}(G'))\leq \wts(G')$, we 
obtain $\wts(\mathsf{SPT}(G'))\leq O(\opt \cdot \log^2\eps^{-1}))$, as required. 
We first show the following claims:
\begin{claim}\label{cl:perlevel}
    Let $\llbracket L_i, L_{i+1} \rrbracket$ be a strip for some $L_i \in \mathcal{L}_i$ and $L_{i + 1} \in \mathcal{L}_{i + 1}$. Assume that $X\subseteq N_{\tau}(L_i)$ and a point $v\in P\cap \llbracket L_i, L_{i+1} \rrbracket$ lies in the ellipse $\mathcal{E}_{x}$ for all $x\in X$. Then for every $i'\geq i$, the strip $\llbracket L_{i'}(x), L_{i'+1}(x) \rrbracket$ contains at most $O(1)$ points that are in some candidate paths in $G$ from points in $X$ to $s$. 
\end{claim}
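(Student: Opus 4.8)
The plan is to reduce the claim to a bound on the number of \emph{distinct hitting points} the algorithm selects inside the relevant strip, and then combine two geometric localization estimates with the structure of one‑dimensional interval hitting. First I would note that, since the vertical lines are chosen recursively from $L_i(x)=L_i$, the pair $L_{i'}(x),L_{i'+1}(x)$ is the same for every $x\in X$; write $L_{i'},L_{i'+1}$ for it, so the strip in the statement is well defined. For each $x\in X$, the candidate path $(x,s_0,\dots,s_h,s)$ of $x$ in $G$ contains, for each level $j$, at most one vertex inside $B_j(x)\subseteq\llbracket L_j,L_{j+1}\rrbracket$ — namely the hitting point chosen from $B_j(x)$, when $B_j(x)\cap P\neq\emptyset$ — and none otherwise. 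Since distinct level‑$j$ strips overlap only along the lines of $\mathcal{L}_j$, at most one interior vertex of this path lies in the open strip $\llbracket L_{i'},L_{i'+1}\rrbracket$, and the only vertices that can lie on its bounding lines come from levels $i'-1,i',i'+1$. Hence it suffices to bound, over all $x\in X$, the number of distinct level‑$j$ hitting points for $j\in\{i'-1,i',i'+1\}$; all three cases are handled identically, so I focus on the level‑$i'$ points $\{s_{i'}(x):x\in X\}$, where $s_{i'}(x)$ denotes the point the algorithm picks from $B_{i'}(x)$.

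Next I would localize $X$ geometrically. The map carrying a point to $L_i(\cdot)$ is a nondecreasing step function of the $x$‑coordinate, so $\{x(x):x\in X\}$ lies in an interval; by \Cref{cor:spacing}, $\dist(x,L_i)=\Theta(4^i\eps)$, so this interval has length $O(4^i\eps)$. For the $y$‑coordinates I would exploit the common witness $v$: because $v$ lies in every ellipse $\mathcal{E}_x$ and in the level‑$i$ strip, a slack estimate in the spirit of \Cref{lem:slack} — writing $w$ for the foot of the perpendicular from $v$ onto the line $xs$ and using $\slack(xv)+\slack(vs)\le\eps\,d(x,s)$ together with $\dist(x,v)=\Theta(4^i\eps)$ and $d(x,s)=\Theta(1)$ — shows $v$ is within vertical distance $O(2^i\eps)$ of the line $xs$; since $xs$ has slope $O(\sqrt\eps)$ and $x(v)-x(x)=O(4^i\eps)$, this forces $|y(x)-y(v)|=O(2^i\eps)$, hence $|y(x)-y(x')|=O(2^i\eps)$ for all $x,x'\in X$.

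Then I would transfer this to level $i'$ and count. Each box $B_{i'}(x)$ spans the full strip $\llbracket L_{i'},L_{i'+1}\rrbracket$ horizontally (as $\mathcal{E}_x$ is convex and contains the segment $xs$, which crosses both lines) and has height $\Theta(2^{i'}\eps)$ by \Cref{lem:crosssection} (transported to $\mathcal{E}_x$ through \Cref{lem:approxellipse}); its vertical center differs by $O(2^{i'}\eps)$ from the height of the near‑horizontal line $xs$ over $L_{i'}$, which is $y(x)+O(\sqrt\eps)\cdot\dist(x,L_{i'})=y(x)+O(4^{i'}\eps^{3/2})$. Since $4^{i'}\eps^{3/2}\le 2^{i'}\eps$ in the relevant range $2^{i'}\le\eps^{-1/2}$, and $y(x),x(x)$ vary over $X$ only by $O(2^i\eps)$ and $O(4^i\eps)$, all these centers lie within $O(2^{i'}\eps)$ of one another, so $\bigcup_{x\in X}B_{i'}(x)$ is contained in a band $\llbracket L_{i'},L_{i'+1}\rrbracket\times J^*$ with $|J^*|=O(2^{i'}\eps)$. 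Now every rectangle of $\mathcal{R}_{i'}(L_{i'},L_{i'+1})$ spans the strip horizontally and has height $\ge c\cdot 2^{i'}\eps$ for an absolute constant $c$ (\Cref{lem:crosssection}), so the minimum‑hitting‑set instance is equivalent to a one‑dimensional interval‑hitting problem in which every interval has length $\ge c\cdot 2^{i'}\eps$; computing it by the standard greedy left‑to‑right sweep (as the cited algorithms do) selects points whose consecutive $y$‑coordinates differ by more than $c\cdot 2^{i'}\eps$, so only $O(|J^*|/(c\cdot 2^{i'}\eps))=O(1)$ of them fall in $J^*$. Since each $s_{i'}(x)$ lies in $B_{i'}(x)\subseteq\llbracket L_{i'},L_{i'+1}\rrbracket\times J^*$, this gives $|\{s_{i'}(x):x\in X\}|=O(1)$; the identical band argument at levels $i'-1$ and $i'+1$ bounds the vertices on $L_{i'}$ and $L_{i'+1}$, and the degenerate top level (where $\llbracket L_{k-1},L_k\rrbracket$ may contain $s$) is handled by the direct $O(1)$ counting used in the last‑level weight analysis of \Cref{sec:sector}.

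The step I expect to be the main obstacle is the propagation in the second and third paragraphs: certifying that a single shared point $v$ at level $i$ pins the whole fan of ellipses $\{\mathcal{E}_x:x\in X\}$ tightly enough at every higher level $i'\ge i$. The delicate point is that the $O(4^i\eps)$ and $O(2^i\eps)$ spreads in the coordinates of $X$ must stay below the $\Theta(2^{i'}\eps)$ scale of a level‑$i'$ cross‑section; this holds because $i'\ge i$ and $4^i\eps^{3/2}\le 2^i\eps$ whenever $2^i\le\eps^{-1/2}$, but verifying it cleanly requires the slack estimate above together with careful bookkeeping of the tilted‑ellipse geometry.
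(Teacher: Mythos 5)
Your overall plan matches the structure of the paper's proof — localize the level-$i'$ rectangles of $X$ to a band of height $O(2^{i'}\eps)$ and then count hitting points in that band — but you establish the localization by a genuinely different route. The paper observes that $\bigcap_{x\in X}\mathcal{E}_x$ is convex and contains both the common witness $v$ and $s$ (every ellipse contains $s$), hence contains the segment $vs$; since $v$ lies left of $L_{i'}$ and $s$ right of $L_{i'+1}$ for $i'\geq i$, this segment crosses every such strip, so the rectangles $B_{i'}(x)$ (each of height $\Theta(2^{i'}\eps)$ by \Cref{lem:crosssection}) all share a common point and their projections onto the vertical line form a nested-at-a-point family whose union has length $\Theta(2^{i'}\eps)$. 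You instead derive explicit spread bounds $O(4^i\eps)$ and $O(2^i\eps)$ on the $x$- and $y$-coordinates of $X$ from the witness $v$, transport them to level $i'$, and check that the error terms stay below the $\Theta(2^{i'}\eps)$ scale. Both routes work and yield the same band, but the paper's convexity argument is shorter and avoids the tilted-ellipse bookkeeping you flag as the main obstacle; it also handles the case $i'=i$ uniformly (since $v$ itself lies in that strip).

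Your final counting step has a genuine flaw specific to the non-Steiner setting of \Cref{sec:general-sector}. You assert that the greedy sweep selects hitting points whose consecutive $y$-coordinates differ by more than $c\cdot 2^{i'}\eps$. That is true for the \emph{continuous} piercing set (where the greedy places a piercer at the right endpoint of the earliest unhit interval), but here the hitting set must be drawn from $P$, and a discrete greedy can pick two consecutive points arbitrarily close together — for instance, when the rightmost candidate inside the next unhit interval happens to sit near that interval's left end, just beyond the previously chosen point. The robust argument, which the paper uses in the proof of \Cref{clm:weight_size_bound}, is the minimality triple-spacing bound: in any inclusion-wise minimal hitting set $\{u_1<u_2<\dots\}$, three consecutive points satisfy $u_{j+2}-u_j\geq c_1\cdot 2^{i'}\eps$, since otherwise every interval (of length $\geq c_1\cdot 2^{i'}\eps$) containing $u_{j+1}$ would also contain $u_j$ or $u_{j+2}$ and $u_{j+1}$ would be redundant. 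This gives $O(1)$ hitting points in your band $J^*$ regardless of how the minimum hitting set is computed. Swapping in this argument repairs the proof.
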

\begin{proof}
    Since $X\subseteq N_{\tau}(L_i)$, then $L_{i'}(x)=L_{i'}(x')$ for all $i'\geq i$ and $x,x\in X$. By \Cref{lem:crosssection}, the width of each rectangle at level $i$ is $\Theta(2^i \epsilon)$. The point $v \in \llbracket L_i, L_{i+1} \rrbracket$ lies in the ellipse $\mathcal{E}_{x}$ for all $x\in X$. Since the ellipses have horizontal major axes and their minor axes are to the right of $s$, then the intersection these ellipses also has nonempty intersection with every strip $\llbracket L_{i'}(x), L_{i'+1}(x) \rrbracket$ at level $i' \geq i$. Consequently, the intersection of all rectangles associated with points in $X$ is nonempty in the strip $\llbracket L_{i'}(x), L_{i'+1}(x) \rrbracket$ for all levels $i' > i$. Let $\mathcal{R}_{i'}$ denote the set of rectangles associated with points in $X$ that lie in the strip $\llbracket L_{i'}(x), L_{i'+1}(x) \rrbracket$.

    The hitting set problem for $\mathcal{R}_{i'}$ reduces to a one-dimensional hitting set problem by projecting all points and rectangles onto a vertical line. Each rectangle in $\mathcal{R}_{i'}$ then becomes an interval. Let $\mathcal{I} =\{I_1, I_2, \ldots, I_h\}$ be the set of intervals corresponding to rectangles in $\mathcal{R}_{i'}$. Since there is one point in the intersection $\bigcap_{1 \leq j \leq h} I_j$, the union of all intervals forms a segment of length $\Theta(2^{i'} \epsilon)$. Moreover, by \Cref{lem:crosssection}, each interval $I_j$ has length $\Theta(2^{i'} \epsilon)$. Therefore, any minimal hitting set for the intervals in $\mathcal{I}$ has constant size.
\end{proof}

Using a similar proof, we obtain the following claim:
\begin{claim}
    \label{clm:bounded-1-deg}
    For every vertex $u\in P$ and every index $i$, there are $O(1)$ points $v\in P$ such that $\iter_p(u) = i$ and $\iter_p(v) = i - 1$ for some $p\in N_\tau$.
\end{claim}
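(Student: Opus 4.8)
\medskip\noindent\textbf{Proof plan.} The plan is to mirror the proof of \Cref{cl:perlevel}: first cut down the number of lines, hence strips, that can be involved; then show that all the level-$(i-1)$ rectangles that can contribute a point $v$ lie inside a single small window; and finally project to one dimension and use that a minimum hitting set of ``fat'' intervals is spread out. We use the following consequence of the construction: if $\iter_p(w)=j$ then $w$ is the point selected for the rectangle $B_j(p)$ on the candidate path of $p$, so $B_j(p)\cap P\neq\emptyset$ and $w\in B_j(p)$.

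Fix $u\in P$ and an index $i\ge 1$. First I would localize the lines. If $\iter_p(u)=i$ for some $p\in N_\tau$, then $u\in B_i(p)\subseteq\llbracket L_i(p),L_{i+1}(p)\rrbracket$; by \Cref{obs:spacing} this strip has width $\Theta(4^{i+1}\eps)$, while consecutive lines of $\mathcal{L}_i$ lie $4^i\eps$ apart, so $u$ belongs to only $O(1)$ such strips, and thus $L_i(p)$ takes one of $O(1)$ values. For each of these, $L_{i-1}(p)$ is confined to $O(1)$ lines of $\mathcal{L}_{i-1}$, since $L_i(p)$ is the second line of $\mathcal{L}_i$ to the right of $L_{i-1}(p)$, which places $L_{i-1}(p)$ in an interval of width $\Theta(4^i\eps)$. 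Hence it suffices to fix a pair $M:=L_{i-1}(p)$, $L:=L_i(p)$ and bound, by $O(1)$, the number of distinct $v$ selected for $B_{i-1}(p)$ over all $p$ with $L_{i-1}(p)=M$, $L_i(p)=L$, and $\iter_p(u)=i$.

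Next I would localize the rectangles $B_{i-1}(p)$ for these $p$. The major axis of $\mathcal{E}_{ps}$ (the line $ps$) has slope at most $\sqrt{\eps}$, and by \Cref{cor:spacing} the abscissa $x(L)$ is at distance $\Theta(4^i\eps)$ from $p$; so by \Cref{lem:crosssection} the box $B_i(p)$ has height $\Theta(2^i\eps)$ and its vertical extent lies within $O(2^i\eps)$ of the $y$-coordinate of the line $ps$ at $x(L)$. Since $u\in B_i(p)$, that $y$-coordinate is within $O(2^i\eps)$ of $y(u)$; and as $|x(L)-x(M)|=\Theta(4^i\eps)$ with slope $O(\sqrt\eps)$, the line $ps$ stays within $O(\sqrt\eps\cdot 4^i\eps)=O(2^i\eps)$ of $y(u)$ throughout $[x(M),x(L)]$, using $2^i\sqrt\eps\le 2^k\sqrt\eps=O(1)$. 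As $B_{i-1}(p)$ has height $\Theta(2^i\eps)$ and straddles the line $ps$ over $[x(M),x(L)]$, it follows that $B_{i-1}(p)\subseteq W:=\llbracket M,L\rrbracket\times[\,y(u)-O(2^i\eps),\,y(u)+O(2^i\eps)\,]$, one window $W$ of size $\Theta(4^i\eps)\times O(2^i\eps)$ that works for every admissible $p$.

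Finally I would finish with the one-dimensional argument of \Cref{cl:perlevel}. Projecting $W$ and the rectangles of $\mathcal{R}_{i-1}(M,L)$ orthogonally onto the line $M$, every rectangle becomes an interval of length $\Theta(2^i\eps)$ (by \Cref{lem:crosssection}, each being $B_{i-1}(q)$ with $x(L)$ at distance $\Theta(4^i\eps)$ from $q$), while $W$ projects to an interval of length $O(2^i\eps)$. Since the minimum hitting set is obtained greedily, any two of its points are more than $\ell$ apart whenever all the intervals have length at least $\ell$; hence the minimum hitting set of $\mathcal{R}_{i-1}(M,L)$ contains $O(1)$ points inside the projection of $W$. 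Every $v$ we must count is the point chosen for some $B_{i-1}(p)\subseteq W$, so it lies both in that hitting set and in the projection of $W$; there are therefore $O(1)$ of them, and summing over the $O(1)$ pairs $(M,L)$ proves the claim. The main obstacle is the rectangle-localization step --- showing that all contributing boxes $B_{i-1}(p)$ share a common window of height $O(2^i\eps)$ --- which needs the slope bound on the ellipse axis, the cross-section estimate of \Cref{lem:crosssection}, and the scaling $2^i\sqrt\eps=O(1)$ for $i\le k$; the fact that the greedy minimum hitting set is spread out is a minor but essential ingredient, and is already implicit in the proof of \Cref{cl:perlevel}.
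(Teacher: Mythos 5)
Your proof follows the approach the paper intends (the paper only states that the claim follows ``using a similar proof'' to \Cref{cl:perlevel}), and the overall structure is sound: localize to $O(1)$ line-pairs $(M,L)$, show all relevant boxes $B_{i-1}(p)$ fall into a single window $W$ of height $O(2^i\eps)$ using the slope bound $|\slope(ps)|=O(\sqrt\eps)$ and the scaling $2^i\sqrt\eps=O(1)$, and then count hitting-set points in the projection of $W$.

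One technical slip in the last step: the assertion that ``any two of [the greedy minimum hitting set's] points are more than $\ell$ apart'' fails for the \emph{discrete} hitting set the algorithm computes — the selected points must lie in $P$, so the right endpoint of the earliest-ending interval need not be available, and two consecutive greedy selections can be arbitrarily close. The property you actually need, and the one the paper itself uses in the proof of \Cref{clm:weight_size_bound}, is weaker but sufficient: in any minimum hitting set of intervals each of length at least $\ell$, any three consecutive points span at least $\ell$, since otherwise the middle one would be redundant. This still yields $O(1)$ hitting points in the projection of $W$, so your argument closes after this substitution. A second, smaller point: you rely (correctly, and you flag it) on the reading that $\iter_p(u)=i$ means $u$ is the point selected for $B_i(p)$ on $p$'s candidate path, hence $u\in B_i(p)$; this is strictly stronger than the literal ``$\iter_p(u)=i$ for every $p\in P(L_i)$'' definition, but it is what the claim is used for in bounding the degree of local edges, and without this strengthening the statement would be false, since the full hitting set $S(L_{i-1}(p),L_i(p))$ can be large.
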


Since there are $O(\log \eps^{-1})$ levels, this gives a bound on the maximum vertex degree in $G$.

\begin{corollary}
    \label{cor:all-level-deg}
    For every vertex $u\in P$, there are $O(\log(\eps^{-1}))$ points $v\in P$ such that
    $\iter_p(u) = \iter_p(v) - 1$ for some $p\in N_\tau$.
\end{corollary}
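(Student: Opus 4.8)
The plan is to obtain \Cref{cor:all-level-deg} by summing the per-level bound of \Cref{clm:bounded-1-deg} over all levels. The one point that needs attention is that $\iter_p(u)$ is not a fixed number: it depends on $p$. But whenever it is defined, $u$ was selected as a hitting point during some iteration $i\in\{0,1,\dots,k\}$ and then $\iter_p(u)=i$; since the construction performs only $k+1=O(\log\eps^{-1})$ iterations, the set $S_u=\{\iter_p(u): p\in N_\tau,\ \iter_p(u)\text{ defined}\}$ satisfies $|S_u|\le k+1=O(\log\eps^{-1})$.

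First I would split the witnesses by level. For each $i\in\{0,\dots,k\}$, set
\[
  V_i(u)=\{v\in P:\ \exists\, p\in N_\tau\text{ with }\iter_p(u)=i\text{ and }\iter_p(v)=i+1\}.
\]
By definition, the set of points counted in \Cref{cor:all-level-deg} is exactly $\bigcup_{i\in S_u}V_i(u)$, so it suffices to prove $|V_i(u)|=O(1)$ for each $i$ and then sum over the at most $k+1$ indices in $S_u$.

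The bound $|V_i(u)|=O(1)$ is the adjacent-level instance of \Cref{clm:bounded-1-deg}, and this direction is the one directly supplied by \Cref{cl:perlevel}, whose conclusion covers all levels $i'\ge i$. Concretely: the points $p$ with $\iter_p(u)=i$ all share a common pair of lines $L_i(p)=L_i$, $L_{i+1}(p)=L_{i+1}$ — up to the $O(1)$ choices of $L_i\in\mathcal{L}_i$ whose level-$i$ strip contains $x(u)$, which is $O(1)$ by \Cref{obs:spacing} — and the rectangles $B_i(p)$ of all these $p$ contain the common point $u$, hence so do the corresponding ellipse pieces (up to constant factors) inside the strip $\llbracket L_i,L_{i+1}\rrbracket$. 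Applying \Cref{cl:perlevel} with $i'=i+1\ge i$ then shows that the strip at level $i+1$ contains only $O(1)$ points lying on candidate paths of these $p$; every $v\in V_i(u)$ is one of them, so $|V_i(u)|=O(1)$. Summing over $S_u$ gives
\[
  \big|\{v\in P:\ \iter_p(u)=\iter_p(v)-1\text{ for some }p\in N_\tau\}\big|\ \le\ \sum_{i\in S_u}|V_i(u)|\ \le\ (k+1)\cdot O(1)\ =\ O(\log\eps^{-1}).
\]

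I do not expect a genuine obstacle here: \Cref{cor:all-level-deg} is a bookkeeping consequence of \Cref{clm:bounded-1-deg}. The only things to be careful about are (i) recording that $\iter_p(u)$ takes only $O(\log\eps^{-1})$ distinct values as $p$ ranges over $N_\tau$, so that the summation has $O(\log\eps^{-1})$ terms rather than potentially more, and (ii) matching the orientation of the adjacency ($\iter_p(v)=\iter_p(u)+1$) against the level range $i'\ge i$ in \Cref{cl:perlevel}.
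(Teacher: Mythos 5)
Your proof is correct and follows essentially the same approach as the paper's one-sentence argument: sum a per-level $O(1)$ bound over the $O(\log\eps^{-1})$ levels, with the set $S_u$ formalizing the bookkeeping that $\iter_p(u)$ takes at most $k+1$ distinct values as $p$ ranges over $N_\tau$. What you add is a careful treatment of an orientation mismatch in the paper itself: \Cref{clm:bounded-1-deg} counts $v$ at level $\iter_p(u)-1$ (one level \emph{below} $u$), whereas \Cref{cor:all-level-deg} as literally written asks for $v$ with $\iter_p(v)=\iter_p(u)+1$ (one level \emph{above}); naively summing the claim over levels would prove the downward direction, not the corollary's wording. You resolve this by grounding $|V_i(u)|=O(1)$ directly in \Cref{cl:perlevel}, whose ``for every $i'\ge i$'' conclusion does cover the upward direction, and this is a legitimate repair of the literal statement. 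It is worth flagging, though, that the corollary's inequality is most likely a sign typo: its only use, in the proof of \Cref{lem:skipping}, is applied with the roles reversed (for a fixed $v$, count $u$ with $\iter_p(u)=\iter_p(v)-1$), which after renaming is exactly the downward bound that \Cref{clm:bounded-1-deg} gives when summed over levels — so your first assertion that $|V_i(u)|=O(1)$ ``is the adjacent-level instance of \Cref{clm:bounded-1-deg}'' is not quite right as stated; your appeal to \Cref{cl:perlevel} is what actually carries the argument.
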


For any straight-line graph $M$ drawn in the plane and any continuous geometric region $S$, we define $\wts(M \cap S)$ to be the weight of the geometric intersection between $M$ and $S$. For each edge $e$, we only count the portion of its length that lies within $S$. 

\begin{claim}
    \label{clm:weight_size_bound}
    Given a strip $\llbracket L_i, L_{i+1} \rrbracket$ with $L_i \in \mathcal{L}_i$ and $L_{i+1} \in \mathcal{L}_{i+1}$, assume that $S$ is a minimum hitting set for the rectangles associated with $\llbracket L_i, L_{i+1} \rrbracket$. Then, we have 
    $\wts(G \cap \llbracket L_i, L_{i+1} \rrbracket) \geq 4^{i + 1}\eps \cdot |S|$ and 
    $\wts(\OPT \cap \llbracket L_i, L_{i+1} \rrbracket) \geq \Omega(4^{i + 1}\eps \cdot |S|)$.
\end{claim}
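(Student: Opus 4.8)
We sketch a plan for proving \Cref{clm:weight_size_bound}. Both inequalities will be derived from the \emph{minimality} of $S$, after pinning down the geometry of the rectangles involved. Write $w:=\dist(L_i,L_{i+1})$; by \Cref{obs:spacing} (applied to level $i+1$) we have $w\in[4^{i+1}\eps,\,2\cdot 4^{i+1}\eps]$, so it suffices to prove the two bounds with $w$ in place of $4^{i+1}\eps$. Every rectangle in $\mathcal{R}_i(L_i,L_{i+1})$ is of the form $B_i(p)$ with $L_i(p)=L_i$ and $L_{i+1}(p)=L_{i+1}$, hence all of them share the same horizontal extent, namely the full width $w$ of the strip; so for these rectangles disjointness is purely one--dimensional. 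First I would record that they also have \emph{comparable heights}: since $B_i(p)$ is the bounding box of $\mathcal{E}_{ps}\cap\llbracket L_i,L_{i+1}\rrbracket$ and $\mathcal{E}_{pa,\eps/2}\subseteq\mathcal{E}_{ps}\subseteq\mathcal{E}_p$ by \Cref{lem:approxellipse}, the cross--section estimate \Cref{lem:crosssection} together with the width comparison \eqref{eq:widths} gives that the height of every such $B_i(p)$ is $\Theta(2^i\eps)=:h_i$, provided $L_{i+1}$ lies left of the centers of the relevant ellipses --- which holds for all levels $i$ except the $O(1)$ outermost ones near $k$, to be treated separately exactly as the last edge $s_{k-1}s$ is in \Cref{sec:sector}. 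So fix such a strip.

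\emph{First inequality.} Let $u\in S$. Since $S$ is a minimum hitting set, $S\setminus\{u\}$ fails to hit some rectangle, so $u$ has a \EMPH{private rectangle} $R_u=B_i(p_u)\in\mathcal{R}_i(L_i,L_{i+1})$ with $R_u\cap S=\{u\}$; when the algorithm processes $R_u$ it is forced to pick $u$, so $u$ is the level-$i$ hitting point on the candidate path $(p_u,s_0,\dots,s_h,s)$ of $p_u$. I would then use that this path is $x$-monotone (the level-$\ell$ hitting point lies in $\llbracket L_\ell(p_u),L_{\ell+1}(p_u)\rrbracket$ and levels strictly increase along the path) and that it straddles the strip, since $x(p_u)<x(L_i)$ and $x(s)=2>x(L_{i+1})$. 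Hence the two edges of the path incident to $u$ --- the one entering $u$ from the left and the one leaving $u$ to the right --- together cover the $x$-interval $[x(L_i),x(L_{i+1})]$, so their contribution to $\wts(G\cap\llbracket L_i,L_{i+1}\rrbracket)$ is at least $w$. Since for distinct $u,u'\in S$ these edge pairs are distinct up to an $O(1)$ overcount (an edge is the ``incoming'' edge of at most one vertex and the ``outgoing'' edge of at most one vertex, and a level-$i$ point cannot simultaneously be the target of a left edge and source of a right edge in the same incidence), summing over $u\in S$ yields $\wts(G\cap\llbracket L_i,L_{i+1}\rrbracket)\ge w\cdot|S|$.

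\emph{Second inequality.} I would again use the private rectangles $\{R_u:u\in S\}$, now to build a packing. Each $R_u$ spans the strip horizontally, has height $\Theta(h_i)$, contains $u$, and --- since every point of $S$ lies in $P$ inside the strip and $R_u\cap S=\{u\}$ --- contains no other point of $S$ in its $y$-range. Ordering $S=\{u_1,\dots,u_{|S|}\}$ by $y$-coordinate, $y(u_{j-1})$ lies below $R_{u_j}$ and $y(u_{j+1})$ above it, so $(y(u_{j+1})-y(u_j))+(y(u_j)-y(u_{j-1}))\ge|R_{u_j}\text{'s height}|=\Omega(h_i)$; i.e.\ no two consecutive gaps are both small. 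Consequently $\{u_1,\dots,u_{|S|}\}$ spans a $y$-interval of length $\Omega(|S|\cdot h_i)$, and picking every $c$-th point for a suitable constant $c$ produces $u_{j_1},\dots,u_{j_T}$ with $T=\Omega(|S|)$ whose private rectangles are pairwise disjoint (their $y$-ranges are $\ge\Theta(h_i)$-separated while each has height $\Theta(h_i)$). Finally, each $R_{u_{j_\ell}}=B_i(p_{u_{j_\ell}})$ with $p_{u_{j_\ell}}\in N_\tau$ contains the portion of the $p_{u_{j_\ell}}s$-path of $\OPT$ inside the strip, which lies in $\mathcal{E}_{p_{u_{j_\ell}}s}\cap\llbracket L_i,L_{i+1}\rrbracket\subseteq B_i(p_{u_{j_\ell}})$ and has weight at least $w$ because it crosses both $L_i$ and $L_{i+1}$. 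Disjointness of the $R_{u_{j_\ell}}$ then gives $\wts(\OPT\cap\llbracket L_i,L_{i+1}\rrbracket)\ge\sum_\ell\wts\!\big(\OPT\cap R_{u_{j_\ell}}\big)\ge T\cdot w=\Omega(|S|\cdot 4^{i+1}\eps)$.

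The step I expect to be the main obstacle is the packing in the last paragraph. For \emph{arbitrary} axis-parallel rectangles, a minimum \emph{restricted} hitting set (points taken from a prescribed set $P$) can be far larger than the largest pairwise-disjoint sub-collection, so the argument genuinely relies on the two structural facts established at the start --- all rectangles span the strip (making disjointness one-dimensional) and all have heights within a constant factor --- combined with the ``no two consecutive small gaps'' accounting, which in turn uses that $S$ is a \emph{minimum} (not merely minimal) hitting set so that every $u\in S$ owns a private rectangle. The secondary technical point is the height estimate $\Theta(2^i\eps)$ for the rectangles via \Cref{lem:crosssection,lem:approxellipse}: it requires controlling the position of $L_i,L_{i+1}$ relative to the ellipse centers, and hence a separate treatment of the $O(1)$ outermost levels, paralleling the handling of the last edge in \Cref{sec:sector}.
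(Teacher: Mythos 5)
Your proposal is correct and follows essentially the same route as the paper: the first inequality counts, for each $u\in S$, the pair of incident candidate-path edges straddling the strip (the paper asserts these exist, you justify it via a private rectangle), and the second inequality uses the minimality of $S$ to show consecutive $y$-coordinates can't all be close, then subsamples to get $\Omega(|S|)$ pairwise-disjoint rectangles and charges each to a crossing $\OPT$ subpath. Your ``private rectangle'' formulation and ``no two consecutive small gaps'' phrasing are a modest repackaging of the paper's redundancy-contradiction argument, and your caveat about the $O(1)$ outermost levels is careful but turns out to be largely unnecessary since $\dist(p,L_i(p))\le\frac{32}{3}4^i\eps\le\frac23$ for $i\le k-1$, exactly the range \Cref{lem:crosssection} covers.
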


\begin{proof}
    For each point $u \in S$, $G$ contains two edges $(u, v_u), (w_u, u)$ with $\iter_p(v_u) < i < \iter_p(w_u)$ for some $p \in N_\tau$. Hence,
    \begin{equation*}
        \wts(G \cap \llbracket L_i, L_{i+1} \rrbracket) \geq \sum_{u \in S}((\underbrace{\wts((u, v_u) \cap \llbracket L_i, L_{i+1} \rrbracket) + \wts((u, w_u) \cap \llbracket L_i, L_{i+1} \rrbracket}_{\geq 4^{i + 1}\eps})) \geq 4^{i + 1}\eps \cdot |S|.
    \end{equation*}

    By \Cref{lem:crosssection}, the width of each rectangle at level $i$ is $\Theta(2^i \epsilon)$. Assume that the width of each rectangle is within the a range $[c_1 \cdot 2^i \epsilon, c_2 \cdot 2^i \epsilon]$, where $0 < c_1 < c_2$.
    Assume further that $S=\{u_1, u_2, \ldots ,u_h\}$, where the points are labeled by increasing $y$-coordinates. Observe that for any three consecutive points, $u_j$, $u_{j + 1}$, and $u_{j + 2}$, we have 
    \begin{equation}\label{eq:triple}
        |y(u_j) - y(u_{j + 2})| \geq c_1\cdot 2^i \epsilon ,
    \end{equation}
    otherwise $u_j$ or $u_{j + 2}$ would hit every rectangle that $u_{j + 1}$ hits
    and $u_{j + 1}$ would be redundant, contradicting the minimality of $S$. Let $l_0 = 4\lceil c_2/c_1 \rceil$. Summation of \Cref{eq:triple} over $j=1,\ldots , l_0-2$ yields
    \begin{equation}
        |y(u_{j + l_0}) - y(u_j)| 
        \geq  \frac{l_0\cdot c_12^i \epsilon }{2} 
        \geq 2c_22^i \epsilon.
    \end{equation}
    Thus, any pair of rectangles at level $i$ hit by $u_j$ and $u_{j + l_0}$, resp., are disjoint. Consider the set $S' = \{u_{1 + j \cdot l_0}\in S : j \leq \lfloor\frac{h - 1}{l_0}\rfloor\}$; and let $\mathcal{R}_{S'}$ be the set of $|S'|$ disjoint rectangles hit points in $S'$. 
    %(If a point hits more than one rectangle, choose an arbitrary one.) 
   Recall that $\OPT$ contain a path crossing each rectangle on $\llbracket L_i, L_{i+1} \rrbracket$. Since the rectangles in $\mathcal{R}_{S'}$ are disjoint, then $\OPT$ contains disjoint paths crossing them, which implies that $\wts(\OPT \cap \llbracket L_i, L_{i+1} \rrbracket) \geq 4^{i + 1}\eps \cdot |S'| = \Theta(4^{i + 1}\eps \cdot |S|)$.
\end{proof}

We will bound the weight of $G$ in terms of the weight of $\OPT$.
\begin{lemma}
    \label{lem:bound_weight}
    $\wts(G) = O\left(\log(\eps^{-1}) \cdot \wts(\OPT)\right)$.
\end{lemma}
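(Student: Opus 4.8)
The plan is to charge $\wts(G)$ against $\wts(\OPT)$ one vertical strip at a time, paying a factor $O(\log\eps^{-1})$ only for summing over the $k+1=O(\log\eps^{-1})$ levels. The per-strip comparison will combine \Cref{clm:weight_size_bound}, which lower-bounds the weight of $\OPT$ inside a strip $\sigma=\llbracket L_i,L_{i+1}\rrbracket$ by $\Omega(4^{i+1}\eps\cdot|S|)$ for a minimum hitting set $S$ of the strip's nonempty rectangles, with a matching upper bound $\wts(G\cap\sigma)=O(4^{i+1}\eps\cdot|S|)$.

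First I would reduce $\wts(G)$ to a sum of per-strip weights. Every edge $e=(s_j,s_{j+1})$ of $G$ lies on the candidate path of some $p\in N_\tau$, joining hitting points at levels $i_1<i_2$. The slope estimate established in the stretch analysis for $i_2>i_1+1$, together with the analogous computation using \Cref{lem:crosssection} and \Cref{obs:spacing} when $i_2=i_1+1$, gives $|\slope(e)|=O(2^{-i_2})=O(1)$, so $\wts(e)=O(|\proj(e)|)$. Since $s_j\in B_{i_1}(p)$ and $s_{j+1}\in B_{i_2}(p)$, we have $\proj(e)\subseteq[x(L_{i_1}(p)),x(L_{i_2+1}(p))]$, and this interval is partitioned by the $x$-projections of the strips $\llbracket L_l(p),L_{l+1}(p)\rrbracket$, $l=i_1,\dots,i_2$, whence $\wts(e)=\sum_{l=i_1}^{i_2}\wts(e\cap\llbracket L_l(p),L_{l+1}(p)\rrbracket)$. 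Fixing one candidate path through each edge and summing over all edges yields
\[
\wts(G)\ \le\ \sum_{l=0}^{k}\ \sum_{\substack{L_l\in\mathcal{L}_l,\ L_{l+1}\in\mathcal{L}_{l+1}\\ \llbracket L_l,L_{l+1}\rrbracket\ \text{active}}}\wts\big(G\cap\llbracket L_l,L_{l+1}\rrbracket\big),
\]
where a strip $\llbracket L_l,L_{l+1}\rrbracket$ is \emph{active} if $L_l=L_l(p)$ and $L_{l+1}=L_{l+1}(p)$ for some $p\in N_\tau$.

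Next I would bound a single term. Fix an active strip $\sigma=\llbracket L_i,L_{i+1}\rrbracket$; by \Cref{obs:spacing} its width is $\Theta(4^{i+1}\eps)$, and every edge-portion of $G$ inside $\sigma$ has slope $O(1)$, so each such portion contributes $O(4^{i+1}\eps)$ to $\wts(G\cap\sigma)$. It remains to show that only $O(|S|)$ distinct edges of $G$ meet $\sigma$, where $S$ is a minimum hitting set for the nonempty rectangles of $\sigma$. An edge of $G$ meeting $\sigma$ either has an endpoint that is a level-$i$ hitting point of $\sigma$ (hence a point of $S$), or it passes over $\sigma$ because its candidate path skips the empty rectangle at level $i$. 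For the first kind, \Cref{clm:bounded-1-deg} (using that rectangles at adjacent levels have comparable heights by \Cref{lem:crosssection}) controls the edges going to an adjacent level, and \Cref{cl:perlevel} controls those that skip levels, for a total of $O(|S|)$; for the pass-over edges, the candidate paths responsible all pass, at some level $\le i$, through a common hitting point, and \Cref{cl:perlevel} then forces their continuations to use only $O(1)$ points inside any later strip, so again $O(|S|)$ edges meet $\sigma$. Hence $\wts(G\cap\sigma)=O(4^{i+1}\eps\cdot|S|)=O(\wts(\OPT\cap\sigma))$ by \Cref{clm:weight_size_bound}.

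Finally I would assemble the pieces. At a fixed level $i$ the active strips have pairwise distinct left boundaries among the lines of $\mathcal{L}_i$, which are $4^i\eps$ apart, while each strip has width $\Theta(4^{i+1}\eps)$; so every point of the plane lies in $O(1)$ active strips at level $i$, giving $\sum_{\sigma\ \text{active at level }i}\wts(\OPT\cap\sigma)=O(\wts(\OPT))$. Substituting into the displayed inequality and summing over the $O(\log\eps^{-1})$ levels yields $\wts(G)=O(\log\eps^{-1})\cdot\wts(\OPT)$, as claimed. I expect the main obstacle to be the per-strip count in the third paragraph: a point of $P$ may serve as a hitting point on many candidate paths and at several levels, and paths that skip empty rectangles produce edges crossing a strip with no endpoint in it — which is precisely what \Cref{cl:perlevel} together with \Cref{clm:bounded-1-deg} and \Cref{cor:all-level-deg} are designed to handle, so the argument should go through once these ingredients are applied carefully.
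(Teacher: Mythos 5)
Your reduction to per-strip weights and the treatment of \emph{local} edges (those joining consecutive levels) are fine and match the spirit of the paper's argument, but your third paragraph — the claim that only $O(|S|)$ edges of $G$ meet an active strip $\sigma=\llbracket L_i,L_{i+1}\rrbracket$ — has a genuine gap precisely where you flag your own uncertainty.

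The problem is with the \emph{pass-over} (crossing) edges. You assert that the candidate paths responsible for such edges ``all pass, at some level $\le i$, through a common hitting point,'' and then invoke \Cref{cl:perlevel}. But two points $p_1,p_2\in N_\tau$ can share $L_i(p_1)=L_i(p_2)=L_i$ while having different lines $L_{i'}(p_1)\neq L_{i'}(p_2)$ at levels $i'<i$ (the lines are built forward from $p$, and several level-$(i-1)$ lines collapse to the same level-$i$ line). Consequently their candidate paths sit in different level-$i'$ strips and use different hitting points; there is no common hitting point below level $i$. More concretely, the per-strip inequality $\wts(G\cap\sigma)=O(4^{i+1}\eps\cdot|S_\sigma|)$ can fail outright: if all boxes $B_i(p)$ with $L_i(p)=L_i$ are empty of $P$-points, then $S_\sigma=\emptyset$, yet arbitrarily many candidate paths skip level $i$ via crossing edges that traverse $\sigma$, contributing $\Theta(4^{i+1}\eps)$ each. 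The right-hand side of your per-strip comparison is then zero while the left-hand side is not. Note also that \Cref{cl:perlevel} controls hitting points at levels $i'\geq i$ to the \emph{right} of a common vertex, which is the wrong direction for constraining where the crossing edges originate.

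This is exactly why the paper does not do a per-strip comparison for crossing edges. It instead splits $E(G)$ into local and crossing edges, picks one \emph{representative} crossing edge per hitting-set vertex, and \emph{charges} each representative edge $(u,v_u^*)$ to a specific edge $(a,b)$ of $\OPT$ that must traverse the same empty strips (since $\OPT$ also has no vertices in the empty boxes of $\mathcal{E}_{p_u}$). It then shows each $\OPT$ edge receives $O(\log\eps^{-1})$ charges, each of comparable weight. The charging sidesteps both the ``common hitting point'' issue and the $|S_\sigma|=0$ degeneracy, and you would need to import that machinery (or an equivalent argument) to make your paragraph three go through.
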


\begin{proof}  
Recall that $G=(V,E)$ is the union of candidate paths $(p, s_0, s_1, \ldots ,s_{h}, s)$ for all $p\in N_\tau$. We distinguish between two types of edges in a candidate path: an edge $(s_j,s_{j+1})$ is \emph{local} if $\iter_p(s_{j+1})=\iter_p(s_j)+1$, and \emph{crossing} otherwise. Based on this, we partition the edges of $G$ as follows:
Let $E_{\rm loc}$ be the set of edges $(u,v)$ in $E$ such that $(u,v)$ is a local edge is a candidate path for some $p\in N_\tau$, and let $E_{\rm cr}=E\setminus E_{\rm loc}$.
We derive upper bounds for the weight of local and crossing edges separately. 

\paragraph{Local edges.} For local edges, we argue similarly to the weight analysis in \Cref{sec:sector}. Specifically, let $\llbracket L_{i-1}, L_{i}\rrbracket$ and $\llbracket L_i, L_{i+1}\rrbracket$ be two consecutive strips such that $L_{i-1}=L_{i-1}(p)$, $L_i=L_i(p)$ and $L_{i+1}=L_{i+1}(p)$ for some $p\in N_\tau$. Observe that there are at most $O(1)$ such $L_{i - 1}$ given $L_i$ and $L_{i + 1}$. Let $E_{\rm loc}(L_{i-1},L_i,L_{i+1})$ be the set of edges $(u,v)$ such that $u\in \llbracket L_{i-1}, L_{i}\rrbracket$ and  $v\in \llbracket L_{i}, L_{i+1}\rrbracket$. 
We show that
\begin{equation}\label{eq:local}
    \wts(E_{\rm loc}(L_{i-1},L_i,L_{i+1})) \leq \wts(\OPT \cap \llbracket L_{i}, L_{i + 1}\rrbracket).
\end{equation}
Let $S$ be the hitting set founded for the rectangles in the strip $\llbracket L_{i}, L_{i + 1}\rrbracket$. For each vertex $u \in S$, let $E_{\llbracket L_{i-1}, L_i\rrbracket}(u)$ be the set of local edges connecting $u$ to a vertex in $\llbracket L_{i-1}, L_i\rrbracket$. By \Cref{clm:bounded-1-deg}, $|E_{\llbracket L_{i-1}, L_i\rrbracket}(u)| = O(1)$. Furthermore, each edge in $E_{\rm loc}(L_{i-1},L_i,L_{i+1})$ has length $O(4^{i + 1}\eps)$. Thus, 
\begin{equation*}
    \begin{split}
        \wts(E_{\rm loc}(L_{i-1},L_i,L_{i+1})) &\leq \sum_{u \in S}\sum_{e \in E_{\llbracket L_{i-1}, L_i\rrbracket}(u)} \wts(e)\\
        &\leq |S| \cdot O(4^{i + 1}\eps) \leq \wts(\OPT \cap \llbracket L_{i}, L_{i + 1}\rrbracket),
    \end{split}
\end{equation*}
    by \Cref{clm:weight_size_bound}.
For each line $L_i$, let $\mathcal{B}_i = \{L_{i - 1}(p) : p \in P_{\tau}, L_i(P) = L_i\}$. Summation of \Cref{eq:local} over all strips $\llbracket L_{i}, L_{i + 1}\rrbracket$ yields: 
\begin{equation*}
    \begin{split}
        \wts(E_{\rm loc})&\leq \sum_{0 \leq i \leq k - 1}\sum_{L_i \in \mathcal{L}_i}\sum_{L_{i - 1} \in \mathcal{B}_i}  \wts(E_{\rm loc}(L_{i-1},L_i,L_{i+1}))\\
        &\leq \sum_{0 \leq i \leq k - 1}\sum_{L_i \in \mathcal{L}_i}\wts(\OPT \cap \llbracket L_{i}, L_{i + 1}\rrbracket) \\
        &\leq \sum_{0 \leq i \leq k - 1}\wts(\OPT) = O(\log{\eps^{-1}}\cdot \opt).
    \end{split}
\end{equation*}

\paragraph{Crossing edges.}
We bound $\wts(E_{\rm cr})$ in two steps: First we choose a subset $E^*_{\rm cr}\subset E_{\rm cr}$ of \emph{representative} crossing edges and show that $\wts(E_{\rm cr})\leq  O(\wts(E_{\rm cr}^*))$. Then we \emph{charge} each representative edge $e$ to an edge of $\OPT$ of weight $\Omega(\wts(e))$ and show that each edge $f$ of $\OPT$ receives $O(\log\eps^{-1}\cdot \wts(f))$ charges. The charging scheme readily implies $\wts(E_{\rm cr}^*)\leq  O(\log \eps^{-1}\cdot \opt)$. Overall, we obtain $\wts(E_{\rm cr})\leq O(\wts(E_{\rm cr}^*))\leq O(\log \eps^{-1}\cdot \opt)$.

Consider a strip $\llbracket L_i, L_{i+1}\rrbracket$ where $L_i\in \mathcal{L}_i$ and $L_{i+1}\in \mathcal{L}_{i+1}$. Let $E_{\rm cr} \llbracket L_i, L_{i+1}\rrbracket$ 
be the set of edges $(u,v)\in E_{\rm cr}$ such that $(u,v)$ is an edge of a candidate path from $p$ to $s$ such that $L_i=L_i(p)$ and $L_{i+1}=L_{i+1}(p)$ and $\iter_p(u)=i$. In particular, this implies $u\in \llbracket L_i, L_{i+1}\rrbracket$.
Let $S$ be the minimum hitting set found for the rectangles associated with $\llbracket L_i, L_{i+1}\rrbracket$. For each $u\in S$, let $E_{\rm cr}(u)$ be the set of edges $(u,v)$ in $E_{\rm cr}$; and if $E_{\rm cr}(u)\neq \emptyset$, then let $(u,v*_u)$ be a longest edge in  $E_{\rm cr}(u)$. We claim that for every $u\in S$,
\begin{equation}\label{eq:longest}
    \wts(E_{\rm cr}(u))\leq O(d(u,v^*_u)) .
\end{equation}
We may assume w.l.o.g.\ that $E_{\rm cr}(u)\neq \emptyset$. By \Cref{cl:perlevel}, $u$ is adjacent to $O(1)$ vertices in each level. 
By \Cref{obs:spacing}, the width if a strip at level $j$ is $\Theta(4^j\eps)$.
The summation of geometric series implies that the total width of strips at consecutive levels equals the width of the strip of the largest level, up to constant factors. 
Consequently, if $(u,v)\in E_{\rm cr}(u)$ and $v\in \llbracket L_j(p), L_{j+1}(p)\rrbracket$, then $d(u,v)=\Theta(2^j\eps)$. 
Applying geometric series again, we obtain \Cref{eq:longest}.

For each $u\in S$, the edge $(u,v_u^*)$ is part of some candidate path 
starting from a point $p_u\in  P\llbracket L_i, L_{i+1}\rrbracket$. 
Let $R(u):=B_i(p_u)$ is the minimum axis-aligned rectangle enclosing   $\mathcal{E}_{p_u}\cap \llbracket L_i, L_{i+1}\rrbracket$. 

Consider the set of rectangles $\mathcal{R}=\{R(u) : u\in S\}$. 
In the proof of \Cref{clm:weight_size_bound}, we have shown that every rectangle in $\mathcal{R}$ intersects $O(1)$ other rectangles. Therefore $\mathcal{R}$ can be partitioned into $O(1)$ sets of pairwise disjoint rectangles. 
Let $\mathcal{R}^*$ be the subset of disjoint rectangles 
that maximizes $\sum_{R(u)\in \mathcal{R}^*} d(u,v_u^*)$.
Then we have 
$\sum_{R(u)\in \mathcal{R}^*} d(u,v_u^*) 
\geq \Omega\left(\sum_{R(u)\in \mathcal{R}} d(u,v_u^*)\right)$. 
Combined with \Cref{eq:longest}, this gives 
\begin{equation}\label{eq:star}
    \wts\big( E_{\rm cr} \llbracket L_i, L_{i+1}\rrbracket\big) 
    \leq O\left(\sum_{R(u)\in \mathcal{R}^*} d(u,v_u^*)\right).
\end{equation}
We define the set of \emph{representative edges} in $E_{\rm cr} \llbracket L_i, L_{i+1}\rrbracket$ to be the edges $(u,v_u^*)$ where $R(u)\in \mathcal{R}^*$. 
Now \Cref{eq:star} states that the weight of crossing edges associated with any strip $\llbracket L_i, L_{i+1}\rrbracket$ is upper bounded by constant times the weight of the representative edges in that strip.

Next, we \emph{charge} each representative edge $(u,v_u^*)$ of $E_{\rm cr} \llbracket L_i, L_{i+1}\rrbracket$ to an edge of $\OPT$. Assume that $v_u^*\in \llbracket L_j(p_u), L_{j+1}(p_u)\rrbracket$. The ellipse $\mathcal{E}_{p_u}$ does not contain any point in the strip 
$\llbracket L_{i+1}(p_u), L_j(p_u)\rrbracket$: Indeed, if $\mathcal{E}_{p_u}$ 
contains a point in this strip, then it would also contain a point in the strip 
$\llbracket L_{i'}(p_u), L_{i'+1}(p_u)\rrbracket$ for some $i<i'<j$, and so the candidate path starting from $p_u$ would also contain a vertex in that strips, contradicting our assumption that the candidate path starting from $p_u$ crosses $\llbracket L_{i'}(p_u), L_{i'+1}(p_u))\rrbracket$ for all $i<i'<j$. This implies that the path of $\OPT$ from $p_u$ to $s$ does not have any vertices in the strip $\llbracket L_{i+1}(p_u), L_j(p_u)\rrbracket$; in particular, this path in $\OPT$ contains an edge $(a,b)$ that crosses  the strip $\llbracket L_{i+1}(p_u), L_j(p_u)\rrbracket$.
We charge the weight of the representative edge $(u,v_u^*)$ to such an edge $(a,b)$ of $\OPT$. Note that $\wts(a,b)\geq \Omega(\wts(u,v_u^*))$. 

Furthermore, \Cref{cor:spacing} implies 
that $\wts(p_u, L_{i+1}(p_u)\leq \frac{32}{3}\cdot 4^{i+1}\eps$. 
Since $(a,b)$ is an edge of the path from $p_u$ to $s$ in $\OPT$, 
but $a$ is to the left of $L_{i+1}$, then 
\begin{equation}\label{eq:near}
    \wts(a, L_{i+1}(p_u))\leq \wts(p_u, L_{i+1})\leq \frac{32}{3}\cdot 4^{i+1}\eps.
\end{equation}

It remains to show that every edge $(a,b)$ of $\OPT$ receives $O(\wts(a,b)\cdot \log \eps^{-1})$ charges. Consider an edge $(a,b)$ of $\OPT$. 
At each level $i$, point $a$ lies in $O(1)$ distinct strips $\llbracket L_i(p), L_{i+1}(p)\rrbracket$ for some $p\in N_\tau$. Let $\llbracket L_i, L_{i+1}\rrbracket$ be one of these strips. 
Suppose that $(a,b)$ receives charges from a representative edge $(u,v_u^*)$ where $\iter_{p(u)}(u)=i$. Then $u\in \llbracket L_i(p_u), L_{i+1}(p_u)\rrbracket$, where  $L_i(p(u))\in \mathcal{L}_i$ and $L_{i+1}(p(u))\in \mathcal{L}_{i+1}$. Since the distance between two consecutive vertical lines in $\mathcal{L}_i$ is $4^i\cdot \eps$, then \Cref{eq:near} implies that $u$ may be located in $O(1)$ possible strips at level $i$. In the strip $\llbracket L_i(p_u), L_{i+1}(p_u)\rrbracket$, the representative edges correspond to disjoint rectangles $\mathcal{R}^*$, and $\OPT\cap \llbracket L_i(p_u), L_{i+1}(p_u)\rrbracket \subset R(u)$. Therefore, $(a,b)$ receives charges from at most one edge in each strip. We have shown that $(a,b)$ receives charges from $O(1)$ edges at each level $i$. 
Over all $O(\log \eps^{-1})$ levels, $(a,b)$ may receive charges from $O(\log \eps^{-1})$ edges in $E_{\rm cr})$. As noted above, whenever we charge $\wts(u,v_u^*)$ to $(a,b)$, 
we have $\wts(a,b)\geq \Omega(\wts(u,v_u^*))$. Consequently, the total amount of charges received by $(a,b)$ is $O(\wts(a,b)\cdot \log \eps^{-1})$. 
\end{proof}

To complete the proof, we show that: 

\begin{lemma}\label{lem:skipping}
    $\wts(G') = O(\log \eps^{-1}\cdot \wts(G) )$.
\end{lemma}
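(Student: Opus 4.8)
The plan is to charge the weight of each edge of $G'$ to edges of $G$, incurring a total multiplicity of only $O(\log\eps^{-1})$. Fix $p\in N_\tau$. Since $H'(p)$ is obtained from the candidate path $H(p)$ by deleting interior vertices, the edge set $E(H(p))$ is partitioned into \emph{blocks}, where each block is a subpath of $H(p)$ that is contracted into a single edge of $H'(p)$. I would first establish that every block has at most two edges: if $u'$ is the surviving left endpoint of a block and $a$ its first deleted vertex, then $\iter_p(a)=\iter_p(u')+1$ (this is exactly why $a$ was removed); after $a$ is removed, the next vertex $b$ satisfies $\iter_p(b)\ge\iter_p(a)+1=\iter_p(u')+2$, so $b$ is not removed. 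Hence each block is either a single surviving edge of $G$, or a pair $\{(u',a),(a,w')\}$ contracted to the shortcut edge $(u',w')$, where $a$ is the unique deleted vertex, $\iter_p(a)=\iter_p(u')+1$, and $\iter_p(w')\ge\iter_p(u')+2$.

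Next, for each $e'=(u',w')\in E(G')$ I would fix an arbitrary $p=p(e')$ with $e'\in E(H'(p))$ and let $\pi(e')\subseteq E(H(p))$ be the block contracted to $e'$. The triangle inequality gives $\wts(e')=d(u',w')\le\sum_{f\in\pi(e')}\wts(f)$, so
\begin{equation*}
  \wts(G')=\sum_{e'\in E(G')}\wts(e')\ \le\ \sum_{e'\in E(G')}\sum_{f\in\pi(e')}\wts(f)\ =\ \sum_{f\in E(G)}\wts(f)\cdot N(f),
\end{equation*}
where $N(f):=\#\{e'\in E(G'): f\in\pi(e')\}$. It therefore suffices to prove $N(f)=O(\log\eps^{-1})$ for every $f=(u,v)\in E(G)$, since then $\wts(G')\le O(\log\eps^{-1})\sum_{f\in E(G)}\wts(f)=O(\log\eps^{-1})\wts(G)$.

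To bound $N(f)$, let $i$ be the level of $u$, and use the block classification. If $\pi(e')=\{f\}$ then $e'=f$, contributing one edge. If $f$ is the \emph{second} edge of its block then $u$ is the deleted vertex and $e'=(u',v)$, where $u'$ is a neighbor of $u$ in $G$ at level $i-1$; by \Cref{clm:bounded-1-deg} there are $O(1)$ such $u'$, hence $O(1)$ such $e'$. If $f$ is the \emph{first} edge of its block then $v$ is deleted and $e'=(u,w')$, where $w'$ is a neighbor of $v$ in $G$ at a level strictly above $\iter(v)$; here a \Cref{cl:perlevel}-type argument applies (all candidate paths for which $v$ is a hitting point meet $v$, so their enclosing ellipses $\mathcal{E}_q\supseteq\mathcal{E}_{qs}$ have vertical extent of order $\Theta(2^j\eps)$ in each strip at level $j$ by \Cref{lem:crosssection}, so the associated rectangles at any fixed higher level stack within an interval of that order and admit an $O(1)$-point hitting set, cf.\ \Cref{cor:all-level-deg}), whence $v$ has $O(1)$ such neighbors at each of the $O(\log\eps^{-1})$ levels, giving $O(\log\eps^{-1})$ choices of $w'$. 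Summing the three cases yields $N(f)=O(\log\eps^{-1})$, and the lemma follows.

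The main obstacle I anticipate is the last case of the multiplicity bound: one must carefully justify the \Cref{cl:perlevel}-style statement that a fixed vertex of $G$ has $O(1)$ neighbors at each single higher level, including the bookkeeping when a point of $P$ serves as a hitting point for several strips or levels, and the fact that a point lies in the \emph{bounding box} of an ellipse rather than the ellipse itself (both only costing constant factors). Once this is in place, the crucial $\log\eps^{-1}$ factor comes solely from summing over the $O(\log\eps^{-1})$ levels across which a deleted endpoint of $f$ can be jumped over. The structural observation that every block has at most two edges, together with \Cref{clm:bounded-1-deg}, disposes of the remaining cases without difficulty.
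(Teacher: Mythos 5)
Your block-structure observation is correct and is a nice explicit rendering of what the paper leaves implicit: since $\iter_p$ is strictly increasing along $H(p)$ and the leftmost eligible pair is removed at each step, removals sweep left to right and each contracted block has at most two edges. The resulting charging framework $\wts(G')\le\sum_{f\in E(G)}N(f)\,\wts(f)$ is sound and repackages the argument differently from the paper, which instead organizes the sum by the third vertex $w$ of each candidate triple and splits into shortcuts skipping exactly one level ($Z_1(w)$) versus longer jumps ($Z_{>1}(w)$).

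The gap is in your case-C multiplicity. You claim $v$ has $O(1)$ forward neighbors ``at each of the $O(\log\eps^{-1})$ levels,'' but this presupposes that the level of $v$ (equivalently of $u$) is essentially determined by the fixed edge $f=(u,v)$. It is not: a single point of $P$ can be placed in minimum hitting sets at $\Theta(\log\eps^{-1})$ different levels --- this is precisely why the paper works with $i_{\rm max}(u)$ in \Cref{clm:bdd_max_len} rather than a unique level per point --- so a fixed local edge $(u,v)$ may have $\iter_p(u)$ ranging over many values as $p$ varies. A \Cref{cl:perlevel}-style argument then yields $O(1)$ candidates for $w'$ per \emph{pair} of levels $\bigl(\iter_p(v),\iter_p(w')\bigr)$, i.e.\ up to $O(\log^2\eps^{-1})$ choices of $w'$, not $O(\log\eps^{-1})$. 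You flag this (``when a point of $P$ serves as a hitting point for several strips or levels'') but dismiss it as a constant factor; it is a $\log\eps^{-1}$ factor, and the same imprecision already appears in your case-B claim of $O(1)$. The paper avoids the problem by treating the two regimes differently, and your proof uses neither device: for $Z_{>1}$ it charges $\wts(u,w)$ entirely to the \emph{longer} block edge $(v,w)$, which is valid because $\wts(u,w)=O(\wts(v,w))$ when at least two levels are skipped, with multiplicity bounded by \Cref{cor:all-level-deg}; for $Z_1$ (where $\wts(v,w)$ may be tiny relative to $\wts(u,w)$) it instead bounds $\sum_{(u,v)\in Z_1(w)}\wts(u,w)$ as a geometric series dominated by $O(4^{\max L(w)}\eps)$ and then invokes \Cref{clm:bdd_max_len}, a claim you never use. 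Your scheme can likely be repaired by mirroring the first device --- charging only to the longer block edge whenever the shortcut skips at least two levels, and charging both edges only for one-level skips --- but as written, the uniform bound $N(f)=O(\log\eps^{-1})$ does not follow.
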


We first upper-bound on the total weight of edges incident to a vertex $u$ from one level below.
\begin{claim}\label{clm:bdd_max_len}
    For each point $u$ in $P$, let $i_{\rm max}(u)$ be the largest level $i$ such that $u$ is in a minimum hitting set of $\mathcal{R}_i(L_i, L_{i + 1})$ for some $L_i \in \mathcal{L}_i$ and $L_{i + 1} \in \mathcal{L}_{i + 1}$. If there is no such $i$, let $i_{\rm max}(u) = -\infty$. Then, $\sum_{u \in P}4^{i_{\rm max}(u)}\eps =O(\wts(\OPT)\cdot \log(\eps^{-1}))$.
\end{claim}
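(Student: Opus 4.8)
The plan is to bucket the points of $P$ according to the value $i_{\rm max}(\cdot)$, charge each bucket to one copy of $\wts(\OPT)$ via \Cref{clm:weight_size_bound}, and then sum over the $O(\log\eps^{-1})$ buckets. Write $P_i=\{u\in P:\ i_{\rm max}(u)=i\}$; points with $i_{\rm max}(u)=-\infty$ contribute $0$ to the sum and are discarded, so since there are only $k=O(\log\eps^{-1})$ levels it suffices to show $|P_i|\cdot 4^i\eps=O(\wts(\OPT))$ for every fixed $i$.

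Fix $i$. By definition of $i_{\rm max}$, each $u\in P_i$ lies in the minimum hitting set $S(\sigma)$ constructed for some \emph{active} strip $\sigma=\llbracket L_i,L_{i+1}\rrbracket$ at level $i$ (``active'' meaning $\mathcal R_i(L_i,L_{i+1})\neq\emptyset$), so $|P_i|\le\sum_\sigma|S(\sigma)|$, the sum over active level-$i$ strips. By \Cref{clm:weight_size_bound}, $|S(\sigma)|\le O\!\big(\wts(\OPT\cap\sigma)/(4^{i+1}\eps)\big)$ for each such $\sigma$, hence
\[
|P_i|\cdot 4^{i+1}\eps\ \le\ \sum_\sigma|S(\sigma)|\cdot 4^{i+1}\eps\ \le\ O\!\Big(\sum_\sigma\wts(\OPT\cap\sigma)\Big).
\]

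The remaining point — the only one needing a short geometric argument — is that the active level-$i$ strips have $O(1)$ overlap. Since $L_{i+1}(p)$ is determined by $L_i(p)$, each active level-$i$ strip $\llbracket L_i(p),L_{i+1}(p)\rrbracket$ is determined by its left line $L_i(p)\in\mathcal L_i$; by \Cref{obs:spacing} it has width at most $2\cdot 4^{i+1}\eps$, while consecutive lines of $\mathcal L_i$ are $4^i\eps$ apart, so any fixed abscissa lies in at most $2\cdot 4^{i+1}\eps/(4^i\eps)+1=O(1)$ of them. Thus every portion of $\OPT$ is charged $O(1)$ times, giving $\sum_\sigma\wts(\OPT\cap\sigma)\le O(\wts(\OPT))$. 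Substituting into the display yields $|P_i|\cdot 4^i\eps=O(\wts(\OPT))$, and summing over the $k=O(\log\eps^{-1})$ levels,
\[
\sum_{u\in P}4^{i_{\rm max}(u)}\eps\ =\ \sum_{i=0}^{k-1}|P_i|\cdot 4^i\eps\ =\ O\!\big(\log\eps^{-1}\cdot\wts(\OPT)\big).
\]

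I do not anticipate a genuine obstacle: the argument is a one-shot averaging over the $O(\log\eps^{-1})$ levels on top of \Cref{clm:weight_size_bound}. The only care needed is bookkeeping — identifying ``a minimum hitting set of $\mathcal R_i(L_i,L_{i+1})$'' in the definition of $i_{\rm max}(u)$ with the hitting set to which \Cref{clm:weight_size_bound} applies (its \emph{size} is all that enters the bound, and this is the same for any minimum hitting set of a fixed strip), and checking that the level-$i$ strip widths used in the overlap bound are exactly those given by \Cref{obs:spacing}.
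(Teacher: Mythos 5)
Your proof is correct and follows essentially the same approach as the paper: charge each point $u$ to the strip at level $i_{\rm max}(u)$ whose hitting set contains it, invoke \Cref{clm:weight_size_bound} to convert hitting-set sizes into $\wts(\OPT\cap\sigma)$, use the $O(1)$ overlap of same-level strips to get $\sum_\sigma\wts(\OPT\cap\sigma)=O(\wts(\OPT))$, and sum over $O(\log\eps^{-1})$ levels. The only cosmetic difference is that you bucket $P$ by $i_{\rm max}$-value and bound each bucket separately, while the paper directly compares $\sum_u 4^{i_{\rm max}(u)}\eps$ against the full double sum $\sum_i\sum_{L_i}|S(L_i,L_{i+1})|\cdot 4^i\eps$; these are the same bookkeeping in different clothing.
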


\begin{proof}
    For every strip $\llbracket L_i, L_{i + 1}\rrbracket$, let $S(L_i, L_{i + 1})$ be the minimum hitting set of the set of rectangles lying in $\llbracket L_i, L_{i + 1}\rrbracket$. Using \Cref{lem:crosssection} and \Cref{clm:weight_size_bound}, we obtain
    \begin{equation}
        \sum_{0 \leq i < k}\sum_{L_i \in \mathcal{L}_i}|S(L_i, L_{i + 1})|\cdot 4^i\eps 
        \leq  \sum_{0 \leq i < k}\sum_{L_i \in \mathcal{L}_i} O\left(\wts(\OPT \cap \llbracket L_i, L_{i + 1}\rrbracket)\right) 
        = O\left(\log\eps^{-1} \cdot \wts(\OPT)\right).
    \end{equation}
    On the other hand, observe that $\sum_{u \in P}4^{i_{\rm max}(u)}\eps \leq \sum_{0 \leq i < k}\sum_{L_i \in \mathcal{L}_i}|S(L_i, L_{i + 1})|\cdot 4^i\eps$, as each term on the left hand side contribute to at most one on the right hand side. Therefore, we obtain $\sum_{u \in P}4^{i_{\rm max}(u)}\eps = O\left(\log\eps^{-1}\cdot \wts(\OPT)\right)$.
\end{proof}

\begin{proof}[Proof of \Cref{lem:skipping}]
    We show that the total weight of edges in $E(G')\setminus E(G)$ is $O(\log{\eps^{-1}} \cdot \wts(G))$. Note that every edge in $E(G')\setminus E(G)$ is a shortcut edge $(s_j,s_{j+2})$ in an approximate path starting from some point $p\in N_\tau$, where the candidate path starting from $p$ contained $(s_j,s_{j},s_{j+1})$ with $\iter_p(s_{j+1})=\iter_p(s_j)+1$, and we removed $s_{j+1}$ from the path.  
    A triple $(u, v, w)$ is a \EMPH{candidate triple} of $p$ if $(u, v, w)$ is a subpath of the candidate path from $p$ to $s$ in $G$ and $\iter_p(v) = \iter_p(u) + 1$.
    
    Fix a vertex $w$. 
    Let $N(u)$ be the set of neighbors of $u$ in $G$. Let $Z(w)$ be the set of pairs $(u, v)$ such that $(u, v, w)$ is a candidate triple; $Z_1(w)$ the set of pairs $(u, v)$ such that $(u, v, w)$ is a candidate triple of some point $p\in P_\tau$ and $\iter_p(w) = \iter_p(v) + 1$ (hence $\iter_p(w)=\iter_p(u) + 2$); and $Z_{>1}(w) = Z(w) \setminus Z_1(w)$. Thus,
    \begin{equation}
        \label{eq:bdd_change}
        \wts(G') - \wts(G) \leq \sum_{w \in P}\left( \sum_{(u, v) \in Z_1(w)}\wts(u, w) + \sum_{(u, v) \in Z_{>1}(w)}\wts(u, w)\right)
    \end{equation}

    Consider $Z_1(w)$. At each level $i$, there are constantly many strips $\llbracket L_i, L_{i + 1}\rrbracket$ that $w$ belongs to. Furthermore, for each strip, there are constantly many pairs $(u, v)$ such that there exists a point $p$ where $L_i(p) = L_i$ and $(u, v, w)$ is a candidate triple of $p$ in $Z_1(w)$ by \Cref{clm:bounded-1-deg}. Let $L(u)$ be the set of levels $i$ such that there exists some strip $\llbracket L_i, L_{i + 1}\rrbracket$ containing $u$ as a point in the hitting set. Then,     
    \begin{equation}
        \label{eq:bdd_short_subpath}
        \sum_{(u, v) \in Z_1(w)}\wts(u, w) 
        \leq \sum_{i \in L(w)}O(4^{i + 1}\eps) 
        = O(4^{\max L(w)}\eps). 
    \end{equation}

    Consider $Z_{>1}(w)$. By \Cref{cor:all-level-deg}, for a fixed $v$, there are $O(\log(\eps^{-1}))$ pairs $(u, v) \in Z_{>1}(w)$.
    For each pair $(u, v) \in Z_{>1}(w)$, let $p$ be the point that $(u, v, w)$ is the candidate triple of and $\iter_p(w) > \iter_p(v) + 1$. Assume that $u \in \llbracket L_i(p), L_{i + 1}(p)\rrbracket$ for some $i$.
    Then, $\wts(u, v) \leq 4^{i + 1}\eps = \Theta(\wts(w, v))$
    Then, by the triangle inequality, we have:
    \begin{equation}
    \begin{split}
        \label{eq:bdd_long_subpath}
        \sum_{(u, v) \in Z_{>1}(w)}\wts(u, w) &\leq \sum_{(u, v) \in Z_{>1}(w)} (\wts(u, v) + \wts(v, w)) \leq \sum_{(u, v) \in Z_{>1}(w)} \Theta(\wts(w, v))\\
        &\leq O(\log\eps^{-1}) \cdot \sum_{v \in N(w)}\wts(w, v).
    \end{split}
    \end{equation}

    The combination of \Cref{eq:bdd_change,eq:bdd_short_subpath,eq:bdd_long_subpath} yields a bound for the total weight of $G'$:
    \begin{equation*}
        \sum_{w \in P} \left(O(4^{\max L(w)}\eps) + O(\log\eps^{-1}) \cdot \sum_{v \in N(w)}\wts(w, v) \right) = O\left(\log\eps^{-1} \cdot \wts(G)\right),
    \end{equation*} 
    by \Cref{clm:bdd_max_len}. 
\end{proof}
    
\paragraph{Running time.}
The running time analysis is similar to the Steiner version. First, we reduce the problem to computing $\epsilon$-cnets in a tile $\tau$ in $O(n \log n)$ time. Let $n_\tau$ be the number of points in $N_\tau$. Since $N_\tau$ is an $\epsilon$-cnet, we have $n_\tau = |N_\tau| = O(\epsilon^{-2})$. We show that the total running time is $O({\rm poly}\log \eps^{-1} \cdot n \log n )$.

We first compute the set of lines associated with each point in $N_\tau$, similarly to the Steiner version. Observe that the hitting set for rectangles in any strip can be reduced to the hitting set problem for intervals, which can be solved exactly using a greedy algorithm in $O(m \log m)$ time, where $m$ is the total number of intervals plus candidate points. This is the only difference between the Steiner and the non-Steiner algorithms. 

The total number of intervals (or rectangles) is $O(n_\tau \log \epsilon^{-1})$, and each candidate point belongs to at most $\log \epsilon^{-1}$ strip. Each point in $P$ belongs to at most $O(\log \eps^{-1})$ strips in $\tau$. Therefore, the total time needed to solve all the hitting set instances over all tiles is $\sum_\tau O(n_\tau \log n) + O(\log \eps^{-1}) \cdot n\log{n} = O(\log \eps^{-1}) \cdot n\log{n}$.

The graph $G$ for a single tile $\tau$ has at most $O(n_\tau \cdot \log \epsilon^{-1})$ vertices and edges, so building $G$ takes at most $O(n_\tau \cdot \log \epsilon^{-1})$ time. Adding edges and constructing $G'$ can be done in time linear in $|V(G)| + |E(G)|$. The shortest path tree $T$ rooted at $s$ can be computed in
\[
O(n_\tau \cdot \log \epsilon^{-1} \cdot \log(n_\tau \cdot \log \epsilon^{-1})) = O(n_\tau \log n_\tau \cdot \mathrm{polylog}(\epsilon^{-1})).
\]

Summing over all tiles, the total running time of the graph computation is
\begin{equation*}
    \sum_{\tau} O(n_\tau \log n_\tau \cdot \mathrm{polylog}(\epsilon^{-1})) = O(n \log n \cdot \mathrm{polylog}(\epsilon^{-1})).
\end{equation*}

\section{Generalization to Higher Dimensions} 
\label{sec:dspace}

Our algorithms generalize to Euclidean $d$-space for any constant $d\geq 2$. We briefly discuss the necessary adjustments in $d$-space. The bottleneck in the running time is the minimum hitting set problem. For $(1+\eps)$-STs in the plane, we used minimum hitting sets for $N$ intervals on a line, which can be computed exactly in $O(N\log N)$ time. 
However, for constructing STs in $\mathbb{R}^d$, we need minimum hitting sets for balls (or possibly cubes) of comparable sizes in $\mathbb{R}^{d-1}$.  

\paragraph{Hitting set versus piercing set.}
There are two versions of minimum hitting sets: \emph{discrete} and \emph{continuous}. We use the discrete version for our non-Steiner SLT construction and the continuous version (also known as \emph{piercing set}) for our Steiner SLT. In the \emph{discrete} setting, we are given a set $\mathcal{R}$ of ranges and a finite point set $P$ in $\mathbb{R}^d$, and a \emph{hitting set} is a subset $H\subset P$ such that every range in $\mathcal{R}$ contains at least one point in $H$. In the \emph{continuous setting}, we are given a set $\mathcal{R}$ of ranges in $\mathbb{R}^d$, and a \emph{piercing set} is a subset $H\subset \mathbb{R}^d$ such that every range in $\mathcal{R}$ contains at least one point in $H$.

Finding the \emph{exact} minimum hitting set for Euclidean balls is NP-hard~\cite{FowlerPT81}, even for unit disks in the plane~\cite{HochbaumM87}.
However, an approximate hitting set is sufficient for our purposes. 
Mustafa et al.~\cite{MustafaR10} gave a PTAS for the hitting set problem for disks in $\mathbb{R}^2$, and Agarwal and Pan~\cite{AgarwalP20} gave a randomized $O(1)$-approximation algorithm in near-linear time (their algorithm for the dual set cover problem, with the same approximation ratio, has since been derandomized by \cite{ChanH20}). 
Agarwal and Pan~\cite{AgarwalP20} also gave a randomized $O(\log \log \opt)$-approximation for hyper-rectangles in $\mathbb{R}^d$ for $d=2$ and $d=3$. 
In higher dimensions, however, we are left with the greedy $O(\log N)$-approximation algorithm. Importantly, all instances of the hitting set problem in our constructions,
the ranges are Euclidean balls in $\mathbb{R}^{d-1}$ generated by a $\eps$-cnet $N_i$ in a tile $\tau_i$. Standard volume argument shows that $|N_i|=O(\eps^{-d})$ for every $i$. 
The arrangement of $O(\eps^{-d})$ Euclidean balls in $\mathbb{R}^{d-1}$ generates 
$O(\eps^{-d(d-1)})$ cells~\cite{HS17}. Points in the same cell hit the same balls,
consequently we can reduce the size of the ground set from $n$ to $N=O(\eps^{-d(d-1)})$.
An $O(\log N)$-approximate hitting set incurs only an $O(\log \eps^{-1})$ increase in the weight of the resulting SLT, and does not impact the stretch analysis. 

Finding the exact minimum piercing set problem is also NP-hard, already for unit squares in the plane~\cite{FowlerPT81}. However, the minimum piercing set can be approximated efficiently. There is an $O(N\log N)$-time $O(1)$-approximation algorithms for the minimum piercing set of $N$ fat objects (including Euclidean balls) in $\mathbb{R}^d$ for any constant dimension $d\in \mathbb{N}$~\cite{EfratKNS00,MaratheBHRR95}; and a PTAS is also available~\cite{Chan03,ErlebachJS05}. For axis-aligned hyperrectangles in $\mathbb{R}^d$, in general, only an $O(\log \log \opt)$-approximation is known~\cite{AgarwalHRS24}.

\paragraph{Shallow tree constructions in $d$-space.}
We can now briefly go over the necessary adjustments to extend out algorithms to $\mathbb{R}^d$, for $d\geq 3$. Instead of ellipses $\mathcal{E}_{ps}$, we use ellipsoids with foci $p$ and $s$, and major axis $(1+\eps)\cdot d(p,s)$. The reduction to a $\eps$-cnet in a tile (\Cref{sec:reduction}) easily generalizes to $\mathbb{R}^d$: The only difference is that we cannot tile $\mathbb{R}^d$ be trapezoids: Instead we can \emph{cover} $\mathbb{R}^d$ with cones with apex angle $\sqrt{\eps})$, and obtain a \emph{covering} of $\mathbb{R}^d$ with trapezoids. We can still assign $n$ points to trapezoids in $O(n)$ time. 
In each tile $\tau_i$,  standard volume argument shows that the size of an $\eps$-cnet is $O(\eps^{-d})$. While point location data structures are inefficient in higher dimensions, we can assign the $n_i$ points to $O(\eps^{-d})$ clusters by brute force in $O(n_i\cdot \eps^{-d})$ time. In each cluster $C_a$, $a\in N$, we can still compute a (Steiner) $O(1)$-spanner in $O(|C_a|\log |C_a|)$ time~\cite[Theorem~1.1]{LeS23}. Thus, in a tile $\tau_i$ with $n_i$ points, the reduction to an $\eps$-cnet takes $O(n_i(\log n_i + \eps^{-d}))$ time.

Our (Steiner and non-Steiner) algorithms for a $\eps$-cnet $P_i$ in a tile $\tau_i$ can easily be adapted to $d$-dimensions. We replace the parallel lines $L_i(p)$, for $i=1,\ldots , k-1$, for $k=O(\log \eps^{-1})$, with parallel hyperplanes, and the intervals $\mathcal{E}_p\cap L_i(p)$ with balls of comparable radii in $\mathbb{R}^{d-1}$. For $d=3$, we can use a randomized $O(1)$-approximation of the minimum hitting sets for disks in 2D~\cite{AgarwalP20}; and for $d\geq 4$, we use a $O(\log N)=O(\log \eps^{-1})$-approximate greedy hitting sets. In our algorithms, each invocation of the hitting set problem involves $N=O(n_i)$ balls in a $(d-1)$-dimensional hyperplane, where $n_i=O(\eps^{-d})$. Using an approximate hitting sets incurs an $O(\log N)=O(d\log \eps^{-1})$ factor increase in the number of hitting points, hence the weight of the ultimate ST.
An $O(1)$-approximate minimum piercing set can be computed in $O(n_i\log n_i)$ time for $n_i$ balls in $\mathbb{R}^{d-1}$, for any constant $d\geq 3$~\cite{EfratKNS00,MaratheBHRR95}. The use of approximate piercing sets increases the weight of our Steiner SLT by only a constant factor. In both cases (Steiner and non-Steiner), our root-stretch analysis works for any hitting (resp., piercing) set, and so the root stretch is not impacted by the use of suboptimal hitting (resp., piercing) sets.   

\section{Lower Bounds}
\label{sec:LB}

\subsection{Existentially Optimal Algorithms are Not Instance-Optimal}
\label{ssec:existential}

Recall (\Cref{sec:intro}) that for any $n$ points in Euclidean plane and any $\eps>0$, Awerbuch et al.~\cite{ABP90,ABP91} and Kuller et al.~\cite{KRY95} constructed an $(1+\eps,O(\frac{1}{\eps}))$-SLT and Solomon~\cite{Solomon15} constructed a Steiner $(1+\eps,O(\frac{1}{\eps}))$-SLT. Both bounds are existentially optimal, but they are not necessarily instance-optimal. In this section, we construct instances (i.e., point sets in the plane) for which these algorithms perform poorly: They return $(1+\eps)$-ST (resp., Steiner $(1+\eps)$-ST) of weight significantly larger than optimum. 

First (\Cref{obs:LB}) we present instances for which the approximation ratio attained by the algorithms in \cite{ABP90,ABP91,KRY95} and \cite{Solomon15} are $\Omega(\frac{1}{\eps})$ and $\Omega(\sqrt{1/\eps})$, resp., and our reduction to a centered $\eps$-net cf.~\Cref{sec:reduction}) is sufficient to find an $O(1)$-approximation. 
Then we construct point sets that are already centered $\eps$-nets in a tile (\Cref{thm:LB}), and yet our algorithm vastly outperforms the algorithms in~\cite{KRY95,Solomon15}. 

\paragraph{Existentially Optimal Algorithms.} We start with a brief description of existentially optimal algorithms for a set $P$ of $n$ points in the plane. The algorithm by Khuller et al.~\cite{KRY95} visits the points in a DSF traversal of an $\mst(P)$, and maintains a \emph{current tree} $T$. Initially, $T:=\mst(P)$, and $T$ is represented by the parent function $\pi(v)$ for every vertex $v\in V\setminus \{s\}$. For each vertex $v$ in encountered in the traversal, the algorithm checks whether $d_T(v,s)\leq (1+\eps)d(v,s)$: If not, then it replaces the edge $v\pi(v)$ of $T$ with the new edge $vs$; and sets $\pi(v):=s$.

The algorithm by Awerbuch et al.~\cite{ABP90,ABP91} constructs a Hamiltonian path $H$ that visits the points in an order determined by a DFS traversal of $\mst(P)$, where 
$\wts(\mst(P))\leq \wts(H)\leq 2\, \wts(\mst(P))$. They greedily break $H$ into subpaths $H_i$ such that $\wts(H_i)=\eps\cdot \min_{p\in H_i}(p,s)$; and finally add an edge between $s$ and the closest vertex in each subpath. 

The algorithm by Solomon~\cite{Solomon15} proceeds similarly to that of Awerbuch et al.~\cite{ABP90,ABP91}: It constructs a Hamiltonian path $H$ with 
$\wts(\mst(P))\leq \wts(H)\leq 2\, \wts(\mst(P))$, but it breaks $H$ into subpaths $H_i$ such that $\wts(H_i)=\sqrt{\eps}\cdot \min_{p\in H_i}(p,s)$. Between $s$ and each subpath, it adds a Steiner tree of weight $O(\min_{p\in H_i}(p,s))$.

\begin{observation}\label{obs:LB}
    For every $\eps>0$, there exists a set $P\subseteq \mathbb{R}^2$ and a source $s\in P$ such that the minimum weight of an $(1+\eps)$-ST (resp., Steiner $(1+\eps$)-ST) is 
    $O(1)\cdot w(\mst(P)),$
    but any previous algorithm in~\cite{ABP90,ABP91,KRY95}
    returns an $(1+\eps)$-ST of weight $\Omega(\frac{1}{\eps})\cdot w(\mst(P))$,
    and the algorithm in~\cite{Solomon15}
    returns a Steiner $(1+\eps)$-ST of weight $\Omega(\frac{1}{\sqrt{\eps}})\cdot w(\mst(P))$.
\end{observation}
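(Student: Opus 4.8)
The plan is to build, for each $\eps>0$, a point set consisting of the source $s$ together with two \emph{parallel, nearly‑radial} segments, whose minimum spanning tree is a ``ladder'' that fools all three classical algorithms, even though a nearly‑shortest‑path tree of weight $O(\wts(\mst(P)))$ exists. Put $s=(0,0)$ and fix $m=\Theta(\eps^{-1})$, $\delta=\Theta(\eps)$, $h=\Theta(\eps)$ with $2\eps<h<\delta$ (say $h=3\eps$, $\delta=4\eps$, $m=\lceil\eps^{-1}/4\rceil$). Let $a_i=(1+(i-1)\delta,0)$ and $b_i=(1+(i-1)\delta,h)$ for $i=1,\dots,m$, and $P=\{s\}\cup\{a_i,b_i:1\le i\le m\}$. (Shrinking the $a$‑spacing by a $1-\eps^{10}$ factor makes the MST unique and changes nothing below.) Every point of $P\setminus\{s\}$ lies at distance in $[1,2]$ from $s$; since $h<\delta$, Kruskal's algorithm picks the $m$ rungs $a_ib_i$ (length $h$), the $m-1$ rails $a_ia_{i+1}$ (length $\delta$), and the edge $sa_1$ (length $1$), so $\mst(P)$ is the ladder rooted at $s$ with $\pi(a_1)=s$, $\pi(a_{i+1})=a_i$, $\pi(b_i)=a_i$, and $\wts(\mst(P))=1+(m-1)\delta+mh=\Theta(1)$.

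\medskip
\noindent\textbf{A light $(1+\eps)$‑ST.} Consider $T^\star=\{sa_1\}\cup\{a_ia_{i+1}\}_i\cup\{sb_1\}\cup\{b_ib_{i+1}\}_i$, i.e.\ two paths, one along each segment, both leaving $s$. Then $\wts(T^\star)=1+\sqrt{1+h^2}+2(m-1)\delta=\Theta(1)=O(\wts(\mst(P)))$. The $a$‑path lies on a ray through $s$, hence has root‑stretch $1$; for $b_i$ the $T^\star$‑path $b_i\to b_1\to s$ has length $(d(a_i,s)-1)+\sqrt{1+h^2}$ whereas $d(b_i,s)=\sqrt{d(a_i,s)^2+h^2}$, and $\sqrt{1+x}=1+\tfrac x2+O(x^2)$ gives root‑stretch at most $1+h^2<1+\eps$. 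Thus $T^\star$ is a $(1+\eps)$‑ST of weight $O(\wts(\mst(P)))$; as it uses no Steiner points, the minimum weight of a $(1+\eps)$‑ST — and therefore of a Steiner $(1+\eps)$‑ST — is $O(\wts(\mst(P)))$.

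\medskip
\noindent\textbf{The classical algorithms over‑pay.} Along the ladder's spine $d_T(a_i,s)=d(a_i,s)$, so no $a_i$ is ever over‑stretched, while $d_T(b_i,s)=d(a_i,s)+h$; since $h>2\eps$ and $d(a_i,s)\in[1,2]$, a short computation with $\bigl(d_T(b_i,s)/d(b_i,s)\bigr)^2=1+\tfrac{2h\,d(a_i,s)}{d(a_i,s)^2+h^2}$ gives $d_T(b_i,s)>(1+\eps)\,d(b_i,s)$ for every $i$.
\begin{itemize}
\item \emph{Khuller et al.~\cite{KRY95}} initializes with $\mst(P)$ and, during a DFS traversal, reconnects each over‑stretched vertex directly to $s$. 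Each $b_i$ is a leaf whose parent $a_i$ is visited earlier and is never reconnected, so when $b_i$ is processed it is over‑stretched and $a_ib_i$ is replaced by $b_is$. All $m$ edges $b_is$ (each of length $\ge1$) thus enter the output, whose weight is $\ge\sum_{i=1}^m d(b_i,s)=\Theta(m)=\Omega(\eps^{-1})\cdot\wts(\mst(P))$.
\item \emph{Awerbuch et al.~\cite{ABP90,ABP91}} forms a Hamiltonian path $H$ from a DFS of $\mst(P)$ ($\wts(\mst(P))\le\wts(H)\le2\wts(\mst(P))$) and greedily cuts it into maximal subpaths $H_j$ with $\wts(H_j)\le\eps\min_{p\in H_j}d(p,s)$. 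Since $\min_{p\in H_j}d(p,s)\ge1$ for $H_j\ne\{s\}$ while every edge of $H$ has length $\ge h>2\eps$, each such $H_j$ is a single vertex; the algorithm then adds an edge from $s$ to every vertex, outputting a tree of weight $\ge\sum_{v\ne s}d(v,s)=\Theta(m)=\Omega(\eps^{-1})\cdot\wts(\mst(P))$.
\item \emph{Solomon~\cite{Solomon15}} uses the same $H$ but cuts it into subpaths with $\wts(H_j)\le\sqrt\eps\min_{p\in H_j}d(p,s)\le2\sqrt\eps$; each such $H_j$ then has $\Theta(\eps^{-1/2})$ edges, so there are $\Theta(\eps^{-1/2})$ subpaths, and it attaches to $s$ a Steiner tree of weight $\Omega(1)$ (at least $d(s,H_j)\ge1$) per subpath, for total weight $\Omega(\eps^{-1/2})=\Omega(\eps^{-1/2})\cdot\wts(\mst(P))$.
\end{itemize}

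\medskip
The only slightly technical steps are pinning down $\mst(P)$ exactly — the tie among the $\Theta(1)$ weight‑$\delta$ rail edges, which the $1-\eps^{10}$ perturbation removes (or one settles by a direct Kruskal argument) — and the two second‑order root‑stretch estimates, both routine uses of $\sqrt{1+x}=1+\tfrac x2+O(x^2)$ with $|x|=O(\eps)$; the rest is bookkeeping with $2\eps<h<\delta=\Theta(\eps)$ and $m\delta=\Theta(1)$. I expect making $\mst(P)$ canonical — so that the DFS order and the reconnections behave verbatim as claimed — to be the one place that genuinely needs care.
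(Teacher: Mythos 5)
Your proof is correct, and it takes a genuinely different route from the paper's. The paper places $\Theta(k)$ unit-length vertical segments of points inside a full unit square with $k=\Omega(\eps^{-2})$, so $\wts(\mst(P))=\Theta(k)$, and then bounds the optimum by invoking the centered $\eps$-net reduction of \Cref{lem:net} together with Few's $O(\sqrt m)$ bound on the MST of $m$ points in a unit square; the algorithmic lower bounds are then a counting argument on the $O(k/\eps)$ (resp.\ $O(k/\sqrt\eps)$) reattachment edges. Your ladder keeps $\wts(\mst(P))=\Theta(1)$ and replaces the $\eps$-net machinery with a fully explicit witness tree $T^\star$ (two monotone paths from $s$), whose root-stretch $1+O(\eps^2)$ you check directly; the lower bounds on the three algorithms are verified by the same kind of elementary budget/edge-length computation in both proofs, just with different constants. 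The tradeoff is that your argument is more self-contained and avoids Few's theorem, but it requires making $\mst(P)$ canonical (your $1-\eps^{10}$ perturbation, or a direct tie-breaking argument) so that the DFS traversal and parent function in \cite{KRY95} are pinned down; the paper's construction is less symmetric and does not face that ambiguity. Two small bookkeeping points worth noting when you finalize: for Awerbuch et al.\ you should state $\wts(H_j)\le\eps\min_{p\in H_j}d(p,s)$ (an inequality, not equality) so the ``single-vertex subpath'' conclusion is airtight, and for \cite{KRY95} the reason the reconnections are verbatim as you claim is exactly that each $b_i$ is a leaf whose ancestor distances $d_T(a_j,s)$ never change — worth saying explicitly, as you do in your last paragraph. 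Neither affects correctness.
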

\begin{proof}
  We describe the point set $P$ for a given $\eps>0$ and a parameter $k\in \mathbb{N}$; see \Cref{fig:easyLB}. The source is $s=(2,0)$ on the $x$-axis, all other points lie in the unit cube $[0,1]^2$. Specifically,  the points lie on the the bottom side of $[0,1]^2$, and in $k$ vertical lines. Place points on these lines so that $\mst(P)$ consists the line segment between the $s$ and the origin; and the intersection of the vertical lines with the unit cube $[0,1]^2$. 
 That is, $\mst(P)$ consists of a horizontal line segment of length 2 and $k$ vertical unit line segments. Consequently, we have $\wts(\mst(P))=2+k$.
\begin{figure}[htbp]
    \centering
    \includegraphics[width=.5\textwidth]{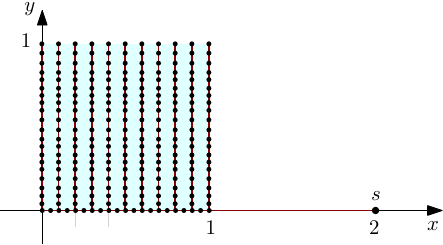}
    \caption{A source $s=(2,0)$ and points in a unit cube $[0,1]^2$.}
    \label{fig:easyLB}
\end{figure}

\paragraph{Existentially Optimal SLT Algorithms.}
In the point set $P$ defined above, every vertex $v$ on the $x$-axis satisfies $d_T(v,s)=d(v,s)$ and its parent is also on the $x$-axis. 
Therefore, the algorithm by Khuller et al.~\cite{KRY95} never deletes any edge on the $x$-axis. For every point $v\in V\setminus \{s\}$, we have $1\leq d(v,s)\leq \sqrt{5}$. It follows that an $sv$-path of length at most $(1+\eps)\cdot d(s,v)$ cannot contain vertical edges of total length $\Omega(\eps)$. Consequently, the algorithm breaks every vertical path of unit length into subpaths, each of length $O(\eps)$, and every subpath requires a new edge incident to $s$. 
The total length of the vertical paths of $\mst(P)$ is $k$: This is broken into $\Omega(k/\eps)$ subpaths, and so $\Omega(k/\eps)$ new edges incident to $s$ are added. Every edge between $s$ a point in $R$ has length at least 1, and so the total length of new edges is also $\Omega(k/\eps)$. Overall, the algorithm by Khuller et al.~\cite{KRY95} returns an $(1+\eps)$-SL of weight $\Omega(k/\eps) = \Omega(1/\eps)\cdot \wts(\mst(P))$.

The algorithm by Awerbuch et al.~\cite{ABP90,ABP91} constructs a Hamiltonian path $H$ of weight $\Theta(\wts(\mst(P)))=\Theta(k)$. Since the distance between $s$ and all other points is $\Theta(1)$, this algorithm breaks $H$ into $\Omega(k/\eps)$ subpaths, and adds an edge of weight $\Omega(1)$ for each subpath.
Consequently, it returns an $(1+\eps)$-SL of weight $\Omega(k/\eps) = \Omega(k/\eps)\cdot \wts(\mst(P))$.

The algorithm by Solomon~\cite{Solomon15}
breaks the Hamiltonian path $H$ into $\Omega(k/\sqrt{\eps})$ subpaths,
and adds a Steiner tree of weight $\Theta(1)$ for each. Consequently, it returns a Steiner $(1+\eps)$-ST of weight $\Omega(k/\sqrt{\eps}) = \Omega(\sqrt{1/\eps})\cdot \wts(\mst(P))$.

\paragraph{Minimum-Weight $(1+\eps)$-ST.}
We give an upper bound for the minimum weight of a $(1+\eps)$-ST for $P$, by reducing the problem to a centered $\frac{\eps}{2}$-net. 
Since $P\setminus \{s\}\subset [0,1]^2$, 
and $1\leq d(p,s)\leq \sqrt{5}$ for all $p\in P\setminus \{s\}$, then there exists a centered $\frac{\eps}{2}$-net $N\subseteq P$ of size $O(1/\eps^2)$. 
The weight of the $\mst$ of any $m$ points in a unit square is $O(\sqrt{m})$~\cite{Few55}, and so $\wts(\mst(N))\leq O(\frac{1}{\eps})$, hence $\wts(\mst(N\cup \{s\}))\leq O(\frac{1}{\eps})$. The algorithm in~\cite{KRY95} gives a $(1+\frac{\eps}{3})$-SL for $N\cup\{s\}$ of weight $O(\frac{1}{\eps})\cdot \wts(\mst(N\cup \{s\})) = O(1/\eps^2)$. 
By \Cref{lem:net}, $P$ admits a $(1+\eps)$-ST of weight $O(\wts(\mst(P)))+O(1/\eps^2)$.
If $k=\Omega(1/\eps^2)$, then the minimum weight of a $(1+\eps)$-ST is $O(\wts(\mst(P)))$. 
\end{proof}

\paragraph{Construction of a centered $\eps$-net in the plane.} Next we construct instances where the reduction to centered $\eps$-nets (\Cref{sec:reduction}) does not help. 

\begin{theorem}\label{thm:LB}
    For every $\eps>0$, there exists a source $s$ and a centered $\eps$-net $P\subseteq \mathbb{R}^2$ such that the minimum weight of an $(1+\eps)$-ST (resp., Steiner $(1+\eps$)-ST) is 
    $O(1)\cdot w(\mst(P)),$
    but any previous algorithm in~\cite{ABP90,ABP91,KRY95}
    returns an $(1+\eps)$-ST of weight $\Omega(\frac{1}{\eps})\cdot w(\mst(P))$,
    and the algorithm in~\cite{Solomon15}
    returns a Steiner $(1+\eps)$-ST of weight $\Omega(\frac{1}{\sqrt{\eps}})\cdot w(\mst(P))$.
\end{theorem}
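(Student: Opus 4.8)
The plan is to refine the construction of \Cref{obs:LB} by replacing each ``solid'' vertical unit segment there --- which is exactly why that point set fails to be a centered $\eps$-net --- with a sparse \emph{spine-plus-bait} gadget whose points stay $\Omega(\eps)$-separated. Put $s=(2,0)$ and let the \emph{spine} be the points $q_j=(2j\eps,0)$ for $j=0,1,\dots,\lceil 1/(2\eps)\rceil$, so the spine lies on $[0,1]\times\{0\}$ and each $q_j$ is collinear with $s$. For every $j$ with $x(q_j)\in[\tfrac12,1]$ attach one \emph{bait} $b_j=(2j\eps,\,c\eps)$, where $c$ is a fixed constant chosen in $(\tfrac32,\sqrt5)$ (say $c=2$), and set $P=\{s\}\cup\{q_j\}_j\cup\{b_j\}_j$. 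Since $P$ lies in the sliver $[0,1]\times[0,c\eps]$ and $c\eps\ll\sqrt\eps$, one checks it fits inside a single tile of the trapezoid tiling $\mathcal T$, so the reductions of \Cref{sec:reduction} do not split it further. That $P$ is a centered $\eps$-net follows by running through the $O(1)$ types of ``close'' pairs (consecutive $q_j$'s, consecutive $b_j$'s, a bait and its own spine point): all points lie at distance in $[1,2+O(\eps)]$ from $s$, and the only pair whose separation genuinely constrains $c$ is $(b_j,q_j)$, requiring $d(b_j,q_j)=c\eps>\eps\cdot d(q_j,s)=\eps\,(2-x(q_j))$, which is guaranteed by $c>\tfrac32$ together with the restriction $x(q_j)\ge\tfrac12$.

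Next I record that $\wts(\mst(P))=\Theta(1)$: any spanning tree must cover the $x$-interval from $q_0$ to $s$, hence weighs $\ge 2$, while the $\Theta(1/\eps)$ baits add $\Theta(1/\eps)\cdot\Theta(\eps)=\Theta(1)$ more, with a matching upper bound from joining each $b_j$ to $q_j$. For the \emph{upper bound on the optimum}, I exhibit one $(1+\eps)$-ST --- which is then also a Steiner $(1+\eps)$-ST --- of weight $\Theta(1)=O(\wts(\mst(P)))$: take the spine path $q_0$–$q_1$–$\cdots$–$s$, whose root-stretch is exactly $1$, and route every bait $b_j$ through the single edge $b_jq_{j+1}$ and thence along the spine. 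Since $b_jq_{j+1}$ has length $\eps\sqrt{4+c^2}=\Theta(\eps)$ and $\slope=\Theta(1)$, \Cref{lem:slack}-style bookkeeping gives
\[
\wts(b_jq_{j+1})+\bigl(2-x(q_{j+1})\bigr)=d(q_j,s)+\eps\bigl(\sqrt{4+c^2}-2\bigr)\le d(q_j,s)+\eps\le (1+\eps)\,d(b_j,s),
\]
using $\sqrt{4+c^2}\le 3$ (i.e.\ $c\le\sqrt5$) in the middle and $d(b_j,s)\ge d(q_j,s)\ge 1$ at the end; the total weight is $2+\sum_j\Theta(\eps)=\Theta(1)$, which settles both ``optimum'' claims.

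For the \emph{lower bounds against the three algorithms}: in the DFS of $\mst(P)$ rooted at $s$, Khuller et al.'s algorithm~\cite{KRY95} visits each bait $b_j$ as a pendant of $q_j$ with current-tree distance $c\eps+(2-x(q_j))$, and squaring shows $c\eps+(2-x(q_j))>(1+\eps)\sqrt{(2-x(q_j))^2+c^2\eps^2}$ whenever $c>2-x(q_j)$ and $\eps$ is small, so it deletes $b_jq_j$ and inserts the never-removed edge $b_js$ of weight $\ge1$; with $\Theta(1/\eps)$ such edges the output weighs $\Omega(1/\eps)\cdot\wts(\mst(P))$. Awerbuch et al.~\cite{ABP90,ABP91} cut a Hamiltonian traversal $H$ of $\mst(P)$, which has $\wts(H)\ge\wts(\mst(P))=\Theta(1)$, into subpaths of weight at most $\eps\cdot\min_p d(p,s)+O(\eps)=O(\eps)$, hence $\Omega(1/\eps)$ of them, each contributing a fresh edge to $s$ of weight $\ge1$, so again $\Omega(1/\eps)\cdot\wts(\mst(P))$; and Solomon~\cite{Solomon15} repeats this with threshold $\sqrt\eps\cdot\min_p d(p,s)$, producing $\Omega(1/\sqrt\eps)$ subpaths, each needing a Steiner connection to $s$ of weight $\ge1$, i.e.\ $\Omega(1/\sqrt\eps)\cdot\wts(\mst(P))$.

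The one genuinely delicate point --- and the step I expect to take the most care --- is that the bait offset $c\eps$ must satisfy \emph{three} inequalities simultaneously with strict slack: net-separation ($c>2-x(q_j)$, wanting $c$ large), optimum-feasibility ($\sqrt{4+c^2}\le 3$, wanting $c$ small), and Khuller-reconnection ($c>2-x(q_j)$, wanting $c$ moderate and $d(q_j,s)$ bounded away from $2$). Restricting the baits to spine points with $x(q_j)\in[\tfrac12,1]$ makes the common window $c\in(\tfrac32,\sqrt5)$ nonempty, so all three hold for, e.g., $c=2$ and all sufficiently small $\eps$; everything else is the same accounting already used in \Cref{obs:LB} --- count the forced reconnections or subpaths and multiply by the $\Omega(1)$ cost of reaching $s$.
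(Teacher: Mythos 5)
Your construction is correct, but it differs from the paper's. The paper keeps the role of the vertical structure from \Cref{obs:LB} intact: it places $\Theta(\sqrt{1/\eps})$ vertical segments of height $\sqrt{\eps}$ clustered in $x\in[0,\Theta(\sqrt{\eps})]$ (plus the long spine to $s$), so the MST weight is a horizontal contribution of $2$ plus a vertical contribution $\Theta(\sqrt{1/\eps})\cdot\sqrt{\eps}=\Theta(1)$; the optimum is exhibited as a graph with $\eps/2$-spaced horizontal paths feeding into $s$, from which one takes an SPT. You instead flatten the off-spine mass into $\Theta(1/\eps)$ single ``bait'' points of height $\Theta(\eps)$ spread over $x\in[\tfrac12,1]$, which also yields $\Theta(1/\eps)\cdot\Theta(\eps)=\Theta(1)$ extra MST weight; your optimum is a genuine \emph{tree} (route each $b_j$ through $q_{j+1}$ and down the spine), with a clean pointwise root-stretch check via the $\sqrt{4+c^2}\le 3$ slack estimate. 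Both constructions give identical asymptotics against all three algorithms, since the three arguments only use $\wts(H)=\Theta(1)$, $d(p,s)=\Theta(1)$, and (for Khuller) that each off-axis point is sufficiently off-axis to force reconnection; your three-way constraint window $c\in(\tfrac32,\sqrt5)$ makes the last point explicit in a way the paper leaves implicit. Your tree-based upper bound on the optimum is arguably cleaner than the paper's graph-plus-SPT argument. One cosmetic nit: with $c=2$ exactly, $d(b_j,q_j)=d(q_j,q_{j+1})=d(b_j,b_{j+1})=2\eps$, so the MST is not unique and the DFS-based Khuller argument depends on tie-breaking; this is resolved by taking $c=2-\delta$ for a tiny $\delta<\tfrac12$, which keeps all three of your constraints strictly satisfied.
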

\begin{proof}
We start by describing a point set $P$ for a given $\eps>0$; see \Cref{fig:LB}(left). The source is $s=(2,0)$ on the $x$-axis, all other points lie in the axis-aligned box $R=[0,1]\times [0,\sqrt{\eps}]$. Specifically, 
the points lie on the bottom side of $R$, and in vertical lines of the form $x=i\cdot 2\eps$ for all $0\leq i\leq \frac{1}{2\sqrt{\eps}}$.

Place points on these lines within $R$ so that $\mst(P)$ consists of the line segment between $s$ and the origin; and the intersection of the vertical lines with the box $R$. 
That is, $\mst(P)$ consists of a horizontal line segment of length 2 and $\Theta(\sqrt{1/\eps})$ vertical line segments each of length $\sqrt{\eps})$. Consequently, we have $\wts(\mst(P))=2+\Theta(\sqrt{\eps}\cdot \sqrt{1/\eps}) = \Theta(1)$.

\begin{figure}[htbp]
    \centering
    \includegraphics[width=\textwidth]{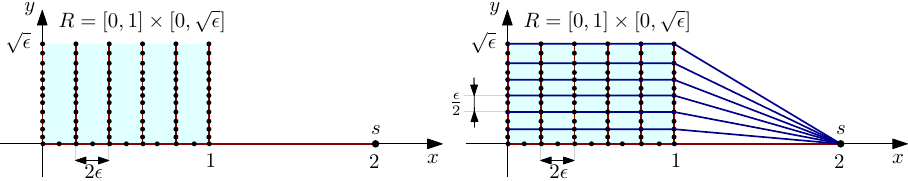}
    \caption{Left: A source $s=(2,0)$ and points in a rectangle $R=[0,1]\times [0,\sqrt{\eps}]$. Right: A graph $G$ that contains a $(1+\eps)$-SL.}
    \label{fig:LB}
\end{figure}

\paragraph{Existentially Optimal SLT Algorithms.}
In the point set $P$ defined above, every vertex $v$ on the $x$-axis satisfies $d_T(v,s)=d(v,s)$ and its parent is also on the $x$-axis. 
Therefore, the algorithm by Khuller et al.~\cite{KRY95} never deletes any edge on the $x$-axis. For every point $v\in V\setminus \{s\}$, we have $1\leq d(v,s)\leq \sqrt{5}$. It follows that an $sv$-path of length at most $(1+\eps)\cdot d(s,v)$ cannot contain vertical edges of total length $\Omega(\eps)$. Consequently, the algorithm breaks every vertical path of length $\sqrt{\eps}$ into subpaths, each of length $O(\eps)$, and every subpath requires a new edge incident to $s$. 
The total length of the vertical paths of $\mst(P)$ is $\Theta(1)$: This is broken into $\Omega(1/\eps)$ subpaths (each of length $O(\eps)$), and so $\Omega(1/\eps)$ new edges incident to $s$ are added. Every edge between $s$ a point in $R$ has length at least 1, and so the total length of new edges is also $\Omega(1/\eps)$. Overall, the algorithm by Khuller et al.~\cite{KRY95} returns a $(1+\eps)$-ST of weight $\Omega(1/\eps) = \Omega(1/\eps)\cdot \wts(\mst(P))$.

The algorithm by Awerbuch et al.~\cite{ABP90,ABP91} constructs a Hamiltonian path $H$ of wight $\Theta(\wts(\mst(P)))=\Theta(1)$. Since the distance between $s$ and all other points is $\Theta(1)$, this algorithm breaks $H$ into $\Omega(1/\eps)$ subpaths, and adds an edge of weight $\Omega(1)$ for each subpath.
Consequently, it returns an $(1+\eps)$-SL of weight $\Omega(1/\eps) = \Omega(1/\eps)\cdot \wts(\mst(P))$.

The algorithm by Solomon~\cite{Solomon15}
breaks the Hamiltonian path $H$ into $\Omega(\sqrt{1/\eps})$ subpaths,
and adds a Steiner tree of weight $\Theta(1)$ for each. Consequently, it returns a Steiner $(1+\eps)$-ST of weight $\Omega(\sqrt{1/\eps}) = \Omega(\sqrt{1/\eps})\cdot \wts(\mst(P))$.

\paragraph{Optimum $(1+\eps)$-ST.}
While we do not know the minimum weight of a $(1+\eps)$-ST for $P$, we can construct a graph that contains a path of length at most $(1+\eps)\cdot (s,v)$ for every point $p\in P$; see \Cref{fig:LB} (right): Augment $\mst(P)$ to a graph $G$ with equally spaced horizontal paths in the rectangle $R=[0,1]\times [0,\sqrt{\eps}]$ at distance $\eps/2$ apart; and connect the right endpoint of each path to $s$ by a single edge. The weight  of each path is less than $3$, so the total weight of these paths is less than $\sqrt{\eps}/(\eps/2)\cdot 3=6\cdot \sqrt{\eps}=O(\wts(\mst(P)))$, hence $\wts(G)=O(\wts(\mst(P)))$. For every vertex $p=(x(p),y(p))\in P$, we  can find a $sp$-path $\gamma_{sp}$ as follows: Follow a horizontal path to the vertical line $x=x(p)$, and then use a vertical path of weight less than $\frac{\eps}{2}$. Easy calculation shows that $\wts(\gamma_{sp})\leq (1+\eps)\, d(s,p)$: One the one hand, $d(s,p)\geq |x(s)-x(p)|=2-x(p)\geq 1$. The weight of the first edge of $\gamma_{sp}$ is at most $\sqrt{1^2+(\sqrt{\eps})^2}=\sqrt{1+\eps}< 1+\eps/2$, the weight of the horizontal path is $1-x(p)$, and the weight of the vertical path is less than $\eps/2$. Overall, $\wts(\gamma_{sp})< (2-x(p))+\eps\leq (1+\eps)\cdot (2-x(p))\leq (1+\eps)\, d(s,p)$.
\end{proof}

\subsection{Steiner SLTs in a Sector}
\label{sec:LB-sector}

One of our main results is a bi-criteria approximation for the minimum-weight Steiner $(1+\eps)$-ST for a finite point set $P\subset \mathbb{R}^2$ (\Cref{thm:main}). In \Cref{sec:sector}, we reduced the problem to a set of centered $\eps$-net in a tile $\tau\in \mathcal{T}$ (recall that $\tau$ is a region in a cone with apex $s$ and aperture $\sqrt{\eps}$, within $\Theta(1)$ distance from $s$; see \Cref{fig:spiderweb}). 

The classical lower-bound construction for this problem consists of a set $P_0$ of uniformly distributed points along a circle of unit radius centered at $s$. In this case, $\wts(\mst(P_0))\approx 1+\pi = \Theta(1)$, and any Steiner $(1+\eps)$-ST for $P_0$ has weight $\Omega(\sqrt{1/\eps})$. However, if $P\subset P_0$ is the subset of points in a cone of angle $\sqrt{\eps}$, then $\wts(\mst(P))\approx 1+\sqrt{\eps}\cdot \pi=\Theta(1)$, and there exists a Steiner $(1+\eps)$-ST of weight $O(1)=O(1)\cdot \wts(\mst(P))$~\cite{Solomon15}. This raises the question: What is the maximum lightness of a Steiner $(1+\eps)$-ST for points in a cone of aperture $\sqrt{\eps}$.

In this section, we give a lower bound on the maximum lightness of a minimum-weight Steiner $(1+\eps)$-ST for points in a cone of aperture $\sqrt{\eps}$ (\Cref{thm:LB-sector}). 
For simplicity, we place points in a rectangle.

\begin{theorem}\label{thm:LB-sector}
   For every $\tau\in \mathcal{T}$, there is a finite point set $P\subset \tau$ such that the minimum weight of a Steiner $(1+\eps)$-ST for $P\cup \{s\}$ is $\Omega(\eps^{-1/4})\cdot \wts(\mst(P\cup \{s\})$.
\end{theorem}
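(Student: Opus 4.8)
The plan is to construct, for any $\tau\in\mathcal{T}$, a point set $P\subset\tau$ with $\wts(\mst(P\cup\{s\}))=\Theta(1)$ whose lightest Steiner $(1+\eps)$-ST has weight $\Omega(\eps^{-1/4})$; the theorem then follows at once. A set on the arc of $\tau$ is useless here (by~\cite{Solomon15} it admits a Steiner $(1+\eps)$-ST of weight $O(\wts(\mst))$), and so is an isotropic grid; the idea is to pick a grid whose two spacings differ so that the set is ``genuinely two-dimensional'' exactly at the cross-section scale of the ellipses $\mathcal{E}_{ps}$. Normalizing as in \Cref{sec:sector} so that $s=(2,0)$, the tile $\tau$ contains the rectangle $R:=[0,1]\times[-\tfrac12\sqrt\eps,\tfrac12\sqrt\eps]$ (cf.\ the shape of the tiles in \Cref{ssec:SteinerReduction}). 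Let $C$ be the constant in \Cref{lem:crosssection}, so that a vertical chord of $\mathcal{E}_{pb(p),2\eps}$ at horizontal distance $x\le\tfrac23$ to the right of $p$ has length at most $C\sqrt{\eps x}$. I would take $P$ to be the axis-parallel grid in $R$ with horizontal spacing $\sqrt\eps$ and vertical spacing $\delta:=8C\eps^{3/4}$: thus $P$ has $\Theta(\eps^{-1/2})$ vertical columns $x=i\sqrt\eps$, each carrying $\Theta(\eps^{-1/4})$ points at spacing $\delta$ and spanning height $\sqrt\eps$, so $|P|=\Theta(\eps^{-3/4})$. This set is light to span: each column has an internal spanning path of weight $\le\sqrt\eps$, so spanning all columns costs $\Theta(\eps^{-1/2})\cdot\sqrt\eps=\Theta(1)$; linking the columns along a horizontal segment and adding one edge to $s$ costs another $\Theta(1)$, and $\wts(\mst(P\cup\{s\}))\ge{\rm dist}(P,s)\ge1$, so $\wts(\mst(P\cup\{s\}))=\Theta(1)$.

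For the lower bound, fix any Steiner $(1+\eps)$-ST $T$ of $P\cup\{s\}$; I would show $\wts(T)=\Omega(\eps^{-1/4})$ by a ``fan-out'' argument at the single coarsest scale $u\asymp\sqrt\eps$. For $p\in P$, $T$ contains a $ps$-path $\pi_p$ with $\wts(\pi_p)\le(1+\eps)d(p,s)$, hence $\pi_p\subset\mathcal{E}_{ps}\subset\mathcal{E}_{pb(p),2\eps}$ (the inclusion is \Cref{lem:approxellipse}, valid since $|\slope(ps)|\le\tfrac12\sqrt\eps$ because $|y(p)|\le\tfrac12\sqrt\eps$ and $x(s)-x(p)\ge1$). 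For a column $C_i$ at $x=i\sqrt\eps$ and $p\in C_i$, set $S_i:=\{\,i\sqrt\eps+\tfrac14\sqrt\eps\le x\le i\sqrt\eps+\sqrt\eps\,\}$. Since $\pi_p$ runs from $s$ (at $x=2$) to $p$ (at $x=i\sqrt\eps$) it meets both bounding lines of $S_i$, so $\wts(\pi_p\cap S_i)\ge\tfrac34\sqrt\eps$. Every vertical line in $S_i$ is to the right of $p$ at distance $\le\sqrt\eps\le\tfrac23$, so by \Cref{lem:crosssection} the ellipse $\mathcal{E}_{pb(p),2\eps}$, and therefore $\pi_p$, meets $S_i$ only in the horizontal band $\{\,|y-y(p)|\le\tfrac C2\eps^{3/4}\,\}$ about its (horizontal) major axis. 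Hence the pieces $\{\pi_p\cap S_i\}$, ranging over all columns $i$ and all $p\in C_i$, are pairwise disjoint: inside one column the bands of distinct points are disjoint since $\delta=8C\eps^{3/4}>C\eps^{3/4}$; across two columns the strips $S_i,S_{i'}$ are disjoint since $S_i$ has width $\tfrac34\sqrt\eps<\sqrt\eps$, the column spacing. Summing these disjoint contributions, $\wts(T)\ge\sum_i\sum_{p\in C_i}\wts(\pi_p\cap S_i)\ge|P|\cdot\tfrac34\sqrt\eps=\Theta(\eps^{-3/4})\cdot\Theta(\sqrt\eps)=\Theta(\eps^{-1/4})$. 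With $\wts(\mst(P\cup\{s\}))=\Theta(1)$ this yields the claimed bound.

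All the geometry is already available: $\pi_p\subset\mathcal{E}_{ps}$ is \Cref{sec:pre}, and $\mathcal{E}_{ps}\subset\mathcal{E}_{pb(p),2\eps}$ together with the chord estimate are \Cref{lem:approxellipse,lem:crosssection}. I expect the real work to be the disjointness bookkeeping that makes the per-point contributions additive: the vertical spacing $\delta=\Theta(\eps^{3/4})$ is pinned exactly at the cross-section scale $\sqrt{\eps\cdot\sqrt\eps}$ of $\mathcal{E}_{ps}$ at horizontal distance $\sqrt\eps$ — small enough that a column still holds $\Theta(\eps^{-1/4})$ points, yet large enough that the paths to them stay in separate bands — and the window $[\tfrac14\sqrt\eps,\sqrt\eps]$ that defines $S_i$ must be wide enough to force $\wts(\pi_p\cap S_i)=\Omega(\sqrt\eps)$ and short enough that consecutive strips $S_i$ stay disjoint. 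Checking that these balances survive the precise constants of \Cref{lem:crosssection,lem:approxellipse} is the crux. Notably no multi-scale telescoping is needed — a single coarsest scale $u\asymp\sqrt\eps$ already gives $\eps^{-1/4}$ — although a scale-by-scale refinement (as in the classical circle construction) might push the bound higher than the theorem requires.
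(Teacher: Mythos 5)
Your proposal is correct and takes essentially the same approach as the paper: both use a grid of $\Theta(\eps^{-1/2})$ columns at horizontal spacing $\sqrt{\eps}$ carrying $\Theta(\eps^{-1/4})$ points each at vertical spacing $\Theta(\eps^{3/4})$, and both charge $\Omega(\sqrt{\eps})$ of weight to a vertical strip to the right of each point, with disjointness following from the $\Theta(\eps^{3/4})$ chord estimate of \Cref{lem:crosssection}. The only cosmetic differences are that the paper places points densely on the vertical lines and then extracts the sparse subset of pairwise-disjoint boxes, whereas you place the points directly at the right density, and your offset of the strip $S_i$ to start at $x(p)+\tfrac14\sqrt{\eps}$ cleanly avoids the small leftward excursion of $\pi_p$ near the focus $p$ (the paper in fact has a typo at this step, writing $\Theta(\eps^{1/4})$ where the box height should read $\Theta(\eps^{3/4})$, which your computation implicitly corrects).
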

\begin{proof}
We start by describing a point set $P$ for a given $\eps>0$; see \Cref{fig:LB2}. The source is $s=(2,0)$ on the $x$-axis, all other points lie in the axis-aligned box $R=[0,1]\times [0,\sqrt{\eps}]$. Specifically, 
the points lie on the bottom side of $R$, and in vertical lines of the form $x=i\cdot \sqrt{\eps}$ for all $0\leq i\leq \sqrt{1/\eps}$.

\begin{figure}[htbp]
    \centering
    \includegraphics[width=.5\textwidth]{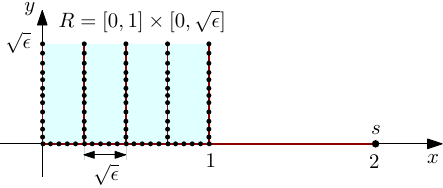}
    \caption{A source $s=(2,0)$ and points in a rectangle $R=[0,1]\times [0,\sqrt{\eps}]$.}
    \label{fig:LB2}
\end{figure}

It is clear from the construction that 
\[
    \wts(\mst(P\cup \{s\}) \leq 1+ \left\lfloor\frac{1}{\sqrt{\eps}}\right\rfloor\cdot \sqrt{\eps} \leq 2.
\]
Let $T$ be a minimum-weight Steiner $(1+\eps)$-ST 
for $P\cup \{s\}$. We show that $\wts(T) = \Omega(\eps^{-1/4})$.
For each point $p\in P$, the tree $T$ contains a $ps$-path of length at most $(1+\eps)\cdot d(p,s)$ in the ellipse $\mathcal{E}_{ps,\eps}$ of foci $p$ and $s$ and major axis $(1+\eps)\cdot d(p,s)$. 
By \Cref{lem:approxellipse}, we have $\mathcal{E}_{ps,\eps} \subset \mathcal{E}_{pb,2\eps}$, where $pb$ is a horizontal line segment and $d(p,b)\leq 2\cdot d(p,s)$. Consider the axis-aligned bounding box $B_p$ of the intersection of $\mathcal{E}_{pb,2\eps}$ with a vertical strip of width $\sqrt{\eps}$ whose left boundary contains $p$.  

We claim that the height of $B_p$ is $\Theta(\eps^{1/4})$. 
Indeed, for any $p\in P$, we have $1\leq d(p,s)\leq \sqrt{5}$, hence $1\leq d(b,p)\leq 2\sqrt{5}$. If we denote the width and height of $B_p$ by $\mathrm{width}(B_p)$ and $\mathrm{height}(B_p)$, respectively,
then \Cref{cor:spacing} implies 
$\mathrm{heigh}(B_p)
=\Theta(\sqrt{\mathrm{width}(B_p)}) 
=\Theta(\eps^{1/4})$, as claimed. 

Therefore, we can find $\Theta(\eps^{-1/4})$ points $p\in P$ in each vertical line such that the boxes $B_p$ are pairwise disjoint. 
Since $P$ is distributed on $\left\lfloor\frac{1}{\sqrt{\eps}}\right\rfloor$ vertical lines, 
we can find $\Theta(\eps^{-3/4})$ points $p\in P$ in each vertical line such that the boxes $B_p$ are pairwise disjoint.
By \Cref{lem:charging}, we have $\wts(T\cap B_p)\geq \mathrm{width}(B_p)=\sqrt{\eps})$. 
Summation over disjoint boxes $B_p$, yields 
$\wts(T)\geq \Theta(\eps^{-3/4})\cdot \sqrt{\eps} = \Theta(\eps^{-1/4})$, as required.
\end{proof}
\section*{Acknowledgments.}
Hung Le and Cuong Than are supported by the NSF CAREER award CCF-2237288, the NSF grants CCF-2517033 and CCF-2121952, and a Google Research Scholar Award. Cuong Than is also supported by a Google Ph.D. Fellowship.
Research by Csaba D.\ T\'oth was supported by the NSF award DMS-2154347.
Shay Solomon is funded by the European Union (ERC, DynOpt, 101043159). Views and opinions expressed are however those of the author(s) only and do not necessarily reflect those of the European Union or the European Research Council. Neither the European Union nor the granting authority can be held responsible for them. Shay Solomon is also funded by a grant from the United States-Israel Binational Science Foundation (BSF), Jerusalem, Israel, and the United States National Science Foundation (NSF). Work of Tianyi Zhang was done while at ETH Z\"urich when supported by funding from the starting grant ``A New Paradigm for Flow and Cut Algorithms'' (no. TMSGI2\_218022) of the Swiss National Science Foundation. 
\bibliographystyle{alphaurl}
\bibliography{ref}

@inproceedings{LSTTZ26,
  author       = {Hung Le and
                  Shay Solomon and
                  Cuong Than and
                  Csaba D. T{\'{o}}th and
                  Tianyi Zhang},
  title        = {Approximate Light Spanners in Planar Graphs}, 
    booktitle={Proc. 37th ACM-SIAM Symposium on Discrete Algorithms (SODA)},
  year         = {2026},
  eprinttype    = {arXiv},
  eprint       = {2505.24825}
}

@inproceedings{BuchinRS25,
  author       = {Kevin Buchin and
                  Carolin Rehs and
                  Torben Scheele},
  _editor       = {Oswin Aichholzer and Haitao Wang},
  title        = {Geometric Spanners of Bounded Tree-Width},
  booktitle    = {Proc. 41st Symposium on Computational Geometry (SoCG)},
  series       = {LIPIcs},
  volume       = {332},
  pages        = {26:1--26:15},
  publisher    = {Schloss Dagstuhl},
  year         = {2025},
  doi          = {10.4230/LIPICS.SOCG.2025.26}
}

@article{AgarwalP20,
  author       = {Pankaj K. Agarwal and
                  Jiangwei Pan},
  title        = {Near-Linear Algorithms for Geometric Hitting Sets and Set Covers},
  journal      = {Discret. Comput. Geom.},
  volume       = {63},
  number       = {2},
  pages        = {460--482},
  year         = {2020},
  doi          = {10.1007/S00454-019-00099-6},
}

@inproceedings{AgarwalHRS24,
  author       = {Pankaj K. Agarwal and
                  Sariel Har{-}Peled and
                  Rahul Raychaudhury and
                  Stavros Sintos},
  _editor       = {David P. Woodruff},
  title        = {Fast Approximation Algorithms for Piercing Boxes by Points},
  booktitle    = {Proc.\ 35th {ACM-SIAM} Symposium on Discrete Algorithms  ({SODA}) },
  pages        = {4892--4908},
  year         = {2024},
  doi          = {10.1137/1.9781611977912.174}
}

@article{FowlerPT81,
  author       = {Robert J. Fowler and
                  Mike Paterson and
                  Steven L. Tanimoto},
  title        = {Optimal Packing and Covering in the Plane are {NP}-Complete},
  journal      = {Inf. Process. Lett.},
  volume       = {12},
  number       = {3},
  pages        = {133--137},
  year         = {1981},
  doi          = {10.1016/0020-0190(81)90111-3}
}

@article{HochbaumM87,
  author       = {Dorit S. Hochbaum and
                  Wolfgang Maass},
  title        = {Fast Approximation Algorithms for a Nonconvex Covering Problem},
  journal      = {J. Algorithms},
  volume       = {8},
  number       = {3},
  pages        = {305--323},
  year         = {1987},
  doi          = {10.1016/0196-6774(87)90012-5}
}

@article{MustafaR10,
  author       = {Nabil H. Mustafa and
                  Saurabh Ray},
  title        = {Improved Results on Geometric Hitting Set Problems},
  journal      = {Discret. Comput. Geom.},
  volume       = {44},
  number       = {4},
  pages        = {883--895},
  year         = {2010},
  doi          = {10.1007/S00454-010-9285-9}
}

@article{KPP02,
  author       = {Vachaspathi P. Kompella and
                  Joseph Pasquale and
                  George C. Polyzos},
  title        = {Multicast routing for multimedia communication},
  journal      = {{IEEE/ACM} Trans. Netw.},
  volume       = {1},
  number       = {3},
  pages        = {286--292},
  year         = {1993},
  doi          = {10.1109/90.234851},
}

@book{Peleg00,
  title={Distributed Computing: A Locality-Sensitive Approach},
  author={Peleg, David},
  year={2000},
  publisher={SIAM},
doi={10.1137/1.9780898719772}
}

@article{ES11,
  title={Narrow-shallow-low-light trees with and without {S}teiner points},
  author={Elkin, Michael and Solomon, Shay},
  journal={SIAM Journal on Discrete Mathematics},
  volume={25},
  number={1},
  pages={181--210},
  year={2011},
  doi = {10.1137/090776147},
}

@article{DES09tech,
  author       = {Yefim Dinitz and
                  Michael Elkin and
                  Shay Solomon},
  title        = {Low-Light Trees, and Tight Lower Bounds for {E}uclidean Spanners},
  journal      = {Discret. Comput. Geom.},
  volume       = {43},
  number       = {4},
  pages        = {736--783},
  year         = {2010},
  doi          = {10.1007/S00454-009-9230-Y}
}

@inproceedings{SCRS97,
  title={Buy-at-bulk network design: Approximating the single-sink edge installation problem},
  author       = {F. Sibel Salman and
                  Joseph Cheriyan and
                  R. Ravi and
                  S. Subramanian},
  booktitle={Proc. 8th ACM-SIAM Symposium on Discrete Algorithms (SODA)},
  pages={619--628},
  year={1997},
url   = {http://dl.acm.org/citation.cfm?id=314161.314397}
}

@inproceedings{CKRSW91,
  author       = {Jason Cong and
                  Andrew B. Kahng and
                  Gabriel Robins and
                  Majid Sarrafzadeh and
                  C. K. Wong},
  title        = {Performance-Driven Global Routing for Cell Based {IC}s},
  booktitle    = {Proc.\  {IEEE} International Conference on Computer Design:
                  {VLSI} in Computer {\&} Processors ({ICCD})},
  pages        = {170--173},
_publisher    = {{IEEE} Computer Society},
  year         = {1991},
  doi          = {10.1109/ICCD.1991.139874}
}

@article{WCT02,
  title={Light graphs with small routing cost},
  author={Wu, Bang Ye and Chao, Kun-Mao and Tang, Chuan Yi},
  journal={Networks},
  volume={39},
  number={3},
  pages={130--138},
  year={2002},
 doi  = {10.1002/NET.10019}
}

@article{JDHLG01,
  author       = {Xiaohua Jia and
                  Ding{-}Zhu Du and
                  Xiao{-}Dong Hu and
                  Man{-}Kei Lee and
                  Jun Gu},
  title        = {Optimization of wavelength assignment for {QoS} multicast in {WDM} networks},
  journal      = {{IEEE} Trans. Commun.},
  volume       = {49},
  number       = {2},
  pages        = {341--350},
  year         = {2001},
  doi          = {10.1109/26.905896},
}

@article{SS97,
  title={Destination-driven routing for low-cost multicast},
  author={Shaikh, Anees and Shin, Kang},
  journal={IEEE Journal on Selected Areas in Communications},
  volume={15},
  number={3},
  pages={373--381},
  year={1997},
doi          = {10.1109/49.564135}
}

@article{AHHKK95,
  title={Prim-{D}ijkstra tradeoffs for improved performance-driven routing tree design},
   author       = {Charles J. Alpert and
                  T. C. Hu and
                  Dennis J.{-}H. Huang and
                  Andrew B. Kahng and
                  David R. Karger},
  journal={IEEE Transactions on Computer-Aided Design of Integrated Circuits and Systems},
  volume={14},
  number={7},
  pages={890--896},
  year={2002},
  doi = {10.1109/43.391737}
}

@article{SS10,
  title={Near-optimal multicriteria spanner constructions in wireless ad hoc networks},
  author={Shpungin, Hanan and Segal, Michael},
  journal={IEEE/ACM Transactions on etworking},
  volume={18},
  number={6},
  pages={1963--1976},
  year={2010},
doi          = {10.1109/TNET.2010.2053381}
}

@article{LL07,
  author       = {Weifa Liang and
                  Yuzhen Liu},
  title        = {Online Data Gathering for Maximizing Network Lifetime in Sensor Networks},
  journal      = {{IEEE} Trans. Mob. Comput.},
  volume       = {6},
  number       = {1},
  pages        = {2--11},  
  year         = {2007},
  doi          = {10.1109/TMC.2007.250667}
}

@article{LLLD06,
  title={Adaptive data fusion for energy efficient routing in wireless sensor networks},
  author={Luo, Hong and Luo, Jun and Liu, Yonghe and Das, Sajal K},
  journal={IEEE Transactions on Computers},
  volume={55},
  number={10},
  pages={1286--1299},
  year={2006},
  doi          = {10.1109/TC.2006.157}
}

@article{CBVW06,
  author       = {Razvan Cristescu and
                  Baltasar Beferull{-}Lozano and
                  Martin Vetterli and
                  Roger Wattenhofer},
  title        = {Network correlated data gathering with explicit communication: NP-completeness  and algorithms},
  journal      = {{IEEE/ACM} Trans. Netw.},
  volume       = {14},
  number       = {1},
  pages        = {41--54},
  year         = {2006},
  doi          = {10.1145/1133553.1133557}
}

@inproceedings{BDS04,
  title={{SPLAST}: A novel approach for multicasting in mobile wireless ad hoc networks},
  author={Ben-Shimol, Yehuda and Dvir, Amit and Segal, Michael},
  booktitle={Proc. 15th {IEEE} Symposium on Personal, Indoor and Mobile Radio Communications ({PIMRC})},
  volume={2},
  pages={1011--1015},
  year={2004},
doi = {10.1109/PIMRC.2004.1373851}
}

@inproceedings{RW04,
  title={Gathering correlated data in sensor networks},
  author={Von Rickenbach, Pascal and Wattenhofer, Rogert},
  booktitle={Proceedings of the {DIALM-POMC} Joint Workshop on Foundations of Mobile  Computing},
  pages={60--66},
  year={2004},
doi     = {10.1145/1022630.1022640}
}

@techreport{KV01,
  title={Latency versus cost optimizations in hierarchical overlay networks},
  author={Kostic, Dejan and Vahdat, Amin},
  year={2002},
  institution={Technical report, Duke University,(CS-2001-04)}
}

@inproceedings{VWFME03,
  author       = {J{\"{u}}rgen Vogel and
                  J{\"{o}}rg Widmer and
                  Dirk Farin and
                  Martin Mauve and
                  Wolfgang Effelsberg},
  title        = {Priority-based distribution trees for application-level multicast},
  booktitle    = {Proceedings of the 2nd Workshop on Network and System Support for
                  Games ({NETGAMES})},
  pages        = {148--157},
  publisher    = {{ACM}},
  year         = {2003},
  doi          = {10.1145/963900.963914}
}

@article{Solomon15,
  author       = {Shay Solomon},
  title        = {Euclidean {S}teiner shallow-light trees},
  journal      = {J. Comput. Geom.},
  volume       = {6},
  number       = {2},
  pages        = {113--139},
  year         = {2015},
  doi          = {10.20382/JOCG.V6I2A7}
}

@article{ElkinS15,
  author       = {Michael Elkin and
                  Shay Solomon},
  title        = {Steiner Shallow-Light Trees Are Exponentially Lighter than Spanning
                  Ones},
  journal      = {{SIAM} J. Comput.},
  volume       = {44},
  number       = {4},
  pages        = {996--1025},
  year         = {2015},
  doi          = {10.1137/13094791X},
}

@article{EfratKNS00,
  author       = {Alon Efrat and
                  Matthew J. Katz and
                  Frank Nielsen and
                  Micha Sharir},
  title        = {Dynamic data structures for fat objects and their applications},
  journal      = {Comput. Geom.},
  volume       = {15},
  number       = {4},
  pages        = {215--227},
  year         = {2000},
  doi          = {10.1016/S0925-7721(99)00059-0}
}

@article{MaratheBHRR95,
  author       = {Madhav V. Marathe and
                  Heinz Breu and
                  Harry B. Hunt III and
                  S. S. Ravi and
                  Daniel J. Rosenkrantz},
  title        = {Simple heuristics for unit disk graphs},
  journal      = {Networks},
  volume       = {25},
  number       = {2},
  pages        = {59--68},
  year         = {1995},
  doi          = {10.1002/NET.3230250205}
}

@article{Chan03,
  author       = {Timothy M. Chan},
  title        = {Polynomial-time approximation schemes for packing and piercing fat
                  objects},
  journal      = {J. Algorithms},
  volume       = {46},
  number       = {2},
  pages        = {178--189},
  year         = {2003},
  doi          = {10.1016/S0196-6774(02)00294-8}
}

@article{ErlebachJS05,
  author       = {Thomas Erlebach and
                  Klaus Jansen and
                  Eike Seidel},
  title        = {Polynomial-Time Approximation Schemes for Geometric Intersection Graphs},
  journal      = {{SIAM} J. Comput.},
  volume       = {34},
  number       = {6},
  pages        = {1302--1323},
  year         = {2005},
  doi          = {10.1137/S0097539702402676}
}

@inproceedings{ChanH20,
  author       = {Timothy M. Chan and
                  Qizheng He},
  _editor       = {Sergio Cabello and  Danny Z. Chen},
  title        = {Faster Approximation Algorithms for Geometric Set Cover},
  booktitle    = {Proc. 36th Symposium on Computational Geometry (SoCG)},
  series       = {LIPIcs},
  volume       = {164},
  pages        = {27:1--27:14},
  publisher    = {Schloss Dagstuhl},
  year         = {2020},
  doi          = {10.4230/LIPICS.SOCG.2020.27}
}

@inproceedings{ABP90,
  author       = {Baruch Awerbuch and
                  Alan E. Baratz and
                  David Peleg},
  _editor       = {Cynthia Dwork},
  title        = {Cost-Sensitive Analysis of Communication Protocols},
  booktitle    = {Proc.\ 9th {ACM} Symposium on Principles of Distributed
                  Computing, ({PODC})},
  pages        = {177--187},
  year         = {1990},
  doi          = {10.1145/93385.93417}
}

@article{ABP91,
  title={Efficient broadcast and light-weighted spanners},
   author       = {Baruch Awerbuch and
                  Alan E. Baratz and
                  David Peleg},
  journal={manuscript},
  year={1991}
}

@article{KRY95,
  author       = {Samir Khuller and
                  Balaji Raghavachari and
                  Neal E. Young},
  title        = {Balancing Minimum Spanning Trees and Shortest-Path Trees},
  journal      = {Algorithmica},
  volume       = {14},
  number       = {4},
  pages        = {305--321},
  year         = {1995},
  doi          = {10.1007/BF01294129},
note = {Preliminary version at SODA~1993}
}

@article{CKRSW92,
  author       = {Jason Cong and
                  Andrew B. Kahng and
                  Gabriel Robins and
                  Majid Sarrafzadeh and
                  Chak{-}Kuen Wong},
  title        = {Provably good performance-driven global routing},
  journal      = {{IEEE} Trans. Comput. Aided Des. Integr. Circuits Syst.},
  volume       = {11},
  number       = {6},
  pages        = {739--752},
  year         = {1992},
  doi          = {10.1109/43.137519}
}

@inproceedings{HeldR13,
  author       = {Stephan Held and
                  Daniel Rotter},
  _editor       = {Michel X. Goemans and Jos{\'{e}} Correa},
  title        = {Shallow-Light {S}teiner Arborescences with Vertex Delays},
  booktitle    = {Proc. 16th Conference on Integer Programming and Combinatorial Optimization ({IPCO})},
  series       = {LNCS},
  volume       = {7801},
  pages        = {229--241},
  publisher    = {Springer},
  year         = {2013},
  doi          = {10.1007/978-3-642-36694-9_20}
}

@inproceedings{KhazraeiH20,
  author       = {Ardalan Khazraei and
                  Stephan Held},
  _editor       = {Christos Kaklamanis and Asaf Levin},
  title        = {An Improved Approximation Algorithm for the Uniform Cost-Distance
                  {S}teiner Tree Problem},
  booktitle    = {Proc. 18th Workshop on Approximation and Online Algorithms (WAOA)},
  series       = {LNCS},
  volume       = {12806},
  pages        = {189--203},
  year         = {2020},
  doi          = {10.1007/978-3-030-80879-2_13}
}

@article{ChenY20,
  author       = {Gengjie Chen and
                  Evangeline F. Y. Young},
  title        = {{SALT:} Provably Good Routing Topology by a Novel {S}teiner Shallow-Light
                  Tree Algorithm},
  journal      = {{IEEE} Trans. Comput. Aided Des. Integr. Circuits Syst.},
  volume       = {39},
  number       = {6},
  pages        = {1217--1230},
  year         = {2020},
  doi          = {10.1109/TCAD.2019.2894653}
}

@inproceedings{LiQCM021,
  author       = {Wei Li and
                  Yuxiao Qu and
                  Gengjie Chen and
                  Yuzhe Ma and
                  Bei Yu},
  title        = {TreeNet: Deep Point Cloud Embedding for Routing Tree Construction},
  booktitle    = {Proc. 26th Asia and South Pacific Design Automation Conference ({ASPDAC})},
  pages        = {164--169},
  publisher    = {{ACM} Press},
  year         = {2021},
  doi          = {10.1145/3394885.3431566}
}

@article{althofer1993sparse,
	title={On sparse spanners of weighted graphs},
	author={Alth{\"o}fer, Ingo and Das, Gautam and Dobkin, David and Joseph, Deborah and Soares, Jos{\'e}},
	journal={Discrete \& Computational Geometry},
	volume={9},
	pages={81--100},
	year={1993},
    doi={10.1007/BF02189308}
}

@article{le2022truly,
author = {Le, Hung and Solomon, Shay},
title = {Truly Optimal {E}uclidean Spanners},
journal = {SIAM Journal on Computing},
volume = {0},
number = {0},
pages = {FOCS19-135-FOCS19-199},
year = {2022},
doi={10.1137/20M1317906},
_note={Preliminary version at FOCS~2019},
}

@inproceedings{DZ16,
  title={Approximating low-stretch spanners},
  author={Dinitz, Michael and Zhang, Zeyu},
  booktitle={Proc.\ 27th ACM-SIAM Symposium on Discrete Algorithms ({SODA})},
  pages={821--840},
  year={2016},
  doi  = {10.1137/1.9781611974331.CH59}
}

@article{KP94,
  title={Generating sparse 2-spanners},
  author={Kortsarz, Guy and Peleg, David},
  journal={Journal of Algorithms},
  volume={17},
  number={2},
  pages={222--236},
  year={1994},
  doi = {10.1006/JAGM.1994.1032}
}

@inproceedings{le2024towards,
  title={Towards Instance-Optimal {E}uclidean Spanners},
  author={Le, Hung and Solomon, Shay and Than, Cuong and T{\'o}th, Csaba D and Zhang, Tianyi},
  booktitle={Proc.\ 65th IEEE Symposium on Foundations of Computer Science (FOCS)},
  pages={1579--1609},
  year={2024},
  doi={10.1109/FOCS61266.2024.00099}
}

@article{GudmundssonMU21,
  author       = {Joachim Gudmundsson and
                  Juli{\'{a}}n Mestre and
                  Seeun William Umboh},
  title        = {Bounded-degree light approximate shortest-path trees in doubling metrics},
  journal      = {Discret. Appl. Math.},
  volume       = {305},
  pages        = {199--204},
  year         = {2021},
  doi          = {10.1016/J.DAM.2021.08.039}
}

@article{bhore2022euclidean,
	title={Euclidean {S}teiner spanners: {L}ight and sparse},
	author={Bhore, Sujoy and T{\'o}th, Csaba D},
	journal={SIAM Journal on Discrete Mathematics},
	volume={36},
	number={3},
	pages={2411--2444},
	year={2022},
        doi={10.1137/22M1502707}
}

@inproceedings{BorradaileLW19,
  author       = {Glencora Borradaile and
                  Hung Le and
                  Christian Wulff{-}Nilsen},
  _editor       = {Timothy M. Chan},
  title        = {Greedy spanners are optimal in doubling metrics},
  booktitle    = {Proc.\ 30th {ACM-SIAM} Symposium on Discrete
                  Algorithms ({SODA})},
  pages        = {2371--2379},
  year         = {2019},
  doi          = {10.1137/1.9781611975482.145}
}

@inproceedings{LeS23,
  author       = {Hung Le and
                  Shay Solomon},
  _editor       = {Barna Saha and Rocco A. Servedio},
  title        = {A Unified Framework for Light Spanners},
  booktitle    = {Proc.\ 55th {ACM} Symposium on Theory of Computing ({STOC})},
  pages        = {295--308},
  year         = {2023},
  doi          = {10.1145/3564246.3585185}
}

@article{DKR15,
  title={Label cover instances with large girth and the hardness of approximating basic k-spanner},
  author={Dinitz, Michael and Kortsarz, Guy and Raz, Ran},
  journal={ACM Transactions on Algorithms (TALG)},
  volume={12},
  number={2},
  pages={1--16},
  year={2015},
  doi   = {10.1145/2818375}
}

@article{DanzerG82,
  author       = {Ludwig Danzer and
                  Branko Gr{\"{u}}nbaum},
  title        = {Intersection properties of boxes in $\mathbb{R}^d$},
  journal      = {Comb.},
  volume       = {2},
  number       = {3},
  pages        = {237--246},
  year         = {1982},
  doi          = {10.1007/BF02579232},
}

@article{HochbaumM85,
  author       = {Dorit S. Hochbaum and
                  Wolfgang Maass},
  title        = {Approximation Schemes for Covering and Packing Problems in Image Processing
                  and {VLSI}},
  journal      = {J. {ACM}},
  volume       = {32},
  number       = {1},
  pages        = {130--136},
  year         = {1985},
  doi          = {10.1145/2455.214106}
}

@article{Nielsen00,
  author       = {Frank Nielsen},
  title        = {Fast stabbing of boxes in high dimensions},
  journal      = {Theor. Comput. Sci.},
  volume       = {246},
  number       = {1-2},
  pages        = {53--72},
  year         = {2000},
  doi          = {10.1016/S0304-3975(98)00336-3},
}

@article{KanjPX10,
  author       = {Iyad A. Kanj and
                  Ljubomir Perkovic and
                  Ge Xia},
  title        = {On Spanners and Lightweight Spanners of Geometric Graphs},
  journal      = {{SIAM} J. Comput.},
  volume       = {39},
  number       = {6},
  pages        = {2132--2161},
  year         = {2010},
  doi          = {10.1137/080737708},
}

@inproceedings{kortsarz1997approximating,
	title={Approximating shallow-light trees},
	author={Kortsarz, Guy and Peleg, David},
	booktitle={Proc. 8th ACM-SIAM Symposium on Discrete algorithms (SODA)},
	pages={103--110},
	year={1997},
    url={https://dl.acm.org/doi/10.5555/314161.314191}
}

@article{bar2001generalized,
	title={Generalized submodular cover problems and applications},
	author={Bar-Ilan, Judit and Kortsarz, Guy and Peleg, David},
	journal={Theoretical Computer Science},
	volume={250},
	number={1-2},
	pages={179--200},
	year={2001},
    doi = {10.1016/S0304-3975(99)00130-9}
}

@article{hajiaghayi2009approximating,
	title={Approximating buy-at-bulk and shallow-light $k$-{S}teiner trees},
	author={Hajiaghayi, Mohammad Taghi and Kortsarz, Guy and Salavatipour, Mohammad R},
	journal={Algorithmica},
	volume={53},
	pages={89--103},
	year={2009},
    doi = {10.1007/S00453-007-9013-X},
}

@article{KhaniS16,
  author       = {M. Reza Khani and
                  Mohammad R. Salavatipour},
  title        = {Improved approximations for buy-at-bulk and shallow-light $k$-{S}teiner
                  trees and $(k, 2)$-subgraph},
  journal      = {J. Comb. Optim.},
  volume       = {31},
  number       = {2},
  pages        = {669--685},
  year         = {2016},
  doi          = {10.1007/S10878-014-9774-5},
}

@inproceedings{guo2014approximating,
	title={Approximating the shallow-light {S}teiner tree problem when cost and delay are linearly dependent},
	author={Guo, Longkun and Zou, Nianchen and Li, Yidong},
	booktitle={Proc. 6th Symposium on Parallel Architectures, Algorithms and Programming (PAAP)},
	pages={99--103},
	year={2014},
    doi  = {10.1109/PAAP.2014.52}
}

@misc{guo2012parameterized,
      title={A Parameterized Approximation Algorithm for The Shallow-Light Steiner Tree Problem}, 
      author={Longkun Guo and Kewen Liao},
      year={2013},
      eprint={1212.3403},
      archivePrefix={arXiv},
      primaryClass={cs.DS},
      _url={https://arxiv.org/abs/1212.3403}, 
}

@article{EdelsbrunnerGS86,
  author       = {Herbert Edelsbrunner and
                  Leonidas J. Guibas and
                  Jorge Stolfi},
  title        = {Optimal Point Location in a Monotone Subdivision},
  journal      = {{SIAM} J. Comput.},
  volume       = {15},
  number       = {2},
  pages        = {317--340},
  year         = {1986},
  doi          = {10.1137/0215023},
}

@article{SarnakT86,
  author       = {Neil Sarnak and
                  Robert Endre Tarjan},
  title        = {Planar Point Location Using Persistent Search Trees},
  journal      = {Commun. {ACM}},
  volume       = {29},
  number       = {7},
  pages        = {669--679},
  year         = {1986},
  doi          = {10.1145/6138.6151},
}

@article{IaconoM12,
  author       = {John Iacono and
                  Wolfgang Mulzer},
  title        = {A Static Optimality Transformation with Applications to Planar Point                  Location},
  journal      = {Int. J. Comput. Geom. Appl.},
  volume       = {22},
  number       = {4},
  pages        = {327--340},
  year         = {2012},
  doi          = {10.1142/S0218195912600084},
}

@inproceedings{NaorS97,
  author       = {Joseph Naor and
                  Baruch Schieber},
  title        = {Improved Approximations for Shallow-Light Spanning Trees},
  booktitle    = {Proc. 38th IEEE Symposium on Foundations of Computer Science ({FOCS})},
  pages        = {536--541},
  year         = {1997},
  doi          = {10.1109/SFCS.1997.646142},
}

@inproceedings{ChimaniS15,
  author       = {Markus Chimani and
                  Joachim Spoerhase},
  _editor       = {Ernst W. Mayr and Nicolas Ollinger},
  title        = {Network Design Problems with Bounded Distances via Shallow-Light Steiner
                  Trees},
  booktitle    = {Proc. 32nd Symposium on Theoretical Aspects of Computer Science ({STACS})},
  series       = {LIPIcs},
  volume       = {30},
  pages        = {238--248},
  publisher    = {Schloss Dagstuhl},
  year         = {2015},
  doi          = {10.4230/LIPICS.STACS.2015.238},
}

@article{CheongL13,
  author       = {Otfried Cheong and
                  Changryeol Lee},
  title        = {Single-Source Dilation-Bounded Minimum Spanning Trees},
  journal      = {Int. J. Comput. Geom. Appl.},
  volume       = {23},
  number       = {3},
  pages        = {159--170},
  year         = {2013},
  doi          = {10.1142/S0218195913500052}
}

@incollection{HS17,
author={Dan Halperin and Micha Sharir},
title={Arrangements},
chapter={28},
booktitle={Handbook of Discrete and Computational Geometry},
editors={J.E. Goodman and J. O'Rourke and C. D. T~\'oth},
edition={3rd},
publisher={CRC Press}, 
address={Boca Raton, FL}, 
year={2017}
}

@article{Few55,
author={L. Few}, 
title={The shortest path and the shortest road through $n$ points},
journal={Mathematika},
volume={2},
year={1955},
pages={141--144},
doi={10.1112/S0025579300000784}
}
\end{document}